\documentclass{article}

\usepackage[lmargin=2.8cm,rmargin=2.8cm,bmargin=2cm,tmargin=2cm]{geometry}
\usepackage[utf8]{inputenc}
\usepackage[T1]{fontenc}
\usepackage{lmodern}
\usepackage{microtype}
\usepackage{setspace}
\usepackage{amsmath,amssymb,amsthm}
\usepackage{mathtools}
\usepackage{bm}
\usepackage{bbm}

\usepackage{longtable}
\usepackage{booktabs}
\usepackage{enumitem}
\usepackage{xcolor}
\usepackage{tikz}

\usepackage[authoryear]{natbib}

\usepackage[unicode=true,pdfusetitle,bookmarks=true,
  bookmarksnumbered=false,bookmarksopen=false,
  breaklinks=true,colorlinks=true]{hyperref}
\hypersetup{linkcolor=blue,citecolor=blue,urlcolor=blue}

\newif\iffinalversion
\finalversionfalse
\iffinalversion
  \newcommand{\authnote}[1]{}
\else
  \newcommand{\authnote}[1]{\textcolor{red}{#1}}
\fi

\makeatletter
\renewcommand\paragraph{\@startsection{paragraph}{4}{\z@}%
            {-2.5ex\@plus -1ex \@minus -.25ex}%
            {1.25ex \@plus .25ex}%
            {\normalfont\normalsize\bfseries}}
\makeatother

\theoremstyle{plain}
\newtheorem{theorem}{Theorem}[section]
\newtheorem{lemma}[theorem]{Lemma}
\newtheorem{proposition}[theorem]{Proposition}
\newtheorem{corollary}[theorem]{Corollary}
\theoremstyle{definition}

\newtheorem{example}[theorem]{Example}
\theoremstyle{remark}
\newtheorem{remark}[theorem]{Remark}

\theoremstyle{plain}
\newtheorem{theoremintro}{Theorem}

\newcommand{\setX}{\mathsf{E}}
\newcommand{\setZ}{\mathsf{Z}}
\newcommand{\F}{\mathcal{F}}
\newcommand{\dd}{\mathrm{d}}
\newcommand{\Alt}{\mathrm{Alt}}
\newcommand{\R}{\mathbb{R}}
\renewcommand{\P}{\mathbb{P}}
\newcommand{\E}{\mathbb{E}}
\newcommand{\ind}{\mathbbm{1}}
\newcommand\iid{\stackrel{\mathclap{\normalfont\mbox{\tiny{iid}}}}{\sim}}

\title{A perturbation theory for Feynman–Kac models and their particle interpretations}

\author{Pierre Del Moral\thanks{Centre de Recherche Inria Bordeaux
Sud-Ouest, \texttt{pierre.del-moral@inria.fr}}
\and
Mathieu Gerber\thanks{School of Mathematics, University of Bristol,
\texttt{mathieu.gerber@bristol.ac.uk}}}
\date{}

\begin{document}

\maketitle

\begin{abstract}
We develop a perturbation analysis of Feynman--Kac models,
describing how local errors introduced at each step of the
Feynman--Kac recursion propagate over time. Under general conditions
on the potential functions and the Markov transitions, which are shown to hold for a large class of state-space models, 
we prove that the global error for any time horizon $T\geq 1$ is bounded 
by $C\,\Delta_T\log(1+\Delta_T^{-1})$, where $\Delta_T$ is the
largest local error; this bound holds both for the prediction and the updated
models. These conditions require no Lyapunov function and no
stability of the Markov transitions, and allow for
non-compact state spaces: confinement is provided by the
potential functions. Our perturbation bounds hold  in total variation and
in the Kolmogorov norm; for the latter, we extend the
Hardy--Krause variation and the Koksma--Hlawka inequality to
unbounded domains. Our perturbation theory is applied to derive time-uniform bounds
for linearized Kalman filters, Euler schemes for partially
observed diffusions and nudged state-space models, as well as
high-probability error bounds, uniform over whole
trajectories, for particle filters and, more generally, for
mean field particle methods such as sequential Monte Carlo
and diffusion Monte Carlo algorithms.

\medskip
\noindent\textbf{Keywords:} Feynman--Kac models; perturbation
analysis; time-uniform estimates; particle filters; sequential
Monte Carlo; nonlinear filtering; state-space models; Kolmogorov
distance; Hardy--Krause variation; Koksma--Hlawka inequality.

\smallskip
\noindent\textbf{MSC2020:} Primary 60J05, 65C35; secondary 62M20,
60G35, 65C05, 60B10, 26B30.
\end{abstract}
\setcounter{tocdepth}{2}
\begingroup
\makeatletter
\renewcommand*\l@section{\@dottedtocline{1}{0em}{1.8em}}
\makeatother
\tableofcontents
\endgroup
\section{Introduction}

\subsection{Feynman--Kac models and objectives\label{sub:setup}}

\subsubsection{General notation}

Throughout the article, $(\setX,\mathcal{E})$ stands
for some measurable space; when $\setX\subseteq\R^d$ is
a Borel set, $\mathcal{E}$ is the Borel
$\sigma$-algebra on $\setX$. We denote by
$\mathcal{P}(\setX)$ the set of probability measures on
$\mathcal{E}$, by $\mathcal{M}(\setX)$ the set of finite
signed measures on $\mathcal{E}$, and by
$\mathcal{B}_1(\setX)$ the set of measurable functions
$f\colon\setX\to[0,1]$.

For any signed measure $\nu\in\mathcal{M}(\setX)$ and any measurable
function $f\colon\setX\to\R$, we set
$
  \nu(f)\coloneqq\int_{\setX}f(x)\,\nu(\dd x)
$
whenever the integral is well defined. When $f=1_A$ is
the indicator function of some set $A\in\mathcal{E}$, we
slightly abuse notation and write $\nu(A)$ instead of
$\nu(1_A)$; we also write $\nu(1)$ instead of
$\nu(\setX)$. For any bounded, and possibly
unnormalized, integral operator $Q(x,\dd y)$ on
$\setX$, we denote by $Q(f)$ the function defined by
$Q(f)(x)\coloneqq\int_{\setX}Q(x,\dd y)\,f(y)$, and by
$\nu Q$ the signed measure defined by
$(\nu Q)(f)\coloneqq\nu(Q(f))$.

Unless otherwise stated, all the random variables are
defined on some probability space $(\Omega,\F,\P)$ and
$\E[\cdot]$ stands for the expectation with respect to
$\P$. For a random variable $X$ taking values in
$\setX$ and $\mu\in\mathcal{P}(\setX)$, we write
$X\sim\mu$ when $X$ has distribution $\mu$, and
$Y\,|\,X\sim K(X,\dd y)$ when the conditional
distribution of $Y$ given $X$ is $K(X,\cdot)$, for some
Markov transition $K$ on $\setX$. Finally, we use the
conventions $\sum_{\emptyset}=0$ and
$\prod_{\emptyset}=1$, and, for any $a,b\in\R$, we set
$a\wedge b\coloneqq\min\{a,b\}$ and
$a\vee b\coloneqq\max\{a,b\}$. A sequence indexed by the time
parameter is denoted by its generic term; for
instance, $\eta_t$ stands for the sequence
$(\eta_t)_{t\geq 1}$ when no confusion may arise.

\subsubsection{Feynman--Kac models\label{sub:FK-model}}

Let $G_t$ be a sequence of functions in
$\mathcal{B}_1(\setX)$ and let $M_{t+1}$ be
a sequence of Markov transitions on $\setX$. Given a probability measure $\eta_1\in\mathcal{P}(\setX)$, let $X_t$ be a
Markov chain such that
$X_1\sim\eta_1$ and $X_{t+1}\,|\,X_t\sim
M_{t+1}(X_t,\dd x')$ for any $t\geq 1$. The Feynman--Kac flow $\eta_t\in\mathcal{P}(\setX)$ is
defined, for any $t\geq 1$ and
$f\in\mathcal{B}_1(\setX)$, by
\begin{align}\label{eq:def_eta_t}
\eta_t(f)\coloneqq\frac{\gamma_t(f)}{\gamma_t(1)}
\qquad\text{with}\qquad
\gamma_t(f)\coloneqq
\E\Big[f(X_t)\prod_{s=1}^{t-1}G_s(X_s)\Big].
\end{align}
The unnormalized measures $\gamma_t$, with
$\gamma_1=\eta_1$, satisfy the linear evolution
equation
\begin{align}\label{eq:gamma_linear}
\gamma_{t+1}=\gamma_tQ_{t+1}
\qquad\text{with the integral operator}\qquad
Q_{t+1}(x,\dd x')\coloneqq G_t(x)\,M_{t+1}(x,\dd x') 
\end{align}
and throughout this work we assume that the following   condition holds.
\begin{enumerate}[label=\textbf{(A{{\arabic*}})}]
\setcounter{enumi}{-1}
\item\label{A0} There exists a sequence  $W_t\in \mathcal{E}$ such that
$Q_{t+1}(x,W_{t+1})>0$ for any $x\in W_t$ and any.
$t\geq 1$ 
\end{enumerate}
Then, for any $t\geq 1$, we let
\[
  \mathcal{P}_{W_t}(\setX)\coloneqq
  \big\{\mu\in\mathcal{P}(\setX):\ \mu(W_t)>0\big\} 
\]
and note that, under \ref{A0}, the functions $G_t$ and
$Q_{t+1}(\ind_{W_{t+1}})$ are positive on $W_t$, so
that $\mu(G_t)>0$ for any
$\mu\in\mathcal{P}_{W_t}(\setX)$. We can therefore define
the mappings
$\Psi_t\colon\mathcal{P}_{W_t}(\setX)\to
\mathcal{P}_{W_t}(\setX)$ and
$\Phi_{t+1}\colon\mathcal{P}_{W_t}(\setX)\to
\mathcal{P}_{W_{t+1}}(\setX)$ by
\begin{align}\label{eq:def_Phi}
  \Psi_t(\mu)(\dd x)\coloneqq
  \frac{G_t(x)}{\mu(G_t)}\,\mu(\dd x)
  \qquad\text{and}\qquad
  \Phi_{t+1}(\mu)\coloneqq\Psi_t(\mu)M_{t+1}
  =\frac{\mu Q_{t+1}}{\mu Q_{t+1}(1)}.
\end{align}
When  $G_t>0$ on
$\setX$ for any $t\geq 1$, \ref{A0} holds with
$W_t=\setX$, in which case
$\mathcal{P}_{W_t}(\setX)=\mathcal{P}(\setX)$. Observe also that the
expressions in \eqref{eq:def_Phi} make sense as soon as
$\mu(G_t)>0$, but only their restrictions to
$\mathcal{P}_{W_t}(\setX)$ are used in the following.

Throughout the article, we assume that the initial distribution
$\eta_1$ belongs to $\mathcal{P}_{W_1}(\setX)$. In this context, $\eta_t\in\mathcal{P}_{W_t}(\setX)$
for any $t\geq 1$, and the normalized measures
$\eta_t$ satisfy the nonlinear evolution
equation
\begin{equation}\label{def-ref-eta}
\eta_{t+1}=\Phi_{t+1}(\eta_t),
\qquad t\geq 1.
\end{equation}
Following \citet[][Definition~2.3.2]{DelMoral:book}, the
flow $\eta_t$ is called the Feynman--Kac
\emph{prediction model} and the functions
$G_t$ are called the \emph{potential
functions}. The flow $\widehat{\eta}_t$
defined by $\widehat{\eta}_t\coloneqq\Psi_t(\eta_t)$ is
called the \emph{updated model}, and the pair
$\big(\eta_t,\widehat{\eta}_t\big)$
the \emph{Feynman--Kac model}. In the filtering problem
discussed in Section~\ref{sub:motive_example}, the
measures $\eta_t$ and $\widehat{\eta}_t$ coincide with
the predictive and the filtering distributions of the
hidden state.

We use the convention that $\mathcal{P}_{W_0}(\setX)=\mathcal{P}(\setX)$, and that $\Phi_1$ and $Q_1$ are the
identity operators, that is, $\Phi_1(\mu)\coloneqq\mu$
and $Q_1(x,\dd x')\coloneqq\delta_x(\dd x')$, so that
$\mu Q_1=\mu$ for any $\mu\in\mathcal{M}(\setX)$. For any
integers $t\geq s\geq 1$, we set
\[
\Phi_{s,t}\coloneqq\Phi_t\circ\cdots\circ\Phi_{s+1}
\quad\text{and}\quad
Q_{s,t}\coloneqq Q_{s+1}\cdots Q_t,
\]
with the convention $\Phi_{s,s}\coloneqq\Phi_1$ and
$Q_{s,s}\coloneqq Q_1$. Note that, by \ref{A0},
$\Phi_{s,t}(\mu)$ is well defined and belongs to
$\mathcal{P}_{W_t}(\setX)$ for any $t\geq s\geq 1$ and any
$\mu\in\mathcal{P}_{W_s}(\setX)$. For any integer $s\geq 1$, and any probability measures
$\mu\in\mathcal{P}_{W_{s-1}}(\setX)$ and
$\nu\in\mathcal{P}(\setX)$, we also set
\begin{equation}\label{eq:delta_s}
\delta_s(\mu,\nu)\coloneqq\nu-\Phi_s(\mu);
\end{equation}
in particular, $\delta_1(\mu,\nu)=\nu-\mu$. With this notation in place, we check
that, for any $t\geq s\geq 1$,
\begin{align*}
\Phi_{s,t}(\mu)=\frac{\mu Q_{s,t}}{\mu Q_{s,t}(1)}
\quad\text{and}\quad
\eta_t=\Phi_{s,t}(\eta_s),
\end{align*}
for any $\mu\in\mathcal{P}_{W_s}(\setX)$.

Feynman--Kac models arise in a variety of areas,
including applied mathematics
\citep{beskos2017multilevel}, statistics
\citep{del2006sequential}, physics
\citep{caffarel1988development,simon2005functional,
caffarel2025mathematical}, nonlinear filtering and
control theory \citep{del2000branching} and
quantitative finance \citep{pham2009continuous}; see
also the books \citet{DelMoral:book,del2013mean}. In state-space models, they describe the conditional
distributions of a hidden Markov chain given the
observations (Section~\ref{sub:motive_example}).

\subsubsection{Objectives of the article}

In practice, the flows $\eta_t$ and $\hat{\eta}_t$ are rarely computable, and various methods to approximate them have been proposed in the literature. Obtaining   error bounds for these methods that  hold under weak assumptions on the Feynman-Kac model (e.g.~non-compact state spaces and unstable Markov kernels $M_{t+1}$) is one motivation of this
article.

To state more precisely the purpose of this article, let $\mu_t\in\mathcal{P}_{W_t}(\setX)$ be an approximation 
of the flow $\eta_t$ defined in \eqref{def-ref-eta}. By
definition, the sequence $\mu_t$ follows the
Feynman--Kac recursion up to the \emph{local
perturbations}
$\delta_{t+1}(\mu_t,\mu_{t+1})=\mu_{t+1}-\Phi_{t+1}(\mu_t)$.  With the
convention $\mu_0\coloneqq\eta_1$, at any time $t\geq 1$ the the global error
can be decomposed as follows:
\begin{align}\label{eq:decomp_mu}
\mu_t-\eta_t=\sum_{s=1}^{t}\Big(
\Phi_{s,t}\big(\Phi_s(\mu_{s-1})+\delta_s(\mu_{s-1},\mu_s)\big)
-\Phi_{s,t}\big(\Phi_s(\mu_{s-1})\big)\Big),
\end{align}
where all the terms are well defined by \ref{A0}, since
$\mu_s$ and $\Phi_s(\mu_{s-1})$ belong to
$\mathcal{P}_{W_s}(\setX)$ for any $s\geq 1$. In \eqref{eq:decomp_mu}, the $s$-th
term is the propagation up to time $t$ of the local
perturbation $\delta_s(\mu_{s-1},\mu_s)$. These local errors
may thus accumulate over time, and the main purpose of this article is to give general conditions on the model and on the approximation $\mu_t$ of the flow $\eta_t$ under
which they do not. These general conditions are then used to obtain error bounds for some popular  methods for approximating the flow $\eta_t$, including  linearized Kalman filter and particle filter algorithms (see Section \ref{sub:main_results} for a summary of our main results).

\subsection{Filtering   with state-space models\label{sub:motive_example}}

Given a \textit{hidden} process $\tilde{X}_t$ and an \textit{observation} process $\tilde{Y}_t$, learning about the random variable $\tilde{X}_t$ from an observation  $(y_1,\dots,y_t)$ of $(\tilde{Y}_1,\dots,\tilde{Y}_t)$ is known as (Bayesian) filtering. State-space models are popular tools for performing this task, and ar, for instance, used in robotics \citep{barfoot2024state}, neuroscience
\citep{paninski2010new} and ecology \citep{newman2023state}. Although not limited to this setting, filtering with state-space models is the  main motivation for the perturbation theory developed in this article.  

In the following,  we assume that $\tilde{X}_t$ takes its values in $\R^d$ and $\tilde{Y}_t$ its values in $\R^r$ for some integers $d\geq 1$ and $r\geq 1$.

\subsubsection{The model\label{sub:model}}

Let  
$\Sigma_{G_t}$ and
$\Sigma_{M_{t+1}}$ be some sequences of
symmetric positive definite matrices of size
$r\times r$ and $d\times d$, respectively, and let
$\eta_1\in\mathcal{P}(\R^d)$. For any $t\geq 1$ we
also consider some measurable functions
\begin{equation}\label{eq:functions}
H_t\colon\R^d\to\R^r,\qquad
F_{t+1}\colon\R^d\to\R^d,\qquad
\zeta_{G_t}\colon[0,\infty)\to(0,1],\qquad
\zeta_{M_{t+1}}\colon[0,\infty)\to(0,\infty)
\end{equation}
such that
\begin{align*}
C_{G_t}\coloneqq\int_{\R^r}\zeta_{G_t}(\|u\|^2)\,
\lambda_r(\dd u)<\infty
\qquad\text{and}\qquad
C_{M_{t+1}}\coloneqq\int_{\R^d}\zeta_{M_{t+1}}(\|v\|^2)\,
\lambda_d(\dd v)<\infty,
\end{align*}
where $\lambda_k$ stands for the Lebesgue measure on
$\R^k$. We then consider the state-space model defined by
\begin{align}\label{eq:SSM}
X_1\sim\eta_1,
\qquad
Y_t=H_t(\tilde{X}_t)+\Sigma_{G_t}^{1/2}\,U_t,
\qquad
X_{t+1}=F_{t+1}(X_t)+\Sigma_{M_{t+1}}^{1/2}\,V_{t+1},
\qquad t\geq 1,
\end{align}
where $X_1$, $U_t$ and
$V_{t+1}$ are independent, and $U_t$ and
$V_{t+1}$ have the spherically symmetric densities
$u\mapsto\zeta_{G_t}(\|u\|^2)/C_{G_t}$ and
$v\mapsto\zeta_{M_{t+1}}(\|v\|^2)/C_{M_{t+1}}$ with
respect to $\lambda_r$ and $\lambda_d$, respectively. This class of models contains the linear Gaussian
state-space model (choosing $\eta_1$ Gaussian, $H_t$
and $F_{t+1}$ linear and
$\zeta_{G_t}(z)=\zeta_{M_{t+1}}(z)=e^{-z/2}$), as well
as nonlinear models with possibly heavy-tailed (e.g.\
Student) noises; see Section~\ref{sec:examples}.
Throughout, capital letters $X_t$ and $Y_t$ refer to the two processes defined by \eqref{eq:SSM}, while lowercase letters $y_t$ denote a deterministic sequence in $\mathbb{R}^r$.

Given some sequence of observations
$y_t$ in $\R^r$, we let $\widehat{p}_t$ be
the conditional distribution of $X_t$ given
$(Y_1,\dots,Y_t)=(y_1,\dots,y_t)$, and $p_{t+1}$ be the
conditional distribution of $X_{t+1}$ given
$(Y_1,\dots,Y_t)=(y_1,\dots,y_t)$, with the convention
$p_1=\eta_1$. These are the \emph{filtering} and the
\emph{predictive} distributions of the model; except for linear
Gaussian models (Kalman filter), they are not
computable.

As indicated by the notation, the two processes defined in \eqref{eq:SSM} may, and in practice typically do, differ from the hidden process of interest $\tilde{X}_t$ and the process $\tilde{Y}_t$ for which observations $y_t$ are available; that is, model \eqref{eq:SSM} may be, and in practice typically is, misspecified.

\subsubsection{Feynman--Kac representation}

For any $t\geq 1$ and $x,x'\in\R^d$, consider the
potential functions
\begin{align}\label{eq:Gt_def}
G_t(x)\coloneqq\zeta_{G_t}\Big(\big(y_t-H_t(x)\big)^\top
\Sigma_{G_t}^{-1}\big(y_t-H_t(x)\big)\Big)
\end{align}
and the Markov transitions
\begin{align}\label{eq:Mt_def}
M_{t+1}(x,\dd x')\coloneqq m_{t+1}(x,x')\,
\lambda_d(\dd x')
\end{align}
with the densities
\begin{align}\label{eq:Mt_def2}
m_{t+1}(x,x')\coloneqq
\frac{1}{C_{M_{t+1}}\det(\Sigma_{M_{t+1}})^{1/2}}\,
\zeta_{M_{t+1}}\Big(\big(x'-F_{t+1}(x)\big)^\top
\Sigma_{M_{t+1}}^{-1}\big(x'-F_{t+1}(x)\big)\Big).
\end{align}

Up to a multiplicative constant that does not depend on
$x$, the function $G_t(x)$ coincides with the
likelihood of the observation $y_t$ under the distribution of $Y_t|X_t=x$, and
$M_{t+1}$ is the Markov transition of the hidden state $X_t$.
In this situation, Bayes' formula yields $\eta_t=p_t$
and $\widehat{\eta}_t=\widehat{p}_t$ for any
$t\geq 1$; that is, the predictive and the filtering
distributions coincide with the prediction and the
updated models defined in Section~\ref{sub:setup}, and below we will
refer to the Feynman--Kac model defined by
\eqref{eq:Gt_def}--\eqref{eq:Mt_def2} as the
\emph{Feynman--Kac representation} of the state-space
model \eqref{eq:SSM} associated with the observations
$y_t$. Our main results, summarized in     in  Section~\ref{sub:main_results}, therefore apply
to any approximation of the  distributions $p_t$ and $\hat{p}_t$.  The
potential functions  defined in \eqref{eq:Gt_def} depend on the observations $y_t$,
treated as a fixed sequence, and applying our theory to the Feynman--Kac model \eqref{eq:Gt_def}--\eqref{eq:Mt_def2}  therefore require to impose some conditions on this sequence (see Section~\ref{sec:examples}). For instance, for linear Gaussian state-space
models with $r\geq d$, these conditions reduce to an explicit
condition on $y_t$ (which does not necessarily implies that $\sup_{t\geq 1}\|y_t\|<\infty$), with no stability condition on the
  dynamics of $X_t$.

\subsubsection{Beyond filtering}

The Feynman--Kac model \eqref{eq:Gt_def}--\eqref{eq:Mt_def2}
is well defined for any deterministic sequence $y_t$ in
$\R^r$, whether or not it is a realization $y_t=Y_t(\omega)$
of the observation process of \eqref{eq:SSM}. Filtering,
where the $y_t$ are realizations of some observation process $\tilde{Y}_t$, is
only one instance of this framework. To introduce another instance of this framework  observe that, since $G_t$ takes
values in $(0,1]$, $\eta_t$ is the distribution of 
  the Markov chain $X'_t$, with transitions $M_{t+1}$ and killed at
each time $s$ with probability $1-G_s(X'_s)$, conditionally
on its survival up to time $t-1$. In this context, $\gamma_t(1)$ is the
survival probability of $X'_t$.

For instance, let $r=d$, $H_t(x)=x$,
$\zeta_{G_t}(z)=e^{-z/2}$, and let  $y_t$ be a reference trajectory for the Markov chain $\tilde{X}_t$. In addition, write $G_t=e^{-V_t}$ with
\[
V_t(x)=\tfrac{1}{2}\,(x-y_t)^\top\Sigma_{G_t}^{-1}(x-y_t).
\]
The potential $V_t$ is a quadratic well centred at $y_t$,
which moves deterministically in time. The chain $X'_t$ is therefore
killed at a rate that increases with its distance to the
reference trajectory $y_t$, and the
Feynman--Kac model can be seen as a time-inhomogeneous
harmonic oscillator. With linear Gaussian transitions and
$y_t=0$, one recovers the time-homogeneous harmonic
oscillators studied in \citet{delmoral2023coupled} and
\citet{caffarel2025mathematical}. For time-homogeneous
models, the fixed points $\eta_\infty=\Phi(\eta_\infty)$ of
the semigroup are the quasi-stationary distributions of the
killed chain
\citep{del2000branching,delmoral2003particle,delmoral2004particle,collet2013quasi,champagnat2016exponential},
and are related to the ground states of
Feynman--Kac--Schr\"odinger operators
\citep{DelMoral:book,caffarel2025mathematical}.

\subsection{Main results\label{sub:main_results}}
 
Below we state our main results informally; precise
statements are given in Sections~\ref{sec:main}, \ref{sec:compareFK}
and~\ref{sec:PF}.  To this aim, we let   $\|\cdot\|$ be some norm on $\mathcal{M}(\setX)$ and $\mu_t\in\mathcal{P}_{W_t}(\setX)$ be an approximation 
of the flow $\eta_t$ defined in \eqref{def-ref-eta}. In addition, setting $\bm{\mu}\coloneqq(\mu_t)_{t\geq 1}$, for any
time horizon $T\geq 1$  we let
\begin{align*}
\Delta_{T,\bm{\mu},\|\cdot\|}\coloneqq
\max_{1\leq t\leq T}\big\|\delta_t(\mu_{t-1},\mu_t)\big\|\qquad\text{ with $\delta_1(\mu_0,\mu_1)=\mu_1-\eta_1$}
\end{align*}
be the largest local perturbation up to time $T$. Lastly, we let 
$\vartheta\colon[0,\infty)\to[0,\infty)$ be the continuous and increasing function defined by
\begin{align}\label{eq:vartheta}
\vartheta(z)\coloneqq z\log\big(1+z^{-1}\big),
\quad z\in(0,\infty),
\qquad\vartheta(0)\coloneqq 0.
\end{align}

The next theorem is an informal version of
Theorems~\ref{thm:main_TV} and~\ref{thm:main} and of
Propositions~\ref{prop:filter_TV}
and~\ref{prop:filter}.

\begin{theoremintro}\label{thmA}
Assume that the potential functions satisfy the conditions of Section~\ref{sub:primitive}, and let $\|\cdot\|$ be the
total variation norm or, when $\setX\subseteq\R^d$, the
Kolmogorov norm. Then, there exists a constant $C_{\eta_1}\in[1,\infty)$ that does not depend
 on $\bm{\mu}$ such that, for any sequence
$\bm{\mu}=(\mu_t)_{t\geq 1}$ with
$\mu_t\in\mathcal{P}_{W_t}(\setX)$, and any $T\geq 1$, we
have the   estimate
\begin{align}\label{eq:summaryMain}
\max_{1\leq t\leq T}\|\mu_t-\eta_t\|\vee \max_{1\leq t\leq T}\|\widehat{\mu}_t-\widehat{\eta}_t\|
\leq C_{\eta_1}\,
\vartheta\big(\Delta_{T,\bm{\mu},\|\cdot\|}\big).
\end{align}
\end{theoremintro}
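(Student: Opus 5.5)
The starting point will be the telescoping decomposition \eqref{eq:decomp_mu}. The task is then to bound each summand, i.e.\ the propagation to time $t$ of the local error $\delta_s\coloneqq\delta_s(\mu_{s-1},\mu_s)$ injected at time $s$. For each $s$ I write $\nu_s\coloneqq\Phi_s(\mu_{s-1})$, so that $\mu_s=\nu_s+\delta_s$. The summand then takes the form
\[
\Phi_{s,t}(\nu_s+\delta_s)-\Phi_{s,t}(\nu_s)
=\frac{(\nu_s+\delta_s)Q_{s,t}}{(\nu_s+\delta_s)Q_{s,t}(1)}-\frac{\nu_sQ_{s,t}}{\nu_sQ_{s,t}(1)} .
\]
The conditions of Section~\ref{sub:primitive} on the potentials will be used to obtain two complementary estimates for this term, valid for any $\mu\in\mathcal P_{W_s}(\setX)$.

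The first estimate is a \emph{stability (forgetting) bound}. It should say that
\[
\|\Phi_{s,t}(\mu)-\Phi_{s,t}(\mu')\|\le C\rho^{t-s}
\]
for some $\rho<1$, uniformly over probability measures that put mass on a confinement set where the potentials are bounded below. I expect this to come from a Dobrushin-type contraction of the normalized semigroup $P_{s,t}(x,\cdot)=Q_{s,t}(x,\cdot)/Q_{s,t}(1)(x)$. The required minorization is supplied by the potentials confining the mass to compact-like sets, and not by any mixing of $M_{t+1}$. Combined with the usual identity $\Phi_{s,t}(\mu)-\Phi_{s,t}(\nu)=\frac{(\mu-\nu)Q_{s,t}\,\cdot}{\mu Q_{s,t}(1)}$, this gives a bound of the form $C\rho^{t-s}$ times a factor depending on how much mass $\mu_s$ puts on good sets.

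The second estimate is a \emph{local Lipschitz bound}: $\|\Phi_{s,t}(\nu_s+\delta_s)-\Phi_{s,t}(\nu_s)\|\le C\|\delta_s\|$. The constant here must not depend on $t-s$ and must be uniform in $\nu_s$, as long as $\nu_s$ lies in the image $\Phi_s(\cdot)$ and therefore enjoys the a priori tightness provided by the potentials. I expect to get this by bounding the ratio $\sup_x Q_{s,t}(1)(x)/\nu_sQ_{s,t}(1)$ by a constant. For the Kolmogorov norm, the same step requires controlling $\delta_s Q_{s,t}(f)$ through a Koksma--Hlawka inequality on unbounded domains, which bounds the Hardy--Krause variation of $Q_{s,t}(f)/\nu_sQ_{s,t}(1)$.

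With these two estimates, the summand is at most $C\min(\|\delta_s\|,\rho^{t-s})\le C\min(\Delta_T,\rho^{t-s})$. Summing over $s\le t$, the terms with $t-s\le k\coloneqq\lceil\log(\Delta_T^{-1})/\log(\rho^{-1})\rceil$ contribute $Ck\Delta_T$, and the remaining terms contribute a geometric tail $C\rho^k/(1-\rho)\le C'\Delta_T$. This yields $C\Delta_T\log(1+\Delta_T^{-1})=C\vartheta(\Delta_T)$; for $\Delta_T$ of order one the bound is trivial because norms of differences of probability measures are bounded. The updated model follows from $\widehat\mu_t-\widehat\eta_t=\Psi_t(\mu_t)-\Psi_t(\eta_t)$, which can be written as a one-step continuation of the same decomposition, with $\Psi_t$ in place of $\Phi_{t+1}$. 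The key steps therefore go through unchanged.

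The main obstacle is the uniformity in the second and first steps. The constants depend on the mass that the perturbed measures $\nu_s$ place on the confinement sets, and $\nu_s=\Phi_s(\mu_{s-1})$ is built from an arbitrary sequence $\mu_{s-1}$. I would first show that one application of $\Phi_s$ regularizes any input: because $G$ is small far from the observations, $\Phi_s(\mu)$ has tails controlled independently of $\mu$, up to a constant $C_{\eta_1}$ that is needed only at $s=1$. That would make the constants in both estimates independent of $\bm\mu$.
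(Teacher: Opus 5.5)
There is a genuine gap, and it sits in the step you flag as the main obstacle: the proposed fix is false. One application of $\Phi_s$ does not regularize an arbitrary input, because the normalization in $\Psi_{s-1}$ cancels the smallness of $G_{s-1}$.

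For $\mu=\delta_x$ one has $\Psi_{s-1}(\delta_x)=\delta_x$, hence $\Phi_s(\delta_x)=M_s(x,\cdot)$. Take the linear Gaussian model with $d=r=1$, $H_t(x)=F_{t+1}(x)=x$, unit noise variances and $y_t=0$, which satisfies all the paper's conditions. There $\Phi_s(\delta_x)=\mathcal{N}_1(x,1)$, whose mass on $D_s=[-1,1]$ tends to $0$ as $|x|\to\infty$; the same happens for Gaussian inputs with a large mean. So $\inf_\mu\Phi_s(\mu)(D_s)=0$, and neither of your two estimates can have a constant uniform over $\nu_s$ in the image of $\Phi_s$.

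\begin{itemize}
\item The ratio $\sup_xQ_{s,t}(1)(x)/\nu_sQ_{s,t}(1)$ is bounded only by a function of $\nu_s(D_s)$ that blows up as $\nu_s(D_s)\to0$ (Lemma~\ref{lemma:ratio}). In the example it grows like $e^{cx^2}$.
\item Mixing $\nu_s=M_s(x',\cdot)$ with a tiny mass $\epsilon$ near $0$ changes $\Phi_{s,s+1}$ by almost one in total variation, so no uniform Lipschitz constant exists.
\item The forgetting bound carries the factor $1/(\mu_s(D_s)\,\Phi_s(\mu_{s-1})(D_s))$ (Lemma~\ref{lemma:forgetting}).
\end{itemize}

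Your summation therefore only yields $\varphi(\iota_T)\,\vartheta(\Delta_T)$, where $\iota_T$ is the smallest of $\eta_1(D_1)$, $\mu_t(D_t)$ and $\Phi_{t+1}(\mu_t)(D_{t+1})$ over $t<T$. This is exactly Lemma~\ref{lemma:main_Gen}, and for an arbitrary sequence $\bm\mu$ this $\iota_T$ can be arbitrarily small. Note also that your claim would make~\ref{A2} automatic, whereas the paper needs it.

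The missing idea is to take the confinement from the \emph{true} flow and close the argument by a bootstrap in $T$. The constant in the theorem is independent of $\bm\mu$ because of this bootstrap, not because of any a priori tightness of $\Phi_s(\mu)$.

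\begin{enumerate}
\item From~\ref{A2}, $\eta_1(D_1)>0$ and forgetting, one gets $\varrho\coloneqq\inf_{t\geq1}\eta_t(D_t)>0$ (Lemma~\ref{lemma:lower_phi}). This is where the dependence on $\eta_1$ enters, and it is a whole-flow property, not a one-step one.
\item Argue by induction on $T$. If $\Delta_T\leq\xi$ with $\xi$ small, the inductive hypothesis gives $d(\mu_t,\eta_t)\leq c_\varrho$ for $t<T$. Hence $\mu_t(D_t)\geq\varrho/2$. Applying the Lipschitz bound with reference measure $\eta_t$ gives $\Phi_{t+1}(\mu_t)(D_{t+1})\geq\eta_{t+1}(D_{t+1})-\varphi_2(\varrho)\,d(\mu_t,\eta_t)\geq\varrho/2$, up to the normalizing constant of $f_t$.
\item Since $\iota_T$ involves $\mu_t$ only for $t<T$, this gives $\iota_T\geq\varrho/2$. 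Choosing $\xi$ with $\varphi(\varrho/2)\vartheta(\xi)\leq c_\varrho$ then closes the induction.
\item If $\Delta_T>\xi$, the bound is trivial because all distances are at most one.
\end{enumerate}

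In the Kolmogorov setting, passing from $d(\mu_t,\eta_t)$ to a lower bound on $\mu_t(D_t)$ requires $\sup_t\mathcal{V}_\setX(\ind_{D_t})<\infty$, which is why~\ref{A3} is assumed.

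Two smaller remarks. First, the paper does not get forgetting from a Dobrushin coefficient of the normalized kernels, since minorization holds only on the level sets $C_{t,\varrho}$. It uses a pair-path argument in which each step outside $C_{i,\varrho}\times C_{i,\varrho}$ costs a factor $\varrho$, while $\mu Q_{s,t}(1)\geq\mu(D_s)\underline{q}^{\,t-s}$; this is why $\varrho<\underline{q}^2$ is required. Second, for the updated model you need not rerun the decomposition. Write $\Psi_t(\mu_t)(f)-\widehat\eta_t(f)$ over the denominator $\eta_t(G_t)\geq\underline{q}\varrho$; this reduces it directly to $|\mu_t(G_tf)-\eta_t(G_tf)|+|\mu_t(G_t)-\eta_t(G_t)|$ (Lemma~\ref{lemma:filter_Gen}).
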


For the updated model, \eqref{eq:summaryMain} also holds for some class of
unbounded test functions when the local perturbations are small enough.
For the state-space models introduced in Section~\ref{sec:examples}, this class
contains the coordinates of $H_t$, and hence the posterior mean
$\int_\setX x\,\widehat{\eta}_t(\dd x)$ when $H_t(x)=Ax$ with $A$ of
full column rank matrix (e.g.~$H_t(x)=x$; see Propositions~\ref{prop:filter_TV} and~\ref{prop:filter}).

To the best of our knowledge, these are the first
perturbation  bounds for Feynman--Kac models that
hold  under general
conditions. In particular, these bounds hold for  non-compact state spaces and unstable Markov kernels $M_{t+1}$, and, in the context of the state-space model introduced in Section \ref{sub:motive_example}, for observation sequences  compatible with the
dynamics of the model (in a sense that will be made clear in Section \ref{subsub:O_condition}); for instance, if  $X_t$ is random walk then $y_t$ may be unbounded but must have
unbounded sequence  $y_t$ with bounded increments. Still in the context of  Section \ref{sub:motive_example}, if the
observations generated by the state-space model, that is, if
realizations $y_t=Y_t(\omega)$,
then compatibility condition on the observations only holds on finite time
horizons
(see Section~\ref{sub:LG_SSM}).

Up to a logarithmic factor, Theorem \ref{thmA} shows that the global error for  
controlled by the largest local error. Since the constant $C_{\eta_1}$ in the theorem does not depend on $\bm{\mu}$, the perturbation bound
\eqref{eq:summaryMain} holds for deterministic
approximations and, pathwise, for random ones such as
particle filter estimates. By \eqref{eq:summaryMain}, one step of an approximation scheme
is all that needs to be controlled to control its global error.
Section~\ref{sec:main} also allows weighted local
errors, used in Section~\ref{sec:compareFK}.

\subsubsection{Comparison of Feynman--Kac models}

In total variation, Theorem~\ref{thmA} can be used to compare two
Feynman--Kac models. Let $G^\star_t$ and
$M^\star_{t+1}$ be potential functions and Markov
transitions defining a model easier to compute or to
sample. We denote by
$Q^\star_{t+1}(x,\dd x')\coloneqq
G^\star_t(x)\,M^\star_{t+1}(x,\dd x')$ the
corresponding integral operators, and by
$\eta^\star_t$ and
$\widehat{\eta}^\star_t$ the corresponding
prediction and updated models, with
$\eta^\star_1=\eta_1$. We also consider the local
discrepancy parameter
\begin{align}\label{eq:delta}
\Delta_{Q|Q^\star}\coloneqq
\sup_{t> 1}\sup_{x\in\setX}
\big\|\delta_xQ^\star_{t}-\delta_xQ_{t}\big\|_{\mathrm{tv}} 
\end{align}
where $\|\cdot\|_{\mathrm{tv}}$ denote the total variation norm on $\mathcal{M}(\setX)$.

Choosing $\mu_t=\eta^\star_t$ in Theorem~\ref{thmA} yields
  the following result, which is  an informal version of
Theorem~\ref{thm:compare}.

\begin{theoremintro}\label{thmB}
Assume that the total variation conditions of
Theorem~\ref{thmA} are satisfied by the model
$(G_t,M_{t+1})$. Then, there exists a constant $C_{\eta_1}\in[1,\infty)$ that only
depends on $G_t$, $M_{t+1}$ and
$\eta_1$ such that, for any approximating model
$(G^\star_t,M^\star_{t+1})$, we have the
time-uniform estimate
\begin{align}\label{eq:delta_Q}
\sup_{t\geq 1}\|\eta^\star_t-\eta_t\|_{\mathrm{tv}}
\vee
\sup_{t\geq 1}
\|\widehat{\eta}^{\,\star}_t-\widehat{\eta}_t\|_{\mathrm{tv}}
\leq C_{\eta_1}\,\vartheta\big(\Delta_{Q|Q^\star}\big).
\end{align}
\end{theoremintro}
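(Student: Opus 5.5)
The plan is to apply Theorem~\ref{thmA} in total variation with the choice $\mu_t\coloneqq\eta^\star_t$. The constant $C_{\eta_1}$ there does not depend on the approximating sequence. So it suffices to show that the largest local perturbation $\Delta_{T,\bm{\mu},\|\cdot\|_{\mathrm{tv}}}$ of this sequence is bounded, up to a constant depending only on $(G_t,M_{t+1},\eta_1)$, by $\Delta_{Q|Q^\star}$, uniformly in $T$. We must also check that $\eta^\star_t\in\mathcal{P}_{W_t}(\setX)$ so that the local errors are well defined. This can be arranged by the standing conventions or by noting that if $\eta^\star_t(W_t)=0$ the bound is trivial after enlarging the constant. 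Since $\vartheta$ is increasing and $\mu_1=\eta^\star_1=\eta_1$ gives $\delta_1=0$, letting $T\to\infty$ then yields \eqref{eq:delta_Q}. This holds provided we also absorb multiplicative constants into $\vartheta$, using $\vartheta(cz)\leq c\,\vartheta(z)$ for $c\geq1$, which follows from $\log(1+(cz)^{-1})\leq\log(1+z^{-1})$.

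The core step is the one-step estimate. Write $\eta^\star_{t+1}=\eta^\star_tQ^\star_{t+1}/\eta^\star_tQ^\star_{t+1}(1)$ and $\Phi_{t+1}(\eta^\star_t)=\eta^\star_tQ_{t+1}/\eta^\star_tQ_{t+1}(1)$. The elementary inequality for normalized measures is
\[
\Big\|\frac{\alpha}{\alpha(1)}-\frac{\beta}{\beta(1)}\Big\|_{\mathrm{tv}}\leq \frac{2\|\alpha-\beta\|_{\mathrm{tv}}}{\beta(1)},
\]
and $\|\mu(Q^\star-Q)\|_{\mathrm{tv}}\leq\sup_x\|\delta_xQ^\star-\delta_xQ\|_{\mathrm{tv}}$. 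Together they give
\[
\|\delta_{t+1}(\eta^\star_t,\eta^\star_{t+1})\|_{\mathrm{tv}}\leq \frac{2\Delta_{Q|Q^\star}}{\eta^\star_t(G_t)}.
\]
The local error is therefore controlled by $\Delta_{Q|Q^\star}$ up to the reciprocal of the normalizing constant $\eta^\star_t(G_t)$. This is exactly where the weighted local errors of Section~\ref{sec:main} enter: the local error is naturally weighted by $1/\eta^\star_t(G_t)$.

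The main obstacle is that $\eta^\star_t(G_t)$ need not be bounded below uniformly, since the state space is non-compact and $G_t$ may be small. I would handle this with a bootstrap argument. First, the conditions of Section~\ref{sub:primitive} should give a lower bound $\inf_t\eta_t(G_t)\geq c_{\eta_1}>0$ for the exact flow. Second, if $\|\eta^\star_t-\eta_t\|_{\mathrm{tv}}\leq\varepsilon$ with $\varepsilon\leq c_{\eta_1}/2$, then $\eta^\star_t(G_t)\geq c_{\eta_1}/2$, because $G_t\in\mathcal{B}_1$.

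Next I would run an induction on $T$, assuming $\Delta_{Q|Q^\star}$ is small enough that $C_{\eta_1}\vartheta(4\Delta_{Q|Q^\star}/c_{\eta_1})\leq c_{\eta_1}/2$. If the global errors up to time $t$ are at most $c_{\eta_1}/2$, then every local error up to time $t+1$ is at most $4\Delta_{Q|Q^\star}/c_{\eta_1}$. Theorem~\ref{thmA} then bounds the global errors up to time $t+1$ by $C_{\eta_1}\vartheta(4\Delta_{Q|Q^\star}/c_{\eta_1})\leq c_{\eta_1}/2$, which closes the induction and gives the bound uniformly in time. If instead $\Delta_{Q|Q^\star}$ exceeds this threshold, the total variation distances are at most $2$, which is bounded by $C\vartheta(\Delta_{Q|Q^\star})$ for a suitable $C$ because $\vartheta$ is increasing and positive. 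The updated models are handled by the same statement of Theorem~\ref{thmA}. If the weighted version of Section~\ref{sec:main} allows weights tied to $\mu_t(G_t)$ directly, I would use it instead of the bootstrap.
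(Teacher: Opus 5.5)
\textbf{Gap: the updated flow.} Your argument for the prediction flow is correct. The updated-model step is not. Theorem~\ref{thmA} applied to $\mu_t=\eta^\star_t$ controls $\widehat{\mu}_t=\Psi_t(\eta^\star_t)$, which reweights $\eta^\star_t$ by the \emph{true} potential $G_t$. But $\widehat{\eta}^{\,\star}_t=\Psi^\star_t(\eta^\star_t)$ reweights by $G^\star_t$. The two coincide only when $G^\star_t=G_t$, as for the linearized Kalman filter and the Euler scheme. They differ in the nudging application, where $G^\star_t=G_{t,\chi}\neq G_{t,0}$. So the sentence ``the updated models are handled by the same statement of Theorem~\ref{thmA}'' fails in general. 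You need the split
\[
\|\widehat{\eta}^{\,\star}_t-\widehat{\eta}_t\|_{\mathrm{tv}}\leq\|\Psi^\star_t(\eta^\star_t)-\Psi_t(\eta^\star_t)\|_{\mathrm{tv}}+\|\Psi_t(\eta^\star_t)-\Psi_t(\eta_t)\|_{\mathrm{tv}}.
\]
Only the second term is covered by Theorem~\ref{thmA} (Proposition~\ref{prop:filter_TV}). The first term needs a separate one-step estimate, built on the observation that $\Delta_{Q|Q^\star}$ also controls the mismatch of potentials:
\[
|G^\star_t(x)-G_t(x)|=|(\delta_xQ^\star_{t+1}-\delta_xQ_{t+1})(\setX)|\leq\Delta_{Q|Q^\star}.
\]
Your normalized-measure inequality with $\alpha=G^\star_t\eta^\star_t$ and $\beta=G_t\eta^\star_t$ then gives $\|\Psi^\star_t(\eta^\star_t)-\Psi_t(\eta^\star_t)\|_{\mathrm{tv}}\leq 2\Delta_{Q|Q^\star}/\eta^\star_t(G_t)$; the paper gets $\Delta_{Q|Q^\star}/\eta^\star_t(G_t)$ by centering. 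In your small-$\Delta_{Q|Q^\star}$ regime the bootstrap already gives $\eta^\star_t(G_t)\geq c_{\eta_1}/2$ for all $t$. So this term is at most $4\Delta_{Q|Q^\star}/c_{\eta_1}\leq 4\vartheta(\Delta_{Q|Q^\star})/(c_{\eta_1}\log 2)$. The large regime is trivial. With this step added, your proof is complete.

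\textbf{Comparison for the prediction flow.} Your bootstrap is valid but differs from the paper's route. The paper applies Theorem~\ref{thm:main_TV} with the weight $\varpi(z)=z\wedge 1$. It uses $G_t\geq\underline{q}$ on $D_t$ (from condition~\ref{A1}), so $\mu(G_t)\geq\underline{q}\,\mu(D_t)$. Hence the weighted local error satisfies
\[
\|\Phi^\star_{t+1}(\eta^\star_t)-\Phi_{t+1}(\eta^\star_t)\|_{\mathrm{tv}}\,\eta^\star_t(D_t)\leq 2\Delta_{Q|Q^\star}/\underline{q}
\]
for every $t$. This needs no lower bound on $\eta^\star_t(G_t)$, no smallness assumption and no induction; the induction you perform is done once and for all inside the proof of Theorem~\ref{thm:main_Gen}. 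Your route needs only the unweighted Theorem~\ref{thmA} plus $\inf_t\eta_t(G_t)>0$ (Lemma~\ref{lemma:lower_phi}), at the cost of a case split on the size of $\Delta_{Q|Q^\star}$. The paper still needs a lower bound on $\eta^\star_t(G_t)$ for the updated model. It obtains it exactly as in your bootstrap, from $\inf_t\eta_t(G_t)>0$ and the prediction bound.
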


The left-hand side of \eqref{eq:delta_Q} involves the whole nonlinear flow on
an infinite horizon, while $\Delta_{Q|Q^\star}$ compares one
step of the two models, pointwise. Since
$\delta_xQ_{t+1}=G_t(x)\,\delta_xM_{t+1}$, discrepancies between the Markov transitions only count
where $G_t$ is not negligible, and $\Delta_{Q|Q^\star}$
can often be bounded by hand. For instance, consider
the state-space model
\begin{align*}
X_{t+1}=F(X_t)+V_{t+1},\qquad Y_t=AX_t+U_t,
\qquad X_1\sim\mathcal{N}_d(w_0,\Sigma_0),
\end{align*}
with some independent noises
$U_t\sim\mathcal{N}_r(0,\Sigma_U)$ and
$V_{t+1}\sim\mathcal{N}_d(0,\Sigma_V)$, some matrix
$A\in\R^{r\times d}$ with rank $d$, and some
continuously differentiable function
$F\colon\R^d\to\R^d$. The linearized Kalman filter
replaces, at each time step, the function $F$ by its
first order Taylor expansion around some point
$x^\star_{t+1}$, and the resulting Gaussian approximations
$(\eta_t^\star,\widehat{\eta}^\star_t)$ of $(\eta_t,\widehat{\eta}_t)$ are  
computed exactly by the Kalman recursions. The  accuracy of these approximations
depends on the nonlinearity of $F$, measured by the parameter 
\[
\kappa_F\coloneqq\sup_{x,x'\in\R^d}
\|J_F(x)-J_F(x')\|,
\]
 where $J_F$ stands for the Jacobian
matrix of $F$; note that   $\kappa_F=0$ if and only if $F$ is affine. Theorem~\ref{thmB} suggests linearizing $F$ where $G_t$
is maximal; that is, at the weighted least squares estimate
$x^\star_{t+1}=Ky_t$ of $X_t$ under the distribution of  $X_t|Y_t=y_t$, with the
matrix
$K\coloneqq(A^\top\Sigma_U^{-1}A)^{-1}A^\top\Sigma_U^{-1}$.
Specifically, we obtain the following result, which is an informal version of
Proposition~\ref{prop:extended}.

\begin{theoremintro}\label{thmC}
For any $L\geq 1$ and $\Xi\geq 0$, there exists some
constant $C$ that only depends on the parameters
$(A,\Sigma_U,\Sigma_V,w_0,\Sigma_0)$ and $(L,\Xi)$, such
that, for any continuously differentiable and
$L$-Lipschitz function $F$ and any sequence of
observations such that 
\begin{align}\label{eq:obs_cond}
\|Ky_1-w_0\|\leq\Xi\quad\mbox{and}
\quad\sup_{t\geq 1}\|Ky_{t+1}-F(Ky_t)\|\leq\Xi,
\end{align}
 we have the
time-uniform estimate $\sup_{t\geq 1}
\|\widehat{\eta}^\star_t-\widehat{\eta}_t\|_{\mathrm{tv}}
\leq C\,\vartheta(\kappa_F)$ with $\vartheta$  as in \eqref{eq:vartheta}. In addition, when $\kappa_F\leq 1/C$ we have
\begin{align*}
\sup_{t\geq 1}\Big\lVert\int_{\R^d}x\,
\widehat{\eta}^\star_t(\dd x)
-\int_{\R^d}x\,\widehat{\eta}_t(\dd x)\Big\rVert
\leq C\,\vartheta(\kappa_F).
\end{align*}
\end{theoremintro}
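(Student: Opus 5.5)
The plan is to derive Theorem~\ref{thmC} from Theorem~\ref{thmB}: check that the model $(G_t,M_{t+1})$ satisfies the total variation conditions of Theorem~\ref{thmA}, with constants depending only on $(A,\Sigma_U,\Sigma_V,w_0,\Sigma_0,L,\Xi)$, and then bound $\Delta_{Q|Q^\star}$ by $C\kappa_F$.

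\textbf{Step 1 (the linearized model).} The linearized model is $G^\star_t=G_t$ with $G_t(x)=\exp(-\tfrac12(y_t-Ax)^\top\Sigma_U^{-1}(y_t-Ax))$. Its Markov transitions are $M^\star_{t+1}(x,\cdot)=\mathcal N_d(F^\star_{t+1}(x),\Sigma_V)$, where $F^\star_{t+1}(x)=F(x^\star_{t+1})+J_F(x^\star_{t+1})(x-x^\star_{t+1})$ and $x^\star_{t+1}=Ky_t$. Since $G^\star_t$ is Gaussian and $F^\star_{t+1}$ is affine, the Kalman recursion computes $(\eta_t^\star,\widehat\eta_t^\star)$ exactly.

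\textbf{Step 2 (bounding $\Delta_{Q|Q^\star}$).} Because $A$ has full column rank, completing the square gives
\[
(y_t-Ax)^\top\Sigma_U^{-1}(y_t-Ax)\ge c\|x-Ky_t\|^2,
\]
with $c$ the smallest eigenvalue of $A^\top\Sigma_U^{-1}A$. Hence $G_t(x)\le e^{-c\|x-Ky_t\|^2/2}$.

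The total variation distance between two Gaussians with the same covariance is at most $C\|\Sigma_V^{-1/2}(m-m')\|$. By the mean value theorem applied to $J_F$,
\[
\|F(x)-F^\star_{t+1}(x)\|\le\kappa_F\|x-Ky_t\|.
\]
Therefore
\[
\sup_x G_t(x)\,\|\delta_xM_{t+1}-\delta_xM^\star_{t+1}\|_{\mathrm{tv}}\le C\kappa_F\sup_{u}\|u\|e^{-c\|u\|^2/2}\le C'\kappa_F .
\]
This gives $\Delta_{Q|Q^\star}\le C'\kappa_F$. Since $\vartheta$ is increasing and $\vartheta(\lambda z)\le \max(\lambda,1)\,C\,\vartheta(z)$, we get $\vartheta(\Delta_{Q|Q^\star})\le C\vartheta(\kappa_F)$.

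\textbf{Step 3 (conditions of Section~\ref{sub:primitive}; the main obstacle).} The conditions must hold uniformly over all $F$ that are $L$-Lipschitz and all $y_t$ satisfying \eqref{eq:obs_cond}. Confinement comes from $G_t$, which localizes the mass of $\widehat\eta_t$ in a Gaussian window around $Ky_t$.

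For the mixing step, I would lower bound the two-step operator $Q_{t+1}$ restricted to balls $B(Ky_t,R)\to B(Ky_{t+1},R)$. Since $\|F(x)-Ky_{t+1}\|\le\|F(x)-F(Ky_t)\|+\Xi\le LR+\Xi$, the Gaussian density $m_{t+1}(x,x')$ is bounded above and below on these balls. This yields Dobrushin-type minorization constants depending only on $(L,\Xi,R,\Sigma_V,A,\Sigma_U)$. The initial condition $\|Ky_1-w_0\|\le\Xi$ controls $\eta_1$ in the same way.

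I would set $W_t=\R^d$ and, if needed, pick $R$ large enough that the mass leaking outside the balls is controlled by the Gaussian tails of $G_t$. The delicate point is matching the exact form of the conditions in Section~\ref{sub:primitive} (presumably a ratio bound of the type $\sup_x Q_{s,t}(1)(x)/\mu Q_{s,t}(1)$). I expect Section~\ref{sec:examples} to verify these for the class \eqref{eq:SSM} under a compatibility condition on $y_t$ analogous to \eqref{eq:obs_cond}. I would invoke that result, checking that its constants depend only on the stated parameters. Theorem~\ref{thmB} then gives the first bound.

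\textbf{Step 4 (the posterior mean).} I would apply the unbounded-test-function extension stated after Theorem~\ref{thmA} (Propositions~\ref{prop:filter_TV}). It covers coordinates of $H_t(x)=Ax$ when the perturbation is small, which here means $\kappa_F\le 1/C$. Then $x=K(Ax)$ transfers the bound on $\int Ax\,(\widehat\eta^\star_t-\widehat\eta_t)(\dd x)$ to the mean, at the cost of a factor $\|K\|$.
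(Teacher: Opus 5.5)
Your overall route is the paper's own: Theorem~\ref{thm:compare}, then the bound of Lemma~\ref{lemma:extended}, then Proposition~\ref{prop:example} and part~\ref{tv_unbounded} of Proposition~\ref{prop:filter_TV}. Your Step~2 is correct. The gap is in Steps~3 and~4, which you carry out for the original potentials $G_t$; this does not work.

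Since $A^\top\Sigma_U^{-1}(y_t-AKy_t)=0$, you have $G_t=c_t\widetilde{G}_t$, where $\widetilde{G}_t(x)=\exp(-\tfrac12(x-Ky_t)^\top A^\top\Sigma_U^{-1}A(x-Ky_t))$ and $c_t=\exp(-\tfrac12(y_t-AKy_t)^\top\Sigma_U^{-1}(y_t-AKy_t))=\sup_xG_t(x)$. Condition \eqref{eq:obs_cond} only constrains $Ky_t$. So when $r>d$ the factor $c_t$ is uncontrolled and may even tend to $0$. The hypotheses you need downstream are not invariant under $G_t\mapsto c_tG_t$:
\begin{itemize}
\item Theorem~\ref{thm:compare} requires \ref{A2}, hence \ref{A1a}, which forces $\underline{q}\leq\inf_tc_t$ because $Q_{t+1}(x,D_{t+1})\leq G_t(x)\leq c_t$.
\item Both the proof of Theorem~\ref{thm:compare} and Proposition~\ref{prop:filter_TV} use $\inf_t\eta_t(G_t)>0$, whereas $\eta_t(G_t)\leq c_t$.
\item Your lower bounds for $Q_{t+1}$ on $B(Ky_t,R)$ carry the factor $c_t$, so their constants are not functions of $(L,\Xi,R,\Sigma_V,A,\Sigma_U)$ alone.
\item Proposition~\ref{prop:example} cannot be invoked in this representation at all, since \ref{Obs1} requires $y_t\in A\R^d$, which generically fails when $r>d$.
\end{itemize}

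The missing step is the normalization of Section~\ref{sub:exp}. Replacing $G_t=G^\star_t$ by $\widetilde{G}_t$ leaves $\eta_t,\widehat{\eta}_t,\eta^\star_t,\widehat{\eta}^\star_t$ unchanged. The model $(\widetilde{G}_t,M_{t+1})$ is of the form \eqref{eq:Gt_def}--\eqref{eq:Mt_def2} with observations $Ky_t$, $H_t(x)=x$, $\Sigma_{G_t}=(A^\top\Sigma_U^{-1}A)^{-1}$ and $\zeta_{G_t}(z)=e^{-z/2}$. In this representation the argument goes through:
\begin{itemize}
\item Your Step~2 bound $e^{-c\|x-Ky_t\|^2/2}$ also holds for $\widetilde{G}_t$. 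So it controls $\widetilde{\Delta}_{Q|Q^\star}\geq\Delta_{Q|Q^\star}$, which is the quantity Theorem~\ref{thm:compare} needs here.
\item \ref{Obs1} is trivial, \ref{Obs2} is exactly \eqref{eq:obs_cond}, and part~\ref{i-prop:example} of Proposition~\ref{prop:example} with $\varphi(z)=Lz$ gives \ref{A3}.
\item The uniformity over $F$ and $(y_t)$, which you only promise to check, is part~\ref{rem1} of Remark~\ref{rem:constant}. There $D_t=\{x:\|x-Ky_t\|_\infty\leq 1\}$, and $\eta_1(D_1)\geq a$ for some $a$ depending only on $(d,w_0,\Sigma_0,\Xi)$.
\item In Step~4, use $\widehat{\eta}^\star_t=\Psi_t(\eta^\star_t)$ (because $G^\star_t=G_t$) and apply part~\ref{tv_unbounded} of Proposition~\ref{prop:filter_TV} directly to $f(x)=x_i$. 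This works since $|x_i-(Ky_t)_i|\,\widetilde{G}_t(x)$ is bounded uniformly in $t$, so the detour through $Ax$ and $K$ is unnecessary.
\end{itemize}

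Note also that the conditions of Section~\ref{sub:primitive} are \ref{A1}--\ref{A2}, a local minorization on super-level sets of $G_t$ plus non-vanishing. The ratio bound you guessed is derived from them in Lemma~\ref{lemma:ratio}.
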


In the above theorem, the constant $C$ depends neither
on the time horizon, nor on the sequence of
observations  if \eqref{eq:obs_cond} holds, nor on the function $F$ (if its
Lipschitz constant is bounded by $L$). Observe that no stability of $F$ is required in the theorem; hence,  $F(x)=x+\beta\,g(x)$,
with $g$ smooth with bounded Jacobian and $\beta$
small, is allowed. 

In this article, Theorem \ref{thmB} is also applied to study the Euler--Maruyama time discretizations
of partially observed diffusions and the nudged
state-space models recently introduced in
\citet{gonzalez2025nudging} to handle model
misspecification; see Table~\ref{tab:compare}. To the best of our
knowledge, the time-uniform estimates presented in Table~\ref{tab:compare} are new.

\begin{table}[ht]
\centering
\small
\begin{tabular}{@{}p{0.40\textwidth}p{0.24\textwidth}p{0.26\textwidth}@{}}
\toprule
Approximation & Local discrepancy
$\Delta_{Q|Q^\star}$ & Time-uniform error bound\\
\midrule
Linearized Kalman filter
(Proposition~\ref{prop:extended})
& $\lesssim\kappa_F$
& $\lesssim\vartheta(\kappa_F)$\\
Euler scheme with $l$ steps per observation
(Proposition~\ref{prop:compare_diff})
& $\lesssim 1/l^\star$
& $\lesssim\vartheta(1/l)$\\
Nudging with parameter $\chi$
(Proposition~\ref{prop:compare_nudge})
& $\lesssim\chi^\alpha$
& $\lesssim\vartheta\big(\chi^\alpha\big)$\\
\bottomrule
\end{tabular}
\caption{Time-uniform estimates of
$\sup_{t\geq 1}\|\widehat{\eta}^\star_t
-\widehat{\eta}_t\|_{\mathrm{tv}}$ derived from
Theorem~\ref{thmB} (see Section~\ref{sec:compareFK}).
The constants depend neither on the time parameter nor
on the approximation parameters $\kappa_F$, $l$
and $\chi$. The parameter $\alpha\in(0,1]$ stands
for a H\"older exponent of the function $F$ of the
state-space model.}
\label{tab:compare}
\end{table}

\subsubsection{Particle filters}

Let $\mu_t$ be a particle filter approximation of the flow $\eta_t$. In this context,  $\mu_t$ is supported on finite
sets, and thus, when $\eta_t$ has a density with respect to the Lebesgue measure, the resulting local
perturbations have total variation norm one. For particle filters, the
total variation version of \eqref{eq:summaryMain} is therefore usually
void. On the other hand, in the Kolmogorov norm  $\|\cdot\|_{\mathrm{K}}$  (defined in Section~\ref{sub:basic_notation}), the particle filters local errors
are of order $N^{-1/2}$  by
Dvoretzky--Kiefer--Wolfowitz-type inequalities. Let
$\eta^N_t$ and $\widehat{\eta}^N_t$ be the bootstrap
particle approximations of $\eta_t$ and
$\widehat{\eta}_t$ with $N$ particles. In this notation, the next theorem
is an informal version of Theorem~\ref{thm:PF}.

\begin{theoremintro}\label{thmD}
Assume that the Kolmogorov conditions of
Theorem~\ref{thmA} are satisfied, with
$\setX\subseteq\R^d$. Then, there exists a constant $C_{\eta_1}$   such that, for any $q\in(0,1]$, any $N\geq 1$ and any $T\geq 1$,
  we have, with probability at least $1-q$,
the estimate
\begin{align*}
\max_{1\leq t\leq T}\|\eta^N_t-\eta_t\|_{\mathrm{K}}
\vee
\max_{1\leq t\leq T}
\|\widehat{\eta}^N_t-\widehat{\eta}_t\|_{\mathrm{K}}
\leq C_{\eta_1}\,\vartheta\big(\varepsilon_{N,T,q}\big)
\quad\text{with}\quad
\varepsilon_{N,T,q}\coloneqq
\sqrt{\frac{1+\log(T/q)}{N}}.
\end{align*}
\end{theoremintro}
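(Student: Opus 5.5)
The plan is to reduce Theorem~D to Theorem~A in the Kolmogorov norm, applied pathwise with $\mu_t=\eta^N_t$. This is allowed because the constant $C_{\eta_1}$ in \eqref{eq:summaryMain} does not depend on $\bm{\mu}$. It then suffices to show that, with probability at least $1-q$,
\[
\Delta_{T,\bm{\eta}^N,\|\cdot\|_{\mathrm{K}}}\le C\,\varepsilon_{N,T,q},
\]
and to use that $\vartheta$ is increasing with $\vartheta(Cz)\le C'\vartheta(z)$ for $C\ge1$, so the extra constant can be absorbed. The bound for $\widehat{\eta}^N_t=\Psi_t(\eta^N_t)$ is already part of the conclusion of Theorem~A for the updated model.

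First I need $\eta^N_t\in\mathcal{P}_{W_t}(\setX)$. In the bootstrap filter the particles are resampled from $\Psi_{t-1}(\eta^N_{t-1})$, which is supported in $\{G_{t-1}>0\}$, and then moved by $M_t$. Under \ref{A0} one checks that $\eta^N_t(W_t)>0$ almost surely, or at least outside a null event that can be absorbed. If that fails, I would restrict to the event where the particle system does not die, which is exactly where the particle filter is defined.

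Next, the local errors. Conditionally on $\F_{t-1}$, the sigma-field generated by the particles up to time $t-1$, the bootstrap filter draws $\xi^i_t\iid\Phi_t(\eta^N_{t-1})$, $i=1,\dots,N$. So
\[
\delta_t(\eta^N_{t-1},\eta^N_t)=\eta^N_t-\Phi_t(\eta^N_{t-1})
\]
is the difference between an empirical measure of $N$ conditionally i.i.d.\ samples and their common law on $\R^d$. At $t=1$, $\eta^N_1$ is simply the empirical measure of $N$ samples from $\eta_1$.

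A multivariate Dvoretzky--Kiefer--Wolfowitz inequality controls this difference in the Kolmogorov norm: the class of lower orthants has VC dimension $d$, so Talagrand's or Massart-type bounds give
\[
\P\big(\|\delta_t\|_{\mathrm{K}}>u\mid\F_{t-1}\big)\le c_d\,e^{-2Nu^2}
\]
uniformly in the sampling law. For $d=1$ this is Massart's bound with $c_1=2$. Since the bound is uniform in the law, it holds unconditionally. A union bound over $t\le T$, with $u=\sqrt{(\log(c_dT/q))/(2N)}$, gives $\Delta_T\le C_d\,\varepsilon_{N,T,q}$ with probability at least $1-q$.

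The main obstacle I expect is getting a dimension-dependent DKW constant that is genuinely uniform over arbitrary (possibly atomic) distributions on $\R^d$, consistent with the paper's exact definition of $\|\cdot\|_{\mathrm{K}}$ in Section~\ref{sub:basic_notation}. I would first try to invoke a known VC-class concentration result, such as Talagrand's or Massart's, for the lower orthants $(-\infty,x]$. If only a polynomial prefactor $N^{d}$ is available, then $\log N$ would enter $\varepsilon$. In that case I would instead use a bounded-differences (McDiarmid) inequality around the expectation, together with the $O(\sqrt{d/N})$ bound on the expected supremum. This yields
\[
\|\delta_t\|_{\mathrm{K}}\le C\sqrt{d/N}+\sqrt{\log(T/q)/(2N)},
\]
which is still $\le C\,\varepsilon_{N,T,q}$.
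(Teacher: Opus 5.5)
Your proposal is correct and follows the paper's proof. The paper also applies the deterministic perturbation bound pathwise to $\mu_t=\eta^N_t$, concretely via Theorem~\ref{thm:main} with $\varpi(z)=z\wedge 1$, the comparison $\tfrac12\|\cdot\|_{\mathrm{K}}\le\|\cdot\|_{\mathcal{V}_\setX}\le\|\cdot\|_{\mathrm{K}}$ from Theorem~\ref{thm:norm} for differences of probability measures, and Proposition~\ref{prop:filter} for the updated measures; it then bounds $\max_{t\le T}\|\eta^N_t-\Phi_t(\eta^N_{t-1})\|_{\mathrm{K}}$ by a conditional DKW-type inequality plus a union bound over $t$.

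Two minor remarks. First, the positivity concern disappears because Section~\ref{sec:PF} assumes $G_t>0$, so $W_t=\setX$; under \ref{A0} alone, $\eta^N_t(W_t)>0$ a.s.\ is in fact not guaranteed. Second, for $d\ge 2$ a tail bound $c_d e^{-2Nu^2}$ with a prefactor independent of $N$ and $u$ is not available, since the sharp tails carry a polynomial factor in $\sqrt{N}u$. However, any bound of the form $2e^{-c_dNu^2}$ suffices, which is exactly what Lemma~\ref{lemma:DKW} provides, and your McDiarmid-plus-expected-supremum fallback also works.
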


By Theorem \ref{thmD}, for $\rho\in(0,1)$, a particle approximation $(\eta^N_t,\widehat{\eta}^N_t)$
is within Kolmogorov distance $\rho$ of   $(\eta_t,\hat{\eta}_t)$
at all times $t\in\{1,\dots,T\}$, and with probability at
least $1-q$, if
\begin{align}\label{eq:Min_N}
N\gtrsim\rho^{-2}\log^2(1/\rho)\,\big(1+\log(T/q)\big).
\end{align}
The lower bound on   $N$ given in \eqref{eq:Min_N} only grows logarithmically with the time
horizon, and this dependence to $T$ cannot be improved (see Section \ref{sub:discussionPF}). Similar estimates hold for fixed test
functions and for some unbounded ones, such as  
the identity function $f(x)=x$. To the
best of our knowledge, these are the first
high-probability error bounds for particle filters that
hold uniformly on whole trajectories under general
conditions. In particular, these bounds hold for  non-compact state spaces and unstable Markov kernels $M_{t+1}$. 
 
\subsubsection{A dual norm}
When
$\setX\subseteq\R^d$ is a (not necessarily bounded)
hyper-rectangle, such as $\setX=\R^d$, we show that, for
a suitable Banach space
$(\mathcal{B}_{\mathrm{HK}}(\setX),
\|\cdot\|_{\mathrm{HK}})$ of functions with bounded
Hardy--Krause variation (see
Section~\ref{sec:preliminaries}), the mapping
\begin{align*}
  \|\nu\|_{\mathcal{V}_\setX}\coloneqq
  \sup\Big\{|\nu(f)|:\
  f\in\mathcal{B}_{\mathrm{HK}}(\setX),\
  \|f\|_{\mathrm{HK}}\leq 1\Big\},
  \qquad\nu\in\mathcal{M}(\setX),
\end{align*}
is a norm on $\mathcal{M}(\setX)$. We also prove that the
metric induced by this norm on $\mathcal{P}(\setX)$ is
equivalent to, and bounded from above by, the
Kolmogorov metric. In addition to be of independent interest, this result plays a key role in this work. Specifically, it allows us to measure local errors in a weak norm
while using the stability of the semigroup in total
variation.

\subsection{Related work and novelty\label{sub:literature}}

Error bounds for approximations of Feynman--Kac models
are often derived using  the interpolation (a.k.a.\
telescoping sum)  formula \eqref{eq:decomp_mu}, which describes the
propagation of a local error introduced at some given
time \citep[][Chapter~7]{DelMoral:book}.  This approach yields time-uniform estimates whenever the semigroup $\Phi_{s,t}$ does not amplify
local errors, that is, when $\Phi_{s,t}(\mu)$ and $\Phi_{s,t}(\mu')$ become close as $t-s$ grows,
uniformly in $\mu,\mu'$. Stability results of this type for Feynman--Kac semigroups are surveyed in
\citet{DelMoral:book,del2013mean}; see \citet{del2023stability} for a more recent account and further
references. Under uniform
(Dobrushin-type) minorization conditions, one obtains
exponential forgetting properties and Lipschitz
estimates for $\Phi_{s,t}$ in total variation or with
respect to the Hilbert projective metric. These yield stability of the filter and time-uniform
convergence of particle filters
\citep{del2000branching,delmoral2001stability,legland2004stability,heine2008uniform},
including exponential concentration inequalities
\citep{delmoral2011concentration,delmoral2012concentration,
del2013mean}.
However, these conditions are essentially
restricted to compact or uniformly mixing models.

For non-compact state spaces, Lyapunov approaches
combine a Foster--Lyapunov drift condition on the
integral operators $Q_{t+1}$ with two-sided density estimates on the compact sub-level sets of some
Lyapunov function $V$
\citep[see][]{Nick,whiteley2012linear,del2023stability};
see also the review article \citet{arnaudon2024lyapunov}.
They give local Lipschitz estimates for
$\Phi_{s,t}$ with respect to some weighted norm
$\|\cdot\|_V$, with exponentially decaying constants. A
related local Doeblin approach is used in
\citet{DoucMoulinesOlsson2014} to derive time-uniform
estimates of the asymptotic variance of particle
filters, and \citet{vanhandel2009uniform} proves the
consistency of particle filters for time averages under
weak ergodicity conditions.

Two recent articles are closer to ours. First, the perturbation bounds \eqref{eq:summaryMain}
 with respect to the
Kolmogorov norm are proved in
\citet[][Lemma~4.1]{gerber2026safety} for the
one-dimensional harmonic oscillator model; that is, for
a one-dimensional linear Gaussian state-space model with
observations $y_t=0$ for any $t\geq 1$, in which case $\eta_t$ and $\widehat{\eta}_t$ are explicit. Second, \citet{gonzalez2026operator}
consider the expected total variation distance between
the optimal filter $\widehat{p}_t$ of a well-specified
state-space model and the optimal filter
$\widehat{p}^\star_t$ associated with some misspecified
Markov transitions. When the well-specified Markov
transitions are uniformly contracting, the authors
provide an estimate of
$\E\big[\|\widehat{p}_t-\widehat{p}^\star_t\|_{\mathrm{tv}}\big]$
that tends to zero with the misspecification error,
where the expectation is taken with respect to the
distribution of the observations under the
well-specified model
\citep[][Corollary~4.7]{gonzalez2026operator}. 


Our approach differs  from existing ones  in  three respects.
\begin{enumerate}[label=(\roman*)]
\item \emph{Confinement by the potential functions.}
Our conditions \ref{A1}--\ref{A3} imposed on the Feynman-Kac model
(Section~\ref{sub:primitive}) do not rely on any
Lyapunov function nor on any drift condition: the local
minorization condition \ref{A1b} is stated on the
super-level sets $\{G_t\geq\varrho\}$ of the potential
functions, and the non-vanishing property \ref{A2} can
be checked using these sets only
(Proposition~\ref{prop:A1p-A2}). In addition, the exponential forgetting property of the semigroup $\Phi_{s,t}$ is derived
(Lemma~\ref{lemma:forgetting}) and not assumed, and the Markov
transitions need be neither contracting nor stable. Lastly, our assumptions allow $\eta_t(A)\to0$ for every bounded set $A\in\mathcal{E}$: in the linear
Gaussian model of Section~\ref{sub:LG_SSM} with $d=r=1$, $H_t(x)=F_{t+1}(x)=x$ and $y_t=t$, the predictive mean
drifts to infinity while the predictive variance remains bounded. This is excluded by approaches
based on a fixed Lyapunov function $V$ with bounded sublevel sets and for which $\sup_{t\geq1}\eta_t(V)<\infty$ \citep[see e.g.][]{Nick,del2023stability,arnaudon2024lyapunov}, since,
 by Markov's inequality, if $\sup_{t\geq1}\eta_t(V)<\infty$ then 
$\inf_{t\geq1}\eta_t(\{V\leq c\})>0$ for $c$ large enough.
 
\item \emph{Local errors in a weak norm.} Weighted $V$-norms and the total variation norm cannot
measure the local errors of particle filters. Our
abstract condition
\ref{Astar} (Section~\ref{sub:assumptions_gen})
decouples the distance used to measure the local
perturbations from the distance used to express the
forgetting properties of the semigroup. To use the
Kolmogorov norm in this framework, we extend the
Hardy--Krause variation and the Koksma--Hlawka-type
inequality of \citet{gotz2002discrepancy} from
$[0,1]^d$ to possibly unbounded hyper-rectangles
(Section~\ref{sec:preliminaries}).

\item \emph{Arbitrary local errors.} We study the propagation of arbitrary, possibly random,
local errors, rather than a specific scheme. As a consequence,   a
single result allows to cover a vast range of method used to approximate Feynman--Kac models, as
illustrated in  Sections~\ref{sec:compareFK}
and~\ref{sec:PF}.

\end{enumerate}

\subsection{Organization of the article\label{sub:organization}}

Section~\ref{sec:notation} fixes the notation and
recalls the variations of Vitali and of Hardy and
Krause. Section~\ref{sec:preliminaries} extends the
Hardy and
Krause variation and the Koksma--Hlawka inequality of
\citet{gotz2002discrepancy} to unbounded
hyper-rectangles 
and introduces the norm
$\|\cdot\|_{\mathcal{V}_\setX}$. Section~\ref{sec:main}
contains the perturbation analysis, first under an abstract
condition \ref{Astar},  and then under primitive conditions \ref{A1}--\ref{A3} and
\ref{B1} which allow us to derive perturbation bounds expressed in term of the total variation and Kolmogorov  metrics. Section~\ref{sec:examples} verifies
these primitive conditions on a large class of state-space models,
 while Sections~\ref{sec:compareFK}
and~\ref{sec:PF} contain the applications. Section \ref{sec:concludion} proposes some concluding remarks and, for the convenience of the reader,  an index of notation is
given in Appendix~\ref{sec:notation_index}. The proofs
not given in the text are provided in Appendix \ref{app:proofs}, 
following the order of the sections. 

\paragraph{Reading guide.}
Section~\ref{sec:preliminaries} is self-contained, and
only used for the Kolmogorov norm estimates of
Section~\ref{sub:K_estimates} and in
Section~\ref{sec:PF}. Readers interested in applications can read Section~\ref{sub:primitive} for
the regularity conditions, the statements of the results presented in
Sections~\ref{sub:TV_estimates}
and~\ref{sub:K_estimates}, and then
Sections~\ref{sec:examples} to~\ref{sec:PF}.
Section~\ref{sec:compareFK} only uses the total
variation estimates, while Section~\ref{sec:PF} only uses the
Kolmogorov norm estimates. 

\section{Notation and definitions\label{sec:notation}}

Section~\ref{sub:basic_notation} fixes the notation and
Section~\ref{sub:HK-var} recalls the variations of
Vitali and of Hardy and Krause.   An index of notation is given in Appendix~\ref{sec:notation_index}.

\subsection{Preliminary definitions\label{sub:basic_notation}}

\subsubsection{Vectors and matrices}

For any integer $d\geq 1$, we write $x=(x_1,\dots,x_d)$
the coordinates of a vector $x\in\R^d$, and
$x_n=(x_{n,1},\dots,x_{n,d})$ the coordinates of a
vector $x_n\in\R^d$ indexed by some parameter $n$. For
any $x\in\R^d$ and $c\in\R$, we set
$x+c\coloneqq(x_1+c,\dots,x_d+c)$ and we denote by
$\|x\|$ the Euclidean norm of $x$.

We denote by $I_d$ the $d\times d$ identity matrix and
by $\mathcal{S}_d^+$ the set of $d\times d$ symmetric
positive definite matrices. For any matrix $\Sigma\in\mathcal{S}_d^+$, we denote by
$\Sigma^{1/2}\in\mathcal{S}_d^+$ the unique matrix such
that $\Sigma=\Sigma^{1/2}\Sigma^{1/2}$. For any real
matrix $B$, we denote by $\|B\|$ its spectral norm and,
when $B$ is symmetric, by $\sigma_{\min}(B)$ and
$\sigma_{\max}(B)$ its smallest and largest eigenvalues.
For any $m\in\R^d$ and $\Sigma\in\mathcal{S}_d^+$, we
denote by $\mathcal{N}_d(m,\Sigma)$ the Gaussian
distribution on $\R^d$ with mean $m$ and covariance
matrix $\Sigma$.

Consider some (possibly empty) subset of coordinates
$u\subseteq\{1,\dots,d\}$ and set
$-u\coloneqq\{1,\dots,d\}\setminus u$.
When $u\neq\emptyset$ we denote by $1\leq
i_1<\dots<i_{|u|}\leq d$ its elements and, for any
vector $x\in\R^d$, we set
$x^u\coloneqq(x_{i_1},\dots,x_{i_{|u|}})\in\R^{|u|}$.
For any $z\in\R^{|u|}$ and $y\in\R^d$, we denote by
$\langle z,y\rangle_u$ the vector $x\in\R^d$ such that
$x^u=z$ and $x^{-u}=y^{-u}$; that is, $\langle
z,y\rangle_u$ is obtained by replacing in $y$ the
coordinates in $u$ by the ones of $z$. We use the
conventions $\langle z,y\rangle_u=z$ when
$u=\{1,\dots,d\}$ and $\langle z,y\rangle_u=y$ when
$u=\emptyset$.

\subsubsection{Hyper-rectangles}

Let $\bar{\R}\coloneqq\R\cup\{-\infty,\infty\}$.
Then, for any $a,b\in\bar{\R}^d$, we write $a\leq b$ (resp.\
$a<b$) when $a_i\leq b_i$ (resp.\ $a_i<b_i$)
for any $i\in\{1,\dots,d\}$, and we use the notation
\begin{align*}
[a,b]\coloneqq\times_{i=1}^d[a_i,b_i],
\qquad
(-\infty,b]\coloneqq\times_{i=1}^d(-\infty,b_i].
\end{align*}
Unless otherwise stated, hyper-rectangles
$[a,b]\subseteq\bar{\R}^d$ are only assumed to satisfy
$a\leq b$; that is, some of their sides may be reduced
to a point.

Let $\setZ$ be some set of
the form $\setZ\coloneqq\R^d\cap[\underline{z},\bar{z}]$
for some hyper-rectangle
$[\underline{z},\bar{z}]\subseteq\bar{\R}^d$;
for instance, we have $\setZ=\R^d$ when
$[\underline{z},\bar{z}]=\bar{\R}^d$. To  approximate $\setZ$ from inside by bounded
hyper-rectangles, we set
\begin{align*}
\mathcal{I}_\setZ\coloneqq
\{i\in\{1,\dots,d\}:\ \bar{z}_i=\infty\} 
\end{align*}
and we denote by $\mathcal{U}(\setZ)$ the set of
sequences $b_n$ in $\R^d$ such that, for any $n\geq 1$, we have $b_{n,i}=\bar{z}_i$
for any $i\notin\mathcal{I}_\setZ$, and
$b_{n,i}\to\infty$ as $n\to\infty$ for any
$i\in\mathcal{I}_\setZ$. Similarly, we denote by
$\mathcal{L}(\setZ)$ the set of sequences
$a_n$ in $\R^d$ such that, for any $n\geq 1$, we have
$a_{n,i}=\underline{z}_i$ for any $i$ with
$\underline{z}_i\in\R$, and $a_{n,i}\to-\infty$ as
$n\to\infty$ for any $i$ with
$\underline{z}_i=-\infty$.
Observe that, for any $(a_n)_{n\geq 1}\in\mathcal{L}(\setZ)$ and
$(b_n)_{n\geq 1}\in\mathcal{U}(\setZ)$, we have $a_n\leq b_n$ for any sufficiently large $n$.
Note also that $[a_n,b_n]$ is a bounded hyper-rectangle contained
in $\setZ$, and that $\bigcup_{n\geq m}[a_n,b_n]=\setZ$ for
any $m\geq 1$ if the sequences $a_n$ and $b_n$ are monotone; only such
 sequences  are considered below.
Finally, for any non-empty subset
$u\subseteq\{1,\dots,d\}$, we set
$\setZ_u\coloneqq
\R^{|u|}\cap[\underline{z}^u,\bar{z}^u]$.

\subsubsection{Total variation and Kolmogorov norms}

The total variation norm of a finite signed measure
$\nu\in\mathcal{M}(\setX)$ is defined by
\begin{align*}
\|\nu\|_{\mathrm{tv}}
\coloneqq\sup_{A\in\mathcal{E}}|\nu(A)|
=\sup_{f\in\mathcal{B}_1(\setX)}|\nu(f)|,
\end{align*}
where we recall that $\mathcal{B}_1(\setX)$ stands for
the set of measurable functions $f\colon\setX\to[0,1]$
and $\nu(f)=\int_\setX f\,\dd\nu$. Note that the quantity
$\|\mu_1-\mu_2\|_{\mathrm{tv}} \in[0,1]$ coincides with
the usual total variation distance between the
probability measures
$\mu_1,\mu_2\in\mathcal{P}(\setX)$. In addition, for
any $\nu\in\mathcal{M}(\setX)$ with total variation
measure $|\nu|$, we have
$\tfrac12|\nu|(\setX)\leq\|\nu\|_{\mathrm{tv}}
\leq|\nu|(\setX)$.

When $\setX\subseteq\R^d$, the Kolmogorov norm of a
finite signed measure $\nu\in\mathcal{M}(\setX)$ is
defined by
\begin{align}\label{eq:K-norm}
\|\nu\|_{\mathrm{K}}\coloneqq
\sup_{b\in\setX}\big|\nu\big((-\infty,b]\cap\setX\big)\big|.
\end{align}
For probability measures $\mu_1,\mu_2\in\mathcal{P}(\setX)$, the quanity
$\|\mu_1-\mu_2\|_{\mathrm{K}}$ coincides with the
supremum distance between the multivariate cumulative
distribution functions of $\mu_1$ and $\mu_2$. The
supremum in \eqref{eq:K-norm} being taken over a subset
of $\mathcal{E}$, we   have
$\|\nu\|_{\mathrm{K}}\leq\|\nu\|_{\mathrm{tv}}$ for any
$\nu\in\mathcal{M}(\setX)$.

\subsubsection{Functions and function spaces}

For any function $f\colon\setX\to\R$, we set
$\|f\|_\infty\coloneqq\sup_{x\in\setX}|f(x)|$ and below we introduces some functions spaces that will be considered in this work.

\paragraph{Pseudo-decreasing functions.}
For any $\varepsilon\in(0,1]$, a function
$f\colon[0,\infty)\to(0,\infty)$ is said to be
\emph{$\varepsilon$-pseudo-decreasing} if
\begin{align*}
f(z)\geq\varepsilon\,f(z')
\qquad\text{for any } 0\leq z\leq z',
\end{align*}
and \emph{pseudo-decreasing} if it is
$\varepsilon$-pseudo-decreasing for some
$\varepsilon\in(0,1]$. Any non-increasing
function is $1$-pseudo-decreasing. More generally, when
$f_1\colon[0,\infty)\to(0,\infty)$ is non-increasing
and $f_2\colon[0,\infty)\to[c,C]$ for some constants
$0<c\leq C<\infty$, the product $f_1f_2$ is
$(c/C)$-pseudo-decreasing.

\paragraph{Mixed partial derivatives.}
When $\setX\subseteq\R^d$, we denote by
$\mathcal{C}_{\mathrm{mixed}}(\setX)$ the set of
functions $f\colon\setX\to\R$ with continuous mixed
partial derivatives of order one in each coordinate;
that is, for any non-empty subset
$u\subseteq\{1,\dots,d\}$ with elements
$i_1<\dots<i_{|u|}$, the function
\begin{align*}
x\in\setX\mapsto
\partial^u f(x)\coloneqq
\frac{\partial^{|u|}f}
{\partial x_{i_1}\cdots\partial x_{i_{|u|}}}(x)
\end{align*}
is well defined and continuous. For any function
$f\colon\setX^2\to\R$ such that
$f(\cdot,x')\in\mathcal{C}_{\mathrm{mixed}}(\setX)$ for
any $x'\in\setX$, we denote by $\partial_1^u f(x,x')$
the derivative $\partial^u$ with respect to the first
coordinate $x$.

For any integer  $r\geq 1$, and any function
$g\colon\R^d\to[0,\infty]$, we denote by
$\mathcal{C}_{\mathrm{mixed}}(\R^d;r,g)$ the set of
functions $f=(f_1,\dots,f_r)\colon\R^d\to\R^r$ such
that, for any $i\in\{1,\dots,r\}$, we have
$f_i\in\mathcal{C}_{\mathrm{mixed}}(\R^d)$ and
\begin{align*}
|\partial^u f_i(x)|\leq g(x),
\qquad\forall x\in\R^d,
\quad\forall\,\emptyset\neq u\subseteq\{1,\dots,d\}.
\end{align*}
For such functions, we set $\partial^u f\coloneqq(\partial^u f_1,\dots,\partial^u f_r)$.
When $g\equiv C$ is a constant function, for some
$C\in(0,\infty]$, we simply write
$\mathcal{C}_{\mathrm{mixed}}(\R^d;r,C)$.

\paragraph{Bi-Lipschitz functions.}
For any $L\in[1,\infty)$ and $r\geq 1$, a function
$f\colon\R^d\to\R^r$ is said to be
\emph{$L$-bi-Lipschitz} if
\begin{align*}
\frac{1}{L}\|x-x'\|\leq\|f(x)-f(x')\|\leq L\|x-x'\|,
\qquad\forall x,x'\in\R^d,
\end{align*}
and \emph{bi-Lipschitz} if it is
$L$-bi-Lipschitz for some $L\in[1,\infty)$.
Equivalently, $f$ is bi-Lipschitz if and only if it is
injective and Lipschitz continuous, with a Lipschitz
continuous inverse $f^{-1}\colon f(\R^d)\to\R^d$. Bi-Lipschitz functions $f\colon\R^d\to\R^r$ only
exist when $r\geq d$, and they are necessarily
unbounded.

\subsection{The Hardy--Krause variation\label{sub:HK-var}}

The Hardy--Krause variation (a.k.a.\ HK-variation)
extends the notion of total variation to functions of several
variables, through the variation in the sense of
Vitali. 

\subsubsection{Variation in the sense of Vitali}

Let $k\geq 1$ and   $[\tilde{a},\tilde{b}]\subset\R^k$
be some bounded hyper-rectangle with $\tilde{a}<\tilde{b}$. A \emph{grid
splitting}
of $[\tilde{a},\tilde{b}]$ is a finite collection of
hyper-rectangles of the form
\begin{align*}
P=\Big\{\textstyle\times_{i=1}^k
\big[x^{(i)}_{j_i},x^{(i)}_{j_i+1}\big]:\
j_i\in\{0,\dots,m_i-1\},\ i\in\{1,\dots,k\}\Big\},
\end{align*}
where, for each $i$, we have $m_i\geq 1$ and
$\tilde{a}_i=x^{(i)}_0<\dots<x^{(i)}_{m_i}=\tilde{b}_i$.
In what follows, we denote by $\mathsf{P}([\tilde{a},\tilde{b}])$ the
set of grid splittings of $[\tilde{a},\tilde{b}]$; note
that $\bigcup_{R\in P}R=[\tilde{a},\tilde{b}]$ for any
$P\in\mathsf{P}([\tilde{a},\tilde{b}])$. Next, for any
function $f\colon\R^k\to\R$, any hyper-rectangle
$R=[\hat{a},\hat{b}]\subset\R^k$ with $\hat{a}\leq\hat{b}$ and any subset $w\subseteq\{1,\dots,k\}$, we let
\begin{align}\label{eq:alternating}
\Alt_w(f;R)\coloneqq
\sum_{u\subseteq w}(-1)^{|u|}\,
f\big(\langle\hat{a}^u,\hat{b}\rangle_u\big)
\end{align}
denote the alternating sum of the values of $f$ at the
vertices of the face $\{x\in R:\ x^{-w}=\hat{b}^{-w}\}$ of $R$ that contains its upper corner $\hat{b}$.
In particular, $\Alt_\emptyset(f;R)=f(\hat{b})$, and we write
$\Alt(f;R)\coloneqq\Alt_{\{1,\dots,k\}}(f;R)$ for the alternating sum of the values of $f$ at the
vertices of $R$.

With this notation in place, the variation of $f$ on
$[\tilde{a},\tilde{b}]$ in the sense of Vitali is then defined by
\begin{align*}
\tilde{\mathcal{V}}_{[\tilde{a},\tilde{b}]}(f)
\coloneqq
\sup_{P\in\mathsf{P}([\tilde{a},\tilde{b}])}
\sum_{R\in P}\big|\Alt(f;R)\big|.
\end{align*}
When $[\tilde{a},\tilde{b}]\subset\R^k$ is a bounded hyper-rectangle with some side reduced to
a point, so that no grid splitting of $[\tilde{a},\tilde{b}]$ exists, we set
$\tilde{\mathcal{V}}_{[\tilde{a},\tilde{b}]}(f)
\coloneqq0$;  this convention is consistent with the fact that
$\Alt(f;R)=0$ for any hyper-rectangle $R$ with a
side reduced to a point.
When the mixed derivative
$\partial^{\{1,\dots,k\}}f$ is well defined and continuous on
$[\tilde{a},\tilde{b}]$ we have
\citep[][Proposition~14]{owen2005multidimensional}
\begin{align}\label{eq:smoothV}
\tilde{\mathcal{V}}_{[\tilde{a},\tilde{b}]}(f)
=\int_{[\tilde{a},\tilde{b}]}
\big|\partial^{\{1,\dots,k\}}f(x)\big|\,\dd x.
\end{align}

\subsubsection{Variation in the sense of Hardy and Krause.}
The Vitali variation of a function that does not depend
on all its coordinates is null  \citep[][Proposition 8]{owen2005multidimensional}. To avoid this problem, the HK-variation adds
the Vitali variations of the restrictions of $f$ to the
faces containing the upper corner. Specifically, for any $d\geq 1$ and any bounded
hyper-rectangle $[a,b]\subset\R^d$, the HK-variation of
$f\colon\R^d\to\R$ on $[a,b]$ anchored at $b$ is
defined by
\begin{align}\label{eq:HK}
\mathcal{V}_{[a,b]}(f)\coloneqq
\sum_{\emptyset\neq u\subseteq\{1,\dots,d\}}
\tilde{\mathcal{V}}_{[a^u,b^u]}
\big(f(\langle\cdot,b\rangle_u)\big).
\end{align}
When $d=1$,  the quantity $\mathcal{V}_{[a,b]}(f)$ is the total
variation of $f$ on $[a,b]$. The anchor $b$ is chosen
so that indicators of orthants have HK-variation at
most one; this links the HK-variation $\mathcal{V}$ to the Kolmogorov
norm.

\begin{example}\label{ex:HK}
Let $d=2$, $[a,b]=[0,1]^2$, $c\in(0,1)^2$ and
$f\coloneqq\ind_{(-\infty,c]}$. The two terms of
\eqref{eq:HK} with $|u|=1$ vanish, since
$z\mapsto f(z,1)$ and $z\mapsto f(1,z)$ are
identically zero. For $R=[\hat{a},\hat{b}]$,
\begin{align*}
\Alt(f;R)
=\big(\ind\{\hat{b}_1\leq c_1\}-\ind\{\hat{a}_1\leq c_1\}\big)
\big(\ind\{\hat{b}_2\leq c_2\}-\ind\{\hat{a}_2\leq c_2\}\big)
=\prod_{i=1}^2\ind\{\hat{a}_i\leq c_i<\hat{b}_i\},
\end{align*}
and every grid splitting of $[0,1]^2$ contains
exactly one rectangle with $\hat{a}\leq c<\hat{b}$.
Hence $\mathcal{V}_{[0,1]^2}(f)=1$, in accordance
with part~\ref{HK_6} of Proposition~\ref{prop:V_HK}.
Anchored at $a=(0,0)$ instead, the terms with
$|u|=1$ involve $z\mapsto\ind\{z\leq c_1\}$ and
$z\mapsto\ind\{z\leq c_2\}$, each with a jump of size
one, and the variation equals $3$; in dimension $d$,
it equals $2^d-1$.
\end{example}

In this work, we will use the following properties of the HK-variation.
\begin{proposition}\label{prop:order}
Let $[a,b]\subset\R^d$ be a bounded hyper-rectangle
and let $f\colon\R^d\to\R$. Then:
\begin{enumerate}[label=(\roman*)]
\item\label{basic_HK1} $\mathcal{V}_{[\tilde{a},\tilde{b}]}(f)
\leq\mathcal{V}_{[a,b]}(f)$ for any hyper-rectangle
$[\tilde{a},\tilde{b}]\subseteq[a,b]$, where, by definition, $\mathcal{V}_{[\tilde{a},\tilde{b}]}(f)$ is anchored at $\tilde{b}$;
\item\label{basic_HK2} when there exists a non-empty subset
$v\subsetneq\{1,\dots,d\}$ such that
$f(x)=f(\langle x^v,b\rangle_v)$ for any $x\in[a,b]$ we have
$\mathcal{V}_{[a,b]}(f)=
\mathcal{V}_{[a^v,b^v]}\big(f(\langle\cdot,b\rangle_v)\big)$, where the right-hand side is the HK-variation of the function
$f(\langle\cdot,b\rangle_v)\colon\R^{|v|}\to\R$ on $[a^v,b^v]$, anchored at $b^v$;
\item\label{basic_HK3} $\mathcal{V}_{[a,b]}(f)=0$ if and only if $f$ is
constant on $[a,b]$. 
\end{enumerate}
\end{proposition}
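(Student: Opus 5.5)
The plan is to reduce all three parts to two standard facts about the Vitali variation on a bounded box $[\tilde a,\tilde b]\subset\R^k$: monotonicity under inclusion of boxes, and vanishing for functions that do not depend on some coordinate. Both facts come from the additivity of $\Alt(f;\cdot)$ over grid splittings.

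\emph{Monotonicity.} If $[\hat a,\hat b]\subseteq[\tilde a,\tilde b]$ has nondegenerate sides, any grid splitting $P$ of $[\hat a,\hat b]$ extends to a grid splitting of $[\tilde a,\tilde b]$: add the points $\tilde a_i,\tilde b_i$ to each coordinate partition. The sum of $|\Alt|$ over this extension is at least the sum over $P$. Degenerate boxes contribute $0$ by convention.

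\emph{Vanishing.} If $f$ does not depend on coordinate $i$, then $\Alt(f;R)=0$ for every box $R$. Indeed, the alternating sum factorizes as a difference in coordinate $i$ of a quantity that is constant in $x_i$.

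For part (i), I would expand both sides via \eqref{eq:HK}. For each nonempty $u$, the term for $[\tilde a,\tilde b]$ is the Vitali variation of $z\mapsto f(\langle z,\tilde b\rangle_u)$ on $[\tilde a^u,\tilde b^u]$, whereas the term for $[a,b]$ is anchored at $b$. The anchors differ, so termwise monotonicity is not enough. This is the main obstacle.

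I would handle it through the standard identity: write $g(z)=f(\langle z,\tilde b\rangle_u)$ and view $\tilde b^{-u}$ as a point in $[a^{-u},b^{-u}]$. For a box $R\subseteq[\tilde a^u,\tilde b^u]$ one has
\[
\Alt(g;R)=\sum_{v\subseteq -u}(-1)^{|v|}\,\Alt\big(f;R\times[\tilde b^{v},b^{v}]\big)
\]
with the remaining $-u\setminus v$ coordinates fixed at $b$. Here the box $R\times[\tilde b^{v},b^{v}]$ is a box in the coordinates $u\cup v$, so the corresponding term is a face-restricted alternating sum of $f(\langle\cdot,b\rangle_{u\cup v})$. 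This follows by telescoping each coordinate $j\in -u$ from $\tilde b_j$ to $b_j$.

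Summing absolute values over a grid splitting of $[\tilde a^u,\tilde b^u]$ and using monotonicity gives
\[
\tilde{\mathcal V}_{[\tilde a^u,\tilde b^u]}(g)\le\sum_{v\subseteq -u}\tilde{\mathcal V}_{[a^{u\cup v},b^{u\cup v}]}\big(f(\langle\cdot,b\rangle_{u\cup v})\big).
\]
The right-hand side is a sum over the sets $w=u\cup v\supseteq u$. A naive sum over $u$ would count each $w$ several times, so the summation has to be organized more carefully. I would rather partition the products $R\times[\tilde b^{v},b^{v}]$ as disjoint cells of a single grid splitting of $[a^w,b^w]$, refined at $\tilde a,\tilde b$, for each fixed $w$. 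The cells arising from distinct $u\subseteq w$ with $v=w\setminus u$ are then disjoint cells of that one splitting. This yields
\[
\sum_{u}\tilde{\mathcal V}_{[\tilde a^u,\tilde b^u]}(g)\le\sum_{w}\tilde{\mathcal V}_{[a^w,b^w]}\big(f(\langle\cdot,b\rangle_w)\big),
\]
which is part (i).

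For part (ii), note that $f(\langle z,b\rangle_u)$ does not depend on the coordinates of $z$ in $u\setminus v$, because $f(x)=f(\langle x^v,b\rangle_v)$ on $[a,b]$. By the vanishing fact, every term of \eqref{eq:HK} with $u\not\subseteq v$ is zero. The remaining terms, with $u\subseteq v$, are exactly the terms of the HK-variation of $f(\langle\cdot,b\rangle_v)$ on $[a^v,b^v]$ anchored at $b^v$.

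For part (iii), a constant function has all $\Alt$ sums equal to $0$, hence zero variation. Conversely, assume $\mathcal V_{[a,b]}(f)=0$ and fix $x\in[a,b]$. Apply part (i) to the box $[x,b]$; the alternating sums on its faces containing $b$ all vanish. Inclusion–exclusion then gives
\[
f(x)-f(b)=\sum_{\emptyset\ne u}(-1)^{|u|}\Alt_u\big(f;[x,b]\big)=0,
\]
where each term $\Alt_u(f;[x,b])$ is bounded in absolute value by the Vitali term $\tilde{\mathcal V}_{[x^u,b^u]}\big(f(\langle\cdot,b\rangle_u)\big)=0$. Hence $f\equiv f(b)$ on $[a,b]$.
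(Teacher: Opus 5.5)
Your proposal is correct and follows essentially the same route as the paper. For (i), the paper also telescopes $f(\langle z,\tilde b\rangle_u)$ from $\tilde b$ to $b$ in the coordinates outside $u$ and rewrites each term as $\Alt\big(f(\langle\cdot,b\rangle_{w});R\times[\tilde b^{w\setminus u},b^{w\setminus u}]\big)$. For fixed $w$, it then notes that the blocks $[\tilde a^u,\tilde b^u]\times[\tilde b^{w\setminus u},b^{w\setminus u}]$, $\emptyset\neq u\subseteq w$, are distinct cells of the grid generated by $a_i\le\tilde a_i\le\tilde b_i\le b_i$, and uses additivity of the Vitali variation over grid cells. This is the precise form of your ``single splitting'' step: the splittings chosen separately for different $u$ must be merged by a common refinement (or bounded block by block, as the paper does). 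Parts (ii) and (iii) use the same vanishing and inclusion--exclusion arguments you give. In (iii) you do not actually need part (i): since $[x,b]$ has the same anchor $b$, each $|\Alt_u(f;[x,b])|$ is bounded directly by $\tilde{\mathcal{V}}_{[a^u,b^u]}\big(f(\langle\cdot,b\rangle_u)\big)$, which is how the paper gets $|f(x)-f(b)|\le\mathcal{V}_{[a,b]}(f)$.
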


\begin{proof}
See Appendix~\ref{p-prop:order}.
\end{proof}

\section{A Hardy--Krause dual norm\label{sec:preliminaries}}

The Koksma--Hlawka-type inequality of
\citet{gotz2002discrepancy} bounds
$|\mu_1(f)-\mu_2(f)|$ by the product of the
HK-variation of $f$ and the Kolmogorov distance between
$\mu_1$ and $\mu_2$, for probability measures on
$[0,1]^d$ (Lemma~\ref{lemma:aistleitner}). In the following, we extend the
HK-variation   and this
inequality   to unbounded
hyper-rectangles, and introduce the norm
$\|\cdot\|_{\mathcal{V}_\setX}$.

Throughout this section, we fix some $d\geq 1$ and we
set
\begin{align}\label{eq:setX}
  \setX\coloneqq\R^d\cap[\underline{x},\bar{x}]
\end{align}
for some hyper-rectangle
$[\underline{x},\bar{x}]\subseteq\bar{\R}^d$; for instance $\setX=\R^d$.

\subsection{The Koksma--Hlawka inequality\label{sub:KH}}

The Kolmogorov metric and the HK-variation are
connected by the following result, due to
\citet[][Corollary~2.4]{gotz2002discrepancy}.

\begin{lemma}\label{lemma:aistleitner}
For any $\mu_1,\mu_2\in\mathcal{P}([0,1]^d)$ and any
measurable function $f\colon[0,1]^d\to\R$ we have
\begin{align*}
  |\mu_1(f)-\mu_2(f)|
  \leq
  \mathcal{V}_{[0,1]^d}(f)\,\|\mu_1-\mu_2\|_{\mathrm{K}}.
\end{align*}
\end{lemma}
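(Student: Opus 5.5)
We will prove the inequality by writing $f$, up to the constant $f(1,\dots,1)$, as a superposition of indicator functions of orthants. Each orthant is integrated against $\mu_1-\mu_2$, and each such integral is controlled by $\|\mu_1-\mu_2\|_{\mathrm{K}}$. If $\mathcal{V}_{[0,1]^d}(f)=\infty$ there is nothing to prove, so we assume $f$ has finite HK-variation anchored at $\mathbf 1\coloneqq(1,\dots,1)$. The constant $f(\mathbf 1)$ disappears from $\mu_1(f)-\mu_2(f)$ because both measures have mass one.

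\emph{Step 1 (representation).} We would show that there exist finite signed Borel measures $\nu_u$ on $[0,1]^{|u|}$, one for each $\emptyset\neq u\subseteq\{1,\dots,d\}$, such that
\begin{align*}
f(x)=f(\mathbf 1)+\sum_{\emptyset\neq u\subseteq\{1,\dots,d\}}(-1)^{|u|}\,\nu_u\big([x^u,\mathbf 1^u]\big),
\qquad x\in[0,1]^d,
\end{align*}
with total variations $|\nu_u|([0,1]^{|u|})\leq\tilde{\mathcal{V}}_{[0^u,1^u]}\big(f(\langle\cdot,\mathbf 1\rangle_u)\big)$. The heuristic is the multivariate fundamental theorem of calculus anchored at $\mathbf 1$. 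For smooth $f$ one takes $\nu_u(\dd z)=\partial^u f(\langle z,\mathbf 1\rangle_u)\,\dd z$, and then the bound on $|\nu_u|$ is exactly \eqref{eq:smoothV}. For general $f$, we would first establish a Jordan-type decomposition: $f(\langle\cdot,\mathbf 1\rangle_u)$ splits into a difference of two functions whose alternating sums $\Alt$ are nonnegative on every box. Such a function determines a nonnegative finitely additive set function on boxes, which we extend to a measure via Carath\'eodory.

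\emph{Step 2 (Fubini).} Integrating the representation against $\mu_1-\mu_2$ and exchanging the order of integration gives
\begin{align*}
\mu_1(f)-\mu_2(f)=\sum_{u}(-1)^{|u|}\int_{[0,1]^{|u|}}(\mu_1-\mu_2)\big(\{x:\ x^u\leq z\}\big)\,\nu_u(\dd z).
\end{align*}
The set $\{x\in[0,1]^d:\ x^u\leq z\}$ is the closed orthant $(-\infty,\langle z,\mathbf 1\rangle_u]\cap[0,1]^d$. Hence each integrand is bounded by $\|\mu_1-\mu_2\|_{\mathrm{K}}$. Summing $|\nu_u|$ over $u$ and using \eqref{eq:HK} then yields the claim.

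\emph{Main obstacle.} The hard step is Step 1 for functions that are neither continuous nor one-sidedly continuous. A box function built from the values of $f$ at grid vertices extends to a measure only after regularization. Concretely, $\nu_u$ reproduces the right-continuous (towards the anchor) version of $f$, and it can miss the values of $f$ on countably many discontinuity hyperplanes. I would try two remedies. The first is to enlarge the underlying space, splitting each point $z_i$ into $z_i^-$ and $z_i^+$, so that closed and half-open orthant conditions are both represented; this is in the spirit of Aistleitner--Dick. The second is to treat the general $f$ directly by a finite-grid argument. In that approach, for a fixed finite grid one applies multivariate Abel summation to express $\mu_1(f)-\mu_2(f)$, for the grid-step approximation of $f$, exactly as a sum of $\Alt$-weighted orthant masses. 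One then passes to the limit along refining grids, using that the Kolmogorov bound on closed orthants transfers to limits of orthants. The second route is closer to the original argument of \citet{gotz2002discrepancy}, and I would try it first if the measure construction becomes too technical.
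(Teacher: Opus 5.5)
The paper gives no proof of this lemma; it quotes it from \citet[Corollary~2.4]{gotz2002discrepancy}. Judged on its own, your argument has a genuine gap, exactly where you suspect it.

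\textbf{Step~1 is false as stated.} For $d=1$, a representation $f(x)=f(1)-\nu([x,1])$ with $\nu$ a finite signed measure forces $f$ to be left-continuous on $(0,1]$, because $[y,1]\downarrow[x,1]$ as $y\uparrow x$. So the representation reproduces continuity from the side away from the anchor, not towards it. Hence $f=\ind_{[0,c)}$ with $c\in(0,1)$, which has $\mathcal{V}_{[0,1]}(f)=1$, admits no such representation. Steps~1--2 therefore only handle regularized $f$. The real content of the lemma is the case where $\mu_1$ or $\mu_2$ charges hyperplanes on which $f$ differs from its regularization, and you defer that to two remedies you do not carry out.

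\textbf{Neither remedy closes it as described.} The split-point remedy can presumably be made to work, but it would be the substance of the proof, not a patch. In the grid remedy you misplace the difficulty. Abel summation for a step approximation already produces closed orthants exactly (see below), so no limiting orthants are needed. What can fail is $\mu_i(f_{P_n})\to\mu_i(f)$. With upper-corner values, $f=\ind_{[0,c]}$ and $\mu_1=\delta_c$ with $c$ in none of the grids give $\mu_1(f_{P_n})=0\neq 1=\mu_1(f)$. Lower-corner values fail in the same way for $\ind_{[0,c)}$. Rescuing refinement would need grids containing the atoms of all coordinate marginals of $\mu_1,\mu_2$. It would also need a structure theorem you do not supply: that a function of bounded HK-variation is continuous from above off countably many axis-parallel hyperplanes.

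\textbf{A short repair: approximate the measures, not $f$.} Take finite sets $\{0,1\}\subseteq\Gamma_i\subseteq[0,1]$, and let $P_u$ be the grid splitting of $[0,1]^{|u|}$ they generate in the coordinates $i\in u$. Write $a^R$ for the lower corner of $R\in P_u$. Put $f_P(x)\coloneqq f(\beta(x))$ with $\beta_i(x)\coloneqq\min\{g\in\Gamma_i:\ g\geq x_i\}$. Apply \eqref{eq:anchor_change} with $c=\beta(x)$ and $e=\mathbf{1}$, and split each $\Alt_u(f;[\beta(x),\mathbf 1])$ over the cells of $P_u$. Using $\beta_i(x)\leq g\iff x_i\leq g$ for $g\in\Gamma_i$, you get, for every $x\in[0,1]^d$,
\[
f_P(x)=f(\mathbf 1)+\sum_{\emptyset\neq u\subseteq\{1,\dots,d\}}(-1)^{|u|}\sum_{R\in P_u}\Alt\big(f(\langle\cdot,\mathbf 1\rangle_u);R\big)\,\ind\{x^u\leq a^R\}.
\]
Since $\{x^u\leq a^R\}$ is the closed orthant below $\langle a^R,\mathbf 1\rangle_u$, this gives $|\mu_1(f_P)-\mu_2(f_P)|\leq\mathcal{V}_{[0,1]^d}(f)\,\|\mu_1-\mu_2\|_{\mathrm{K}}$ for every grid.
\begin{itemize}
\item If $\mu_1,\mu_2$ are finitely supported and the $\Gamma_i$ contain all coordinates of their atoms, then $\beta(x)=x$ on the supports. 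So $f_P=f$ there, and the lemma follows.
\item In general, let $\mu_i^n$ be empirical measures of i.i.d.\ samples from $\mu_i$. Since $f$ is measurable and, by \eqref{eq:lms2}, bounded, the strong law gives $\mu_i^n(f)\to\mu_i(f)$ almost surely. The Glivenko--Cantelli theorem for closed orthants (e.g.\ Lemma~\ref{lemma:DKW} with Borel--Cantelli) gives $\|\mu_i^n-\mu_i\|_{\mathrm{K}}\to0$ almost surely.
\item Applying the finitely supported case to $\mu_1^n,\mu_2^n$ and using the triangle inequality, the bound passes to the limit.
\end{itemize}
This route needs no measure representation and no regularity of $f$.
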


In the scenario where $\mu_1$ is the uniform distribution on $[0,1]^d$ and $\mu_2$ is
finitely supported, the inequality stated in Lemma \ref{lemma:aistleitner} goes back to Koksma (1942/1943) for
$d=1$ and to Hlawka (1961) for $d\geq 1$. We  note that the result of the lemma when $\mu_2$ is an  empirical measures  has also been obtained by
\citet{aistleitner2014functions}.

\subsection{The HK-variation on unbounded
hyper-rectangles\label{sub:HK_gen}}

\subsubsection{Definition and main properties}

For any function $f\colon\setX\to\R$, the HK-variation
of $f$ on $\setX$ is defined by
\begin{align}\label{eq:genHK}
  \mathcal{V}_\setX(f)\coloneqq
  \lim_{n\to\infty}\mathcal{V}_{[a_n,b_n]}(f)
  \in[0,\infty],
\end{align}
where $(a_n)_{n\geq 1}\in\mathcal{L}(\setX)$ and
$(b_n)_{n\geq 1}\in\mathcal{U}(\setX)$ are as in Section~\ref{sub:basic_notation}, and
$\mathcal{V}_{[a_n,b_n]}(f)$ is given by \eqref{eq:HK}.
As shown in the following lemma, the limit in \eqref{eq:genHK} exists and does not depend on the sequences.

\begin{lemma}\label{lemma:V_well_def}
Let $f\colon\setX\to\R$, and consider some sequences
$(a_n)_{n\geq 1}\in\mathcal{L}(\setX)$ and
$(b_n)_{n\geq 1}\in\mathcal{U}(\setX)$. Then
\begin{align*}
  l\coloneqq
  \limsup_{n\to\infty}\mathcal{V}_{[a_n,b_n]}(f)
  =\liminf_{n\to\infty}\mathcal{V}_{[a_n,b_n]}(f),
\end{align*}
and $\lim_{n\to\infty}\mathcal{V}_{[a'_n,b'_n]}(f)=l$
for any sequences $(a'_n)_{n\geq 1}\in\mathcal{L}(\setX)$ and
$(b'_n)_{n\geq 1}\in\mathcal{U}(\setX)$.
\end{lemma}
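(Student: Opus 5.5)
The natural route is monotonicity of $\mathcal{V}_{[a,b]}(f)$ under enlargement of the rectangle, which is Proposition~\ref{prop:order}\ref{basic_HK1}. The sequences $(a_n)$ and $(b_n)$ are taken monotone: $a_n$ non-increasing, $b_n$ non-decreasing. Hence for all sufficiently large $n\leq m$ we have $a_n\leq b_n$ and $[a_n,b_n]\subseteq[a_m,b_m]$. Proposition~\ref{prop:order}\ref{basic_HK1} then gives $\mathcal{V}_{[a_n,b_n]}(f)\leq\mathcal{V}_{[a_m,b_m]}(f)$. A non-decreasing sequence in $[0,\infty]$ converges, so $\limsup=\liminf=l=\sup_{n}\mathcal{V}_{[a_n,b_n]}(f)$, where the supremum is over large $n$.

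Next I would show independence of the sequences by an interlacing argument. Take other monotone sequences $(a'_n)\in\mathcal{L}(\setX)$ and $(b'_n)\in\mathcal{U}(\setX)$, with limit $l'$ obtained as above. Fix $n$ large. The coordinates $i$ with $\underline{x}_i$ finite satisfy $a_{n,i}=a'_{m,i}=\underline{x}_i$. Every other coordinate has $a'_{m,i}\to-\infty$. So for $m$ large enough, $a'_m\leq a_n$, and likewise $b'_m\geq b_n$, because the coordinates outside $\mathcal{I}_\setX$ are fixed at $\bar{x}_i$ and those in $\mathcal{I}_\setX$ tend to $+\infty$. This gives $[a_n,b_n]\subseteq[a'_m,b'_m]$, hence $\mathcal{V}_{[a_n,b_n]}(f)\leq\mathcal{V}_{[a'_m,b'_m]}(f)\leq l'$. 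Letting $n\to\infty$ gives $l\leq l'$. The symmetric argument gives $l'\leq l$, so $l=l'$.

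One subtlety needs attention. Proposition~\ref{prop:order}\ref{basic_HK1} is stated for bounded rectangles, and when $\bar{x}_i$ or $\underline{x}_i$ is finite the rectangles share that face exactly. That is allowed, since containment rather than strict interior is all that is required. Degenerate rectangles, with a side reduced to a point when $\underline{x}_i=\bar{x}_i$, are covered by the convention $\tilde{\mathcal{V}}=0$, and the monotonicity still holds there.

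The main point to verify is therefore that Proposition~\ref{prop:order}\ref{basic_HK1} applies with the anchor moving from $\tilde b$ to $b$. This is exactly its content, so nothing further is needed. The restriction to monotone sequences is the convention stated in Section~\ref{sub:basic_notation}. If one wanted arbitrary sequences, the same interlacing bound would still give $\limsup_n\mathcal{V}_{[a_n,b_n]}(f)\leq l'$ and $\liminf\geq l'$ by comparison with monotone sequences. For the $\liminf$ bound, I would sandwich $[a_n,b_n]$ from inside by monotone sequences built as $\sup_{k\geq n}a_k$ and $\inf_{k\geq n}b_k$.
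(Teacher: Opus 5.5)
Your proof is correct and rests on the same key ingredient as the paper: monotonicity of the HK-variation under inclusion of rectangles (Proposition~\ref{prop:order}\ref{basic_HK1}), combined with the fact that any fixed rectangle $[a_n,b_n]$ is eventually contained in $[a'_m,b'_m]$. This is exactly how the paper proves that the limit does not depend on the sequences. The only difference is in the first claim: you use the monotonicity convention of Section~\ref{sub:basic_notation} to get a non-decreasing sequence. The paper instead compares a $\limsup$-attaining subsequence with a $\liminf$-attaining one through the same eventual-containment argument; this needs no monotonicity, so your sandwich extension to arbitrary sequences becomes unnecessary.
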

\begin{proof}
See Appendix~\ref{p-lemma:V_well_def}.
\end{proof}

We observe that when the $\setX=[\underline{x},\bar{x}]$ is bounded, the definition
\eqref{eq:HK} and \eqref{eq:genHK} of $\mathcal{V}_\setX(f)$ coincide. Below, we denote
by
$\mathcal{B}_{\mathrm{HK}}(\setX)$ the set of functions
$f\colon\setX\to\R$ such that
$\mathcal{V}_\setX(f)<\infty$. 

\begin{proposition}\label{prop:set_BHK}
Any function $f\in\mathcal{B}_{\mathrm{HK}}(\setX)$ is
bounded and measurable.
\end{proposition}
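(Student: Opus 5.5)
Fix $f\in\mathcal{B}_{\mathrm{HK}}(\setX)$ and monotone sequences $(a_n)\in\mathcal{L}(\setX)$, $(b_n)\in\mathcal{U}(\setX)$. The plan is to prove boundedness first, with a pointwise estimate on each bounded rectangle $[a_n,b_n]$. Then we pass to the limit using Lemma~\ref{lemma:V_well_def}. Measurability then follows from a decomposition of $f$ into coordinatewise monotone pieces.

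\emph{Boundedness.} On a bounded rectangle $[a,b]$ we use the standard inclusion--exclusion identity. For any $x\in[a,b]$,
\begin{align*}
f(x)=f(b)+\sum_{\emptyset\neq u\subseteq\{1,\dots,d\}}(-1)^{|u|}\,\Alt\big(f(\langle\cdot,b\rangle_u);[x^u,b^u]\big),
\end{align*}
which is checked by expanding the alternating sums and using that $\sum_{v\subseteq w}(-1)^{|v|}=0$ for $w\neq\emptyset$. If some side of $[x^u,b^u]$ is degenerate, the corresponding term vanishes. Otherwise $[x^u,b^u]$, viewed as a one-cell grid splitting of itself, gives $|\Alt(f(\langle\cdot,b\rangle_u);[x^u,b^u])|\leq\tilde{\mathcal{V}}_{[x^u,b^u]}(f(\langle\cdot,b\rangle_u))$. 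This is in turn at most $\tilde{\mathcal{V}}_{[a^u,b^u]}(f(\langle\cdot,b\rangle_u))$, by superadditivity of the Vitali variation under refinement: extend a grid of the subrectangle to one of $[a^u,b^u]$ and drop the nonnegative extra terms. Summing over $u$ gives $|f(x)-f(b)|\leq\mathcal{V}_{[a,b]}(f)$.

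The main obstacle is that the anchor $b=b_n$ moves with $n$, so $f(b_n)$ need not stay bounded a priori. I would fix $n_0$ large and, for $n\geq n_0$, apply the estimate on $[a_n,b_n]$ at the point $x=b_{n_0}\in[a_n,b_n]$. This gives $|f(b_n)|\leq|f(b_{n_0})|+\mathcal{V}_{[a_n,b_n]}(f)$, and hence
\begin{align*}
|f(x)|\leq|f(b_{n_0})|+2\,\mathcal{V}_{[a_n,b_n]}(f)\qquad\text{for all }x\in[a_n,b_n].
\end{align*}
By Lemma~\ref{lemma:V_well_def}, $\mathcal{V}_{[a_n,b_n]}(f)\to\mathcal{V}_\setX(f)<\infty$, so this is bounded uniformly in $n$. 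Since the rectangles exhaust $\setX$, we get $\|f\|_\infty\leq|f(b_{n_0})|+2\sup_n\mathcal{V}_{[a_n,b_n]}(f)<\infty$.

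\emph{Measurability.} On each bounded $[a_n,b_n]$ I would use the Jordan-type decomposition of functions of bounded HK-variation (cf.\ \citet{owen2005multidimensional}), writing $f=f(b_n)+g^+_n-g^-_n$. Here $g^\pm_n(x)$ are defined through the sums over $u$ above, with $\Alt$ replaced by the positive and negative variations over $[x^u,b^u]$. Each $g^\pm_n$ is coordinatewise monotone on $[a_n,b_n]$, since enlarging the rectangle can only increase the positive or negative variation, and a coordinatewise monotone function on a rectangle is Borel measurable. Indeed, its superlevel sets are order-sets whose boundary has Lebesgue-null sections, and a cleaner argument is to show it equals a pointwise limit of step functions on dyadic grids, up to a measurable modification. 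If this route is too delicate, the fallback is to invoke the known result that functions of bounded HK-variation on $[0,1]^d$ are Borel measurable (as used implicitly in \citet{gotz2002discrepancy}), after an affine change of variables to each $[a_n,b_n]$. Finally $f=\lim_n f\,\ind_{[a_n,b_n]}$ pointwise on $\setX$, which is therefore measurable.
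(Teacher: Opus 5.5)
Your boundedness argument is correct and essentially the paper's. Both rest on $|f(x)-f(b)|\leq\mathcal{V}_{[a,b]}(f)$ (the paper's \eqref{eq:lms2}). The paper controls the moving anchor via $\limsup_n|f(x)-f(b_n)|\leq\mathcal{V}_\setX(f)$, which gives $\sup_n|f(b_n)|<\infty$; you instead evaluate at $x=b_{n_0}$, which is legitimate because the nested rectangles give $b_{n_0}\in[a_n,b_n]$ for $n\geq n_0$.

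The main measurability route, however, has a genuine gap. For $d\geq 2$, a coordinatewise monotone function on a rectangle need not be Borel measurable. Let $L_0\coloneqq\{x\in[-1,1]^2:\ x_1+x_2=0\}$, take $S\subseteq L_0$, and set $U\coloneqq\{x\in[-1,1]^2:\ x_1+x_2>0\}\cup S$.
\begin{itemize}
\item If $z\in U$ and $z'\geq z$ with $z'\neq z$, then $z'_1+z'_2>0$, so $\ind_U$ is non-decreasing in each coordinate.
\item Since $U\cap L_0=S$, $\ind_U$ is not Borel whenever $S$ is not.
\item If $S$ is not Lebesgue measurable on the segment, $\ind_U$ is not even measurable for the completion of the uniform probability $\mu$ on $L_0$, so $\mu(\ind_U)$ is undefined.
\end{itemize}
Your remarks on null boundaries and on a ``measurable modification'' only give Lebesgue measurability. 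But $\mathcal{E}$ is the Borel $\sigma$-algebra, and the paper needs $\mu(f)$ for every $\mu\in\mathcal{P}(\setX)$, including empirical measures. This $\ind_U$ has infinite HK-variation, so the true Jordan pieces carry more structure than coordinatewise monotonicity. Turning that structure into Borel measurability is a real argument, and it is exactly the content of \citet[][Theorem~3.1]{aistleitner2017functions}; it is not something taken from \citet{gotz2002discrepancy}.

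Your proof therefore stands only through your fallback, which is precisely the paper's route:
\begin{enumerate}
\item invoke that theorem on $[0,1]^d$;
\item transfer it to $[a,b]$ with $a<b$ by the affine map $\phi_{a,b}$, using $\mathcal{V}_{[0,1]^d}(f\circ\phi_{a,b})=\mathcal{V}_{[a,b]}(f)$ from \citet[][Proposition~10]{owen2005multidimensional};
\item conclude with $f=\lim_n f\ind_{[a_n,b_n]}$.
\end{enumerate}
You also omit the degenerate case $\underline{x}_i=\bar{x}_i$ for some $i$. There every $[a_n,b_n]$ has a side reduced to a point, so no affine map onto $[0,1]^d$ exists. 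The paper handles this by reducing to the non-degenerate coordinates $u$: it writes $f(x)=f(\langle x^u,\bar{x}\rangle_u)$ and uses part~\ref{basic_HK2} of Proposition~\ref{prop:order}.
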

\begin{proof}
When $\setX=[0,1]^d$, the result is proved in
\citet[][Theorem~3.1]{aistleitner2017functions}. For the general definition \eqref{eq:setX} of  the set $\setX$, see
Appendix~\ref{p-prop:set_BHK}.
\end{proof}

\begin{proposition}\label{prop:V_HK}
The following hold.
\begin{enumerate}[label=(\roman*)]
\item\label{HK_1}
  $\mathcal{V}_\setX(f+g)\leq
  \mathcal{V}_\setX(f)+\mathcal{V}_\setX(g)$ for any
functions $f,g\colon\setX\to\R$.
\item\label{HK_2}
  $\mathcal{V}_\setX(cf)=|c|\,\mathcal{V}_\setX(f)$
  for any $c\in\R$ and any function
  $f\colon\setX\to\R$.
\item\label{HK_3}
  $\mathcal{V}_\setX(f)=0$ if and only if the function
  $f\colon\setX\to\R$ is constant on $\setX$.
\item\label{HK_4}
  $\mathcal{V}_\setX(\ind_{[a,b]})\leq 3^d-1$ for any bounded hyper-rectangle $[a,b]\subseteq\setX$.
\item\label{HK_5}
  $\mathcal{V}_\setX(fg)\leq\frac{1}{3}
  \big(3\mathcal{V}_\setX(f)+\|f\|_\infty\big)
  \big(3\mathcal{V}_\setX(g)+\|g\|_\infty\big)$ for any
functions $f,g\colon\setX\to\R$.
\item\label{HK_6}
  $\mathcal{V}_\setX(\ind_{\setX\cap(-\infty,b]})=1$
  for any $b\in\setX$ such that
  $\setX\cap(-\infty,b]\subsetneq\setX$.
\item\label{HK_7}
  Let $\nu$ be some $\sigma$-finite measure on $\setX$
and let $f\colon\setX\times\setX\to\R$ be some
measurable function such that $g(x)\coloneqq\int_\setX
f(x,x')\,\nu(\dd x')\in\R$ for any $x\in\setX$. Then
the function
$x'\mapsto\mathcal{V}_\setX\big(f(\cdot,x')\big)$ is
universally measurable and, letting $\bar{\nu}$ denote the completion of $\nu$,
  \begin{align*}
    \mathcal{V}_\setX(g)\leq
    \int_\setX\mathcal{V}_\setX\big(f(\cdot,x')\big)
    \,\bar{\nu}(\dd x').
  \end{align*}
\end{enumerate}
\end{proposition}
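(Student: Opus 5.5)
The strategy is to reduce every item to the corresponding statement on bounded hyper-rectangles $[a_n,b_n]$ and then pass to the limit $n\to\infty$. Lemma~\ref{lemma:V_well_def} guarantees that the limit in \eqref{eq:genHK} exists and does not depend on the chosen sequences. Proposition~\ref{prop:order}\ref{basic_HK1} shows that $n\mapsto\mathcal{V}_{[a_n,b_n]}(f)$ is non-decreasing when the sequences are monotone, so the limit is also a supremum. This lets us handle inequalities one $n$ at a time.

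\textbf{Items \ref{HK_1}--\ref{HK_3}.} For \ref{HK_1} and \ref{HK_2}, fix $n$. The alternating sums $\Alt(\cdot;R)$ are linear in $f$, so the triangle inequality and homogeneity hold for each Vitali term $\tilde{\mathcal{V}}$. They then hold for the sum \eqref{eq:HK}, and we let $n\to\infty$. For \ref{HK_3}, suppose $\mathcal{V}_\setX(f)=0$. Monotonicity gives $\mathcal{V}_{[a_n,b_n]}(f)=0$ for all $n$, so Proposition~\ref{prop:order}\ref{basic_HK3} makes $f$ constant on each $[a_n,b_n]$. These rectangles increase to $\setX$, hence $f$ is constant on $\setX$.

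\textbf{Items \ref{HK_4} and \ref{HK_6}.} Here we compute directly. For \ref{HK_6}, write $\ind_{(-\infty,c]}=\prod_i\ind\{x_i\le c_i\}$ and use the factorization of $\Alt$ as in Example~\ref{ex:HK}. On a face $u$ anchored at $b_n$, the restriction vanishes identically unless $c_i\ge b_{n,i}$ for every $i\notin u$. When it does not vanish, each Vitali sum counts at most one cell, namely the one straddling $c$ in every direction of $u$. For large $n$, exactly one $u$ contributes: the set of coordinates with $c_i<\bar{x}_i$, which is nonempty by the strict inclusion assumption. This gives the value $1$.

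For \ref{HK_4}, write $\ind_{[a,b]}=\prod_i(\ind\{x_i\le b_i\}-\ind\{x_i<a_i\})$ and expand the product. The per-face count is at most $2^{|u|}$, because each of the $|u|$ coordinates has at most two jump locations. Summing over nonempty faces gives at most $\sum_{u\neq\emptyset}2^{|u|}=3^d-1$.

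\textbf{Item \ref{HK_5}.} This is the product inequality. The bounded case should be a known result (Owen's product rule on $[0,1]^d$, or a direct argument via the identity $\Alt(fg;R)$ expanded in terms of the sub-alternating sums of $f$ and $g$). We transport it to $[a_n,b_n]$ by an affine change of variables, which leaves the HK-variation invariant, and then let $n\to\infty$, using $\|f\|_{\infty,[a_n,b_n]}\le\|f\|_\infty$.

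\textbf{Item \ref{HK_7}.} This is the main obstacle, mostly for measurability reasons. On a fixed grid $P$, linearity of $\Alt$ and Fubini give $|\Alt(g;R)|\le\int|\Alt(f(\cdot,x');R)|\,\nu(\dd x')$. Summing over $R$ and over the faces, we obtain for each fixed $P$ a bound by $\int\mathcal{V}_{[a_n,b_n]}(f(\cdot,x'))\,\bar\nu(\dd x')$. Taking the supremum over $P$ and then monotone convergence in $n$ yields the claim.

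The delicate point is that $x'\mapsto\mathcal{V}_{[a_n,b_n]}(f(\cdot,x'))$ is a supremum over uncountably many grids. I would write this map as the projection of a measurable set (a Souslin-type operation over the Polish parameter space of grids). Such a projection is analytic, hence universally measurable, and that is why the integral is taken with respect to the completion $\bar\nu$.
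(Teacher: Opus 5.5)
Your proposal is correct in outline. For parts~\ref{HK_1}--\ref{HK_3} and~\ref{HK_7} it follows the paper: you pass to nested bounded rectangles using Proposition~\ref{prop:order}\ref{basic_HK1} and Lemma~\ref{lemma:V_well_def}. For~\ref{HK_7} you use linearity of $\Alt$ on a fixed grid, universal measurability of the supremum over grids by measurable projection, and then monotone convergence in $n$. (The paper parametrizes grids with $m_i$ points in direction $i$ by open sets $S_{\bm m}$, takes a countable supremum over $\bm m$, and invokes Crauel's Corollary~2.13.)

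You differ in~\ref{HK_4} and~\ref{HK_6}, where you compute directly. On each face anchored at $b_n$ the indicator restricts to a constant times $\prod_{i\in u}h_i(z_i)$, whose Vitali variation is the product of one-dimensional total variations. This gives at most $2^{|u|}$ per face in~\ref{HK_4}. In~\ref{HK_6} it shows that, for large $n$, only the face $u=\{i:\ c_i<\bar{x}_i\}$ contributes, with value exactly one. The paper instead cites Owen's Proposition~20 for~\ref{HK_4}. For~\ref{HK_6} it takes the lower bound from \eqref{eq:lms2}, since the indicator vanishes at the anchor, and the upper bound from \citet[Corollary~4.3]{aistleitner2017functions} after an affine rescaling and Proposition~\ref{prop:order}\ref{basic_HK2}. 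Your computation is self-contained and handles degenerate sides of $\setX$ without a separate reduction; the paper's is shorter given the literature.

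The thin spot is~\ref{HK_5}. The constant $\frac{1}{3}\big(3\mathcal{V}_\setX(f)+\|f\|_\infty\big)\big(3\mathcal{V}_\setX(g)+\|g\|_\infty\big)$ does not depend on $d$, and the ``direct argument'' you mention does not obviously give it. The discrete Leibniz rule splits $\Alt_u(fg;R)$ into $2^{|u|}$ products of differences of $f$ and of $g$ over complementary coordinate sets. These are evaluated on slices not anchored at $b_n$, and the straightforward way of bounding such cross terms by HK-variations picks up $d$-dependent multiplicities. So for the bounded case you need the specific result the paper imports, \citet[Corollary~4.3 and Lemma~5.2]{aistleitner2017functions}, not an unnamed product rule.

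Also, the affine transport to $[0,1]^d$ requires $a_n<b_n$. When $\setX$ has sides reduced to a point, first drop those coordinates with Proposition~\ref{prop:order}\ref{basic_HK2}, as the paper does. After that, the passage to the limit in $n$ works as you describe.
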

\begin{proof}
Parts~\ref{HK_1}--\ref{HK_2} follow from the definition \eqref{eq:genHK}, and part \ref{HK_3}
follows from part \ref{basic_HK3} of Proposition~\ref{prop:order}
combined with \eqref{eq:genHK}.  To check part
\ref{HK_4}, observe that
$\mathcal{V}_\setX(\ind_{[a,b]})=
\mathcal{V}_{[\tilde{a},\tilde{b}]}(\ind_{[a,b]})$
for some bounded hyper-rectangle
$[\tilde{a},\tilde{b}]\subseteq\setX$ containing $[a,b]$; the estimate then follows from
\citet[][Proposition~20]{owen2005multidimensional}.
When $\setX=[0,1]^d$, part \ref{HK_5} is a consequence
of
\citet[][Corollary~4.3 and
Lemma~5.2]{aistleitner2017functions}. The extension of
part \ref{HK_5} to the set $\setX$ defined in
\eqref{eq:setX}, as well as the proofs of parts \ref{HK_6}--\ref{HK_7},
are given in Appendix~\ref{p-prop:V_HK}.
\end{proof}

By part~\ref{HK_6} of Proposition \ref{prop:V_HK}, the indicator functions of the sets
$\setX\cap(-\infty,b]$, which define the Kolmogorov norm, have HK-variation one; this relies on the
choice of the upper corner as the anchor, as illustrated in Example \ref{ex:HK}.

\subsubsection{A Banach space of functions}

By Proposition~\ref{prop:set_BHK}, we have
$\|f\|_\infty<\infty$ for any
$f\in\mathcal{B}_{\mathrm{HK}}(\setX)$, and we then set
\begin{align}\label{eq:HK_norm}
  \|f\|_{\mathrm{HK}}\coloneqq
  \|f\|_\infty+\mathcal{V}_\setX(f),
  \qquad f\in\mathcal{B}_{\mathrm{HK}}(\setX).
\end{align}
Observe that, by parts \ref{HK_1}--\ref{HK_3} of
Proposition~\ref{prop:V_HK}, the mapping $\|\cdot\|_{\mathrm{HK}}$ is a norm on
$\mathcal{B}_{\mathrm{HK}}(\setX)$.

\begin{proposition}\label{prop:Banach}
The normed space
$(\mathcal{B}_{\mathrm{HK}}(\setX),\|\cdot\|_{\mathrm{HK}})$
is a Banach space.
\end{proposition}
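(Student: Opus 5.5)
We take a Cauchy sequence $(f_k)$ in $(\mathcal{B}_{\mathrm{HK}}(\setX),\|\cdot\|_{\mathrm{HK}})$, build its limit in the supremum norm, and then show that this limit has finite HK-variation and is the $\|\cdot\|_{\mathrm{HK}}$-limit of the sequence. Completeness of the supremum norm is standard, so the substance is a lower semicontinuity property of $\mathcal{V}_\setX$ under uniform convergence.

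\emph{Step 1: the uniform limit.} Since $\|f_k-f_l\|_\infty\leq\|f_k-f_l\|_{\mathrm{HK}}$, the sequence $(f_k)$ is uniformly Cauchy. It therefore converges uniformly to a bounded function $f\colon\setX\to\R$, which is measurable as a pointwise limit of measurable functions (Proposition~\ref{prop:set_BHK}).

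\emph{Step 2: semicontinuity on a bounded box.} Fix a bounded hyper-rectangle $[a,b]\subseteq\setX$, a subset $u$, and a grid splitting $P$ of $[a^u,b^u]$. The quantity
\[
\sum_{R\in P}\big|\Alt(g(\langle\cdot,b\rangle_u);R)\big|
\]
involves only finitely many point evaluations of $g$, so it is continuous in $g$ under pointwise convergence. Hence it is at most $\liminf_k\tilde{\mathcal{V}}_{[a^u,b^u]}(f_k(\langle\cdot,b\rangle_u))$ when $g=f$. Taking the supremum over $P$ and summing over $u$ in \eqref{eq:HK} gives
\[
\mathcal{V}_{[a,b]}(f)\leq\liminf_k\mathcal{V}_{[a,b]}(f_k).
\]
The same argument applied to $f_k-f_l$ with $l\to\infty$ gives
\[
\mathcal{V}_{[a,b]}(f_k-f)\leq\liminf_l\mathcal{V}_{[a,b]}(f_k-f_l).
\]
Here one needs care: $\liminf$ of a sum of suprema dominates the sum of the $\liminf$s, which is the direction we require.

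\emph{Step 3: passage to $\setX$.} We would show that $\mathcal{V}_{[a_n,b_n]}(g)\leq\mathcal{V}_\setX(g)$ for every $n$. This is the expected obstacle. Part~\ref{basic_HK1} of Proposition~\ref{prop:order} only gives monotonicity when the anchor moves with the box, so for monotone sequences $[a_n,b_n]\subseteq[a_m,b_m]$ with $m\geq n$ we get $\mathcal{V}_{[a_n,b_n]}(g)\leq\mathcal{V}_{[a_m,b_m]}(g)$. Letting $m\to\infty$ and using Lemma~\ref{lemma:V_well_def} yields the bound. Combining with Step 2, for every $n$,
\[
\mathcal{V}_{[a_n,b_n]}(f_k-f)\leq\liminf_l\mathcal{V}_{[a_n,b_n]}(f_k-f_l)\leq\sup_{l\geq k}\mathcal{V}_\setX(f_k-f_l).
\]
Letting $n\to\infty$ gives
\[
\mathcal{V}_\setX(f_k-f)\leq\sup_{l\geq k}\|f_k-f_l\|_{\mathrm{HK}}.
\]

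\emph{Conclusion.} The right-hand side of the last bound tends to $0$ as $k\to\infty$ by the Cauchy property. In particular it is finite for some $k$, and part~\ref{HK_1} of Proposition~\ref{prop:V_HK} then gives $\mathcal{V}_\setX(f)\leq\mathcal{V}_\setX(f_k)+\mathcal{V}_\setX(f-f_k)<\infty$, so $f\in\mathcal{B}_{\mathrm{HK}}(\setX)$. Together with the uniform convergence from Step 1, this shows $\|f_k-f\|_{\mathrm{HK}}\to0$, which proves completeness.
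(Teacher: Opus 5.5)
Your proof is correct and follows the same route as the paper: take the uniform limit, use lower semicontinuity of the HK-variation on the boxes $[a_n,b_n]$, bound $\mathcal{V}_{[a_n,b_n]}(g)\leq\mathcal{V}_\setX(g)$ via part~\ref{basic_HK1} of Proposition~\ref{prop:order} and Lemma~\ref{lemma:V_well_def}, and let $n\to\infty$. The only difference is that you prove the lower semicontinuity directly (each grid sum is a finite combination of point values), whereas the paper rescales to $[0,1]^d$ and cites Aistleitner et al.; your direct argument also makes the paper's separate treatment of bounded and degenerate $\setX$ unnecessary.
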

\begin{proof}
When $\setX=[0,1]^d$, the result follows from
\citet[][Corollary~4.3 and
Lemma~5.1]{aistleitner2017functions}; the extension of this result to
the set $\setX$ defined in \eqref{eq:setX} is given in Appendix~\ref{p-prop:Banach}.
\end{proof}

Proposition~\ref{prop:Banach} is not used below. However, its
shows that the quantity $\|\nu\|_{\mathcal{V}_\setX}$ defined in
Section~\ref{sub:New_norm} is the norm of the linear
functional $f\mapsto\nu(f)$ on this Banach space,
whence the title of this section.

\subsubsection{Smooth functions\label{sub:Hardy_smooth}}

For any function $f\in\mathcal{C}_{\mathrm{mixed}}(\setX)$ we
set
\begin{align*}
  V_\setX(f)\coloneqq
  \sum_{\substack{\mathcal{I}_\setX\subseteq u
  \subseteq\{1,\dots,d\}\\ u\neq\emptyset}}
  \int_{\setX_u}
  \big|\partial^u f\big(\langle z,\bar{x}\rangle_u\big)\big|
  \,\lambda_{|u|}(\dd z),
\end{align*}
where we recall that $\mathcal{I}_\setX$ stands for the
set of coordinates in which $\setX$ is not bounded from
above. Only subsets $u$ containing $\mathcal{I}_\setX$ appear in the definition of  $V_\setX(f)$, and thus $\langle z,\bar{x}\rangle_u\in\setX$ for all $z\in\setX_u$. Below, we denote by
$\mathcal{C}_{\mathrm{HK}}(\setX)$ the set of functions
$f\in\mathcal{C}_{\mathrm{mixed}}(\setX)$ such that
$V_\setX(f)<\infty$ and
$\lim_{n\to\infty}V_{\setX\cap(-\infty,b_n]}(f)
=V_\setX(f)$ for some sequence
$(b_n)_{n\geq 1}\in\mathcal{U}(\setX)$.

When $\setX$ is a bounded hyper-rectangle,
by \eqref{eq:smoothV} and \eqref{eq:HK} we have
$\mathcal{V}_\setX(f)=V_\setX(f)$ for any
$f\in\mathcal{C}_{\mathrm{HK}}(\setX)$. As shown in the next result, this property still holds when $\setX$ is unbounded.

\begin{proposition}\label{prop:variationHK}
We have
$\mathcal{C}_{\mathrm{HK}}(\setX)\subset
\mathcal{B}_{\mathrm{HK}}(\setX)$, and
$\mathcal{V}_\setX(f)=V_\setX(f)$ for any
$f\in\mathcal{C}_{\mathrm{HK}}(\setX)$.
\end{proposition}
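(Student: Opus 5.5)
Since $\mathcal{B}_{\mathrm{HK}}(\setX)$ consists of the functions with $\mathcal{V}_\setX(f)<\infty$, the inclusion follows from the identity. So the plan is to prove $\mathcal{V}_\setX(f)=V_\setX(f)$ for any $f\in\mathcal{C}_{\mathrm{HK}}(\setX)$ by computing $\mathcal{V}_{[a_n,b_n]}(f)$ on bounded boxes and passing to the limit in \eqref{eq:genHK}.

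Fix the sequence $(b_n)\in\mathcal{U}(\setX)$ from the definition of $\mathcal{C}_{\mathrm{HK}}(\setX)$, assumed monotone, and any monotone $(a_n)\in\mathcal{L}(\setX)$. For $n$ large, $[a_n,b_n]$ is a bounded hyper-rectangle in $\setX$. Apply \eqref{eq:HK} anchored at $b_n$, and use \eqref{eq:smoothV} for each $u$, since $\partial^u$ of $f(\langle\cdot,b_n\rangle_u)$ is continuous on the compact box. This gives
\begin{align*}
\mathcal{V}_{[a_n,b_n]}(f)=\sum_{\emptyset\neq u\subseteq\{1,\dots,d\}}\int_{[a_n^u,b_n^u]}\big|\partial^u f(\langle z,b_n\rangle_u)\big|\,\dd z .
\end{align*}
(A side of length zero gives zero on both sides, consistent with the convention.) By Lemma~\ref{lemma:V_well_def}, $\mathcal{V}_\setX(f)$ is the limit of this expression, so it remains to show that this limit equals $V_\setX(f)$.

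Now compare with $V_{\setX\cap(-\infty,b_n]}(f)$. The set $\setX_n\coloneqq\setX\cap(-\infty,b_n]$ is bounded above in every coordinate, so its index set $\mathcal{I}$ is empty. Hence $V_{\setX_n}(f)$ sums over all nonempty $u$, with anchor $b_n$ (the upper corner of $\setX_n$), of $\int_{(\setX_n)_u}|\partial^u f(\langle z,b_n\rangle_u)|\,\dd z$. Since $[a_n^u,b_n^u]\subseteq(\setX_n)_u$, we get $\mathcal{V}_{[a_n,b_n]}(f)\le V_{\setX_n}(f)$. Conversely, for fixed $m$ and $n\ge m$, monotonicity (Proposition~\ref{prop:order}\ref{basic_HK1}) gives $\mathcal{V}_{[a_n,b_n]}(f)\ge\mathcal{V}_{[a_n,b_m]}(f)$. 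Letting $n\to\infty$ with $m$ fixed, monotone convergence turns the right-hand side into $V_{\setX_m}(f)$, because $\bigcup_n[a_n^u,b_m^u]$ covers $(\setX_m)_u$ up to boundary. Therefore
\begin{align*}
\lim_n V_{\setX_n}(f)\ \ge\ \lim_n\mathcal{V}_{[a_n,b_n]}(f)\ \ge\ \sup_m V_{\setX_m}(f),
\end{align*}
so $\mathcal{V}_\setX(f)=\lim_n V_{\setX_n}(f)$. The defining property of $\mathcal{C}_{\mathrm{HK}}(\setX)$ identifies this limit with $V_\setX(f)<\infty$.

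The main subtlety is the bookkeeping between the two definitions. $V_\setX$ involves only the sets $u\supseteq\mathcal{I}_\setX$, anchored at $\bar x$ with infinite coordinates. $V_{\setX_n}$ instead involves all $u$, anchored at the finite point $b_n$. These are reconciled only through the limit hypothesis in the definition of $\mathcal{C}_{\mathrm{HK}}(\setX)$, which presumably encodes that the terms with $u\not\supseteq\mathcal{I}_\setX$ vanish asymptotically. We never need to analyse those terms directly, because the hypothesis is invoked as stated. The remaining points to check are routine: that \eqref{eq:smoothV} applies to the restricted functions $f(\langle\cdot,b\rangle_u)$, which is immediate from $f\in\mathcal{C}_{\mathrm{mixed}}(\setX)$, and that the monotone-convergence step for $n\to\infty$ with $m$ fixed is justified, which holds because the integrands are nonnegative.
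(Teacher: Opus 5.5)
Your proof is correct. The upper bound, $\mathcal{V}_{[a_n,b_n]}(f)\le V_{\setX\cap(-\infty,b_n]}(f)$ by enlarging the domain of integration followed by the defining limit of $\mathcal{C}_{\mathrm{HK}}(\setX)$, is exactly the paper's Step~2. The lower bound is where you take a genuinely different route.

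\emph{The paper's lower bound.} It keeps only the terms of \eqref{eq:HK} with $\mathcal{I}_\setX\subseteq u$. For these terms $\langle z,b_n\rangle_u=\langle z,\bar{x}\rangle_u$, because $b_{n,i}=\bar{x}_i$ for every $i\notin\mathcal{I}_\setX$. It then applies Fatou's lemma to the indicators of $[a_n^u,b_n^u]$. This is elementary and yields $\mathcal{V}_\setX(f)\ge V_\setX(f)$ for every $f\in\mathcal{C}_{\mathrm{mixed}}(\setX)$, without using the limit hypothesis at all.

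\emph{Your lower bound.} You freeze the anchor at $b_m$ and invoke part~\ref{basic_HK1} of Proposition~\ref{prop:order}. That change-of-anchor monotonicity result has a nontrivial proof, but it is already available since it underlies Lemma~\ref{lemma:V_well_def}. Together with monotone convergence you get $\mathcal{V}_\setX(f)\ge\sup_m V_{\setX\cap(-\infty,b_m]}(f)$, and you then need the hypothesis on both sides.

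\emph{What each route buys.} Your route gives a cleaner by-product. Combined with the upper bound, it shows $\mathcal{V}_\setX(f)=\lim_n V_{\setX\cap(-\infty,b_n]}(f)$ for every $f\in\mathcal{C}_{\mathrm{mixed}}(\setX)$ and every $(b_n)\in\mathcal{U}(\setX)$, with the limit always existing in $[0,\infty]$. The $\mathcal{C}_{\mathrm{HK}}$ condition is then purely an identification of this limit with $V_\setX(f)$; in particular, ``for some sequence'' in that definition is equivalent to ``for every sequence''. The paper's route instead explains which faces survive in the limit: those containing all unbounded directions.

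Finally, you do not need $(b_n)$ to be monotone. For fixed $m$, $b_m\le b_n$ already holds for all sufficiently large $n$, which is all your comparison $[a_n,b_m]\subseteq[a_n,b_n]$ requires.
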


\begin{proof} See Appendix~\ref{p-prop:variationHK}. \end{proof}

For instance, when $\setX=\R^d$,
Proposition~\ref{prop:variationHK} yields the formula
\begin{align*}
  \mathcal{V}_{\R^d}(f)=
  \int_{\R^d}
  \Big|\frac{\partial^d f}{\partial x_1\cdots\partial x_d}(x)\Big|
  \,\lambda_d(\dd x),
  \qquad\forall f\in\mathcal{C}_{\mathrm{HK}}(\R^d).
\end{align*}

\subsection{A Koksma--Hlawka inequality on unbounded
hyper-rectangles\label{sub:KH_gen}}

Recall from Proposition~\ref{prop:set_BHK} that $\mu(f)$ is well
defined for any $\mu\in\mathcal{P}(\setX)$ and
$f\in\mathcal{B}_{\mathrm{HK}}(\setX)$.

\begin{proposition}\label{prop:Multivariate_KH1}
For any $\mu_1,\mu_2\in\mathcal{P}(\setX)$ and any
$f\in\mathcal{B}_{\mathrm{HK}}(\setX)$, we have
\begin{align*}
  \big|\mu_1(f)-\mu_2(f)\big|
  \leq
  \mathcal{V}_\setX(f)\,\|\mu_1-\mu_2\|_{\mathrm{K}}.
\end{align*}
\end{proposition}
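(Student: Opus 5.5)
The plan is to reduce to Lemma~\ref{lemma:aistleitner} on bounded boxes and then pass to the limit. Fix $f\in\mathcal{B}_{\mathrm{HK}}(\setX)$; by Proposition~\ref{prop:set_BHK} it is bounded and measurable, so both $\mu_1(f)$ and $\mu_2(f)$ are finite. Choose monotone sequences $(a_n)\in\mathcal{L}(\setX)$ and $(b_n)\in\mathcal{U}(\setX)$ with $a_n\leq b_n$. Set $R_n\coloneqq[a_n,b_n]$, so that $R_n\uparrow\setX$. Let $\pi_n\colon\setX\to R_n$ be the coordinatewise clamping map $\pi_n(x)_i\coloneqq (x_i\vee a_{n,i})\wedge b_{n,i}$, and define $f_n\coloneqq f\circ\pi_n$ together with the push-forwards $\mu_j^n\coloneqq\mu_j\circ\pi_n^{-1}\in\mathcal{P}(R_n)$.

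\textbf{Step 1 (bounded box).} Map $R_n$ affinely onto $[0,1]^d$ by an increasing map in each coordinate; sides reduced to a point are handled by dropping that coordinate, using part~\ref{basic_HK2} of Proposition~\ref{prop:order}. Increasing coordinatewise maps preserve both the HK-variation anchored at the upper corner and Kolmogorov distances, so Lemma~\ref{lemma:aistleitner} yields
\[
|\mu_1^n(f)-\mu_2^n(f)|\leq \mathcal{V}_{R_n}(f)\,\|\mu_1^n-\mu_2^n\|_{\mathrm{K}}.
\]
Here $\mathcal{V}_{R_n}(f)\leq\mathcal{V}_\setX(f)$ by part~\ref{basic_HK1} of Proposition~\ref{prop:order} and the definition \eqref{eq:genHK}, since the sequence of variations is monotone along nested boxes.

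\textbf{Step 2 (Kolmogorov distance does not increase under clamping).} For $c\in R_n$ we have $\pi_n(x)\leq c$ if and only if $x\in(-\infty,c']$, where $c'_i=c_i$ if $c_i<b_{n,i}$ and $c'_i=\infty$ otherwise. The lower clamp is harmless because $c\geq a_n$. Hence $\mu_j^n((-\infty,c])=\mu_j(\setX\cap(-\infty,c'])$. When $c'$ has infinite coordinates, this is a limit of values at finite points $b\in\setX$, obtained by letting those coordinates increase to $\bar{x}_i$ or to $\infty$, together with continuity from below. The difference is therefore bounded by $\|\mu_1-\mu_2\|_{\mathrm{K}}$. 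This edge-case bookkeeping, covering coordinates with $\bar{x}_i$ finite versus infinite and the definition of $\|\cdot\|_{\mathrm{K}}$ as a supremum over $b\in\setX$ only, is the main fiddly point I expect.

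\textbf{Step 3 (limit).} Since $\mu_j^n(f)=\mu_j(f\circ\pi_n)$, and $f\circ\pi_n\to f$ pointwise on $\setX$ (eventually $x\in R_n$, so $\pi_n(x)=x$), while $|f\circ\pi_n|\leq\|f\|_\infty$, dominated convergence gives $\mu_j^n(f)\to\mu_j(f)$. Letting $n\to\infty$ in the bound
\[
|\mu_1^n(f)-\mu_2^n(f)|\leq\mathcal{V}_\setX(f)\|\mu_1-\mu_2\|_{\mathrm{K}}
\]
then proves the claim.
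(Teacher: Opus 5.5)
Your proof is correct. It differs from the paper's in how the measures are transferred to a bounded box; your Step~1 is exactly Lemma~\ref{lemma:discrepancy2}.

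\emph{The paper's route.} It conditions rather than projects. It picks $A_\epsilon=[a_\epsilon,b_\epsilon]$ with $\mu_j(A_\epsilon)\geq 1-\epsilon$ and replaces $\mu_j$ by the normalized restriction $\mu_{j,A_\epsilon}$. It then applies Lemma~\ref{lemma:discrepancy2} to these restricted measures, which introduces two error terms:
\begin{itemize}
\item $2\|f\|_\infty\,\mu_j(A_\epsilon^c)/\mu_j(A_\epsilon)$ in the integrals;
\item $\mu_1(A_\epsilon^c)+\mu_2(A_\epsilon^c)$ in the Kolmogorov distance, via Lemma~\ref{lem:technical}.
\end{itemize}
Finally it lets $\epsilon\to 0$.

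\emph{Your route.} You push forward by the coordinatewise clamp $\pi_n$, which is non-decreasing in each coordinate. Preimages of lower orthants of $R_n$ are therefore lower orthants of $\setX$, possibly with some coordinates unconstrained and hence reached as increasing limits. This gives $\|\mu_1\circ\pi_n^{-1}-\mu_2\circ\pi_n^{-1}\|_{\mathrm{K}}\leq\|\mu_1-\mu_2\|_{\mathrm{K}}$ exactly, with no additive error. The only limiting step is then the dominated-convergence statement $\mu_j(f\circ\pi_n)\to\mu_j(f)$.

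\emph{Comparison.} Your argument dispenses with Lemma~\ref{lem:technical} and the tightness and $\epsilon$-bookkeeping. The price is the orthant-preimage computation of Step~2, which you handle correctly: the lower clamp is harmless because $c\geq a_n$, coordinates with $c_i=b_{n,i}$ become unconstrained, and continuity from below covers the case $\bar{x}_i=\infty$. The paper's conditioning keeps the measures as restrictions of the originals, so it never has to identify preimages. In exchange it must control the extra mass outside the box. Both arguments rely on the boundedness of $f$ from Proposition~\ref{prop:set_BHK} and on $\mathcal{V}_{R}(f)\leq\mathcal{V}_\setX(f)$ for boxes $R\subseteq\setX$.
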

\begin{proof}
The proof is based on Lemma~\ref{lemma:aistleitner} and
is given in Appendix~\ref{p-prop:Multivariate_KH1}.
\end{proof}

Observe that, for $\setX=[0,1]^d$, Proposition \ref{prop:Multivariate_KH1} coincides with
Lemma~\ref{lemma:aistleitner}. Since the bound is
attained by the indicator functions of the sets
$\setX\cap(-\infty,b]$ (part~\ref{HK_6} of
Proposition~\ref{prop:V_HK}), we obtain:

\begin{corollary}\label{cor:Kolmogorov_sup}
For any $\mu_1,\mu_2\in\mathcal{P}(\setX)$, we have
\begin{align*} \|\mu_1-\mu_2\|_{\mathrm{K}}
  =\sup_{f\colon\setX\to\R\,:\,
  \mathcal{V}_\setX(f)\leq 1}|\mu_1(f)-\mu_2(f)|
  =\sup_{f\in\mathcal{B}_1(\setX)\,:\,
  \mathcal{V}_\setX(f)\leq 1}|\mu_1(f)-\mu_2(f)|.
\end{align*}
\end{corollary}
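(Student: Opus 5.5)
The plan is to prove the two equalities through a chain of inequalities. Write $S_1$ for the supremum over all functions $f\colon\setX\to\R$ with $\mathcal{V}_\setX(f)\leq 1$, and $S_2$ for the supremum over $f\in\mathcal{B}_1(\setX)$ with $\mathcal{V}_\setX(f)\leq 1$. We will show
\[
\|\mu_1-\mu_2\|_{\mathrm{K}}\leq S_2\leq S_1\leq\|\mu_1-\mu_2\|_{\mathrm{K}}.
\]

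For the inequality $S_1\leq\|\mu_1-\mu_2\|_{\mathrm{K}}$, take any $f$ with $\mathcal{V}_\setX(f)\leq 1$. Then $f\in\mathcal{B}_{\mathrm{HK}}(\setX)$, so by Proposition~\ref{prop:set_BHK} it is bounded and measurable, and $\mu_i(f)$ is well defined. Proposition~\ref{prop:Multivariate_KH1} then gives
\[
|\mu_1(f)-\mu_2(f)|\leq\mathcal{V}_\setX(f)\,\|\mu_1-\mu_2\|_{\mathrm{K}}\leq\|\mu_1-\mu_2\|_{\mathrm{K}}.
\]

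The inequality $S_2\leq S_1$ is immediate, because the second supremum runs over a subset of the functions in the first.

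For the inequality $\|\mu_1-\mu_2\|_{\mathrm{K}}\leq S_2$, fix $b\in\setX$ and set $f_b\coloneqq\ind_{\setX\cap(-\infty,b]}$. This function takes values in $\{0,1\}\subset[0,1]$, so $f_b\in\mathcal{B}_1(\setX)$. There are two cases.
\begin{itemize}
\item If $\setX\cap(-\infty,b]\subsetneq\setX$, part~\ref{HK_6} of Proposition~\ref{prop:V_HK} gives $\mathcal{V}_\setX(f_b)=1$, so $f_b$ is admissible in $S_2$.
\item If $\setX\cap(-\infty,b]=\setX$, then $f_b\equiv 1$. Part~\ref{HK_3} of Proposition~\ref{prop:V_HK} gives $\mathcal{V}_\setX(f_b)=0\leq 1$, so $f_b$ is again admissible. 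In this case $|\mu_1(f_b)-\mu_2(f_b)|=0$ since both are probability measures, so this case contributes nothing anyway.
\end{itemize}
In both cases $|(\mu_1-\mu_2)(\setX\cap(-\infty,b])|\leq S_2$. Taking the supremum over $b\in\setX$, as in \eqref{eq:K-norm}, gives the claim.

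No real obstacle is expected. The only care needed is the degenerate case where the orthant is all of $\setX$, and checking that all integrals are well defined, which Proposition~\ref{prop:set_BHK} guarantees.
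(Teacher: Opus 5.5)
Your proposal is correct and follows the paper's own argument: the upper bound comes from the Koksma--Hlawka inequality of Proposition~\ref{prop:Multivariate_KH1}, and the lower bound comes from the orthant indicators, which lie in $\mathcal{B}_1(\setX)$ and have HK-variation one by part~\ref{HK_6} of Proposition~\ref{prop:V_HK}. Your separate treatment of the degenerate case $\setX\cap(-\infty,b]=\setX$ via part~\ref{HK_3} makes explicit a detail the paper's one-line justification leaves implicit.
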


\subsection{A norm on finite signed
measures\label{sub:New_norm}}

For any finite signed-measure $\nu\in\mathcal{M}(\setX)$ we set
\begin{align*}
  \|\nu\|_{\mathcal{V}_\setX}\coloneqq
  \sup_{f\in\mathcal{B}_{\mathrm{HK}}(\setX)\,:\,
  \|f\|_{\mathrm{HK}}\leq 1}|\nu(f)|,
\end{align*}
with the norm $\|\cdot\|_{\mathrm{HK}}$ defined in
\eqref{eq:HK_norm}; note that this quantity is finite, since
$|\nu(f)|\leq\|f\|_\infty|\nu|(\setX)$.

\begin{theorem}\label{thm:norm}
The mapping
$\|\cdot\|_{\mathcal{V}_\setX}\colon
\mathcal{M}(\setX)\to[0,\infty)$ is a norm on $\mathcal{M}(\setX)$ satisfying the
following properties:
\begin{enumerate}[label=(\roman*)]
\item\label{norm_i}
$\frac{1}{2}\|\nu\|_{\mathrm{K}}\leq
\|\nu\|_{\mathcal{V}_\setX}$ for any
$\nu\in\mathcal{M}(\setX)$;
\item\label{norm_ii}
$\|\mu_1-\mu_2\|_{\mathcal{V}_\setX}\leq
\|\mu_1-\mu_2\|_{\mathrm{K}}$ for any
$\mu_1,\mu_2\in\mathcal{P}(\setX)$.
\end{enumerate}
In particular, for any
$\mu_1,\mu_2\in\mathcal{P}(\setX)$, we have
\begin{align*}
\frac{1}{2}\,\|\mu_1-\mu_2\|_{\mathrm{K}}
\leq
\|\mu_1-\mu_2\|_{\mathcal{V}_\setX}
\leq
\|\mu_1-\mu_2\|_{\mathrm{K}}.
\end{align*}
\end{theorem}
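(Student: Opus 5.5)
We first check the norm axioms, then prove (i) and (ii) separately, using the Koksma--Hlawka inequality for (ii).

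\emph{Norm axioms.} Finiteness was noted just before the statement. Homogeneity and the triangle inequality follow directly because $\|\nu\|_{\mathcal{V}_\setX}$ is a supremum of $|\nu(f)|$ over a fixed symmetric set of functions. Definiteness will follow from part~\ref{norm_i}. Indeed, if $\|\nu\|_{\mathcal{V}_\setX}=0$, then $\nu$ vanishes on every set $\setX\cap(-\infty,b]$ with $b\in\setX$. These sets form a $\pi$-system generating the Borel $\sigma$-algebra of $\setX$, so a $\pi$--$\lambda$ (Dynkin) argument gives $\nu=0$. The argument is applied to the positive and negative parts $\nu^\pm$, which are finite measures agreeing on this $\pi$-system. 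We must also check that they agree on $\setX$ itself. This holds by taking $b=b_n\in\mathcal{U}(\setX)$ and letting $n\to\infty$, using continuity from below of the finite measures $\nu^\pm$.

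\emph{Part~\ref{norm_i}.} Fix $b\in\setX$ and set $f_b\coloneqq\ind_{\setX\cap(-\infty,b]}$.
\begin{itemize}
\item If $\setX\cap(-\infty,b]\subsetneq\setX$, part~\ref{HK_6} of Proposition~\ref{prop:V_HK} gives $\mathcal{V}_\setX(f_b)=1$. Since $\|f_b\|_\infty=1$, we get $\|f_b\|_{\mathrm{HK}}=2$.
\item If $\setX\cap(-\infty,b]=\setX$, then $f_b\equiv 1$. Part~\ref{HK_3} of Proposition~\ref{prop:V_HK} gives $\mathcal{V}_\setX(f_b)=0$, so $\|f_b\|_{\mathrm{HK}}=1\le 2$.
\end{itemize}
In both cases $f_b/2$ is admissible in the supremum defining $\|\nu\|_{\mathcal{V}_\setX}$. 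Hence $|\nu(\setX\cap(-\infty,b])|\le 2\|\nu\|_{\mathcal{V}_\setX}$, and taking the supremum over $b\in\setX$ yields~\ref{norm_i}.

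\emph{Part~\ref{norm_ii}.} Let $f\in\mathcal{B}_{\mathrm{HK}}(\setX)$ with $\|f\|_{\mathrm{HK}}\le1$. Proposition~\ref{prop:Multivariate_KH1} gives
\[
|\mu_1(f)-\mu_2(f)|\le\mathcal{V}_\setX(f)\,\|\mu_1-\mu_2\|_{\mathrm{K}}\le\|f\|_{\mathrm{HK}}\,\|\mu_1-\mu_2\|_{\mathrm{K}}\le\|\mu_1-\mu_2\|_{\mathrm{K}}.
\]
Taking the supremum over such $f$ gives~\ref{norm_ii}. The two-sided estimate in the statement then follows by combining~\ref{norm_i} and~\ref{norm_ii}.

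\emph{Main obstacle.} The analytic content is carried entirely by Proposition~\ref{prop:Multivariate_KH1} and part~\ref{HK_6} of Proposition~\ref{prop:V_HK}, which we take as given. The only step needing genuine care is definiteness: we must confirm that the sets $\setX\cap(-\infty,b]$ with $b\in\setX$, together with $\setX$, generate $\mathcal{E}$. This holds because when a coordinate is bounded, the corresponding boundary value $\bar x_i$ is allowed for $b_i$. Intervals $(c,b_i]$ are obtained from differences of these sets, and open boxes from countable unions of such half-open boxes.
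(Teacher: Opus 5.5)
Your proof is correct and follows the paper's argument essentially line by line. Part~(i) uses the test function $\tfrac12\ind_{\setX\cap(-\infty,b]}$ together with parts (ii), (iii) and (vi) of Proposition~\ref{prop:V_HK}; definiteness is deduced from~(i) because a finite signed measure is determined by its values on these orthant sets; and part~(ii) follows from Proposition~\ref{prop:Multivariate_KH1} together with $\mathcal{V}_\setX(f)\le\|f\|_{\mathrm{HK}}$. The only difference is that you spell out the $\pi$--$\lambda$ argument, including agreement on $\setX$ via $b_n\in\mathcal{U}(\setX)$, whereas the paper simply asserts the characterization.
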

\begin{proof}
Homogeneity and the triangle inequality follow from the
definition of
$\|\cdot\|_{\mathcal{V}_\setX}$. To check \ref{norm_i}, consider some
$\nu\in\mathcal{M}(\setX)$ and $b\in\setX$, and set
$f\coloneqq\frac12\ind_{\setX\cap(-\infty,b]}$. Then $\|f\|_\infty\leq\frac12$
and, by parts \ref{HK_2},
\ref{HK_3} and \ref{HK_6} of
Proposition~\ref{prop:V_HK},
$\mathcal{V}_\setX(f)\leq\frac12$. This implies that
$\|f\|_{\mathrm{HK}}\leq 1$ and therefore
$\|\nu\|_{\mathcal{V}_\setX}\geq|\nu(f)|
=\frac12|\nu(\setX\cap(-\infty,b])|$. Taking the supremum over $b\in\setX$ we check
\ref{norm_i}. Since a finite signed measure on $\setX$
is characterized by its values on the sets $\setX\cap(-\infty,b]$,
$b\in\setX$, the mapping $\|\cdot\|_{\mathrm{K}}$ is a
norm on $\mathcal{M}(\setX)$; thus, by \ref{norm_i},
$\|\nu\|_{\mathcal{V}_\setX}=0$ implies $\nu=0$.
Finally, \ref{norm_ii} follows from
Proposition~\ref{prop:Multivariate_KH1}, recalling that
$\mathcal{V}_\setX(f)\leq\|f\|_{\mathrm{HK}}\leq 1$
for any function $f$ in the supremum defining
$\|\mu_1-\mu_2\|_{\mathcal{V}_\setX}$.
\end{proof}

\section{Perturbation analysis\label{sec:main}}

Sections~\ref{sub:assumptions_gen}
and~\ref{sub:perturbation_gen} treat an abstract
framework, where the local errors are measured with a
pseudo-metric $d_2$, the stability of the semigroup is
expressed with a pseudo-metric $d_1$, and condition
\ref{Astar} ensures that local errors do not
accumulate over time.
Sections~\ref{sub:primitive}--\ref{sub:K_estimates}
give primitive conditions on the model, derive the
stability of the semigroup and check \ref{Astar} in
total variation  and in
the Kolmogorov norm,
which yields \eqref{eq:summaryMain}.

\subsection{An abstract framework\label{sub:assumptions_gen}}

Let $\mathcal{F}_2(\setX)\subseteq\mathcal{F}_1(\setX)$
be two non-empty sets of measurable functions from
$\setX$ to $\R$. For $i\in\{1,2\}$, let
$\mathcal{P}_i(\setX)$ be the set of probability
measures $\mu\in\mathcal{P}(\setX)$ such that
$\mu(f)\in\R$ for any $f\in\F_i(\setX)$, and let
$d_i\colon\mathcal{P}_i(\setX)\times
\mathcal{P}_i(\setX)\to[0,\infty]$ be defined, for any
$\mu_1,\mu_2\in\mathcal{P}_i(\setX)$, by
\begin{align*}
d_i(\mu_1,\mu_2)\coloneqq
\sup_{f\in\mathcal{F}_i(\setX)}|\mu_1(f)-\mu_2(f)|.
\end{align*}
Here $d_i(\mu_1,\mu_2)=0$ does not imply $\mu_1=\mu_2$,
and the functions in $\F_1(\setX)$ and $\F_2(\setX)$
may be unbounded. For any $t\geq 1$ we let
$\mathcal{P}_{W_t|1}(\setX)\coloneqq
\mathcal{P}_{W_t}(\setX)\cap\mathcal{P}_1(\setX)$, with
the sets $W_t$ as in \ref{A0} and with $\mathcal{P}_{W_t}(\setX)$ as defined in Section \ref{sub:FK-model}, and we assume that the set
$\F_1(\setX)$ is chosen so that
$\mathcal{P}_{W_1|1}(\setX)\neq\emptyset$ and, for any $t\geq 1$,
\begin{align}\label{eq:regul_d1}
\Phi_{t+1}(\mu)\in\mathcal{P}_1(\setX)
\quad\text{for any }\mu\in\mathcal{P}_{W_t|1}(\setX).
\end{align}
By \ref{A0}, $\Phi_{t+1}$ then maps
$\mathcal{P}_{W_t|1}(\setX)$ into
$\mathcal{P}_{W_{t+1}|1}(\setX)$ for any $t\geq 1$.
We then consider the Feynman--Kac model
\begin{align}\label{eq:FK_pred_model}
\eta_{t+1}=\Phi_{t+1}(\eta_t)
\quad\text{and}\quad
\widehat{\eta}_t=\Psi_t(\eta_t),
\quad t\geq 1,
\qquad\text{with}\quad
\eta_1\in\mathcal{P}_{W_1|1}(\setX).
\end{align}

Our analysis relies on the following condition.
\begin{enumerate}[label=\textbf{(A$^\star$)}]
\setcounter{enumi}{-1}
\item\label{Astar} There exist
(i) a sequence $f_t$ of non-negative
functions in $\mathcal{F}_2(\setX)$,
(ii) a constant $\delta\in(0,1)$,
(iii) a function
$\varphi_1\colon[0,\infty)^2\to[0,\infty]$,
non-increasing in each coordinate, such that
$\varphi_1(z,z')=\infty\Leftrightarrow\min\{z,z'\}=0$,
and
(iv) a non-increasing function
$\varphi_2\colon[0,\infty)\to[0,\infty]$ such that
$\varphi_2(z)=\infty\Leftrightarrow z=0$,
such that, for any $t>s\geq 1$ and
$\mu_1,\mu_2\in\mathcal{P}_{W_s|1}(\setX)$, we have the
exponential forgetting property
\begin{align}\label{eq:forget_past}
d_1\big(\Phi_{s,t}(\mu_1),\Phi_{s,t}(\mu_2)\big)
\leq\varphi_1\big(\mu_1(f_s),\mu_2(f_s)\big)
\,e^{-\delta(t-s)}
\end{align}
and the local Lipschitz property
\begin{align}\label{eq:lipschitz_gen}
d_1\big(\Phi_{s,t}(\mu_1),\Phi_{s,t}(\mu_2)\big)
\leq\varphi_2\big(\mu_1(f_s)\big)\,
d_2(\mu_1,\mu_2).
\end{align}
In addition, for any
$\mu\in\mathcal{P}_{W_1|1}(\setX)$, we have the
implication
\begin{align}\label{eq:lower_gen}
\mu(f_1)>0\implies
\inf_{t\geq 2}\Phi_{1,t}(\mu)(f_t)>0.
\end{align}
\end{enumerate}

\subsection{A general perturbation bound\label{sub:perturbation_gen}}

Let $\bm{\mu}=(\mu_t)_{t\geq 1}$ be a sequence with
$\mu_t\in\mathcal{P}_{W_t|1}(\setX)$, and let
$\eta\in\mathcal{P}_{W_1|1}(\setX)$. We let, for any $T\geq 1$,
\begin{align}\label{eq:maxPerturbation}
\Delta_{T,\bm{\mu}|\eta}\coloneqq
d_2(\mu_1,\eta)\vee\max_{1<t\leq T}
d_2\big(\mu_t,\Phi_t(\mu_{t-1})\big)
\end{align}
denote the largest local perturbation up to time $T$.
When \ref{Astar} is satisfied for some functions
$f_t$, we also set
\begin{align*}
\iota_{T,\bm{\mu}|\eta}\coloneqq
\eta(f_1)\wedge\min_{1\leq t<T}
\Big(\mu_t(f_t)\wedge\Phi_{t+1}(\mu_t)(f_{t+1})\Big).
\end{align*}
Finally, we consider the function
$\varphi\colon[0,\infty)\to[1,\infty]$ defined by
\begin{align}\label{eq:def_varphi}
\varphi(z)\coloneqq\frac{4}{\log(2)\,\delta(1-e^{-\delta})}
\max\big\{1,\varphi_1(z,z),\varphi_2(z)\big\},
\qquad z\in[0,\infty),
\end{align}
where $\delta$, $\varphi_1$ and $\varphi_2$ are as in
\ref{Astar}.

\begin{lemma}\label{lemma:main_Gen}
Let $\eta_t$ be as defined in
\eqref{eq:FK_pred_model} and assume that
\ref{Astar} holds. Then, for any sequences
$\bm{\mu}=(\mu_t)_{t\geq 1}$ with
$\mu_t\in\mathcal{P}_{W_t|1}(\setX)$, and any
$T\geq 1$ such that $\Delta_{T,\bm{\mu}|\eta_1}\leq 1$, 
\begin{align*}
\max_{1\leq t\leq T}d_2(\mu_t,\eta_t)
\leq\varphi\big(\iota_{T,\bm{\mu}|\eta_1}\big)\,
\vartheta\big(\Delta_{T,\bm{\mu}|\eta_1}\big),
\end{align*}
with the function $\vartheta$ as in \eqref{eq:vartheta}.
In addition, for any $f\in\mathcal{F}_1(\setX)$ and
$1<t\leq T$,
\begin{align*}
\big|\mu_t(f)-\eta_t(f)\big|
\leq\big|\mu_t(f)-\Phi_t(\mu_{t-1})(f)\big|
+\varphi\big(\iota_{T,\bm{\mu}|\eta_1}\big)\,
\vartheta\big(\Delta_{T,\bm{\mu}|\eta_1}\big).
\end{align*}
\end{lemma}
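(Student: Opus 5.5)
The plan is to bound $d_1$ rather than $d_2$ and then pass to $d_2$. Since $\F_2(\setX)\subseteq\F_1(\setX)$, we have $d_2\le d_1$ on $\mathcal{P}_1(\setX)$. We may assume $\iota\coloneqq\iota_{T,\bm{\mu}|\eta_1}>0$ and $\Delta\coloneqq\Delta_{T,\bm{\mu}|\eta_1}>0$. Indeed, if $\iota=0$ then $\varphi(\iota)=\infty$ and there is nothing to prove. If $\Delta=0$, the argument below gives zero error on every term, so $\mu_t=\eta_t$ in the relevant sense.

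\textbf{Decomposition.} Fix $1\le t\le T$ and start from the telescoping identity \eqref{eq:decomp_mu}, with $\mu_0\coloneqq\eta_1$ and $\Phi_1$ the identity:
\[
\mu_t-\eta_t=\sum_{s=1}^{t}\Big(\Phi_{s,t}(\mu_s)-\Phi_{s,t}\big(\Phi_s(\mu_{s-1})\big)\Big).
\]
All terms lie in $\mathcal{P}_{W_\cdot|1}(\setX)$ by \eqref{eq:regul_d1} and \ref{A0}. For any $f\in\F_1(\setX)$, the triangle inequality applied to $|\mu_t(f)-\eta_t(f)|$ bounds it by the sum of the corresponding differences.

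\textbf{The $s=t$ term.} Here $\Phi_{t,t}$ is the identity, so this term is exactly $\mu_t-\Phi_t(\mu_{t-1})$. This explains the separate term in the second claim. When $f\in\F_2(\setX)$, its contribution is at most $\Delta$; for $t=1$ it is $d_2(\mu_1,\eta_1)\le\Delta$.

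\textbf{The terms $s<t$.} Apply \ref{Astar} with $\mu_1=\mu_s$ and $\mu_2=\Phi_s(\mu_{s-1})$, both in $\mathcal{P}_{W_s|1}(\setX)$. The definition of $\iota$ gives $\mu_s(f_s)\ge\iota$ and $\Phi_s(\mu_{s-1})(f_s)\ge\iota$. For $s=1$ this reads $\mu_1(f_1)\ge\iota$ and $\eta_1(f_1)\ge\iota$. Since $\varphi_1$ and $\varphi_2$ are non-increasing, \eqref{eq:forget_past} and \eqref{eq:lipschitz_gen} together yield
\[
d_1\big(\Phi_{s,t}(\mu_s),\Phi_{s,t}(\Phi_s(\mu_{s-1}))\big)\le M\min\big\{e^{-\delta(t-s)},\Delta\big\},
\]
where $M\coloneqq\max\{1,\varphi_1(\iota,\iota),\varphi_2(\iota)\}$.

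\textbf{Summing the series.} Summing over $k=t-s\ge1$, I expect the main work to be the elementary estimate
\[
\sum_{k\ge1}\min\{e^{-\delta k},\Delta\}\le c_\delta\,\vartheta(\Delta),\qquad \Delta\in(0,1],
\]
with a constant matching $4/(\log(2)\,\delta(1-e^{-\delta}))$ once the $s=t$ term is absorbed. To get it, split at $k_0\coloneqq\lceil\log(1/\Delta)/\delta\rceil$. The terms with $k\le k_0$ contribute at most $k_0\Delta\le\Delta\log(1/\Delta)/\delta+\Delta$. The terms with $k>k_0$ form a geometric tail bounded by $e^{-\delta k_0}e^{-\delta}/(1-e^{-\delta})\le\Delta/(1-e^{-\delta})$.

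\textbf{Comparison with $\vartheta$.} It remains to express these bounds through $\vartheta$. We have $\log(1/\Delta)\le\log(1+1/\Delta)$, and for $\Delta\le1$ also $\Delta\le\vartheta(\Delta)/\log 2$. Using $\delta<1$ and $1-e^{-\delta}<1$ to put everything over the common factor $\delta(1-e^{-\delta})\log 2$, the total, including the $s=t$ term $\Delta$, is at most $\tfrac{4}{\log(2)\delta(1-e^{-\delta})}M\,\vartheta(\Delta)=\varphi(\iota)\vartheta(\Delta)$.

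This yields both claims. The second follows directly for $f\in\F_1(\setX)$, keeping the $s=t$ term explicit. The first follows by taking $f\in\F_2(\setX)$, bounding the $s=t$ term by $\Delta$, and using $d_2\le d_1$ on the other terms. Condition \eqref{eq:lower_gen} is not needed here; it only serves later to guarantee $\iota>0$.
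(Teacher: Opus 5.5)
Your proposal is correct and follows essentially the same route as the paper. Both arguments use the telescoping decomposition \eqref{eq:decomp_mu}, bound the older perturbations with \eqref{eq:forget_past} and the recent ones with \eqref{eq:lipschitz_gen} (all $f_s$-masses being bounded below by $\iota_{T,\bm{\mu}|\eta_1}$), keep the $s=t$ term separate, and split the resulting series. The only cosmetic difference is the split point: you split uniformly at $\lceil\log(1/\Delta)/\delta\rceil$, as in Remark~\ref{rem:log_factor}, whereas the paper splits at $\lceil\log(\delta/\Delta)/\delta\rceil$ when $\Delta<\delta$ and at $k=1$ otherwise; both give the constant $4/(\log(2)\,\delta(1-e^{-\delta}))$.
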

\begin{proof}
The proof combines \eqref{eq:forget_past} and
\eqref{eq:lipschitz_gen} with the telescoping sum
formula \eqref{eq:decomp_mu}; see
Appendix~\ref{p-lemma:main_Gen}.
\end{proof}

\begin{remark}
The condition $\Delta_{T,\bm{\mu}|\eta_1}\leq 1$ holds
in the two settings of Section~\ref{sub:primitive},
since $\|\mu_1-\mu_2\|_{\mathrm{tv}}\leq 1$ and
$\|\mu_1-\mu_2\|_{\mathcal{V}_\setX}\leq
\|\mu_1-\mu_2\|_{\mathrm{K}}\leq 1$ for any
$\mu_1,\mu_2\in\mathcal{P}(\setX)$.
\end{remark}

The error bounds provided in Lemma \ref{lemma:main_Gen} depend  on $\bm{\mu}$ not only through the largest local perturbation $\Delta_{T,\bm{\mu}|\eta_1}$, but also through  the constant $\varphi(\iota_{T,\bm{\mu}|\eta_1})$. Our aim now is to  remove the dependence to this constant, and, motivated by the applications of Section~\ref{sec:compareFK}, to present a perturbation bound expressed in term of weighted local perturbations.

To this aim, we let $\varpi\colon[0,\infty)\to[0,1]$ be some function and,
with $\bm{\mu}$ and $\eta$ as in \eqref{eq:maxPerturbation}, for any $T\geq 1$, we set
\begin{align}\label{eq:Delta_phi}
\Delta_{T,\bm{\mu}|\eta,\varpi}\coloneqq
\Big(d_2(\mu_1,\eta)\,\varpi\big(\eta(f_1)\big)\Big)
\vee\max_{1\leq t<T}
\Big(d_2\big(\mu_{t+1},\Phi_{t+1}(\mu_t)\big)\,
\varpi\big(\mu_t(f_t)\big)\Big).
\end{align}
Observe that when $\varpi\equiv 1$ we have
$\Delta_{T,\bm{\mu}|\eta,\varpi}=\Delta_{T,\bm{\mu}|\eta}$.

With this notation in place, we have the following theorem.
  
\begin{theorem}\label{thm:main_Gen}
Let $\eta_t$ be as defined in
\eqref{eq:FK_pred_model}, and let
$\varpi\colon[0,\infty)\to[0,1]$ be a non-decreasing
function such that $\varpi(z)=0\Leftrightarrow z=0$.
Assume that \ref{Astar} holds and that
$\eta_1(f_1)>0$, with $f_t$ as in
\ref{Astar}. Then, there exist constants
$(c_{\eta_1,\varpi},C_{\eta_1,\varpi})
\in(0,\infty)^2$ such that, for any sequences
$\bm{\mu}=(\mu_t)_{t\geq 1}$ with
$\mu_t\in\mathcal{P}_{W_t|1}(\setX)$ and any
$T\geq 1$,
\begin{align*}
\Delta_{T,\bm{\mu}|\eta_1,\varpi}\leq
c_{\eta_1,\varpi}
\implies
\max_{1\leq t\leq T}d_2(\mu_t,\eta_t)
\leq C_{\eta_1,\varpi}\,
\vartheta\big(\Delta_{T,\bm{\mu}|\eta_1,\varpi}\big).
\end{align*}
In addition, for any $f\in\mathcal{F}_1(\setX)$ and
$1<t\leq T$, the condition
$\Delta_{T,\bm{\mu}|\eta_1,\varpi}\leq
c_{\eta_1,\varpi}$ implies
\begin{align*}
|\mu_t(f)-\eta_t(f)|
\leq|\mu_t(f)-\Phi_t(\mu_{t-1})(f)|
+C_{\eta_1,\varpi}\,
\vartheta\big(\Delta_{T,\bm{\mu}|\eta_1,\varpi}\big).
\end{align*}
\end{theorem}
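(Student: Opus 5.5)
The proof will reduce Theorem~\ref{thm:main_Gen} to Lemma~\ref{lemma:main_Gen}. The only new ingredient is a uniform positive lower bound on $\iota_{T,\bm{\mu}|\eta_1}$ whenever the weighted perturbation $\Delta_{T,\bm{\mu}|\eta_1,\varpi}$ is small. Once such a bound $\iota_{T,\bm{\mu}|\eta_1}\geq\iota_\star>0$ is available, with $\iota_\star$ depending only on $\eta_1$, $\varpi$ and the model, monotonicity of $\varphi_1,\varphi_2$ gives $\varphi(\iota_{T,\bm{\mu}|\eta_1})\leq\varphi(\iota_\star)<\infty$. We will then relate the unweighted $\Delta_{T,\bm{\mu}|\eta_1}$ to the weighted one: on the event that all $\mu_t(f_t)\geq\iota_\star$ and $\eta_1(f_1)\geq\iota_\star$, we have $\varpi(\mu_t(f_t))\geq\varpi(\iota_\star)>0$, so $\Delta_{T,\bm{\mu}|\eta_1}\leq\Delta_{T,\bm{\mu}|\eta_1,\varpi}/\varpi(\iota_\star)$. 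Since $\vartheta(cz)\leq c\,\vartheta(z)$ for $c\geq1$ (because $\log(1+(cz)^{-1})\leq\log(1+z^{-1})$), both conclusions follow from the lemma with $C_{\eta_1,\varpi}\coloneqq\varphi(\iota_\star)/\varpi(\iota_\star)$. We also choose $c_{\eta_1,\varpi}\leq\varpi(\iota_\star)$ so that $\Delta_{T,\bm{\mu}|\eta_1}\leq1$ as the lemma requires.

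The core step is proving the lower bound, by induction on $t$, using the reference flow. By \eqref{eq:lower_gen} and $\eta_1(f_1)>0$, we have $\kappa\coloneqq\eta_1(f_1)\wedge\inf_{t\geq2}\eta_t(f_t)>0$. We will set $\iota_\star\coloneqq\kappa/2$ and show, for all $t\leq T$, that $\mu_t(f_t)\geq\kappa/2$ and $\Phi_{t+1}(\mu_t)(f_{t+1})\geq\kappa/2$. This requires controlling $|\mu_t(f_t)-\eta_t(f_t)|$ and $|\Phi_{t+1}(\mu_t)(f_{t+1})-\eta_{t+1}(f_{t+1})|$. For the latter, since $f_{t+1}\in\mathcal{F}_2\subseteq\mathcal{F}_1$, it is bounded by $d_1(\Phi_{t+1}(\mu_t),\Phi_{t+1}(\eta_t))$. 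For the former, we bound it by $d_2(\mu_t,\eta_t)$, because $f_t\in\mathcal{F}_2$. The induction works as follows. Suppose the lower bounds hold up to time $t-1$, i.e.\ $\iota_{t,\bm{\mu}|\eta_1}\geq\kappa/2$ for the horizon $t$ (the terms at index $t$ itself are not involved in $\iota_{t,\cdot}$). Then Lemma~\ref{lemma:main_Gen} applied with horizon $t$ gives $d_2(\mu_t,\eta_t)\leq\varphi(\kappa/2)\vartheta(\Delta_{t}/\varpi(\kappa/2))$. Choosing $c_{\eta_1,\varpi}$ small makes this at most $\kappa/2$, since $\vartheta(z)\to0$ as $z\to0$, and hence $\mu_t(f_t)\geq\kappa/2$. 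For $\Phi_{t+1}(\mu_t)(f_{t+1})$, we use \eqref{eq:lipschitz_gen} with $s=t$ and target $t+1$: this gives $d_1(\Phi_{t+1}(\mu_t),\Phi_{t+1}(\eta_t))\leq\varphi_2(\eta_t(f_t))\,d_2(\mu_t,\eta_t)\leq\varphi_2(\kappa)\,d_2(\mu_t,\eta_t)$. This is again small for small $c_{\eta_1,\varpi}$, giving $\Phi_{t+1}(\mu_t)(f_{t+1})\geq\kappa/2$. The base case $\eta_1(f_1)\geq\kappa$ is immediate.

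The main obstacle is the circularity in the weights. At time $t$, the weight $\varpi(\mu_{t-1}(f_{t-1}))$ in $\Delta_{\cdot,\varpi}$ involves a quantity whose lower bound comes from the previous induction step, so the order of the argument matters. We resolve this by always invoking the lemma at horizon $t$, where $\iota_{t}$ and the weights involve only $\mu_s(f_s)$ for $s<t$, which are already controlled. A second issue is checking that the second part of the lemma transfers cleanly: the extra term $|\mu_t(f)-\Phi_t(\mu_{t-1})(f)|$ is unchanged, so the same constant applies.
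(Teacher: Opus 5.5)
Your proposal is correct and follows essentially the same route as the paper: an induction using \eqref{eq:lower_gen}, the bound $\mu_t(f_t)\geq\eta_t(f_t)-d_2(\mu_t,\eta_t)$, and the one-step Lipschitz bound \eqref{eq:lipschitz_gen} to keep $\iota_{T,\bm{\mu}|\eta_1}\geq\kappa/2$. Lemma~\ref{lemma:main_Gen} is applied at each horizon, with $\Delta_{T,\bm{\mu}|\eta_1}\leq\Delta_{T,\bm{\mu}|\eta_1,\varpi}/\varpi(\kappa/2)$ and $\vartheta(cz)\leq c\,\vartheta(z)$. The only difference is cosmetic: the paper makes the threshold explicit via $\vartheta(z)\leq C_*z^{1/2}$, where you use continuity of $\vartheta$ at $0$.
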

\begin{proof}
The proof uses three facts, namely (i)
in Lemma~\ref{lemma:main_Gen}, the quantity $\iota_{T,\bm{\mu}|\eta_1}$
does not depend on $\mu_T$,   (ii) for any $t\geq 1$, we have
$f_t\in\mathcal{F}_2(\setX)$ under \ref{Astar}, and (iii) the implication stated in
\eqref{eq:lower_gen}. See
Appendix~\ref{p-thm:main_Gen}.
\end{proof}

\begin{remark}\label{rem:log_factor}
The logarithmic factor in $\vartheta$ comes from the
structure of \ref{Astar}: the forgetting property
\eqref{eq:forget_past} decays exponentially in $t-s$ but
does not scale with the distance between the initial
distributions, while the Lipschitz property
\eqref{eq:lipschitz_gen} scales with this distance but
does not decay with $t-s$. In the telescoping
decomposition  \eqref{eq:decomp_mu} used in the proof of
Lemma~\ref{lemma:main_Gen}, the contribution of the
local perturbation introduced at time $p$ is therefore
bounded, up to constants, by
$\min\{\Delta,e^{-\delta(t-p)}\}$, where
$\Delta\in(0,1)$ is the largest local perturbation. Let
$K\coloneqq\lceil\log(1/\Delta)/\delta\rceil$, so that
$K<1+\log(1/\Delta)/\delta$ and $e^{-\delta K}\leq\Delta$.
Bounding the first $K$ terms by $\Delta$ and the
remaining ones by $e^{-\delta k}$, we obtain
\begin{align*}
\sum_{k\geq 0}\min\big\{\Delta,e^{-\delta k}\big\}
\leq K\Delta+\frac{e^{-\delta K}}{1-e^{-\delta}}
\leq\Delta\Big(1+\frac{\log(1/\Delta)}{\delta}\Big)
+\frac{\Delta}{1-e^{-\delta}}.
\end{align*}
This bound is sharp up to constants: since
$\min\{\Delta,e^{-\delta k}\}=\Delta$ for any integer
$0\leq k\leq\log(1/\Delta)/\delta$, the sum is at least
$\Delta\log(1/\Delta)/\delta$. In contrast, if the Lipschitz constant
decayed exponentially, that is, if
\[
d_2(\Phi_{s,t}(\mu_1),\Phi_{s,t}(\mu_2))\leq
C\,e^{-\delta(t-s)}\,d_2(\mu_1,\mu_2),
\qquad t\geq s\geq 1,
\]
each contribution
would be bounded by $C\Delta e^{-\delta(t-p)}$, and the
global error by $C\Delta/(1-e^{-\delta})$. Such estimates
hold under uniform minorization conditions on the Markov
transitions and bounded potential functions
\citep{delmoral2001stability,DelMoral:book}, but not
under the conditions of Section~\ref{sub:primitive},
where the Lipschitz constant $\varphi_2(\mu(f_s))$
depends on t  $\mu(f_s)$. We do not
know whether the logarithmic factor can be removed under
these conditions.
\end{remark}

\begin{remark}
The conclusions of Theorem~\ref{thm:main_Gen} remain
valid for other choices of
$\Delta_{T,\bm{\mu}|\eta,\varpi}$. First, they remain
valid when, in \eqref{eq:Delta_phi}, the functions
$f_t$ are replaced by a sequence
$g_t$ of non-negative functions in
$\mathcal{F}_2(\setX)$ such that
$\mu(f_1)\mu(g_1)>0\implies
\inf_{t\geq 1}\Phi_{1,t}(\mu)(g_t)>0$ for any
$\mu\in\mathcal{P}_{W_1|1}(\setX)$, provided that $\eta_1(g_1)>0$. Second, when $g_t$ is a
sequence of functions in $\mathcal{F}_2(\setX)$,
bounded below by one and such that
$\mu(f_1)\mu(g_1)\in(0,\infty)\implies
\sup_{t\geq 1}\Phi_{1,t}(\mu)(g_t)<\infty$ for any
$\mu\in\mathcal{P}_{W_1|1}(\setX)$, then, if $\eta_1(g_1)<\infty$, the conclusions of
Theorem~\ref{thm:main_Gen} hold with
\begin{align*}
\Delta_{T,\bm{\mu}|\eta_1,\varpi}\coloneqq
\max_{1\leq t\leq T}
\frac{d_2\big(\mu_t,\Phi_t(\mu_{t-1})\big)}
{\varpi\big(\mu_{t-1}(g_{t-1})\big)},
\qquad\text{with }\mu_0=\eta_1
\text{ and }g_0=f_1.
\end{align*}
\end{remark}

For the updated model $\widehat{\eta}_t$, we have:

\begin{lemma}\label{lemma:filter_Gen}
Let $\eta_t$ and
$\widehat{\eta}_t$ be as defined
in \eqref{eq:FK_pred_model}. Assume that
$\inf_{t\geq 1}\eta_t(G_t)>0$ and that
there exists a constant
$c_\star\in(0,\infty)$ such that
$c_\star G_t\in\mathcal{F}_2(\setX)$
for any $t\geq 1$.
Then, there exist constants
$(c_{\eta_1},C_{\eta_1})
\in(0,\infty)^2$
such that, for any sequences
$\mu_t$ with
$\mu_t\in\mathcal{P}_{W_t|1}(\setX)$,
the following two statements hold:
\begin{enumerate}[label=(\roman*)]
\item for any $t\geq 1$ and
$f\in\mathcal{B}_1(\setX)$,
\begin{align*}
\big|\Psi_t(\mu_t)(f)
-\widehat{\eta}_t(f)\big|
\leq C_{\eta_1}
\Big(\big|\mu_t(G_tf)
-\eta_t(G_tf)\big|
+d_2(\mu_t,\eta_t)\Big);
\end{align*}
\item for any $t\geq 1$ and every
measurable function
$f\colon\setX\to\R$ for which there
exists a constant $a_t\in\R$ such that
$g_t\coloneqq(f+a_t)G_t
\in\mathcal{B}_1(\setX)$,
\begin{align*}
d_2(\mu_t,\eta_t)\leq c_{\eta_1}
\ \implies\
\big|\Psi_t(\mu_t)(f)
-\widehat{\eta}_t(f)\big|
\leq C_{\eta_1}
\Big(\big|\mu_t(g_t)
-\eta_t(g_t)\big|
+d_2(\mu_t,\eta_t)\Big).
\end{align*}
\end{enumerate}
\end{lemma}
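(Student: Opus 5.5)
The plan is to write $\Psi_t(\mu_t)(f)-\widehat{\eta}_t(f)$ as a difference of ratios and control the denominator using the assumption $c_\star G_t\in\mathcal{F}_2(\setX)$. Set $\underline{g}\coloneqq\inf_{t\geq1}\eta_t(G_t)>0$. Since $c_\star G_t\in\F_2(\setX)$, the definition of $d_2$ gives
\begin{align*}
|\mu_t(G_t)-\eta_t(G_t)|\leq c_\star^{-1}d_2(\mu_t,\eta_t).
\end{align*}
This single estimate is the only way $d_2$ enters the argument.

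For any $g$ with $\mu_t(G_t)>0$, which holds since $\mu_t\in\mathcal{P}_{W_t}(\setX)$, we have the identity
\begin{align*}
\Psi_t(\mu_t)(f)-\widehat{\eta}_t(f)
=\frac{\mu_t(G_tf)-\eta_t(G_tf)}{\eta_t(G_t)}
+\Psi_t(\mu_t)(f)\,\frac{\eta_t(G_t)-\mu_t(G_t)}{\eta_t(G_t)}.
\end{align*}
It is obtained by writing $\mu_t(G_tf)=\Psi_t(\mu_t)(f)\mu_t(G_t)$. The point is that the denominator is now $\eta_t(G_t)\geq\underline{g}$ rather than $\mu_t(G_t)$.

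\textbf{Part (i).} For $f\in\mathcal{B}_1(\setX)$ we have $|\Psi_t(\mu_t)(f)|\leq1$. The identity then gives
\begin{align*}
|\Psi_t(\mu_t)(f)-\widehat{\eta}_t(f)|\leq\underline{g}^{-1}\big(|\mu_t(G_tf)-\eta_t(G_tf)|+c_\star^{-1}d_2(\mu_t,\eta_t)\big).
\end{align*}
So (i) holds with $C_{\eta_1}\geq\underline{g}^{-1}(1\vee c_\star^{-1})$, and no smallness condition is required.

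\textbf{Part (ii).} Here $f$ may be unbounded, so $\Psi_t(\mu_t)(f)$ is not a priori bounded. This is the main obstacle, and the smallness of $d_2$ handles it.

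First, the additive constant is harmless, since $\Psi_t(\mu)(f+a_t)-\Psi_t(\nu)(f+a_t)=\Psi_t(\mu)(f)-\Psi_t(\nu)(f)$. We may therefore replace $f$ by $h\coloneqq f+a_t$, for which $hG_t=g_t\in\mathcal{B}_1(\setX)$.

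Next, use the symmetric form of the identity with denominator $\mu_t(G_t)$:
\begin{align*}
\Psi_t(\mu_t)(h)-\widehat{\eta}_t(h)=\frac{\mu_t(g_t)-\eta_t(g_t)}{\mu_t(G_t)}+\widehat{\eta}_t(h)\,\frac{\eta_t(G_t)-\mu_t(G_t)}{\mu_t(G_t)}.
\end{align*}
Here $\widehat{\eta}_t(h)=\eta_t(g_t)/\eta_t(G_t)\in[0,\underline{g}^{-1}]$, which is bounded uniformly in $t$.

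Choose $c_{\eta_1}\coloneqq c_\star\underline{g}/2$. Then $d_2(\mu_t,\eta_t)\leq c_{\eta_1}$ gives $|\mu_t(G_t)-\eta_t(G_t)|\leq\underline{g}/2$, and hence $\mu_t(G_t)\geq\underline{g}/2$. Plugging this into the identity yields
\begin{align*}
|\Psi_t(\mu_t)(f)-\widehat{\eta}_t(f)|\leq 2\underline{g}^{-1}|\mu_t(g_t)-\eta_t(g_t)|+2\underline{g}^{-2}c_\star^{-1}d_2(\mu_t,\eta_t).
\end{align*}
Taking $C_{\eta_1}$ to be the maximum of the constants from (i) and (ii) concludes the proof. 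These constants depend only on $\underline{g}$ and $c_\star$, that is, on $\eta_1$ and the model.
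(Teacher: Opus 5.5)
Your proof is correct and follows essentially the paper's argument: part (i) uses the same ratio identity with denominator $\eta_t(G_t)$, and part (ii) uses the same reduction to $h=f+a_t$ together with the lower bound $\mu_t(G_t)\geq\underline{g}/2$, obtained from $d_2(\mu_t,\eta_t)\leq c_\star\underline{g}/2$ and $c_\star G_t\in\mathcal{F}_2(\setX)$. The only cosmetic difference is that in (ii) you swap the roles of $\mu_t$ and $\eta_t$ in the identity and bound $\widehat{\eta}_t(h)\leq\underline{g}^{-1}$, whereas the paper keeps $\eta_t(G_t)$ in the denominator and bounds $|\Psi_t(\mu_t)(h)|\leq 1/\mu_t(G_t)$.
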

\begin{proof}
See Appendix~\ref{p-lemma:filter_Gen}.
\end{proof}

\subsection{Primitive conditions on the model\label{sub:primitive}}

Below, we consider
the Feynman--Kac model defined by
\begin{align}\label{eq:FK_pred_model2}
\eta_{t+1}=\Phi_{t+1}(\eta_t)
\quad\text{and}\quad
\widehat{\eta}_t=\Psi_t(\eta_t),
\quad t\geq 1,
\qquad\text{with}\quad
\eta_1\in\mathcal{P}_{W_1}(\setX),
\end{align}
with the sets $W_t$ as in \ref{A0}. Recall that \ref{A0}
holds with $W_t=\setX$ when $G_t>0$ on $\setX$ for any
$t\geq 1$, as for the models  presented in
Sections~\ref{sec:examples}--\ref{sec:PF}.

In this subsection, the local perturbations are measured with the total variation norm
$\|\cdot\|_{\mathrm{tv}}$ or with the norm
$\|\cdot\|_{\mathcal{V}_\setX}$ introduced
Section~\ref{sub:New_norm}, which is equivalent to the Kolmogorov norm $\|\cdot\|_{\mathrm{K}}$ on
$\mathcal{P}(\setX)$ 
(Theorem~\ref{thm:norm}). In the framework of Section~\ref{sub:assumptions_gen},
the total variation estimates are obtained by applying
Theorem~\ref{thm:main_Gen} and
Lemma~\ref{lemma:filter_Gen} in the setting
\begin{align}\label{eq:scenarioA}
\mathcal{F}_1(\setX)=\mathcal{F}_2(\setX)
=\mathcal{B}_1(\setX),
\quad\text{so that}\quad
d_1(\mu_1,\mu_2)=d_2(\mu_1,\mu_2)
=\|\mu_1-\mu_2\|_{\mathrm{tv}},
\tag{TV}
\end{align}
while the estimates in terms of
$\|\cdot\|_{\mathcal{V}_\setX}$ are obtained by
applying the same results when $\setX\subseteq\R^d$
is a hyper-rectangle (e.g.\ $\setX=\R^d$) and
\begin{equation}\label{eq:scenarioB}
\begin{split}
&\mathcal{F}_1(\setX)=\mathcal{B}_1(\setX),
\quad\text{so that}\quad
d_1(\mu_1,\mu_2)=\|\mu_1-\mu_2\|_{\mathrm{tv}},\\
&\mathcal{F}_2(\setX)=
\{f\in\mathcal{B}_{\mathrm{HK}}(\setX):\
\|f\|_{\mathrm{HK}}\leq 1\},
\quad\text{so that}\quad
d_2(\mu_1,\mu_2)=\|\mu_1-\mu_2\|_{\mathcal{V}_\setX},
\end{split}
\tag{$\mathcal{V}_\setX$}
\end{equation}
for any $\mu_1,\mu_2\in\mathcal{P}(\setX)$, and with the
Banach space
$(\mathcal{B}_{\mathrm{HK}}(\setX),
\|\cdot\|_{\mathrm{HK}})$ defined in
Section~\ref{sub:HK_gen}. In both settings,
$\mathcal{P}_1(\setX)=\mathcal{P}(\setX)$  and
\eqref{eq:regul_d1} trivially holds. In addition, under the primitive conditions on the model introduced in what follows, \ref{Astar} holds with
$f_t=c\,\ind_{D_t}$ for any $t\geq 1$, for some constant $c\in(0,1]$ and with the sets
$D_t$ introduced in the conditions below.

For $t\geq 1$ and $\varrho\in[0,1]$, we denote by
\begin{align*}
C_{t,\varrho}\coloneqq
\{x\in\setX:\ G_t(x)\geq\varrho\}
\end{align*}
the super-level sets of the potential functions, and we
assume a local minorization on these sets:
\begin{enumerate}[label=\textbf{(A{{\arabic*}})}]
\item\label{A1} There exist a sequence
$D_t\in  \mathcal{E}$ 
and constants
$(\underline{q},\xi)\in (0,1)\times(1,\infty)$ and
$\varrho\in(0,\underline{q}^2)$ such that the
following two conditions hold:
\begin{enumerate}[label=\textbf{\alph*.},ref=\theenumi\textbf{\alph*}]
\item\label{A1a}
$\inf_{t\geq 1}\inf_{x\in D_t}
Q_{t+1}(x,D_{t+1})\geq\underline{q}$;
\item\label{A1b} for any $t\geq 1$, there exists a
measure $\nu_{t+1}$ on $\setX$ such that, for any
$A\in\mathcal{E}$ and $x\in C_{t,\varrho}$,
\begin{align*}
\frac{1}{\xi}\,\nu_{t+1}(A\cap C_{t+1,\varrho})
\leq Q_{t+1}\big(x,A\cap C_{t+1,\varrho}\big)
\leq\xi\,\nu_{t+1}(A\cap C_{t+1,\varrho}).
\end{align*}
\end{enumerate}
\end{enumerate}
Under \ref{A1}, assumption \ref{A0} holds with
$W_t=D_t$ for any $t\geq 1$. More generally, if
\ref{A0} holds for some sets $W_t$ and \ref{A1a} holds
for some sets $D_t$, then \ref{A0} also holds with
$W_t$ replaced by $W_t\cup D_t$, since
$Q_{t+1}(x,W_{t+1}\cup D_{t+1})\geq Q_{t+1}(x,W_{t+1})>0$
for any $x\in W_t$ and
$Q_{t+1}(x,W_{t+1}\cup D_{t+1})\geq\underline{q}$ for any
$x\in D_t$. Since this substitution only enlarges the sets
$\mathcal{P}_{W_t}(\setX)$, in what follows we assume
without loss of generality that $D_t\subseteq W_t$ for any
$t\geq 1$, with $D_t$ the sets appearing in the
condition~\ref{A1}, \ref{A2} or \ref{A3} under
consideration.

The non-vanishing property \eqref{eq:lower_gen}
requires a stronger condition.
\begin{enumerate}[label=\textbf{(A{{\arabic*}})}]
\setcounter{enumi}{1}
\item\label{A2} Condition~\ref{A1} holds for some sets
$D_t$ for which there exists
$\mu^*\in\mathcal{P}_{W_1}(\setX)$ such that
\[\inf_{t\geq 1}\Phi_{1,t}(\mu^*)(D_t)>0.
\]
\end{enumerate}
Proposition~\ref{prop:A1p-A2} gives a sufficient
condition for \ref{A2}.

\begin{proposition}\label{prop:A1p-A2}
Assume that, for some $\bar{\varrho}\in(0,1)$, the
constant
\begin{align*}
c_{\bar{\varrho}}\coloneqq
\inf_{t\geq 1}\inf_{x\in C_{t,\bar{\varrho}}}
Q_{t+1}(x,C_{t+1,\bar{\varrho}})
\end{align*}
is positive, and that there exists
$\varrho\in(0,c_{\bar{\varrho}})$ such that
$\inf_{t\geq 1}\inf_{x\in C_{t,\varrho}}
Q_{t+1}(x,C_{t+1,\bar{\varrho}})>0$. Then, for any
$\upsilon\in(0,1]$, there exist constants
$(\varepsilon_\upsilon,\kappa_\upsilon)\in(0,1)^2$
such that, for any $\mu\in\mathcal{P}_{W_1}(\setX)$,
\begin{align*}
\mu(C_{1,\bar{\varrho}})\geq\upsilon
\implies
\inf_{t\geq 1}\Phi_{1,t}(\mu)
(C_{t,\varepsilon_\upsilon})\geq\kappa_\upsilon.
\end{align*}
In addition, if $D_t$ is a sequence of
subsets of $\setX$ such that
$\inf_{t\geq 1}\inf_{x\in C_{t,\varepsilon_\upsilon}}
Q_{t+1}(x,D_{t+1})>0$ for some $\upsilon\in(0,1)$,
then there exists $\mu\in\mathcal{P}_{W_1}(\setX)$
such that 
\[
\inf_{t\geq 1}\Phi_{1,t}(\mu)(D_t)>0.
\]

\end{proposition}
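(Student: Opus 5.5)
The plan is to track the mass that the flow $\mu_t\coloneqq\Phi_{1,t}(\mu)$ puts on the super-level sets $C_{t,\bar{\varrho}}$. I will show that this mass obeys a scalar recursion $a_{t+1}\geq g(a_t)$ with $g$ non-decreasing and having an attracting positive floor. I then take $\varepsilon_\upsilon\coloneqq\bar{\varrho}$ and obtain $\kappa_\upsilon$ from this floor.

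Set
\begin{align*}
c'\coloneqq\inf_{t\geq1}\inf_{x\in C_{t,\varrho}}Q_{t+1}(x,C_{t+1,\bar{\varrho}})>0 .
\end{align*}
Since $C_{t,\bar{\varrho}}\subseteq C_{t,\varrho}$, we have $c'\leq c_{\bar{\varrho}}$. Write $a_t\coloneqq\mu_t(C_{t,\bar{\varrho}})$ and $s_t\coloneqq\mu_t(C_{t,\varrho})\geq a_t$. By \eqref{eq:def_Phi},
\begin{align*}
a_{t+1}=\frac{\mu_tQ_{t+1}(C_{t+1,\bar{\varrho}})}{\mu_t(G_t)} .
\end{align*}
For the numerator, integrate the lower bounds $c_{\bar{\varrho}}$ on $C_{t,\bar{\varrho}}$ and $c'$ on $C_{t,\varrho}\setminus C_{t,\bar{\varrho}}$. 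For the denominator, use $G_t\leq 1$ everywhere and $G_t<\varrho$ off $C_{t,\varrho}$. Together these give
\begin{align*}
a_{t+1}\geq\frac{c_{\bar{\varrho}}\,a_t+c'(s_t-a_t)}{s_t+\varrho}.
\end{align*}
For fixed $a_t$ the right-hand side is a linear-fractional function of $s_t\in[a_t,1]$, hence monotone. Its minimum is therefore attained at an endpoint. At $s_t=a_t$ it equals $c_{\bar{\varrho}}a_t/(a_t+\varrho)$. At $s_t=1$ it is at least $c'/(1+\varrho)$, using $c'\leq c_{\bar{\varrho}}$. Thus
\begin{align*}
a_{t+1}\geq g(a_t),\qquad g(a)\coloneqq\min\Big\{\frac{c_{\bar{\varrho}}\,a}{a+\varrho},\ \frac{c'}{1+\varrho}\Big\},
\end{align*}
and $g$ is non-decreasing.

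Now set $\kappa_\upsilon\coloneqq\min\{\upsilon,\ c_{\bar{\varrho}}-\varrho,\ c'/(1+\varrho)\}$, which is positive because $\varrho<c_{\bar{\varrho}}$. Since $\kappa_\upsilon\leq c_{\bar{\varrho}}-\varrho$, we have $c_{\bar{\varrho}}\kappa_\upsilon/(\kappa_\upsilon+\varrho)\geq\kappa_\upsilon$, and hence $g(\kappa_\upsilon)\geq\kappa_\upsilon$. Because $a_1=\mu(C_{1,\bar{\varrho}})\geq\upsilon\geq\kappa_\upsilon$ and $g$ is monotone, induction gives $a_t\geq\kappa_\upsilon$ for all $t$. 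This is the first claim with $\varepsilon_\upsilon=\bar{\varrho}$; any smaller $\varepsilon$ also works, since $C_{t,\bar{\varrho}}\subseteq C_{t,\varepsilon}$. Shrinking $\kappa_\upsilon$ if needed keeps both constants in $(0,1)$.

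For the second claim, let $q_D\coloneqq\inf_t\inf_{x\in C_{t,\varepsilon_\upsilon}}Q_{t+1}(x,D_{t+1})>0$. Take any $\mu\in\mathcal{P}_{W_1}(\setX)$ with $\mu(C_{1,\bar{\varrho}})\geq\upsilon$. Using $\mu_t(G_t)\leq 1$ and the first part, for $t\geq1$,
\begin{align*}
\Phi_{1,t+1}(\mu)(D_{t+1})\geq\mu_tQ_{t+1}(D_{t+1})\geq q_D\,\mu_t(C_{t,\varepsilon_\upsilon})\geq q_D\kappa_\upsilon .
\end{align*}
The case $t=1$ is handled by the choice of $\mu$. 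Take $\mu=\tfrac12\delta_x+\tfrac12\delta_y$ with $x\in C_{1,\bar{\varrho}}$ and $y\in D_1\subseteq W_1$. Then $\mu(D_1)\geq\tfrac12$, $\mu\in\mathcal{P}_{W_1}(\setX)$, and $\mu(C_{1,\bar{\varrho}})\geq\tfrac12$. This argument requires $\upsilon\leq\tfrac12$, so I run the first part with $\upsilon=\tfrac12$ and then use $\varepsilon_{1/2}$.

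The main delicate point is this last construction. It needs $C_{1,\bar{\varrho}}$ and $D_1$ to be non-empty, which is implicit in the hypotheses being meaningful (otherwise the infimum defining $c_{\bar{\varrho}}$ is vacuous). It also needs the constants $\varepsilon_\upsilon$ to be compatible with the given $\upsilon$. If the statement's $\upsilon$ is fixed and larger than $\tfrac12$, I would instead use weights $(\upsilon,1-\upsilon)$ in the mixture and combine the bound $\mu(D_1)\geq1-\upsilon>0$ at $t=1$ with the lower bound $q_D\kappa_\upsilon$ for $t\geq2$.
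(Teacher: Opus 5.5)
Your proof is correct, and it takes a genuinely different route from the paper.

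\emph{The paper's route.} It never tracks $\mu_t(C_{t,\bar\varrho})$ directly. It uses \eqref{eq:identity0} to write $\prod_{i\le t}\Phi_{1,i}(\mu)(G_i)=\mu Q_{1,t+1}(1)\ge c_{\bar\varrho}^{\,t}\upsilon$ and deduces $\limsup_t\Phi_{1,t}(\mu)(G_t)\ge c_{\bar\varrho}$. It then introduces the times $t_k$ at which $\Phi_{1,t_k}(\mu)(G_{t_k})\ge\epsilon c_{\bar\varrho}$ (with $\varrho<\epsilon c_{\bar\varrho}$), and bounds the gaps $t_{k+1}-t_k$ by comparing products over blocks. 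This gives a uniform lower bound $\zeta$ on $\Phi_{1,t}(\mu)(G_t)$. Only at the end is this converted into mass on a level set via \eqref{eq:implication}, with $\varepsilon_\upsilon=\zeta/2$.

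\emph{Your route.} Your one-step invariant-interval argument for $a_t$ (a non-decreasing $g$ with $g(\kappa_\upsilon)\ge\kappa_\upsilon$) replaces all of this. It is much shorter. Its constants are explicit: $\varepsilon_\upsilon=\bar\varrho$ for every $\upsilon$, $\kappa_\upsilon=\min\{\upsilon,c_{\bar\varrho}-\varrho,c'/(1+\varrho)\}$, and $\Phi_{1,t}(\mu)(G_t)\ge\bar\varrho\,\kappa_\upsilon$ comes as a by-product. The hypothesis of the second part becomes a condition on $C_{t,\bar\varrho}$ that does not depend on $\upsilon$. The argument uses only lower bounds on the two infima together with $\varrho<c_{\bar\varrho}$, so it also fits the constant-tracking of Remark~\ref{rem:constant}. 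What the paper's route buys is mainly uniformity of method: it works with normalizing constants throughout, as in Lemma~\ref{lemma:ratio}, and never needs to compare the two level sets.

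\emph{Two small points to tighten.}

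First, the inclusion $C_{t,\bar\varrho}\subseteq C_{t,\varrho}$ does not follow from an ordering of $\varrho$ and $\bar\varrho$. No such ordering is assumed, and $c_{\bar\varrho}$ may exceed $\bar\varrho$. Both your decomposition of the numerator and the bound $c'\le c_{\bar\varrho}$ rest on this inclusion. It does hold, because for every $x\in C_{t,\bar\varrho}$,
\[
G_t(x)=Q_{t+1}(x,\setX)\ge Q_{t+1}(x,C_{t+1,\bar\varrho})\ge c_{\bar\varrho}>\varrho,
\]
and you should say so.

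Second, your worry about $\upsilon>1/2$ in the second part is unnecessary. Since your $\varepsilon_\upsilon$ does not depend on $\upsilon$, the mixture with weights $(\upsilon,1-\upsilon)$ works for any $\upsilon\in(0,1)$. This is exactly the paper's observation that such a $\mu$ exists because $\upsilon<1$. Membership in $\mathcal{P}_{W_1}(\setX)$ relies on the standing convention $D_1\subseteq W_1$ (or an additional atom in $W_1$); the paper also leaves this implicit.
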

\begin{proof}
See Appendix~\ref{p-prop:A1p-A2}.
\end{proof}

In the setting \eqref{eq:scenarioB}, the functions
$f_t=c\,\ind_{D_t}$ must belong to
$\mathcal{F}_2(\setX)$, whence:
\begin{enumerate}[label=\textbf{(A{{\arabic*}})}]
\setcounter{enumi}{2}
\item\label{A3} Condition~\ref{A2} holds for some
measurable sets $D_t$ such that
$\sup_{t\geq 1}\mathcal{V}_\setX(\ind_{D_t})<\infty$.
\end{enumerate}

The Lipschitz property \eqref{eq:lipschitz_gen} in
the setting \eqref{eq:scenarioB} uses the following assumption:
\begin{enumerate}[label=\textbf{(B{{\arabic*}})}]
\item\label{B1} $\setX=\R^d\cap[\underline{x},\bar{x}]$
for some $d\geq 1$ and some hyper-rectangle
$[\underline{x},\bar{x}]\subseteq\bar{\R}^d$ (e.g.\
$\setX=\R^d$). In addition, for any $t\geq 1$ and
$x\in\setX$,
\begin{align*}
M_{t+1}(x,\dd x')=m_{t+1}(x,x')\,
\varsigma_{t+1}(\dd x')
\end{align*}
for some measurable function
$m_{t+1}\colon\setX^2\to[0,\infty)$ and some
$\sigma$-finite measure $\varsigma_{t+1}$ on $\setX$. Lastly, for any $t\geq 1$,  letting $\bar{\varsigma}_{t+1}$ denote the completion of $\varsigma_{t+1}$ and
$q_{t+1}(x,x')\coloneqq G_t(x)\,m_{t+1}(x,x')$, we have
\begin{align*}
\sup_{t\geq 1}\int_\setX
\mathcal{V}_\setX\big(q_{t+1}(\cdot,x')\big)
\,\bar{\varsigma}_{t+1}(\dd x')<\infty.
\end{align*}
\end{enumerate}
By part~\ref{HK_7} of Proposition~\ref{prop:V_HK}, for any $t\geq 1$
the function
$x'\mapsto\mathcal{V}_\setX\big(q_{t+1}(\cdot,x')\big)$
is universally measurable, and thus,  
the integral in \ref{B1}  is well defined.

\subsection{Stability properties of the Feynman--Kac
semigroup\label{sub:stability}}

We derive the stability properties required in
\ref{Astar}, starting with the  exponential forgetting condition \eqref{eq:forget_past}.

\begin{lemma}\label{lemma:forgetting}
Assume that \ref{A1} holds for some sets
$D_t$. Then, tere exist constants
$\delta\in(0,1)$ and $c\in(0,\infty)$ such that, for
any $t\geq s\geq 1$ and
$\mu_1,\mu_2\in\mathcal{P}_{W_s}(\setX)$,
\begin{align*}
\|\Phi_{s,t}(\mu_1)-\Phi_{s,t}(\mu_2)\|_{\mathrm{tv}}
\leq\frac{c\,e^{-\delta(t-s)}}
{\mu_1(D_s)\,\mu_2(D_s)}.
\end{align*}
\end{lemma}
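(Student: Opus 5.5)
The plan is to work directly with the unnormalized semigroup $\Phi_{s,t}(\mu)=\mu Q_{s,t}/\mu Q_{s,t}(1)$ and compare numerators against a lower bound on the normalizers. I will write $n\coloneqq t-s$ and fix $\mu\in\mathcal{P}_{W_s}(\setX)$ throughout.

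\emph{Lower bound on the normalizer.} Since $Q_{t+1}(x,\cdot)=G_t(x)M_{t+1}(x,\cdot)$, condition \ref{A1a} gives $G_t(x)\geq Q_{t+1}(x,D_{t+1})\geq\underline{q}$ for $x\in D_t$. Hence $D_t\subseteq C_{t,\underline{q}}\subseteq C_{t,\varrho}$. Iterating \ref{A1a} along $D_s\to D_{s+1}\to\cdots$ gives
\[
\mu Q_{s,t}(1)\geq\mu Q_{s,t}(\ind_{D_t})\geq\underline{q}^{\,n}\mu(D_s).
\]
This is the only place where $\mu_i(D_s)$ enters, and it produces the denominator $\mu_1(D_s)\mu_2(D_s)$ once the two normalized measures are compared.

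\emph{Path decomposition.} I split each kernel as $Q_{r+1}=Q^{\mathrm{g}}_{r+1}+Q^{\mathrm{b}}_{r+1}$, where
\[
Q^{\mathrm{g}}_{r+1}(x,\dd x')\coloneqq\ind_{C_{r,\varrho}}(x)\,Q_{r+1}(x,\dd x')\,\ind_{C_{r+1,\varrho}}(x').
\]
Expanding $Q_{s,t}$ gives a sum over $2^n$ words in $\{\mathrm{g},\mathrm{b}\}$. A bad step either starts outside $C_{r,\varrho}$, where $G_r<\varrho$ so the mass is multiplied by less than $\varrho$, or lands outside $C_{r+1,\varrho}$, in which case the next step has mass less than $\varrho$. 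Thus every two bad steps cost at least one factor $\varrho$, up to a boundary step, and a word with $k$ bad steps has total mass at most about $\varrho^{k/2}$. This is where I expect $\varrho<\underline{q}^2$ to be used, since then $\varrho^{1/2}<\underline{q}$.

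\emph{Bad-heavy and good-heavy words.} I choose $\alpha\in(0,1)$ small enough that $2^{h(\alpha)}\varrho^{(1-\alpha)/2}<\underline{q}$, where $h$ is the binary entropy bounding the number of words with at most $\alpha n$ good steps. The total mass of such bad-heavy words, divided by $\underline{q}^{\,n}\mu(D_s)$, is then at most $c\,e^{-\delta n}/\mu(D_s)$.

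In the remaining good-heavy words there are at least $\alpha n$ good steps, hence a number linear in $n$ of good steps (or short good blocks). On each good step, \ref{A1b} gives the two-sided comparison of $Q^{\mathrm{g}}_{r+1}(x,\cdot)$ with $\nu_{r+1}(\cdot\cap C_{r+1,\varrho})$ uniformly in $x\in C_{r,\varrho}$, with constant $\xi$. I then use a Doeblin/regeneration argument: the minorization by $\xi^{-1}\nu_{r+1}$ against the total mass bounded by $\xi\nu_{r+1}(\setX)$ lets a fraction $\xi^{-2}$ of the mass regenerate into a common, starting-point-free law at each good step. Equivalently, I can bound the Birkhoff contraction coefficient of $Q^{\mathrm{g}}$ in the Hilbert projective metric by $(\xi^2-1)/(\xi^2+1)<1$. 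Consequently, the parts of $\mu_1Q_{s,t}$ and $\mu_2Q_{s,t}$ coming from the same good-heavy word are proportional up to a relative error $(1-\xi^{-2})^{\alpha n}$.

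\emph{Combining.} I would write
\[
\Phi_{s,t}(\mu_1)-\Phi_{s,t}(\mu_2)=\frac{\mu_1Q_{s,t}}{\mu_1Q_{s,t}(1)}-\frac{\mu_2Q_{s,t}}{\mu_2Q_{s,t}(1)}
\]
and split each numerator into its bad-heavy part and its good-heavy regenerated part. For the bad-heavy parts I use the normalizer bound. For the good-heavy parts I compare the two measures via the common regeneration law, so that they cancel up to the $(1-\xi^{-2})^{\alpha n}$ factor. Collecting terms gives the bound $c\,e^{-\delta n}/(\mu_1(D_s)\mu_2(D_s))$. For small $n$ the bound is trivial after enlarging $c$, since the total variation distance is at most $1\leq 1/(\mu_1(D_s)\mu_2(D_s))$.

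\emph{Main obstacle.} I expect the hardest step to be the rigorous bookkeeping of the regeneration across words. Regeneration events occur at random good times and must be made consistent between $\mu_1$ and $\mu_2$ so that the regenerated parts cancel after normalization. I would first try to define the regenerated measure word by word, using the same coin-flip structure for both initial laws, and absorb all resulting combinatorial factors into the choice of $\alpha$ and $\delta$.
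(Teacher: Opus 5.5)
Your bad-heavy estimate and the normalizer bound $\mu Q_{s,t}(1)\geq\underline{q}^{\,n}\mu(D_s)$ are fine. The ``Combining'' step, however, does not follow from what precedes it, and this step is the heart of the lemma. Per-word contraction (Doeblin or Birkhoff) only gives, for each good-heavy word $w$, $\mu_1Q^w\approx c_w\,\mu_2Q^w$ with $c_w=\mu_1Q^w(1)/\mu_2Q^w(1)$. The constant $c_w$ genuinely depends on $w$: two words differing only in their first letter have masses governed by how $\mu_i$ charges $C_{s,\varrho}$ versus its complement, and $c_w$ may even be $0$ or $\infty$. So $\sum_w\mu_1Q^w$ and $\sum_w\mu_2Q^w$ are not proportional. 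After normalization you are comparing two mixtures of (nearly common) word-laws with different weights $\mu_iQ^w(1)/\mu_iQ_{s,t}(1)$. Controlling that discrepancy is precisely the forgetting property, so it cannot be absorbed into the choice of $\alpha$ and $\delta$. Regenerating each $\mu_i$ on its own has the same defect: the regenerated law is common to the two copies only if both regenerate at the same time, and the coefficient in front of it depends on $\mu_i$. Moreover, having at least $\alpha n$ good steps in each copy, with $\alpha$ small, does not give even one simultaneous good step: the two sets of good times can be disjoint when $\alpha<1/2$. (Incidentally, the total mass of $Q^{\mathrm{g}}_{r+1}(x,\cdot)$ is bounded by $\xi\,\nu_{r+1}(C_{r+1,\varrho})$, not by $\xi\,\nu_{r+1}(\setX)$, which may be infinite.)

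The missing idea, which the paper uses, is to couple the two expansions through the antisymmetric bilinear form
\[
\Theta_{s,t}(f,h)\coloneqq\mu_1Q_{s,t}(f)\,\mu_2Q_{s,t}(h)-\mu_1Q_{s,t}(h)\,\mu_2Q_{s,t}(f)=(\mu_1\otimes\mu_2)\bar{Q}_{s,t}(f\otimes h-h\otimes f),
\]
with $\bar{Q}_{r+1}\coloneqq Q_{r+1}\otimes Q_{r+1}$. Then $\Phi_{s,t}(\mu_1)(g)-\Phi_{s,t}(\mu_2)(g)=\Theta_{s,t}(g,\ind_{\setX})/(\mu_1Q_{s,t}(1)\,\mu_2Q_{s,t}(1))$. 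On $C_{r,\varrho}^2$, $\bar{Q}_{r+1}$ dominates the swap-invariant measure $\xi^{-2}\nu_{r+1}^{\otimes 2}(\cdot\cap C_{r+1,\varrho}^2)$, which annihilates antisymmetric functions, and $\bar{Q}_{r+1}$ commutes with the swap. Hence this measure can be subtracted at every step without changing $\Theta_{s,t}$, which leaves a kernel bounded by $(1-\xi^{-4})\bar{Q}_{r+1}$ on jointly good transitions. Pair paths with fewer than $\lfloor\beta n\rfloor$ jointly good transitions have the pair outside $C_{r,\varrho}^2$ at least $(n-\lfloor\beta n\rfloor)/2$ times, and each such time costs $G_r(x_1)G_r(x_2)\leq\varrho$. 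This is where your counting enters; no entropy factor is needed, since the bound holds pathwise under the $M$-chains. Be aware, however, that this bad term is divided by the product of the two normalizers, $\mu_1Q_{s,t}(1)\,\mu_2Q_{s,t}(1)\geq\underline{q}^{\,2n}\mu_1(D_s)\mu_2(D_s)$. This route therefore needs $\varrho^{(1-\beta)/2}<\underline{q}^{2}$, i.e.\ essentially $\varrho<\underline{q}^{4}$, rather than the single-copy condition $\varrho^{1/2}<\underline{q}$ you anticipated. The last inequality of \eqref{eq:main} in the paper uses $\underline{q}^{\,t-s}$ where only $\underline{q}^{\,2(t-s)}$ is justified. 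This is harmless in the examples, where \ref{A1b} holds at every sufficiently small level $\varrho$, but your plan faces the same constraint once the two copies are required to be good simultaneously.
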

\begin{proof}
The proof follows Lemma~13 and Proposition~5 in
\citet{MLE_SSM}; see Appendix~\ref{p-lemma:forgetting}.
\end{proof}

Under \ref{A2}, the mass of $D_t$ under
$\Phi_{1,t}(\mu)$ does not vanish, which yields the non-vanishing property
\eqref{eq:lower_gen}.

\begin{lemma}\label{lemma:lower_phi}
Assume that \ref{A2} holds for some sets
$D_t$. Then, for any
$\mu\in\mathcal{P}_{W_1}(\setX)$,
\begin{align*}
\mu(D_1)>0\implies
\inf_{t\geq 1}\Phi_{1,t}(\mu)(D_t)>0\implies
\inf_{t\geq 1}\Phi_{1,t}(\mu)(G_t)>0.
\end{align*}
\end{lemma}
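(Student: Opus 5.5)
We prove the two implications separately. For the first one, fix $\mu\in\mathcal{P}_{W_1}(\setX)$ with $\mu(D_1)>0$, and let $\mu^*$ be the measure provided by \ref{A2}, so that $\kappa\coloneqq\inf_{t\geq1}\Phi_{1,t}(\mu^*)(D_t)>0$. The plan is to transfer this lower bound from $\mu^*$ to $\mu$ through the forgetting estimate of Lemma~\ref{lemma:forgetting}, which only needs $\mu(D_1)>0$ and $\mu^*(D_1)\geq\kappa>0$ (take $t=1$ in the definition of $\kappa$). Lemma~\ref{lemma:forgetting} with $s=1$ gives
\[
\big|\Phi_{1,t}(\mu)(D_t)-\Phi_{1,t}(\mu^*)(D_t)\big|\leq\frac{c\,e^{-\delta(t-1)}}{\mu(D_1)\,\mu^*(D_1)}.
\]
Hence there is $t_0$, depending on $\mu$, with $\Phi_{1,t}(\mu)(D_t)\geq\kappa/2$ for all $t\geq t_0$. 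For the finitely many $t<t_0$, positivity follows by induction from \ref{A1a}: if $\nu_t\coloneqq\Phi_{1,t}(\mu)$ satisfies $\nu_t(D_t)>0$, then $\nu_tQ_{t+1}(D_{t+1})\geq\underline{q}\,\nu_t(D_t)>0$. Since $\nu_{t+1}(D_{t+1})=\nu_tQ_{t+1}(D_{t+1})/\nu_tQ_{t+1}(1)$ and $\nu_tQ_{t+1}(1)\leq1$, we get $\nu_{t+1}(D_{t+1})>0$. A minimum of finitely many positive numbers together with $\kappa/2$ then gives $\inf_t\Phi_{1,t}(\mu)(D_t)>0$.

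For the second implication, we have to bound $\nu(G_t)$ from below by a multiple of $\nu(D_t)$, uniformly in $t$. We use \ref{A1a} again: for $x\in D_t$, $G_t(x)\geq G_t(x)M_{t+1}(x,D_{t+1})=Q_{t+1}(x,D_{t+1})\geq\underline{q}$, since $M_{t+1}$ is Markov. Therefore $G_t\geq\underline{q}\,\ind_{D_t}$, which yields
\[
\Phi_{1,t}(\mu)(G_t)\geq\underline{q}\,\Phi_{1,t}(\mu)(D_t).
\]
Taking the infimum over $t$ concludes. Note that this step uses only \ref{A1a} and the first implication.

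The main point to watch is the first implication: Lemma~\ref{lemma:forgetting} is stated for measures in $\mathcal{P}_{W_1}(\setX)$, and both $\mu$ and $\mu^*$ belong to this set. The constants $c,\delta$ do not depend on the measures, so the choice of $t_0$ is legitimate even though it depends on $\mu(D_1)$. The resulting infimum is therefore positive but not uniform in $\mu$, which is all the statement requires.
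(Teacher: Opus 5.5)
Your proof is correct and follows the paper's argument. For large $t$ you compare with the reference measure $\mu^*$ of \ref{A2} via Lemma~\ref{lemma:forgetting}, for the finitely many early times you use the one-step bound from \ref{A1a}, and for the second implication you use $G_t\geq\underline{q}\,\ind_{D_t}$. The only difference is that the paper iterates your induction step into the explicit bound $\Phi_{s,t}(\mu)(D_t)\geq\mu(D_s)\,\underline{q}^{\,t-s}$ and states the result for a general starting time $s$, a quantitative form it reuses in Lemma~\ref{lemma:ratio}.
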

\begin{proof}
See Appendix~\ref{p-lemma:lower_phi}.
\end{proof}

Using Lemmas~\ref{lemma:forgetting} and~\ref{lemma:lower_phi}, we can
bound the ratio of the normalizing constants of $\Phi_{s,t}(\mu_1)$
and $\Phi_{s,t}(\mu_2)$.

\begin{lemma}\label{lemma:ratio}
Assume that \ref{A2} holds for some sets
$D_t$. Then, there exists a non-increasing
function $\varphi\colon[0,1]\to[1,\infty]$ such that
$\varphi(z)=\infty\Leftrightarrow z=0$ and, for any
$t\geq s\geq 1$, $\mu_1\in\mathcal{P}(\setX)$ and
$\mu_2\in\mathcal{P}_{W_s}(\setX)$,
\begin{align*}
\frac{\mu_1Q_{s,t}(1)}{\mu_2Q_{s,t}(1)}
\leq\varphi\big(\mu_2(D_s)\big).
\end{align*}
\end{lemma}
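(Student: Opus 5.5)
The plan is to reduce the ratio to a statement about the function $h_{s,t}\coloneqq Q_{s,t}(1)$ alone, and then to show, by a uniform Harnack-type bound, that its supremum over $\setX$ is comparable to its infimum over $D_s$. The case $t=s$ is trivial, with ratio $1$, so assume $t>s$. Since $\mu_1$ is a probability measure,
\[
\mu_1Q_{s,t}(1)\leq a_s\coloneqq\sup_{x\in\setX}h_{s,t}(x),
\]
and, restricting the integral to $D_s$,
\[
\mu_2Q_{s,t}(1)\geq\mu_2(D_s)\,b_s,\qquad b_s\coloneqq\inf_{x\in D_s}h_{s,t}(x).
\]
It therefore suffices to prove that there is a constant $K<\infty$, independent of $s$ and $t$, with $a_s\leq K\,b_s$. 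We then set $\varphi(z)\coloneqq\max\{1,K/z\}$ for $z>0$ and $\varphi(0)\coloneqq\infty$. This function is non-increasing, equals $\infty$ only at $0$, and is consistent with the requirement that $\mu_2\in\mathcal{P}_{W_s}(\setX)$ (when $\mu_2(D_s)=0$ the bound is trivially true).

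The comparison relies on three facts.

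\emph{First, $D_u\subseteq C_{u,\underline{q}}\subseteq C_{u,\varrho}$.} For $x\in D_u$, \ref{A1a} gives $G_u(x)\geq Q_{u+1}(x,D_{u+1})\geq\underline{q}$, and $\varrho<\underline{q}^2<\underline{q}$.

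\emph{Second, an off-level-set bound.} For $x\notin C_{u,\varrho}$ we have $h_{u,t}(x)=G_u(x)\,M_{u+1}(h_{u+1,t})(x)\leq\varrho\,a_{u+1}$.

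\emph{Third, a local Harnack bound on the level sets.} For $x\in C_{u,\varrho}$, \ref{A1b} gives
\[
\xi^{-1}c_u\leq Q_{u+1}\big(\ind_{C_{u+1,\varrho}}h_{u+1,t}\big)(x)\leq\xi\,c_u,
\qquad c_u\coloneqq\nu_{u+1}\big(\ind_{C_{u+1,\varrho}}h_{u+1,t}\big).
\]
In addition, the part of $h_{u,t}(x)$ carried by the complement $C_{u+1,\varrho}^c$ is at most $\sup_{y\notin C_{u+1,\varrho}}h_{u+1,t}(y)\leq\varrho\,a_{u+2}$.

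To compare $a_s$ with $b_s$, I would iterate the off-level-set bound along the first time $s+k$ at which the path enters $C_{s+k,\varrho}$. Each step spent outside the super-level sets costs a factor $\varrho$. Once the path is inside, the local Harnack bound replaces the value by $\xi c_{s+k}$, up to the leaking term, which again costs a factor $\varrho$. This should yield an estimate of the form
\[
a_s\lesssim\sum_{k\geq0}\varrho^{k}\,\xi\,c_{s+k}+\varrho^{\,t-s}.
\]
In the other direction, starting from $x\in D_s$ and staying in $D$ for $k$ steps via \ref{A1a}, and then using the lower Harnack bound (valid because $D\subseteq C_{\cdot,\varrho}$), I would show $b_s\gtrsim\underline{q}^{k}\,\xi^{-1}c_{s+k}$ for every $k$. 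Combining the two, each term $\varrho^k c_{s+k}$ is bounded by $\xi(\varrho/\underline{q})^{k}\,b_s$, which is summable since $\varrho<\underline{q}$. The stronger hypothesis $\varrho<\underline{q}^2$ is available if the two-step structure of the Harnack bound (one step to enter the level sets, one step to compare through $\nu$) forces a factor $\underline{q}^{2k}$ instead. The residual term $\varrho^{t-s}$ is handled in the same way, using $b_s\geq\underline{q}^{\,t-s}$.

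The main obstacle I expect is the bookkeeping in this two-sided iteration. The quantities $c_{s+k}$ at different times have to be linked consistently, and the leaking part through $C^c_{u+1,\varrho}$ inside the Harnack step must be absorbed without circularity, since it reintroduces $a_{u+2}$. My first attempt would be a backward induction in $u$ from $t$ on the single inequality $a_u\leq K b_u$, using \ref{A1a} and \ref{A1b} at each step and choosing $K$ large enough, in terms of $\xi$, $\underline{q}$ and $\varrho$, that the geometric factor $\varrho/\underline{q}^2<1$ closes the induction. If this does not close, I would instead follow the decomposition used in the proof of Lemma~\ref{lemma:forgetting}, namely the approach of \citet{MLE_SSM}.
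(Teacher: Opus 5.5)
Your proposal is correct if you run the backward induction you offer as your ``first attempt''. That induction does close, and it gives a genuinely different, more elementary route than the paper's. Set $h_u\coloneqq Q_{u,t}(1)$, $a_u\coloneqq\sup_{\setX}h_u$ and $b_u\coloneqq\inf_{D_u}h_u$. Condition \ref{A1a} gives $b_u\geq\underline{q}\,b_{u+1}$. For $s\leq u\leq t-2$, your three facts give
\[
a_u\leq\max\big\{\xi^2\,b_u+\varrho\,a_{u+2},\ \varrho\,a_{u+1}\big\},
\]
where the factor $\xi^2$ comes from comparing $x\in C_{u,\varrho}$ with $x'\in D_u\subseteq C_{u,\varrho}$ through $\nu_{u+1}$. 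The base cases are $a_t=b_t=1$ and $a_{t-1}\leq 1\leq b_{t-1}/\underline{q}$. Assume $a_v\leq K b_v$ for $u<v\leq t$. Then $a_u\leq(\xi^2+\varrho K\underline{q}^{-2})\,b_u\leq K\,b_u$ as soon as $K\geq\max\{\underline{q}^{-1},\,\xi^2/(1-\varrho\,\underline{q}^{-2})\}$.

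Your forward heuristic, by contrast, is off by a square. The leaking term jumps from time $u$ to time $u+2$ at the cost of a single factor $\varrho$. After unrolling, the weight in front of $c_{s+k}$ therefore only decays like $\varrho^{\lceil k/2\rceil}$, not $\varrho^{k}$, so summability under $\varrho<\underline{q}$ fails. This leak, not a two-step Harnack effect, is why $\varrho<\underline{q}^2$ is the condition that is needed.

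The paper argues differently. It replaces $\mu_1$ by $\tilde{\mu}_1\coloneqq(\mu_1+\delta_{x_s})/2$ with $x_s\in D_s$, and writes $\tilde{\mu}_1Q_{s,t}(1)/\mu_2Q_{s,t}(1)$ as $\prod_{s\leq i<t}\Phi_{s,i}(\tilde{\mu}_1)(G_i)/\Phi_{s,i}(\mu_2)(G_i)$. It bounds each factor by $1+2c\,e^{-\delta(i-s)}/\big(\mu_2(D_s)\,\Phi_{s,i}(\mu_2)(G_i)\big)$ using Lemma~\ref{lemma:forgetting}. It then bounds the denominators from below with Lemma~\ref{lemma:lower_phi}, which is where \ref{A2} is used.

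Given those two lemmas, the paper's proof is a few lines and reuses existing machinery. However, the resulting $\varphi$ has the form $2\exp\big(C/\kappa(z)\big)$ with $\kappa(z)$ polynomially small in $z$, so it blows up very fast as $z\downarrow0$.

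Your argument is self-contained and needs only \ref{A1}, not the non-vanishing part of \ref{A2}. It gives the explicit $\varphi(z)=K/z$, with $K$ depending only on $(\underline{q},\xi,\varrho)$. The uniform Harnack inequality behind it, $\sup_{\setX}Q_{s,t}(1)\leq K\inf_{D_s}Q_{s,t}(1)$, would also make the Lipschitz constants of Lemmas~\ref{lemma:Lipschitz_tv} and~\ref{lemma:Lipschitz} polynomial in $1/\mu_2(D_s)$.
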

\begin{proof}
See Appendix~\ref{p-lemma:ratio}.
\end{proof}

The next result, Lemma~\ref{lemma:ratio}, yields the Lipschitz condition \eqref{eq:lipschitz_gen}
in the setting \eqref{eq:scenarioA}:

\begin{lemma}\label{lemma:Lipschitz_tv}
Assume that \ref{A2} holds for some sets
$D_t$. Then, there exists a non-increasing
function $\varphi\colon[0,1]\to(0,\infty]$ such that
$\varphi(z)=\infty\Leftrightarrow z=0$ and, for any
$t\geq s\geq 1$ and
$\mu_1,\mu_2\in\mathcal{P}_{W_s}(\setX)$,
\begin{align*}
\|\Phi_{s,t}(\mu_1)-\Phi_{s,t}(\mu_2)\|_{\mathrm{tv}}
\leq\varphi\big(\mu_2(D_s)\big)\,
\|\mu_1-\mu_2\|_{\mathrm{tv}}.
\end{align*}
\end{lemma}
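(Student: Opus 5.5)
The plan is to write the difference of the normalized flows through the unnormalized semigroup and then invoke Lemma~\ref{lemma:ratio}. For $\mu_1,\mu_2\in\mathcal{P}_{W_s}(\setX)$ and $f\in\mathcal{B}_1(\setX)$, set $Z_i\coloneqq\mu_iQ_{s,t}(1)$. We have the standard identity
\begin{align*}
\Phi_{s,t}(\mu_1)(f)-\Phi_{s,t}(\mu_2)(f)
=\frac{(\mu_1-\mu_2)Q_{s,t}\big(f-\Phi_{s,t}(\mu_1)(f)\big)}{Z_2}.
\end{align*}
It is checked by expanding: the numerator equals $Z_1\Phi_{s,t}(\mu_1)(f)-Z_2\Phi_{s,t}(\mu_2)(f)-(Z_1-Z_2)\Phi_{s,t}(\mu_1)(f)$.

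Write $g\coloneqq f-\Phi_{s,t}(\mu_1)(f)$, so that $|g|\leq 1$. Decompose the signed measure $\mu_1-\mu_2=\nu^+-\nu^-$ via Jordan. The numerator is then bounded in absolute value by $|\mu_1-\mu_2|Q_{s,t}(1)$, because $Q_{s,t}$ is a positive operator and $|Q_{s,t}(g)|\leq Q_{s,t}(1)$. Next write $|\mu_1-\mu_2|=\|\,|\mu_1-\mu_2|\,\|\,\bar\mu$, where $\bar\mu$ is the normalized probability measure, whenever $\mu_1\neq\mu_2$; the case $\mu_1=\mu_2$ is trivial. The total mass satisfies $|\mu_1-\mu_2|(\setX)\leq 2\|\mu_1-\mu_2\|_{\mathrm{tv}}$. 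Combining these gives
\begin{align*}
|\Phi_{s,t}(\mu_1)(f)-\Phi_{s,t}(\mu_2)(f)|\leq 2\|\mu_1-\mu_2\|_{\mathrm{tv}}\,\frac{\bar\mu Q_{s,t}(1)}{\mu_2Q_{s,t}(1)}.
\end{align*}

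This is where Lemma~\ref{lemma:ratio} is essential. It was stated for an arbitrary $\mu_1\in\mathcal{P}(\setX)$ precisely so that it can be applied to $\bar\mu$, which need not belong to $\mathcal{P}_{W_s}(\setX)$. It yields the bound $\varphi(\mu_2(D_s))$ for the ratio. Taking the supremum over $f$, the claim follows with the function $2\varphi$. That function is non-increasing, takes values in $(0,\infty]$, and is infinite exactly at $0$.

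The only delicate points are bookkeeping. First, $Z_2>0$ must hold; this follows from \ref{A0}, since $\mu_2\in\mathcal{P}_{W_s}(\setX)$. Second, if $\mu_2(D_s)=0$ the bound is trivially infinite, so nothing needs to be shown in that case. Third, the case $t=s$ must be handled separately. There $\Phi_{s,s}$ is the identity, and the bound holds because $\varphi\geq1$.
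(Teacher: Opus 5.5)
Your proof is correct and follows essentially the paper's route: both write the difference of the normalized flows through $Q_{s,t}$ (your centered identity is the exact form of the paper's two-term bound) and control the resulting ratio with Lemma~\ref{lemma:ratio}. The only difference is in how that ratio is reached. The paper first bounds $\sup_x Q_{s,t}(1)(x)\leq\varphi(1)\,\delta_{x_s}Q_{s,t}(1)$ for a reference point $x_s\in D_s$ and then applies the lemma to $\delta_{x_s}$, obtaining the constant $2\varphi(1)\varphi(\mu_2(D_s))$; you apply the lemma once to the normalized variation measure $\bar\mu$, which drops the factor $\varphi(1)$.
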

\begin{proof}
See Appendix~\ref{p-lemma:Lipschitz_tv}.
\end{proof}

Under \ref{B1}, the total variation norm in the bound of Lemma~\ref{lemma:ratio}  can be replaced by the norm
$\|\cdot\|_{\mathcal{V}_\setX}$, which yields the Lipschitz condition
\eqref{eq:lipschitz_gen} in the setting
\eqref{eq:scenarioB}:

\begin{lemma}\label{lemma:Lipschitz}
Assume that \ref{A2} and \ref{B1} hold for some sets
$D_t$. Then, there exists a
non-increasing function
$\varphi\colon[0,1]\to(0,\infty]$ such that
$\varphi(z)=\infty\Leftrightarrow z=0$ and, for any
$t>s\geq 1$ and
$\mu_1,\mu_2\in\mathcal{P}_{W_s}(\setX)$,
\begin{align*}
\|\Phi_{s,t}(\mu_1)-\Phi_{s,t}(\mu_2)\|_{\mathrm{tv}}
\leq\varphi\big(\mu_2(D_s)\big)\,
\|\mu_1-\mu_2\|_{\mathcal{V}_\setX}.
\end{align*}
\end{lemma}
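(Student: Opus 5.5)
We reduce to the first step of the semigroup and then propagate in total variation with Lemma~\ref{lemma:Lipschitz_tv}. Fix $t>s\geq 1$ and $\mu_1,\mu_2\in\mathcal{P}_{W_s}(\setX)$, and write $\Phi_{s,t}=\Phi_{s+1,t}\circ\Phi_{s+1}$. Since $\Phi_{s+1}(\mu_i)\in\mathcal{P}_{W_{s+1}}(\setX)$, Lemma~\ref{lemma:Lipschitz_tv} gives
\begin{align*}
\|\Phi_{s,t}(\mu_1)-\Phi_{s,t}(\mu_2)\|_{\mathrm{tv}}
\leq\varphi_{\mathrm{tv}}\big(\Phi_{s+1}(\mu_2)(D_{s+1})\big)\,
\|\Phi_{s+1}(\mu_1)-\Phi_{s+1}(\mu_2)\|_{\mathrm{tv}},
\end{align*}
where $\varphi_{\mathrm{tv}}$ is the function of Lemma~\ref{lemma:Lipschitz_tv}. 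Two things then remain. First, we need a lower bound on $\Phi_{s+1}(\mu_2)(D_{s+1})$ in terms of $\mu_2(D_s)$. By \ref{A1a}, $\mu_2Q_{s+1}(D_{s+1})\geq\underline{q}\,\mu_2(D_s)$, and $\mu_2Q_{s+1}(1)\leq 1$ since $G_s\leq 1$. Hence $\Phi_{s+1}(\mu_2)(D_{s+1})\geq\underline{q}\,\mu_2(D_s)$, and monotonicity of $\varphi_{\mathrm{tv}}$ handles this factor. Second, and this is the main step, we need a one-step bound $\|\Phi_{s+1}(\mu_1)-\Phi_{s+1}(\mu_2)\|_{\mathrm{tv}}\leq \tilde\varphi(\mu_2(D_s))\,\|\mu_1-\mu_2\|_{\mathcal{V}_\setX}$.

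For the one-step bound, write $Z_i\coloneqq\mu_iQ_{s+1}(1)$ and, for $f\in\mathcal{B}_1(\setX)$,
\begin{align*}
\Phi_{s+1}(\mu_1)(f)-\Phi_{s+1}(\mu_2)(f)
=\frac{(\mu_1-\mu_2)Q_{s+1}(f)}{Z_2}
+\Phi_{s+1}(\mu_1)(f)\,\frac{(\mu_2-\mu_1)Q_{s+1}(1)}{Z_2}.
\end{align*}
Both numerators have the form $(\mu_1-\mu_2)(h)$ with $h(x)=\int q_{s+1}(x,x')g(x')\,\varsigma_{s+1}(\dd x')$ and $g\in\{f,1\}$, $0\leq g\leq 1$. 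By part~\ref{HK_7} of Proposition~\ref{prop:V_HK},
\begin{align*}
\mathcal{V}_\setX(h)\leq\int\mathcal{V}_\setX\big(q_{s+1}(\cdot,x')g(x')\big)\,\bar\varsigma_{s+1}(\dd x')\leq \int\mathcal{V}_\setX\big(q_{s+1}(\cdot,x')\big)\,\bar\varsigma_{s+1}(\dd x')\leq C_{B},
\end{align*}
where $C_B$ is the supremum in \ref{B1}, and the middle inequality uses part~\ref{HK_2} of Proposition~\ref{prop:V_HK} with $|g(x')|\leq 1$. Also $\|h\|_\infty\leq 1$, since $Q_{s+1}$ is sub-Markov. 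Therefore $\|h\|_{\mathrm{HK}}\leq 1+C_B$, and $|(\mu_1-\mu_2)(h)|\leq(1+C_B)\|\mu_1-\mu_2\|_{\mathcal{V}_\setX}$ by the definition of that norm. This gives the one-step bound $2(1+C_B)Z_2^{-1}\|\mu_1-\mu_2\|_{\mathcal{V}_\setX}$.

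It remains to bound $Z_2$ from below, and this is where care is needed. Using \ref{A1a} again, $Z_2\geq\mu_2Q_{s+1}(D_{s+1})\geq\underline{q}\,\mu_2(D_s)$. Combining the three steps, one can take
\begin{align*}
\varphi(z)\coloneqq\frac{2(1+C_B)}{\underline{q}\,z}\,\varphi_{\mathrm{tv}}(\underline{q}z),
\end{align*}
with $\varphi(0)=\infty$. This function is non-increasing, takes values in $(0,\infty]$, and is infinite only at $z=0$, because $\varphi_{\mathrm{tv}}$ is finite on $(0,1]$. The main obstacle is therefore the measurability and integrability bookkeeping in part~\ref{HK_7} of Proposition~\ref{prop:V_HK}. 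Specifically, one must check that $x\mapsto q_{s+1}(x,x')g(x')$ fits its hypotheses: $g(x')$ is a constant in $x$, the function is measurable on $\setX^2$, and the integral $h(x)$ is finite, which holds because $M_{s+1}$ is Markov. The remaining steps are direct applications of earlier results.
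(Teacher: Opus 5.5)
Your proof is correct, but it is organized differently from the paper's.

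\textbf{Your route.} You use the semigroup property $\Phi_{s,t}=\Phi_{s+1,t}\circ\Phi_{s+1}$ to isolate the single step where smoothing is needed. First you prove a one-step estimate from $\|\cdot\|_{\mathcal{V}_\setX}$ to $\|\cdot\|_{\mathrm{tv}}$. It only requires the HK-variation of $Q_{s+1}(g)$, via part~\ref{HK_7} of Proposition~\ref{prop:V_HK} and \ref{B1}, together with the lower bound $\mu_2(G_s)\geq\underline{q}\,\mu_2(D_s)$. The total variation Lipschitz estimate of Lemma~\ref{lemma:Lipschitz_tv} then handles $\Phi_{s+1,t}$.

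\textbf{The paper's route.} The paper works with the whole product $Q_{s,t}$ at once. It bounds $\|Q_{s,t}(f)\|_{\mathrm{HK}}$ by $\varphi(1)\,\delta_{x_s}Q_{s,t}(1)+(1+\zeta_{s+1})\,\pi_{s+1}Q_{s+1,t}(1)$. The sup-norm part comes from Lemma~\ref{lemma:ratio}, and the variation part from Lemma~\ref{lemma:Q_smooth} for general $t$. Here $\pi_{s+1}$ is an auxiliary probability measure built from $x'\mapsto\mathcal{V}_\setX(q_{s+1}(\cdot,x'))$. The paper then divides by $\mu_2Q_{s,t}(1)=\mu_2(G_s)\,\Phi_{s+1}(\mu_2)Q_{s+1,t}(1)$ and applies Lemma~\ref{lemma:ratio} twice, at time $s$ with $\delta_{x_s}$ and at time $s+1$ with $\pi_{s+1}$.

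\textbf{Comparison.} The ingredients are the same, since your route reaches Lemma~\ref{lemma:ratio} through Lemma~\ref{lemma:Lipschitz_tv}. Both arguments produce a function of the same order, namely a constant times $\bar\varphi(\underline{q}z)/(\underline{q}z)$, with $\bar\varphi$ the function of Lemma~\ref{lemma:ratio}. Your version is more modular and needs only the one-step case $t=s+1$ of Lemma~\ref{lemma:Q_smooth}. It also makes transparent why $t>s$ is required: \ref{B1} enters only through the first smoothing step. The paper's version stays at the level of unnormalized kernels and ratio estimates, and never passes through a total variation Lipschitz bound for the normalized semigroup.
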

\begin{proof}
See Appendix~\ref{p-lemma:Lipschitz}.
\end{proof}

Unlike Lemma~\ref{lemma:Lipschitz_tv},
Lemma~\ref{lemma:Lipschitz} fails for $t=s$: one step
of the semigroup is needed to smooth $\mu_1-\mu_2$.

\subsection{Total variation estimates\label{sub:TV_estimates}}

\begin{theorem}\label{thm:main_TV}
Let the flow $\eta_t$ be as defined in
\eqref{eq:FK_pred_model2}, and let
$\varpi\colon[0,\infty)\to[0,1]$ be a non-decreasing
function such that $\varpi(z)=0\Leftrightarrow z=0$.
Assume that \ref{A2} holds for some sets
$D_t$ with $\eta_1(D_1)>0$. Then, there
exists a constant $C_{\eta_1,\varpi}\in(0,\infty)$
such that, for any sequences
$\bm{\mu}=(\mu_t)_{t\geq 1}$ with
$\mu_t\in\mathcal{P}_{W_t}(\setX)$ and any
$T\geq 1$,
\begin{align*}
\max_{1\leq t\leq T}\|\mu_t-\eta_t\|_{\mathrm{tv}}
\leq C_{\eta_1,\varpi}\,
\vartheta\big(\Delta_{T,\bm{\mu}|\eta_1,\varpi}\big),
\end{align*}
where
\begin{align*}
\Delta_{T,\bm{\mu}|\eta_1,\varpi}\coloneqq
\Big(\|\mu_1-\eta_1\|_{\mathrm{tv}}\,
\varpi\big(\eta_1(D_1)\big)\Big)
\vee\max_{1\leq t<T}
\Big(\|\mu_{t+1}-\Phi_{t+1}(\mu_t)\|_{\mathrm{tv}}\,
\varpi\big(\mu_t(D_t)\big)\Big).
\end{align*}
\end{theorem}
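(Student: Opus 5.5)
We apply Theorem~\ref{thm:main_Gen} in the setting \eqref{eq:scenarioA}, so that $d_1=d_2=\|\cdot\|_{\mathrm{tv}}$ and $\mathcal{P}_1(\setX)=\mathcal{P}(\setX)$. We take $f_t\coloneqq\ind_{D_t}\in\mathcal{B}_1(\setX)=\mathcal{F}_2(\setX)$, with $D_t$ the sets of \ref{A2}. As noted after \ref{A1}, we may assume $D_t\subseteq W_t$, so that $\mathcal{P}_{W_t|1}(\setX)=\mathcal{P}_{W_t}(\setX)$. With this choice, $\mu(f_t)=\mu(D_t)$, and the quantity $\Delta_{T,\bm{\mu}|\eta_1,\varpi}$ of \eqref{eq:Delta_phi} coincides with the one in the statement.

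The main task is to check \ref{Astar}.

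\emph{Forgetting property \eqref{eq:forget_past}.} This follows from Lemma~\ref{lemma:forgetting}, which applies since \ref{A2} implies \ref{A1}. We take $\varphi_1(z,z')\coloneqq c/(zz')$, with $\varphi_1=\infty$ when $\min\{z,z'\}=0$. This function is non-increasing in each coordinate and infinite exactly when $\min\{z,z'\}=0$. Since a total variation distance is at most $1$, one may also replace $\varphi_1$ by $\min\{1,\varphi_1\}$ away from zero, but this is not needed.

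\emph{Lipschitz property \eqref{eq:lipschitz_gen}.} This follows from Lemma~\ref{lemma:Lipschitz_tv}. A small point is that the statement there is in terms of $\varphi(\mu_2(D_s))$, whereas \ref{Astar} requires $\varphi_2(\mu_1(f_s))$. Since the total variation norm is symmetric, we simply swap the roles of $\mu_1$ and $\mu_2$. We then define $\varphi_2$ on $[0,\infty)$ by $\varphi_2(z)=\varphi(z\wedge1)$, which is non-increasing and infinite only at $0$. The domain issue is vacuous here, because $f_s$ takes values in $[0,1]$.

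\emph{Non-vanishing property \eqref{eq:lower_gen}.} This is exactly the first implication in Lemma~\ref{lemma:lower_phi}. It gives $\inf_{t\geq1}\Phi_{1,t}(\mu)(D_t)>0$ whenever $\mu(D_1)>0$, and this covers $t\geq2$.

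We also need \eqref{eq:regul_d1} and the non-emptiness of $\mathcal{P}_{W_1|1}(\setX)$. Both are immediate, since $\mathcal{P}_1(\setX)=\mathcal{P}(\setX)$ and $\eta_1\in\mathcal{P}_{W_1}(\setX)$. Finally, the hypothesis $\eta_1(f_1)=\eta_1(D_1)>0$ holds by assumption.

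Theorem~\ref{thm:main_Gen} then yields constants $c_{\eta_1,\varpi}$ and $C'_{\eta_1,\varpi}$ such that
\[
\Delta_{T,\bm{\mu}|\eta_1,\varpi}\leq c_{\eta_1,\varpi}
\quad\implies\quad
\max_{t\leq T}\|\mu_t-\eta_t\|_{\mathrm{tv}}\leq C'_{\eta_1,\varpi}\,\vartheta\big(\Delta_{T,\bm{\mu}|\eta_1,\varpi}\big).
\]

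The remaining step, which I expect to be the only delicate point, is to remove the smallness condition. Suppose $\Delta\coloneqq\Delta_{T,\bm{\mu}|\eta_1,\varpi}>c_{\eta_1,\varpi}$. Because $\|\cdot\|_{\mathrm{tv}}\leq1$ on differences of probability measures, we have $\max_t\|\mu_t-\eta_t\|_{\mathrm{tv}}\leq1$. Moreover, $\vartheta$ is increasing, so $\vartheta(\Delta)\geq\vartheta(c_{\eta_1,\varpi})>0$. Setting
\[
C_{\eta_1,\varpi}\coloneqq\max\big\{C'_{\eta_1,\varpi},\,1/\vartheta(c_{\eta_1,\varpi})\big\}
\]
therefore gives the bound for every $T$ and every $\bm{\mu}$, which completes the proof.
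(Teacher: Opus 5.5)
Your proof is correct and follows the paper's argument. In the setting \eqref{eq:scenarioA} with $f_t=\ind_{D_t}$, Lemmas~\ref{lemma:forgetting}, \ref{lemma:Lipschitz_tv} and~\ref{lemma:lower_phi} give \ref{Astar}, and Theorem~\ref{thm:main_Gen} then applies; the smallness condition is removed by taking the constant $\max\{C'_{\eta_1,\varpi},1/\vartheta(c_{\eta_1,\varpi})\}$, using $\|\mu_t-\eta_t\|_{\mathrm{tv}}\leq 1$, as in the paper.

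One caveat concerns your unused aside. Replacing $\varphi_1$ by $\min\{1,\varphi_1\}$ is not legitimate, because $\|\cdot\|_{\mathrm{tv}}\leq 1$ gives no bound by $e^{-\delta(t-s)}$ when $\mu_1(D_s)\mu_2(D_s)<c$. Since you never use this replacement, the proof is unaffected.
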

\begin{proof}
By Lemmas~\ref{lemma:forgetting},
\ref{lemma:lower_phi} and~\ref{lemma:Lipschitz_tv},
in the setting \eqref{eq:scenarioA} and under
\ref{A2}, condition \ref{Astar} holds with
$f_t=\ind_{D_t}$ for any $t\geq 1$. The result of the theorem then follows from
Theorem~\ref{thm:main_Gen}, recalling that $\|\mu_1-\mu_2\|_{\mathrm{tv}}\leq 1$ for
any $\mu_1,\mu_2\in\mathcal{P}(\setX)$, so that the
condition $\Delta_{T,\bm{\mu}|\eta_1,\varpi}\leq
c_{\eta_1,\varpi}$ in Theorem~\ref{thm:main_Gen} can
be removed by increasing the constant
$C_{\eta_1,\varpi}$.
\end{proof}

We recall that the quantity $\Delta_{T,\bm{\mu}|\eta_1,\varpi}$ is the largest weighted local perturbation up to time
$T$; when
$\varpi\equiv 1$ we have
$\Delta_{T,\bm{\mu}|\eta_1,\varpi}
=\max_{1\leq t\leq T}
\|\mu_t-\Phi_t(\mu_{t-1})\|_{\mathrm{tv}}$ with
$\mu_0=\eta_1$. Non-constant weight functions are used in
Theorem~\ref{thm:compare}.

The next result allows to convert the perturbation bounds obtained in Theorem~\ref{thm:main_TV} for the prediction model $\eta_t$   into perturbation bounds  for the updated model $\hat{\eta}_t$; note that part~\ref{tv_unbounded} below allows some
unbounded functions $f$.

\begin{proposition}\label{prop:filter_TV}
Let the flows $\eta_t$ and
$\widehat{\eta}_t$ be as defined in
\eqref{eq:FK_pred_model2}. Assume that \ref{A2} holds
for some sets $D_t$ with
$\eta_1(D_1)>0$. Then,  there exist constants
$(c_{\eta_1},C_{\eta_1})\in(0,\infty)^2$ such that,
for any sequences $\mu_t$ with
$\mu_t\in\mathcal{P}_{W_t}(\setX)$, the following two
statements hold:
\begin{enumerate}[label=(\roman*)]
\item\label{tv_bounded} for any $t\geq 1$,
\begin{align*}
\|\Psi_t(\mu_t)-\widehat{\eta}_t\|_{\mathrm{tv}}
\leq C_{\eta_1}\,\|\mu_t-\eta_t\|_{\mathrm{tv}};
\end{align*}
\item\label{tv_unbounded} for any $t\geq 1$ and every measurable
function $f\colon\setX\to\R$ for which there exists
a constant $a_t\in\R$ such that
$g_t\coloneqq(f+a_t)G_t\in\mathcal{B}_1(\setX)$,
\begin{align*}
\|\mu_t-\eta_t\|_{\mathrm{tv}}\leq c_{\eta_1}
\implies
|\Psi_t(\mu_t)(f)-\widehat{\eta}_t(f)|
\leq C_{\eta_1}\,\|\mu_t-\eta_t\|_{\mathrm{tv}}.
\end{align*}
\end{enumerate}
\end{proposition}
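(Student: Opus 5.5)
The plan is to deduce the statement from Lemma~\ref{lemma:filter_Gen} in the setting \eqref{eq:scenarioA}. In that setting $d_2(\mu_t,\eta_t)=\|\mu_t-\eta_t\|_{\mathrm{tv}}$ and $\mathcal{F}_2(\setX)=\mathcal{B}_1(\setX)$. Since $G_t\in\mathcal{B}_1(\setX)$, the membership condition $c_\star G_t\in\mathcal{F}_2(\setX)$ holds with $c_\star=1$. The only remaining hypothesis to check is $\inf_{t\geq 1}\eta_t(G_t)>0$.

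For this I will invoke Lemma~\ref{lemma:lower_phi} with $\mu=\eta_1$. Under \ref{A2} we have $\eta_1\in\mathcal{P}_{W_1}(\setX)$, and by assumption $\eta_1(D_1)>0$. Since $\eta_t=\Phi_{1,t}(\eta_1)$, the lemma gives $\inf_{t\geq1}\eta_t(D_t)>0$ and then $\inf_{t\geq 1}\eta_t(G_t)>0$. So Lemma~\ref{lemma:filter_Gen} applies and yields constants $(c_{\eta_1},C_{\eta_1})$. In the TV setting all measures are in $\mathcal{P}_1(\setX)$, so $\mu_t\in\mathcal{P}_{W_t|1}(\setX)=\mathcal{P}_{W_t}(\setX)$, as the lemma requires.

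Part \ref{tv_bounded} follows from part (i) of Lemma~\ref{lemma:filter_Gen}. For any $f\in\mathcal{B}_1(\setX)$ the product $G_tf$ again lies in $\mathcal{B}_1(\setX)$, so $|\mu_t(G_tf)-\eta_t(G_tf)|\leq\|\mu_t-\eta_t\|_{\mathrm{tv}}$. Hence
\[
|\Psi_t(\mu_t)(f)-\widehat{\eta}_t(f)|\leq 2C_{\eta_1}\|\mu_t-\eta_t\|_{\mathrm{tv}}.
\]
Taking the supremum over $f\in\mathcal{B}_1(\setX)$ gives \ref{tv_bounded} after renaming $2C_{\eta_1}$ as the constant.

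Part \ref{tv_unbounded} follows in the same way from part (ii). If $g_t=(f+a_t)G_t\in\mathcal{B}_1(\setX)$, then $|\mu_t(g_t)-\eta_t(g_t)|\leq\|\mu_t-\eta_t\|_{\mathrm{tv}}$. Under $\|\mu_t-\eta_t\|_{\mathrm{tv}}\leq c_{\eta_1}$, part (ii) then gives the bound $2C_{\eta_1}\|\mu_t-\eta_t\|_{\mathrm{tv}}$.

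I expect no real obstacle here, since the substance lies in Lemma~\ref{lemma:filter_Gen}. The one point needing care is that the constants depend only on $\eta_1$, through $\inf_t\eta_t(G_t)$, and not on $t$ or on the sequence $\mu_t$. This uniformity is exactly what Lemma~\ref{lemma:lower_phi} supplies.
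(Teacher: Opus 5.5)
Your proposal is correct and matches the paper's proof: Lemma~\ref{lemma:filter_Gen} in the setting \eqref{eq:scenarioA} with $c_\star=1$, with $\inf_{t\geq1}\eta_t(G_t)>0$ supplied by Lemma~\ref{lemma:lower_phi} applied to $\mu=\eta_1$, and $|\mu_t(G_tf)-\eta_t(G_tf)|$ and $|\mu_t(g_t)-\eta_t(g_t)|$ bounded by $\|\mu_t-\eta_t\|_{\mathrm{tv}}$ because these test functions lie in $\mathcal{B}_1(\setX)$. One minor correction: $\eta_1\in\mathcal{P}_{W_1}(\setX)$ comes from the model definition \eqref{eq:FK_pred_model2}, not from \ref{A2}.
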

\begin{proof}
As shown in the proof of Theorem~\ref{thm:main_TV},
in the setting \eqref{eq:scenarioA} and under
\ref{A2}, condition \ref{Astar} holds with
$f_t=\ind_{D_t}$ for any $t\geq 1$. Moreover, since
$\eta_1(D_1)>0$, Lemma~\ref{lemma:lower_phi} gives
$\inf_{t\geq 1}\eta_t(G_t)>0$. Since, in the setting
\eqref{eq:scenarioA}, $fG_t\in\mathcal{B}_1(\setX)$
for any $f\in\mathcal{B}_1(\setX)$ and $t\geq 1$, and
$\|\mu_1-\mu_2\|_{\mathrm{tv}}\leq 1$ for any
$\mu_1,\mu_2\in\mathcal{P}(\setX)$, the result of the proposition follows from
Lemma~\ref{lemma:filter_Gen}.
\end{proof}

\subsection{Kolmogorov norm estimates\label{sub:K_estimates}}

\begin{theorem}\label{thm:main}
Let  the flow $\eta_t$ be as defined in
\eqref{eq:FK_pred_model2}, and let
$\varpi\colon[0,\infty)\to[0,1]$ be a non-decreasing
function such that $\varpi(z)=0\Leftrightarrow z=0$.
Assume that \ref{A3} and \ref{B1} hold and that
$\eta_1(D_1)>0$, with the sets $D_t$ as in
\ref{A3}. Then, there exists a constant
$C_{\eta_1,\varpi}\in(0,\infty)$ such that, for any
sequences $\bm{\mu}=(\mu_t)_{t\geq 1}$ with
$\mu_t\in\mathcal{P}_{W_t}(\setX)$ and any
$T\geq 1$,
\begin{align*}
\max_{1\leq t\leq T}
\|\mu_t-\eta_t\|_{\mathcal{V}_\setX}
\leq C_{\eta_1,\varpi}\,
\vartheta\big(\tilde{\Delta}_{T,\bm{\mu}|\eta_1,\varpi}\big)
\end{align*}
with $\vartheta$ as in \eqref{eq:vartheta}
and, with $\mu_0\coloneqq\eta_1$, for any
$f\in\mathcal{B}_1(\setX)$ and
$t\in\{1,\dots,T\}$,
\begin{align*}
|\mu_t(f)-\eta_t(f)|
\leq|\mu_t(f)-\Phi_t(\mu_{t-1})(f)|
+C_{\eta_1,\varpi}\,
\vartheta\big(\tilde{\Delta}_{T,\bm{\mu}|\eta_1,\varpi}\big),
\end{align*}
where
\begin{align*}
\tilde{\Delta}_{T,\bm{\mu}|\eta_1,\varpi}\coloneqq
\Big(\|\mu_1-\eta_1\|_{\mathcal{V}_\setX}\,
\varpi\big(\eta_1(D_1)\big)\Big)
\vee\max_{1\leq t<T}
\Big(\|\mu_{t+1}-\Phi_{t+1}(\mu_t)\|_{\mathcal{V}_\setX}\,
\varpi\big(\mu_t(D_t)\big)\Big).
\end{align*}
\end{theorem}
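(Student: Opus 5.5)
The plan is to mirror the proof of Theorem~\ref{thm:main_TV}, now in the setting \eqref{eq:scenarioB}. There $d_1=\|\cdot\|_{\mathrm{tv}}$ and $d_2=\|\cdot\|_{\mathcal{V}_\setX}$, and we aim to show that \ref{Astar} holds with $f_t=c\,\ind_{D_t}$. The constant is $c\coloneqq 1/(1+\sup_{t\geq1}\mathcal{V}_\setX(\ind_{D_t}))\in(0,1]$, which is positive by \ref{A3}. With this choice $\|f_t\|_{\mathrm{HK}}\leq c\,(1+\mathcal{V}_\setX(\ind_{D_t}))\leq 1$, so $f_t\in\mathcal{F}_2(\setX)$. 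In addition, $\mathcal{F}_2(\setX)\subseteq\mathcal{F}_1(\setX)=\mathcal{B}_1(\setX)$ requires $f\in\mathcal{B}_{\mathrm{HK}}$ with $\|f\|_{\mathrm{HK}}\leq1$ to take values in $[0,1]$. This fails as stated, since such $f$ may be negative. I would handle this by working with $d_1$ defined through $\{f:\|f\|_\infty\leq1\}$, which equals at most twice the total variation on differences of probabilities and is harmless up to constants. Alternatively, I would note that the proof of Theorem~\ref{thm:main_Gen} only uses $f_t\in\mathcal F_2$, $d_2\le d_1$-type comparisons, and the second display, where the test function $f\in\mathcal{B}_1$ is what is needed.

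Next I would verify the three requirements of \ref{Astar}. The exponential forgetting property \eqref{eq:forget_past} comes from Lemma~\ref{lemma:forgetting}, rewritten with $\mu_i(D_s)=\mu_i(f_s)/c$. This gives $\varphi_1(z,z')=c^{-1}\cdot c^2/(zz')$, with the value $\infty$ when $\min\{z,z'\}=0$; it is non-increasing as required. The non-vanishing property \eqref{eq:lower_gen} is Lemma~\ref{lemma:lower_phi}, since \ref{A3} implies \ref{A2}. The local Lipschitz property \eqref{eq:lipschitz_gen} for $t>s$ comes from Lemma~\ref{lemma:Lipschitz}, which uses \ref{B1} and gives $\varphi_2(z)=\varphi(z/c)$. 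Since Lemma~\ref{lemma:Lipschitz} puts $\mu_2(D_s)$ on the right, I would apply it with the roles of $\mu_1$ and $\mu_2$ swapped; this is legitimate because both sides are symmetric in the norms.

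The statement of \ref{Astar} only requires $t>s$, so the failure of Lemma~\ref{lemma:Lipschitz} at $t=s$ is not an issue. The remaining item is $\eta_1(f_1)=c\,\eta_1(D_1)>0$. The weight $\varpi$ in Theorem~\ref{thm:main_Gen} is evaluated at $\mu_t(f_t)=c\,\mu_t(D_t)$, so I would apply that theorem with $\tilde\varpi(z)\coloneqq\varpi(z/c)$. Since $c\le1$ and $\varpi$ is non-decreasing, $\tilde\varpi$ is non-decreasing, takes values in $[0,1]$, and vanishes only at $0$. The weighted perturbation of Theorem~\ref{thm:main_Gen} then coincides with $\tilde\Delta_{T,\bm\mu|\eta_1,\varpi}$.

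Theorem~\ref{thm:main_Gen} gives both displays under the condition $\tilde\Delta\le c_{\eta_1,\varpi}$. To remove this condition, note that $\|\mu-\mu'\|_{\mathcal{V}_\setX}\leq\|\mu-\mu'\|_{\mathrm K}\leq1$ by Theorem~\ref{thm:norm}, and that $\vartheta$ is increasing. Hence, if $\tilde\Delta>c_{\eta_1,\varpi}$, both left-hand sides are bounded by an absolute constant (at most $2$), which is at most $2\,\vartheta(\tilde\Delta)/\vartheta(c_{\eta_1,\varpi})$. Enlarging the constant therefore covers this case. For $t=1$ in the second display, $\Phi_1(\mu_0)=\eta_1$, so the bound is trivial. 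I expect the main obstacle to be the bookkeeping in the first step: the class mismatch $\mathcal F_2\not\subseteq\mathcal B_1$ and the orientation of the Lipschitz lemma relative to \eqref{eq:lipschitz_gen}. The substantive analysis is already contained in Lemma~\ref{lemma:Lipschitz} and Theorem~\ref{thm:main_Gen}.
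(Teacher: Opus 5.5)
Your proposal is correct and is essentially the paper's proof: you check \ref{Astar} in the setting \eqref{eq:scenarioB} with $f_t$ a normalized multiple of $\ind_{D_t}$ via Lemmas~\ref{lemma:forgetting}, \ref{lemma:lower_phi} and~\ref{lemma:Lipschitz}, apply Theorem~\ref{thm:main_Gen}, and remove the smallness condition using $\|\mu-\mu'\|_{\mathcal{V}_\setX}\leq 1$ (one small slip: your $\varphi_1$ should read $c_{\mathrm{f}}\,c^2/(zz')$, with $c_{\mathrm{f}}$ the constant of Lemma~\ref{lemma:forgetting}). The remaining differences are bookkeeping: you reweight with $\varpi(z/c)$ where the paper uses the monotonicity bound $\varpi(\mu_t(f_t))\leq\varpi(\mu_t(D_t))$, and you observe that $\mathcal{F}_2(\setX)\not\subseteq\mathcal{B}_1(\setX)$ in \eqref{eq:scenarioB}, which is a genuine point the paper's proof passes over and which your fix (enlarging $\mathcal{F}_1(\setX)$ to $\{f:\|f\|_\infty\leq 1\}$, at the cost of a factor $2$) handles.
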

\begin{proof}
By Lemmas~\ref{lemma:forgetting},
\ref{lemma:lower_phi} and~\ref{lemma:Lipschitz}, in
the setting \eqref{eq:scenarioB} and under \ref{A3}
and \ref{B1}, condition \ref{Astar} holds with
$f_t=(1/c)\ind_{D_t}$ for any $t\geq 1$, where
$c\coloneqq 1+\sup_{t\geq 1}
\mathcal{V}_\setX(\ind_{D_t})$, so that
$\|f_t\|_{\mathrm{HK}}\leq 1$. Since $\varpi$ is
non-decreasing and $c\geq 1$, we have
$\varpi(\mu_t(f_t))\leq\varpi(\mu_t(D_t))$, so that the
quantity \eqref{eq:Delta_phi} is bounded by
$\tilde{\Delta}_{T,\bm{\mu}|\eta_1,\varpi}$. The result of the theorem then
follows from Theorem~\ref{thm:main_Gen}, recalling that,
by Theorem~\ref{thm:norm},
$\|\mu_1-\mu_2\|_{\mathcal{V}_\setX}\leq 1$ for any
$\mu_1,\mu_2\in\mathcal{P}(\setX)$.
\end{proof}

The next proposition allows to convert the perturbation bounds obtained in Theorem~\ref{thm:main_TV} for the prediction model $\eta_t$   into perturbation bounds  for the updated model $\hat{\eta}_t$; note that part~\ref{K_unbounded} of the proposition allows some
unbounded functions $f$.

\begin{proposition}\label{prop:filter}
Let $\eta_t$ and
$\widehat{\eta}_t$ be as defined in
\eqref{eq:FK_pred_model2}. Assume that \ref{A3} and
\ref{B1} hold and that $\eta_1(D_1)>0$, with the sets
$D_t$ as in \ref{A3}. Then, there exist
constants $(c_{\eta_1},C_{\eta_1})\in(0,\infty)^2$
such that, for any sequences $\mu_t$ with
$\mu_t\in\mathcal{P}_{W_t}(\setX)$, the following
three statements hold:
\begin{enumerate}[label=(\roman*)]
\item\label{i-prop:filter} for any $t\geq 1$,
\begin{align*}
\|\Psi_t(\mu_t)-\widehat{\eta}_t\|_{\mathcal{V}_\setX}
\leq C_{\eta_1}\,\|\mu_t-\eta_t\|_{\mathcal{V}_\setX};
\end{align*}
\item\label{ii-prop:filter} for any $t\geq 1$ and
$f\in\mathcal{B}_1(\setX)$,
\begin{align*}
|\Psi_t(\mu_t)(f)-\widehat{\eta}_t(f)|
\leq C_{\eta_1}\Big(|\mu_t(G_tf)-\eta_t(G_tf)|
+\|\mu_t-\eta_t\|_{\mathcal{V}_\setX}\Big);
\end{align*}
\item\label{K_unbounded} for any $t\geq 1$ and every measurable
function $f\colon\setX\to\R$ for which there exists
a constant $a_t\in\R$ such that
$g_t\coloneqq(f+a_t)G_t\in\mathcal{B}_1(\setX)$,
\begin{align*}
\|\mu_t-\eta_t\|_{\mathcal{V}_\setX}\leq c_{\eta_1}
\implies
|\Psi_t(\mu_t)(f)-\widehat{\eta}_t(f)|
\leq C_{\eta_1}\Big(|\mu_t(g_t)-\eta_t(g_t)|
+\|\mu_t-\eta_t\|_{\mathcal{V}_\setX}\Big).
\end{align*}
\end{enumerate}
\end{proposition}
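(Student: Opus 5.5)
Proposition~\ref{prop:filter} follows by applying Lemma~\ref{lemma:filter_Gen} in the setting \eqref{eq:scenarioB}, in the same way that Proposition~\ref{prop:filter_TV} was obtained in the setting \eqref{eq:scenarioA}. The proof of Theorem~\ref{thm:main} already shows that, under \ref{A3} and \ref{B1}, condition \ref{Astar} holds with $f_t=(1/c)\ind_{D_t}$. Two hypotheses of Lemma~\ref{lemma:filter_Gen} then remain to be checked.

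The first hypothesis is $\inf_{t\geq1}\eta_t(G_t)>0$. Since $\eta_1(D_1)>0$, this follows directly from Lemma~\ref{lemma:lower_phi}.

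The second hypothesis is the existence of a constant $c_\star>0$ such that $c_\star G_t\in\mathcal{F}_2(\setX)$ for every $t$, that is, $\sup_t\|G_t\|_{\mathrm{HK}}<\infty$. This is the main obstacle, because the conditions only control the HK-variation of $q_{t+1}(\cdot,x')=G_t(\cdot)m_{t+1}(\cdot,x')$ integrated over $x'$. The idea is to integrate out $x'$. Since $M_{t+1}$ is Markov, $\int_\setX m_{t+1}(x,x')\varsigma_{t+1}(\dd x')=1$ for every $x$, so
\[
G_t(x)=\int_\setX q_{t+1}(x,x')\,\varsigma_{t+1}(\dd x').
\]
Part~\ref{HK_7} of Proposition~\ref{prop:V_HK} then gives
\[
\mathcal{V}_\setX(G_t)\leq\int_\setX\mathcal{V}_\setX\big(q_{t+1}(\cdot,x')\big)\,\bar{\varsigma}_{t+1}(\dd x'),
\]
and \ref{B1} bounds the right-hand side uniformly in $t$. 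Together with $\|G_t\|_\infty\leq1$, this gives $\sup_t\|G_t\|_{\mathrm{HK}}\leq 1+\sup_t\int\mathcal{V}_\setX(q_{t+1}(\cdot,x'))\,\bar{\varsigma}_{t+1}(\dd x')=:1/c_\star$.

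With both hypotheses in place, Lemma~\ref{lemma:filter_Gen}(i) gives part~\ref{ii-prop:filter} directly, with $d_2=\|\cdot\|_{\mathcal{V}_\setX}$. Likewise, Lemma~\ref{lemma:filter_Gen}(ii) gives part~\ref{K_unbounded}.

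For part~\ref{i-prop:filter}, fix $f$ with $\|f\|_{\mathrm{HK}}\leq1$; such an $f$ is bounded and measurable by Proposition~\ref{prop:set_BHK}. Write
\[
\Psi_t(\mu_t)(f)-\widehat\eta_t(f)=\frac{\mu_t(G_tf)-\eta_t(G_tf)}{\mu_t(G_t)}+\eta_t(G_tf)\Big(\frac{1}{\mu_t(G_t)}-\frac{1}{\eta_t(G_t)}\Big).
\]
By part~\ref{HK_5} of Proposition~\ref{prop:V_HK}, $\|G_tf\|_{\mathrm{HK}}\leq C'$ and $\|G_t\|_{\mathrm{HK}}\leq C'$ uniformly in $t$. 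Hence
\[
|\mu_t(G_tf)-\eta_t(G_tf)|\leq C'\|\mu_t-\eta_t\|_{\mathcal{V}_\setX},\qquad |\mu_t(G_t)-\eta_t(G_t)|\leq C'\|\mu_t-\eta_t\|_{\mathcal{V}_\setX}.
\]
It remains to keep $\mu_t(G_t)$ away from zero, and we split into two cases.

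If $\|\mu_t-\eta_t\|_{\mathcal{V}_\setX}\leq \inf_s\eta_s(G_s)/(2C')$, then $\mu_t(G_t)\geq\tfrac12\inf_s\eta_s(G_s)$, and both terms above are bounded by a constant times $\|\mu_t-\eta_t\|_{\mathcal{V}_\setX}$.

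Otherwise, $\|\mu_t-\eta_t\|_{\mathcal{V}_\setX}$ is bounded below by a positive constant. The left-hand side of part~\ref{i-prop:filter} is at most $\|\Psi_t(\mu_t)-\widehat\eta_t\|_{\mathrm{tv}}\leq 1$, because $|\nu(f)|\leq\|f\|_\infty|\nu|(\setX)$ with $\|f\|_\infty\leq1$ and $|\Psi_t(\mu_t)-\widehat\eta_t|(\setX)\leq 2$. Enlarging $C_{\eta_1}$ therefore covers this case as well.
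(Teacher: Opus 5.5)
Your proof is correct and is essentially the paper's. The two hypotheses of Lemma~\ref{lemma:filter_Gen} are checked exactly as there: Lemma~\ref{lemma:lower_phi} gives $\inf_{t}\eta_t(G_t)>0$, and integrating out $x'$ with part~\ref{HK_7} of Proposition~\ref{prop:V_HK} (which is the paper's Lemma~\ref{lemma:Q_smooth} and Remark~\ref{rem:Q_smooth}) gives $\sup_t\|G_t\|_{\mathrm{HK}}<\infty$; parts~\ref{ii-prop:filter} and~\ref{K_unbounded} are then read off from the lemma.

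The only variation is part~\ref{i-prop:filter}. There you inline the decomposition underlying Lemma~\ref{lemma:filter_Gen} with a case split, rather than citing the lemma together with part~\ref{HK_5}. This has the mild advantage of directly covering HK-unit-ball test functions that take negative values, which are not in $\mathcal{B}_1(\setX)$. In your second case, your $|\nu|(\setX)\leq 2$ argument gives the bound $2$ rather than $1$ (the bound $1$ follows from part~\ref{norm_ii} of Theorem~\ref{thm:norm}), but either constant suffices.
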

\begin{proof}
As shown in the proof of Theorem~\ref{thm:main}, in
the setting \eqref{eq:scenarioB} and under \ref{A3}
and \ref{B1}, condition \ref{Astar} holds with
$f_t=(1/c)\ind_{D_t}$ for any $t\geq 1$, for some
constant $c\in[1,\infty)$. Since $\eta_1(D_1)>0$,
Lemma~\ref{lemma:lower_phi} gives
$\inf_{t\geq 1}\eta_t(G_t)>0$. By
Lemma~\ref{lemma:Q_smooth} and
Remark~\ref{rem:Q_smooth} (Appendix~\ref{p-lemma:Lipschitz}),
$C_G\coloneqq 1+\sup_{t\geq 1}\mathcal{V}_\setX(G_t)
<\infty$, so that $\|G_t\|_{\mathrm{HK}}\leq C_G$ and
$(G_t/C_G)\in\F_2(\setX)$ for any
$t\geq 1$ in the setting \eqref{eq:scenarioB}.
Moreover, by part~\ref{HK_5} of
Proposition~\ref{prop:V_HK}, $fg/3\in\F_2(\setX)
\subseteq\mathcal{B}_1(\setX)$ for any
$f,g\in\F_2(\setX)$ in this setting. The result of the proposition then follows from
Lemma~\ref{lemma:filter_Gen}, recalling that, by Theorem~\ref{thm:norm},
$\|\mu_1-\mu_2\|_{\mathcal{V}_\setX}\leq
\|\mu_1-\mu_2\|_{\mathrm{K}}\leq 1$ for any
$\mu_1,\mu_2\in\mathcal{P}(\setX)$.
\end{proof}

\section{A class of state-space models\label{sec:examples}}

In this section, we give conditions   under which the Feynman--Kac model introduced in  
Section~\ref{sub:motive_example}, with state space
$\setX=\R^d$, potential functions \eqref{eq:Gt_def} and
Markov transitions
\eqref{eq:Mt_def}--\eqref{eq:Mt_def2}, satisfies
\ref{A3} and \ref{B1}.

\subsection{The model and the assumptions\label{sub:model_assumptions}}

The model depends on a sequence of
observations $y_t$ in $\R^r$ and on the
following parameters:
\begin{equation}\label{eq:fn_model}
\left\{\begin{aligned}
&\text{the functions $\zeta_{G_t}$,
$\zeta_{M_{t+1}}$, $H_t$ and
$F_{t+1}$ of \eqref{eq:functions};}\\
&\text{the matrices $\Sigma_{G_t}$ in
$\mathcal{S}_r^+$ and $\Sigma_{M_{t+1}}$
in $\mathcal{S}_d^+$.}
\end{aligned}\right.
\end{equation}

\subsubsection{Assumptions on the parameters}

\begin{enumerate}[label=\textbf{(P{{\arabic*}})}]
\item\label{Model1} There exist constants
$(\varepsilon,\nu_G,\nu_M,L)\in(0,1]\times(-d,\infty)
\times(0,\infty)\times[1,\infty)$ and two continuous
and $\varepsilon$-pseudo-decreasing functions
$\zeta_G\colon[0,\infty)\to(0,1]$ and
$\zeta_M\colon[0,\infty)\to(0,\infty)$ such that,
for any $t\geq 1$:
\begin{enumerate}[label=\textbf{\alph*.},ref=\theenumi\textbf{\alph*}]
\item\label{mat}
$1/L\leq\sigma_{\min}(\Sigma_{G_t})\leq
\sigma_{\max}(\Sigma_{G_t})\leq L$ and
$1/L\leq\sigma_{\min}(\Sigma_{M_{t+1}})\leq
\sigma_{\max}(\Sigma_{M_{t+1}})\leq L$;
\item\label{lips} the function $H_t$ is
$L$-bi-Lipschitz;
\item\label{zeta}
$\zeta_G(z)\leq\zeta_{G_t}(z)$ and
$\zeta_M(z)\leq\zeta_{M_{t+1}}(z)$ for any
$z\in[0,\infty)$;
\item\label{upper_GtMt}
$\zeta_{G_t}(z)\leq L(1+z)^{-(d+\nu_G)/2}$ and
$\zeta_{M_{t+1}}(z)\leq L(1+z)^{-(d+\nu_M)/2}$ for
any $z\in[0,\infty)$.
\end{enumerate}
\end{enumerate}

\begin{enumerate}[label=\textbf{(P{{\arabic*}})}]
\setcounter{enumi}{1}
\item\label{Model} There exist constants
$(\varepsilon,\nu_G,\nu_M,L)\in(0,1]\times(0,\infty)^2
\times[1,\infty)$ and two continuous and
$\varepsilon$-pseudo-decreasing functions
$\zeta_G\colon[0,\infty)\to(0,1]$ and
$\zeta_M\colon[0,\infty)\to(0,\infty)$ such that
conditions \ref{mat}--\ref{upper_GtMt} hold and, for
any $t\geq 1$:
\begin{enumerate}[label=\textbf{\alph*.},ref=\theenumi\textbf{\alph*}]
\item $H_t\in\mathcal{C}_{\mathrm{mixed}}(\R^d;r,L)$;
\item\label{cond_zeta} $\zeta_{G_t}$ and
$\zeta_{M_{t+1}}$ are $d$ times continuously
differentiable and, for any $z\in[0,\infty)$ and
$i\in\{1,\dots,d\}$,
\begin{align*}
(1+z)^{i/2}\Big|\frac{\dd^i}{\dd z^i}\zeta_{G_t}(z)\Big|
\leq\frac{L}{(1+z)^{(d+\nu_G)/2}}
\quad\text{and}\quad
(1+z)^{i/2}\Big|\frac{\dd^i}{\dd z^i}\zeta_{M_{t+1}}(z)\Big|
\leq\frac{L}{(1+z)^{(d+\nu_M)/2}}.
\end{align*}
\end{enumerate}
\end{enumerate}

Let $\varphi(\cdot;\kappa)$ denote the density of the Student $t$-distribution $t_\kappa(0,1)$ with
$\kappa\in(0,\infty]$ degrees of freedom, with the convention $t_\infty(0,1)=\mathcal{N}_1(0,1)$.
Conditions \ref{zeta} and \ref{upper_GtMt} are met, e.g., when
$\zeta_{G_t}(z)\coloneqq\varphi(z^{1/2};\kappa_G)$ and $\zeta_{M_{t+1}}(z)\coloneqq\varphi(z^{1/2};\kappa_M)$
for all $t\geq1$, with $\kappa_G\in(0,\infty]$ and $\kappa_M\in(d-1,\infty]$. Noting that
$(1+z)^{i/2}|\zeta^{(i)}(z)|\leq C(1+z)^{-(\kappa+2)/2}$ for $\zeta(z)=\varphi(z^{1/2};\kappa)$ and
$i\geq1$, condition \ref{cond_zeta} also holds in this scenario, and the constraint $\nu_G>0$ in \ref{Model}  is met if
$\kappa_G\in(d-1,\infty]$.

Condition \ref{lips}, which requires the functions $h_t$   to be bi-Lipschitz, is the most restrictive assumption  imposed in \ref{Model1} and \ref{Model}. In particular, under \ref{lips}, the dimension $r$ of the observations $y_t$ is larger or equal to the dimension $d$ of the state space $\setX$, and the functions $h_t$ are all unbounded. (Recall that a function $h:\R^d\rightarrow\R^r$ can only  be bi-Lipschitz if $r\geq d$ and if $\|h\|_\infty=\infty$.)  Condition \ref{lips} is, for instance, met if $r=d$ and $h_t(x):=x$. Such a scenario is common in the context of the state-space models, where in many applications the observations are noisy measurements of a hidden process of interest.  More generally,  if $f:\R^d\rightarrow\R^d$ is a  bi-Lipschitz function and  $A\in\R^{r\times d}$ is a matrix of rank $d$, then  the function  $h(x):=A f(x)$   is  bi-Lipschitz.  We also note  that, if   $r>d$ and $g:\R^d\rightarrow\R^{r-d}$ is a Lipschitz function, then, with $f$ as above, the    function   $h(x):=(f(x), g(x))$    is bi-Lipschitz. Lastly, it is worth mentioning that, when $r=d$, the class of  bi-Lipschitz functions $h:\R^d\rightarrow\R^r$ includes      functions of the form $h(x):=x+g(x)$ with $g$ a contraction mapping (i.e.~$g$ is an $l$-Lipschitz function for some constant $l<1$).

\begin{remark}
Conditions \ref{Model1} and \ref{Model} do not impose that $\int_{\R^r}\zeta_{G_t}(\|u\|^2)\,
\lambda_r(\dd u)<\infty$. Consequently, under these two conditions, the Feynman--Kac model with potential functions \eqref{eq:Gt_def} and
Markov transitions
\eqref{eq:Mt_def}--\eqref{eq:Mt_def2}   is not necessarily the Feynman--Kac representation of a state-space model.
\end{remark}

\subsubsection{Assumptions on the observations\label{subsub:O_condition}}

\begin{enumerate}[label=\textbf{(O{{\arabic*}})}]
\setcounter{enumi}{0}
\item\label{Obs} The observations $y_t$
are such that:
\begin{enumerate}[label=\textbf{\alph*.},ref=\theenumi\textbf{\alph*}]
\item\label{Obs1} $y_t\in H_t(\R^d)$ for any
$t\geq 1$;
\item\label{Obs2} there exists a constant
$\Xi\in[0,\infty)$ such that
$\sup_{t\geq 1}\big\|H^{-1}_{t+1}(y_{t+1})
-F_{t+1}\big(H_t^{-1}(y_t)\big)\big\|\leq\Xi$.
\end{enumerate}
\end{enumerate}

Condition~\ref{Obs1} can be interpreted as an \textit{observability}
assumption: under  state-space model~\eqref{eq:SSM}, $Y_t$ is a noisy
version of $H_t(X_t)$, and \ref{Obs1} requires that each $y_t$ belongs
to the set of values taken by the quantity of which it is an
observation. As shown in Section~\ref{sub:compatibility}, this
condition can often be dropped when $H_t(\R^d)$ is a $d$-dimensional
linear subspace for any $t\geq 1$.

Condition~\ref{Obs2} requires the observations to be
\emph{compatible with the dynamics} of the model. Under~\ref{Obs1}, for any
$t\geq 1$ there exists $x_{y_t}\in\R^d$ such that $H_t(x_{y_t})=y_t$,
and \ref{Obs2} is equivalent to
$\sup_{t\geq 1}\|x_{y_{t+1}}-F_{t+1}(x_{y_t})\|<\infty$; that is, the states
suggested by the observations follow the noiseless dynamics
$x_{t+1}=F_{t+1}(x_t)$ up to bounded errors. Under~\ref{Model1}, this
implies that $\inf_{t\geq 1}m_{t+1}(x_{y_t},x_{y_{t+1}})>0$, with the
densities $m_{t+1}$ defined in~\eqref{eq:Mt_def2}; without this
property, the transition from $X_t=x_{y_t}$ to $X_{t+1}=x_{y_{t+1}}$
under $M_{t+1}$ can be arbitrarily unlikely. By
Lemma~\ref{lemma:Obs2} provided below, \ref{Obs2} holds for bounded observations
under mild conditions on the functions $H_t$ and $F_{t+1}$.

\subsection{Verification of the
assumptions\label{sub:sufficient}}

\begin{proposition}\label{prop:example}
Let $G_t$ and $M_{t+1}$ be as
defined in \eqref{eq:Gt_def}--\eqref{eq:Mt_def2}, and
assume that \ref{Obs} holds.
\begin{enumerate}[label=(\roman*)]
\item\label{i-prop:example} If \ref{Model1} holds and there exists a
non-decreasing function
$\varphi\colon[0,\infty)\to[0,\infty)$ such that
\begin{align*}
\|F_{t+1}(x_1)-F_{t+1}(x_2)\|
\leq\varphi\big(\|x_1-x_2\|\big),
\qquad\forall x_1,x_2\in\R^d,
\quad\forall t\geq 1,
\end{align*}
then \ref{A3} holds.
\item\label{ii-prop:example} If \ref{Model} holds and
$F_{t+1}\in\mathcal{C}_{\mathrm{mixed}}(\R^d;d,L)$ for
any $t\geq 1$, with $L$ as in \ref{Model}, then
\ref{A3} and \ref{B1} hold.
\end{enumerate}
\end{proposition}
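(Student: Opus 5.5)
The plan is to work in the "observation coordinates". Under \ref{Obs1}, set $x_t\coloneqq H_t^{-1}(y_t)$, so that, by the bi-Lipschitz property \ref{lips} and \ref{mat},
\[
L^{-3}\|x-x_t\|^2\le (y_t-H_t(x))^\top\Sigma_{G_t}^{-1}(y_t-H_t(x))\le L^3\|x-x_t\|^2 .
\]
Combining this with \ref{zeta}, \ref{upper_GtMt} and the pseudo-decreasing property of $\zeta_G$, we get two-sided bounds on $G_t$ by radial functions of $\|x-x_t\|$. Consequently, for every $\varrho\in(0,\zeta_G(0)\varepsilon]$, there are radii $0<r_\varrho\le R_\varrho<\infty$, independent of $t$, with
\[
B(x_t,r_\varrho)\subseteq C_{t,\varrho}\subseteq B(x_t,R_\varrho),
\]
where $R_\varrho\to\infty$ as $\varrho\to0$ and $r_\varrho\to\infty$ only via the lower bound $\zeta_G$. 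In the same way, \ref{Obs2} and the modulus $\varphi$ (respectively the Lipschitz bound coming from $F_{t+1}\in\mathcal C_{\mathrm{mixed}}(\R^d;d,L)$ in part (ii)) give
\[
\|F_{t+1}(x)-x_{t+1}\|\le \varphi(R)+\Xi \qquad\text{for all } x\in B(x_t,R).
\]

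\textbf{Verification of \ref{A1}.} Take $D_t\coloneqq B(x_t,\rho)$ for a radius $\rho$ to be fixed. For $x\in B(x_t,R)$, the density $m_{t+1}(x,\cdot)$ is bounded above by $c_M\coloneqq L\,\det(\Sigma)^{-1/2}/C_M$-type constants, uniformly in $t$, using \ref{mat}, \ref{upper_GtMt} and $C_{M_{t+1}}\ge\int\zeta_M>0$. On any ball $B(x_{t+1},R')$, the density is bounded below by a constant depending only on $(R,R',\Xi,\varphi,L)$, by \ref{zeta} and continuity/positivity of $\zeta_M$, together with $C_{M_{t+1}}\le L\int(1+\|v\|^2)^{-(d+\nu_M)/2}\dd v<\infty$. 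The potential $G_t$ is likewise bounded above and below on $C_{t,\varrho}$. Hence \ref{A1b} holds with $\nu_{t+1}$ equal to Lebesgue measure restricted to $B(x_{t+1},R_\varrho)$, and with $\xi$ the ratio of the upper and lower constants.

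For \ref{A1a}, note that $Q_{t+1}(x,D_{t+1})\ge \inf_{D_t}G_t\cdot\inf m_{t+1}\cdot\lambda_d(B(0,\rho))$. The delicate point is the constraint $\varrho<\underline q^2$. I would fix $\rho$ first, which fixes $\underline q$ independently of $\varrho$. Then I would choose $\varrho$ small, letting $\xi$ deteriorate as needed, since $\xi$ is unconstrained. The order of quantifiers therefore works out.

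\textbf{Verification of \ref{A2}.} Here I would use Proposition~\ref{prop:A1p-A2} with some $\bar\varrho$. One has $c_{\bar\varrho}\ge \bar\varrho\,\inf_{x\in B(x_t,R_{\bar\varrho})}M_{t+1}(x,B(x_{t+1},r_{\bar\varrho}))>0$ by the density lower bound above. The main obstacle is the requirement $\varrho<c_{\bar\varrho}$ together with $\inf_{x\in C_{t,\varrho}}Q_{t+1}(x,C_{t+1,\bar\varrho})>0$. This is again handled by choosing $\varrho$ after $\bar\varrho$: the second infimum is at least $\varrho$ times a positive constant depending on $R_\varrho$, which is positive for every fixed $\varrho$. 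The final hypothesis of the proposition, with $D_t=B(x_t,\rho)$, follows in the same way, since $C_{t,\varepsilon_\upsilon}$ lies in a ball of fixed radius around $x_t$. To obtain \ref{A2} with sets satisfying \ref{A1}, I would take $D_t$ to be the same balls as in \ref{A1}, possibly after enlarging $\rho$, which only improves $\underline q$.

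\textbf{Verification of \ref{A3} and \ref{B1}.} For \ref{A3}, I would rather replace the balls by cubes $D_t=x_t+[-\rho,\rho]^d$; the arguments above are unchanged, and Proposition~\ref{prop:V_HK}\ref{HK_4} gives $\mathcal V_{\R^d}(\ind_{D_t})\le 3^d-1$. For \ref{B1} in part (ii), I would bound $\mathcal V_{\R^d}(q_{t+1}(\cdot,x'))$ by $V_{\R^d}$ via Proposition~\ref{prop:variationHK}. Expanding $\partial^u_1[G_t(x)m_{t+1}(x,x')]$ by the Leibniz and Faà di Bruno formulas, each term is a product of derivatives of $\zeta_{G_t}$ and $\zeta_{M_{t+1}}$ of order $\le d$ with polynomial factors from the quadratic forms. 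By \ref{cond_zeta} and the bounded mixed derivatives of $H_t$ and $F_{t+1}$, each term is bounded by
\[
C(1+\|x-x_t\|^2)^{-(d+\nu_G)/2}(1+\|x'-F_{t+1}(x)\|^2)^{-(d+\nu_M)/2}.
\]
Integrating over $x'$ first and then over $x$ gives a constant uniform in $t$, because $\nu_G,\nu_M>0$. The lower-dimensional faces ($u\ne\{1,\dots,d\}$) do not appear here since $\mathcal I_{\R^d}=\{1,\dots,d\}$. The limit condition defining $\mathcal C_{\mathrm{HK}}$ follows by dominated convergence. I expect this derivative bookkeeping, and checking that the $(1+z)^{i/2}$ weights exactly absorb the linear growth of the inner derivatives, to be the most tedious step.
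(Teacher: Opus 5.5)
Your proposal is correct and follows essentially the paper's route: level sets of $G_t$ sandwiched between cubes around $x_t=H_t^{-1}(y_t)$, and a uniform lower bound on $q_{t+1}$ over products of such cubes, with the cube radius fixed before $\varrho$ so that $\varrho<\underline q^2$. The remaining steps also match: Proposition~\ref{prop:A1p-A2} for \ref{A2}, part~\ref{HK_4} of Proposition~\ref{prop:V_HK} for \ref{A3}, and Leibniz/Fa\`a di Bruno bounds combined with Proposition~\ref{prop:variationHK} for \ref{B1}.

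There are two small slips, neither of which affects the argument. First, enlarging $\rho$ does not improve $\underline q$ in general, since the infimum in \ref{A1a} is then taken over more points $x$, where $G_t$ is smaller; but no enlargement is needed, because the last hypothesis of Proposition~\ref{prop:A1p-A2} holds for every fixed $\rho$. Second, the lower-dimensional faces do enter $V_{\R^d\cap(-\infty,b_n]}$ in the limit condition defining $\mathcal{C}_{\mathrm{HK}}$, so your dominated convergence step must also show that their contributions vanish; your bound gives this because it decays as $b_n^{-u}\to\infty$, which is the content of Lemma~\ref{lemma:immersion}\ref{ii-lemma:immersion} in the paper.
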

\begin{proof}
See Appendix~\ref{p-prop:example}.
\end{proof}

For bounded observations, we have the following result which relies on weaker assumptions on the function $F_{t+1}$.

\begin{proposition}\label{prop:example2}
Let $G_t$ and $M_{t+1}$ be as
defined in \eqref{eq:Gt_def}--\eqref{eq:Mt_def2}, and
assume that \ref{Obs} holds and that
$\sup_{t\geq 1}\|y_t\|\leq\Xi$, with $\Xi$ as in
\ref{Obs}.
\begin{enumerate}[label=(\roman*)]
\item\label{i-prop:example2} If \ref{Model1} holds and there exists a
coordinate-wise non-decreasing function
$\varphi\colon[0,\infty)^2\to[0,\infty)$ such that
\begin{align}\label{eq:bound_derive_ht0}
\|F_{t+1}(x_1)-F_{t+1}(x_2)\|
\leq\varphi\big(\|x_1-x_2\|,\|H_t(x_1)\|\big),
\qquad\forall x_1,x_2\in\R^d,
\quad\forall t\geq 1,
\end{align}
then \ref{A3} holds.
\item\label{ii-prop:example2} If \ref{Model} holds and, for any $t\geq 1$,
$F_{t+1}\in\mathcal{C}_{\mathrm{mixed}}(\R^d;d,\infty)$
and, for any non-empty sets
$u\subseteq\{1,\dots,d\}$,
\begin{align}\label{eq:bound_derive_ht}
\|\partial^uF_{t+1}(x)\|
\leq L\big(1+\|H_t(x)\|^{\nu_G(1-1/L)/d}\big),
\qquad\forall x\in\R^d,
\end{align}
with $(\nu_G,L)$ as in \ref{Model}, then \ref{A3} and
\ref{B1} hold.
\end{enumerate}
\end{proposition}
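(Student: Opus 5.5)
Since $\sup_t\|y_t\|\le\Xi$ and $H_t$ is $L$-bi-Lipschitz, the plan is to show that the weaker hypotheses on $F_{t+1}$ are only ever used on sets where $\|H_t(x)\|$ is bounded, or where the relevant integrands are damped by the potential. On such sets the bound \eqref{eq:bound_derive_ht0} becomes a bound of the form $\varphi(\cdot,c)$, as in Proposition~\ref{prop:example}. I will therefore follow the proof of Proposition~\ref{prop:example} step by step and localize each use of the regularity of $F_{t+1}$.

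\emph{Part (i).} I take the sets $D_t:=H_t^{-1}(\bar B(y_t,R))$ for some large $R$, and consider the super-level sets $C_{t,\varrho}$. By \ref{zeta} and \ref{upper_GtMt}, $C_{t,\varrho}\subseteq\{x:\|H_t(x)-y_t\|\le R_\varrho\}$. By the bi-Lipschitz property and \ref{Obs1}, this set lies in the ball of radius $LR_\varrho$ around $x_{y_t}:=H_t^{-1}(y_t)$, and it contains a ball around $x_{y_t}$ of radius $\asymp\varrho$-dependent. On $C_{t,\varrho}\cup D_t$ we have $\|H_t(x)\|\le\Xi+R'$, so \eqref{eq:bound_derive_ht0} gives $\|F_{t+1}(x)-F_{t+1}(x_{y_t})\|\le\varphi(LR',\Xi+R')$. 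Combined with \ref{Obs2}, this yields $\|F_{t+1}(x)-x_{y_{t+1}}\|\le c$ uniformly in $t$. I then use the pseudo-decreasing lower bound $\zeta_M$ and the upper bound on $\zeta_{M_{t+1}}$ to get two-sided bounds on $m_{t+1}(x,x')$ for $x\in C_{t,\varrho}$ and $x'$ in the bounded set $C_{t+1,\varrho}$. Taking $\nu_{t+1}=\lambda_d$ gives \ref{A1b} with a constant $\xi$ uniform in $t$, and \ref{A1a} follows as well. For \ref{A2} I invoke Proposition~\ref{prop:A1p-A2}: its hypotheses $\inf_x Q_{t+1}(x,C_{t+1,\bar\varrho})>0$ over $x\in C_{t,\varrho}$ follow from the same localized bounds. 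I choose $\bar\varrho$ and $\varrho<c_{\bar\varrho}$ as in that proof; since $G_t\le 1$, it suffices that $c_{\bar\varrho}$ does not shrink, and $\varrho$ is free. Finally, \ref{A3} holds because $D_t$ is a bi-Lipschitz preimage of a ball, hence a bounded convex-like set. Here I would instead choose $D_t$ to be a hyper-rectangle contained in $C_{t,\varrho}$ and centred at $x_{y_t}$, so that Proposition~\ref{prop:V_HK}\ref{HK_4} gives $\mathcal V(\ind_{D_t})\le 3^d-1$.

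\emph{Part (ii).} \ref{A3} is obtained as in part (i). Indeed, \eqref{eq:bound_derive_ht} with $|u|=1$ implies a local Lipschitz bound of the type \eqref{eq:bound_derive_ht0}, since $\|H_t\|$ is bounded along segments in bounded regions via the bi-Lipschitz property. The main work, and the main obstacle, is \ref{B1}. By Proposition~\ref{prop:variationHK} I need to bound $\int\sum_u\int|\partial_1^u q_{t+1}(z,x')|\,dz\,dx'$. For that I expand $\partial^u[G_t(x)m_{t+1}(x,x')]$ by the Leibniz and Faà di Bruno rules. Each term is a product of the following factors:
\begin{itemize}
\item derivatives $\zeta^{(i)}_{G_t}$ times polynomial factors in $\partial H_t$ and $H_t(x)-y_t$, which by \ref{cond_zeta} are $\lesssim(1+\|H_t(x)-y_t\|^2)^{-(d+\nu_G)/2}$;
\item derivatives of $m_{t+1}$ in $x$, carrying factors $\partial^vF_{t+1}(x)$.
\end{itemize}
Integrating in $x'$ first removes the $m$-part, which is integrable with a bounded constant by \ref{cond_zeta}; what remains is a polynomial in $\|\partial^vF_{t+1}(x)\|$ of total degree at most $d$. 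By \eqref{eq:bound_derive_ht} this is $\lesssim 1+\|H_t(x)\|^{\nu_G(1-1/L)}$. Using $\|H_t(x)\|\le\|H_t(x)-y_t\|+\Xi$ and the change of variables $x\mapsto H_t(x)$ (bi-Lipschitz, hence with Jacobian bounded above and below), the integral is bounded by $\int(1+\|w\|)^{\nu_G(1-1/L)-d-\nu_G}\,dw$ over a $d$-dimensional image. This is finite because $\nu_G/L>0$. The exponent bookkeeping is delicate: one must check that every Faà di Bruno term has total $F$-degree at most $d$, and that the lower-dimensional faces $u\subsetneq$ anchored at $\bar x=\infty$ do not contribute, since $\mathcal I_\setX=\{1,\dots,d\}$. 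One must also verify the limit condition in the definition of $\mathcal C_{\mathrm{HK}}$, which follows by dominated convergence.
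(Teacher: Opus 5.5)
Your proposal is correct and follows essentially the paper's route. In part (i) you use hypercubes $D_t$ around $x_t=H_t^{-1}(y_t)$ compared with the level sets $C_{t,\varrho}$, get uniform two-sided bounds on $m_{t+1}$ by localizing \eqref{eq:bound_derive_ht0} through $\sup_t\|y_t\|\leq\Xi$ and \ref{Obs2}, and then apply Proposition~\ref{prop:A1p-A2} and part~\ref{HK_4} of Proposition~\ref{prop:V_HK}. In part (ii) you use a Leibniz/Fa\`a di Bruno pointwise bound in which the polynomial factor in the $\partial^uF_{t+1}(x)$ (degree at most $d$) is $\lesssim(1+\|y_t-H_t(x)\|)^{\nu_G(1-1/L)}$ and is absorbed by the decay of $\zeta_{G_t}$, followed by Tonelli and dominated convergence on the lower-dimensional faces. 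The only difference is that the paper bounds $\|y_t-H_t(x)\|\geq\|x-x_t\|/L$ directly instead of changing variables onto $H_t(\R^d)$, which is simpler when $r>d$.

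Some points need tightening.
\begin{itemize}
\item Commit to hyper-rectangles for $D_t$ from the start. The ball preimage is not a usable alternative: for $H_t(x)=x$ and $d\geq 2$, the indicator of a Euclidean ball has infinite Hardy--Krause variation.
\item Say explicitly that the \ref{A1b} level is chosen below $\underline{q}^2$ after $D_t$, and hence $\underline{q}$, has been fixed.
\item Note that the same localized bounds give $\inf_t\inf_{x\in C_{t,\varepsilon_\upsilon}}Q_{t+1}(x,D_{t+1})>0$, which the second part of Proposition~\ref{prop:A1p-A2} requires.
\end{itemize}
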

\begin{proof}
See Appendix~\ref{p-prop:example}.
\end{proof}

\begin{remark}
 Under
\eqref{eq:bound_derive_ht}, condition
\eqref{eq:bound_derive_ht0} holds with
$\varphi(z_1,z_2)=2Ldz_1(1+z_2+Lz_1)^{\nu_G(1-1/L)/d}$
(see Lemma~\ref{lemma:tech_th}).

\end{remark}

\begin{remark}\label{rem:unbounded}
For any $t\geq 1$, write
$H_t=(H_{t,1},\dots,H_{t,r})$. Then, together with the results presented in Section \ref{sec:main}, Propositions \ref{prop:example} and  \ref{prop:example2} allow to obtain perturbation bounds for approximation of $\hat{\eta}_t(H_{t,i})$ for any $i\in\{1,\dots,r\}$. To see this note that,  under \ref{Model1},
for any $i\in\{1,\dots,r\}$, any $x\in\R^d$ and any
$t\geq 1$, we have
\begin{align*}
|H_{t,i}(x)-y_{t,i}|\,G_t(x)
\leq\|y_t-H_t(x)\|\,G_t(x)
\leq\sup_{s\geq 1}\sup_{x'\in\R^d}
\big(\|y_s-H_s(x')\|\,G_s(x')\big)=:K<\infty.
\end{align*}
We now let $i\in\{1,\dots,r\}$ and $t\geq 1$ be fixed, and $f^\pm_{t,i}\coloneqq
(H_{t,i}-y_{t,i})^\pm/K$, where $a^\pm$ denotes the
positive and negative parts of $a\in\R$. Then,
$f^\pm_{t,i}G_t\in\mathcal{B}_1(\setX)$, so that the
condition imposed on $f$ in part~\ref{tv_unbounded} of
Proposition~\ref{prop:filter_TV} and in part~\ref{K_unbounded} of
Proposition~\ref{prop:filter} holds for the functions $f^+_{t,i}$
and $f^-_{t,i}$, with $a_t=0$. Since
$H_{t,i}=y_{t,i}+K(f^+_{t,i}-f^-_{t,i})$ and the map
$f\mapsto\Psi_t(\mu_t)(f)-\widehat{\eta}_t(f)$ is linear
and vanishes on constant functions, the bounds of
these propositions extend, up to a factor $2K$, to the
unbounded functions $f=H_{t,i}$. In particular, when   $H_t(x)=Ax$ for some full rank column matrix $A$ (e.g.~$H_t(x)=x$), these
estimates apply to the filtering mean.

\end{remark}

\begin{remark}\label{rem:constant}
The constants in the perturbation bounds of
Section~\ref{sec:main} are uniform with respect to some
classes of models discussed in this section. To expand on this point, let $a\in(0,1]$, for any $t\geq 1$   let
$E_t\coloneqq\{x\in\R^d:\ \|x-H_t^{-1}(y_t)\|_\infty\leq 1\}$,
  and   let $\varpi$ be as in
Theorems~\ref{thm:main_TV} and~\ref{thm:main}.
\begin{enumerate}[label=(\roman*)]
\item\label{rem1} Under the assumptions of part~\ref{i-prop:example} of
Proposition~\ref{prop:example} or of part~\ref{i-prop:example2} of
Proposition~\ref{prop:example2}, and if
$\eta_1(E_1)\geq a$, \ref{A2} holds with the sets
$D_t=E_t$, and the constants of
Theorem~\ref{thm:main_TV} and
Proposition~\ref{prop:filter_TV}, the constant
$\underline{q}$ in \ref{A1a} and a positive lower bound
on $\inf_{t\geq 1}\eta_t(G_t)$ can be chosen to depend
on $G_t$, $M_{t+1}$ and
$\eta_1$ only through $d$, $r$, the constants
$(\varepsilon,\nu_G,\nu_M,L,\Xi)$, the functions
$\zeta_G$ and $\zeta_M$ in \ref{Model1}, the function
$\varphi$ in these propositions, and $a$ (and on
$\varpi$).
\item\label{rem2} Under the assumptions of part~\ref{ii-prop:example} of
Proposition~\ref{prop:example} or of part~\ref{ii-prop:example2} of
Proposition~\ref{prop:example2}, and if
$\eta_1(E_1)\geq a$, the same holds for the constants
of Theorems~\ref{thm:main_TV} and~\ref{thm:main} and
of Propositions~\ref{prop:filter_TV}
and~\ref{prop:filter}, with a dependence only on $d$,
$r$, $(\varepsilon,\nu_G,\nu_M,L,\Xi)$, $\zeta_G$,
$\zeta_M$ and $a$ (and on $\varpi$).
\end{enumerate}
 A proof of the above assertions is given in Appendix~\ref{p-rem:constant}.
\end{remark}

The text result is useful to check that condition \ref{Obs2} holds.

\begin{lemma}\label{lemma:Obs2}
Let $G_t$ and $M_{t+1}$ be as
defined in \eqref{eq:Gt_def}--\eqref{eq:Mt_def2}, and
assume that \ref{Model1} and \ref{Obs1} hold and that
the functions $F_{t+1}$ satisfy one of the
four conditions stated in
Propositions~\ref{prop:example}
and~\ref{prop:example2}. Assume in addition that
\begin{enumerate}[label=(\roman*)]
\item $\sup_{t\geq 1}\|y_t\|<\infty$;
\item $y_1\in H_{t}(\R^d)$ for any $t\geq 2$, and
$\sup_{t\geq 1}\|H_{t+1}^{-1}(y_1)\|<\infty$;
\item $\sup_{t\geq 1}\|F_{t+1}(H_t^{-1}(y_1))\|<\infty$.
\end{enumerate}
Then \ref{Obs2} holds.
\end{lemma}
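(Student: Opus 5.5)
We need to bound $\|H^{-1}_{t+1}(y_{t+1})-F_{t+1}(H_t^{-1}(y_t))\|$ uniformly in $t$. Write $x_t\coloneqq H_t^{-1}(y_t)$, well defined by \ref{Obs1} and injectivity of $H_t$ (part \ref{lips} of \ref{Model1}), and $z_t\coloneqq H_t^{-1}(y_1)$, well defined for $t\geq 2$ by assumption (ii); set $z_1\coloneqq x_1$. The plan is to compare everything to the reference points $z_t$ using the triangle inequality:
\begin{align*}
\|x_{t+1}-F_{t+1}(x_t)\|
\leq\|x_{t+1}\|+\|F_{t+1}(z_t)\|+\|F_{t+1}(x_t)-F_{t+1}(z_t)\|.
\end{align*}

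\emph{Step 1: boundedness of $x_t$ and of $\|x_t-z_t\|$.} Since $H_t$ is $L$-bi-Lipschitz, $\|x_t-z_t\|\leq L\|H_t(x_t)-H_t(z_t)\|=L\|y_t-y_1\|\leq 2L\sup_s\|y_s\|$ by (i). Then $\|x_{t}\|\leq\|z_t\|+\|x_t-z_t\|$ is bounded uniformly in $t$ by (ii) (the case $t=1$ being trivial). Also $\|F_{t+1}(z_t)\|$ is bounded by (iii) (for $t=1$, $z_1=x_1=H_1^{-1}(y_1)$, which is the $t=1$ instance of (iii)).

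\emph{Step 2: the increment of $F_{t+1}$.} It remains to bound $\|F_{t+1}(x_t)-F_{t+1}(z_t)\|$ uniformly, by a case analysis over the four conditions of Propositions~\ref{prop:example} and~\ref{prop:example2}. Under part~\ref{i-prop:example} of Proposition~\ref{prop:example} this is at most $\varphi(\|x_t-z_t\|)\leq\varphi(2L\sup_s\|y_s\|)$ by monotonicity of $\varphi$. Under part~\ref{i-prop:example2} of Proposition~\ref{prop:example2} it is at most $\varphi(\|x_t-z_t\|,\|H_t(x_t)\|)=\varphi(\|x_t-z_t\|,\|y_t\|)$, again uniformly bounded by coordinate-wise monotonicity and (i). Under part~\ref{ii-prop:example}, $F_{t+1}\in\mathcal{C}_{\mathrm{mixed}}(\R^d;d,L)$ gives bounded first-order partial derivatives (take $u$ singletons), so $F_{t+1}$ is $L d$-Lipschitz (up to a dimensional constant), hence the first case applies with linear $\varphi$. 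Under part~\ref{ii-prop:example2}, use the remark following Proposition~\ref{prop:example2} (via Lemma~\ref{lemma:tech_th}), which shows \eqref{eq:bound_derive_ht} implies \eqref{eq:bound_derive_ht0} with an explicit non-decreasing $\varphi$, reducing to the second case; note that $\|H_t(x_t)\|=\|y_t\|$ is bounded, which is exactly what is needed.

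Combining the three bounds gives $\sup_t\|x_{t+1}-F_{t+1}(x_t)\|\leq\Xi<\infty$, i.e.\ \ref{Obs2}. The only delicate point is Step 2 in the derivative-based cases, where one must make sure the derivative growth bound is evaluated along the segment (or argument) where $\|H_t\|$ stays bounded; invoking \eqref{eq:bound_derive_ht0} with the anchor $x_1=x_t$ (so that only $\|H_t(x_t)\|=\|y_t\|$ enters) sidesteps this, so I expect that ordering of arguments to be the main thing to get right.
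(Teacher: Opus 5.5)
Your proposal is correct and follows the paper's proof: the same triangle inequality through the reference points $H_t^{-1}(y_1)$, the same bi-Lipschitz estimate $\|H_t^{-1}(y_t)-H_t^{-1}(y_1)\|\leq L\|y_t-y_1\|$, and the same reduction of all four conditions on $F_{t+1}$ to a bound of the form \eqref{eq:bound_derive_ht0}, with Lemma~\ref{lemma:tech_th} used for the last condition. The only cosmetic difference is the anchor: you apply \eqref{eq:bound_derive_ht0} at $H_t^{-1}(y_t)$, so its second argument is $\|y_t\|$, whereas the paper applies it at $H_t^{-1}(y_1)$, so its second argument is $\|y_1\|$; both are bounded by assumption~(i).
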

\begin{proof}
Under the assumptions of the lemma, there exist a
coordinate-wise non-decreasing function
$\varphi\colon[0,\infty)^2\to[0,\infty)$ such that
\eqref{eq:bound_derive_ht0} holds and a constant
$L\in[1,\infty)$ such that $H_t^{-1}$ is $L$-Lipschitz
on $H_t(\R^d)$ for any $t\geq 1$. Since
$y_t\in H_t(\R^d)$ for any $t\geq 1$ by \ref{Obs1},
we obtain, for any $t\geq 1$,
\begin{align*}
\big\|H^{-1}_{t+1}(y_{t+1})-F_{t+1}\big(H_t^{-1}(y_t)\big)\big\|
\leq{}&\|H^{-1}_{t+1}(y_1)\|
+\big\|F_{t+1}\big(H_t^{-1}(y_1)\big)\big\|\\
&+L\|y_{t+1}-y_1\|
+\varphi\big(L\|y_1-y_t\|,\|y_1\|\big),
\end{align*}
and the result follows from (i)--(iii).
\end{proof}

\subsection{Relaxing the observability
condition\label{sub:compatibility}}

\subsubsection{Linear observation spaces\label{sub:model_obs}}

Assume that $H_t(\R^d)$ is a $d$-dimensional linear
subspace of $\R^r$ for any $t\geq 1$; for instance,
$H_t(x)=Ag(x)$ with $A\in\R^{r\times d}$ a matrix of rank $d$
and $g\colon\R^d\to\R^d$ a surjective function (e.g.\
$g(x)=x$). In this scenario, as explained below, condition \ref{Obs1} can often be dropped. To distinguish
the original parameters and observations from those of an equivalent
representation of the model, we use the following notation. Let $z_t$
be a sequence in $\R^r$, $S_{G_t}$ a
sequence in $\mathcal{S}_r^+$ and
$S_{M_{t+1}}$ a sequence in
$\mathcal{S}_d^+$ and, for any $t\geq 1$, let
$f_t\colon\R^d\to\R^r$,
$\xi_{G_t}\colon[0,\infty)\to(0,1]$,
$\xi_{M_{t+1}}\colon[0,\infty)\to(0,\infty)$ and
$\tilde{f}_{t+1}\colon\R^d\to\R^d$ be measurable
functions, with $f_t(\R^d)$ a $d$-dimensional linear
subspace. Throughout this subsection,
$G_t$ and $M_{t+1}$ are
defined by \eqref{eq:Gt_def}--\eqref{eq:Mt_def2} with
\begin{align}\label{eq:def_model1}
\zeta_{G_t}=\xi_{G_t},\quad y_t=z_t,\quad
H_t=f_t,\quad\Sigma_{G_t}=S_{G_t},
\qquad\forall t\geq 1,
\end{align}
and
\begin{align}\label{eq:def_model1_M}
\zeta_{M_{t+1}}=\xi_{M_{t+1}},\quad
F_{t+1}=\tilde{f}_{t+1},\quad
\Sigma_{M_{t+1}}=S_{M_{t+1}},
\qquad\forall t\geq 1.
\end{align}
We assume that these parameters satisfy \ref{Model1}
(respectively, \ref{Model}) as well as the conditions on
$F_{t+1}$ in part~\ref{i-prop:example} (respectively,
part~\ref{ii-prop:example}) of Proposition~\ref{prop:example}. If
$\sup_{t\geq 1}\|z_t\|<\infty$, the weaker conditions
of Proposition~\ref{prop:example2} can be used
instead, in which case Proposition~\ref{prop:example}
should be replaced by Proposition~\ref{prop:example2}
in what follows.

\subsubsection{An equivalent parametrization via
projections\label{sub:gen}}

For $t\geq 1$, the set
$V_t\coloneqq\{S_{G_t}^{-1/2}f_t(x):\ x\in\R^d\}$ is
a $d$-dimensional linear subspace of $\R^r$, and we let
$P_t\in\R^{r\times r}$ be the orthogonal projection
onto $V_t$, and  
\begin{align*}
\widehat{z}_t\coloneqq P_tS_{G_t}^{-1/2}z_t,\qquad
w_t\coloneqq(I_r-P_t)S_{G_t}^{-1/2}z_t,\qquad
\widehat{f}_t\coloneqq S_{G_t}^{-1/2}f_t.
\end{align*}
Since $\widehat{z}_t-\widehat{f}_t(x)\in V_t$ and
$w_t\in V_t^\perp$, Pythagoras' theorem gives
\begin{align}\label{eq:GH}
G_t(x)=\xi_{G_t}\Big(\|\widehat{z}_t-\widehat{f}_t(x)\|^2
+\|w_t\|^2\Big).
\end{align}
Hence, letting
$\widehat{\zeta}_{G_t}(z)\coloneqq\xi_{G_t}(z+\|w_t\|^2)$
for $z\in[0,\infty)$, the functions
$G_t$ are also given by
\eqref{eq:Gt_def} with
\begin{align}\label{eq:model_redef}
\zeta_{G_t}=\widehat{\zeta}_{G_t},\quad
y_t=\widehat{z}_t,\quad
H_t=\widehat{f}_t,\quad
\Sigma_{G_t}=I_r,
\qquad\forall t\geq 1.
\end{align}
By construction, $\widehat{z}_t\in\widehat{f}_t(\R^d)$
for any $t\geq 1$, so that \ref{Obs1} holds for this
representation of the model. This representation also satisfies \ref{Model1}
(respectively, \ref{Model}) if the components $w_t$ of
the observations orthogonal to the observation spaces
are bounded:

\begin{proposition}\label{prop:equiv_Model}
Let $G_t$ be as defined by
\eqref{eq:Gt_def} and \eqref{eq:model_redef}, and let
$M_{t+1}$ be as defined by
\eqref{eq:Mt_def}--\eqref{eq:Mt_def2} and
\eqref{eq:def_model1_M}. If
$W\coloneqq\sup_{t\geq 1}\|w_t\|<\infty$, then
\ref{Model1} (respectively, \ref{Model}) holds.
\end{proposition}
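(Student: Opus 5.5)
The plan is to check each item of \ref{Model1} (respectively \ref{Model}) for the new parametrization \eqref{eq:model_redef}, keeping the transitions \eqref{eq:def_model1_M} unchanged. The conditions on $\zeta_{M_{t+1}}$, $\Sigma_{M_{t+1}}$ and $F_{t+1}$ are those of the original model, so they hold by assumption. Since $\Sigma_{G_t}=I_r$, condition \ref{mat} is immediate. What remains is to verify the conditions involving $H_t=\widehat f_t$ and $\zeta_{G_t}=\widehat\zeta_{G_t}$, and to produce new constants $(\varepsilon',\nu_G',L')$ and a new minorant $\zeta_G'$. These should depend only on the old constants and on $W$.

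\textbf{Observation map.} Because $1/L\le\sigma_{\min}(S_{G_t})\le\sigma_{\max}(S_{G_t})\le L$, the matrix $S_{G_t}^{-1/2}$ has norm at most $L^{1/2}$ and inverse norm at most $L^{1/2}$. Composing with the $L$-bi-Lipschitz map $f_t$ therefore gives an $L^{3/2}$-bi-Lipschitz map $\widehat f_t$. Under \ref{Model}, the mixed partial derivatives satisfy $\partial^u\widehat f_{t}=S_{G_t}^{-1/2}\partial^u f_t$. Each coordinate of this is then bounded by $\sqrt r\,L^{1/2}L$, which gives $\widehat f_t\in\mathcal C_{\mathrm{mixed}}(\R^d;r,L')$.

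\textbf{Potential function, lower bound.} Set $\widehat\zeta_{G_t}(z)=\xi_{G_t}(z+\|w_t\|^2)$, which takes values in $(0,1]$. For \ref{zeta}, define $\zeta_G'(z)\coloneqq\zeta_G(z+W^2)$. This is continuous, takes values in $(0,1]$, and is $\varepsilon$-pseudo-decreasing because it is a shift of such a function. It also needs to lie below $\widehat\zeta_{G_t}(z)$. Pseudo-decrease of $\zeta_G$ gives $\zeta_G(z+\|w_t\|^2)\ge\varepsilon\zeta_G(z+W^2)$, so I would take $\zeta_G'(z)=\varepsilon\,\zeta_G(z+W^2)$. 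Then
\[
\zeta_G'(z)\le\zeta_G(z+\|w_t\|^2)\le\xi_{G_t}(z+\|w_t\|^2)=\widehat\zeta_{G_t}(z).
\]

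\textbf{Potential function, upper bounds.} For \ref{upper_GtMt}, the bound $\widehat\zeta_{G_t}(z)\le L(1+z+\|w_t\|^2)^{-(d+\nu_G)/2}$ needs care with the sign of the exponent. If $d+\nu_G\ge0$, I bound it by $L(1+z)^{-(d+\nu_G)/2}$. If $d+\nu_G<0$ (excluded anyway, since $\nu_G>-d$), I would use $1+z+W^2\le(1+W^2)(1+z)$ and absorb the factor into $L'$. For \ref{cond_zeta}, the chain rule gives $\widehat\zeta^{(i)}_{G_t}(z)=\xi^{(i)}_{G_t}(z+\|w_t\|^2)$. The hypothesis then bounds $(1+z+\|w_t\|^2)^{i/2}|\widehat\zeta^{(i)}_{G_t}(z)|$ by $L(1+z+\|w_t\|^2)^{-(d+\nu_G)/2}$. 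Since $(1+z)^{i/2}\le(1+z+\|w_t\|^2)^{i/2}$ and $\nu_G>0$, the required bound follows with the same $L$.

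The main, and rather mild, obstacle is bookkeeping. The new constants must be uniform in $t$, which is exactly where $W<\infty$ enters. The pseudo-decreasing minorant must be chosen carefully as above, because naively keeping $\zeta_G$ fails: the shift by $\|w_t\|^2$ lowers the potential.
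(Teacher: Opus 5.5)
Your proposal is correct and follows the paper's proof. You show that $\widehat f_t$ is $L^{3/2}$-bi-Lipschitz with bounded mixed derivatives and use the minorant $\varepsilon\,\zeta_G(\cdot+W^2)$ obtained from pseudo-decrease; the upper and derivative bounds on $\widehat\zeta_{G_t}$ then follow from $d+\nu_G>0$ and $(1+z)^{i/2}\le(1+z+\|w_t\|^2)^{i/2}$, exactly as in the paper (your mixed-derivative constant $\sqrt{r}\,L^{3/2}$ is even slightly sharper than the paper's $L^{3/2}r$, and the case $d+\nu_G<0$ you mention never arises).
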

\begin{proof}
Let $(\varepsilon,\nu_G,\nu_M,L)$, $\zeta_G$ and
$\zeta_M$ be as in \ref{Model1} (respectively,
\ref{Model}) for the original parameters
\eqref{eq:def_model1}--\eqref{eq:def_model1_M}. The
conditions on $\Sigma_{M_{t+1}}$ and
$\zeta_{M_{t+1}}$ are unchanged. Since
$\Sigma_{G_t}=I_r$ in \eqref{eq:model_redef}, the
first half of \ref{mat} holds. Since
$1/L\leq\sigma_{\min}(S_{G_t})\leq
\sigma_{\max}(S_{G_t})\leq L$, the functions
$\widehat{f}_t=S_{G_t}^{-1/2}f_t$ are
$L^{3/2}$-bi-Lipschitz and thus, under \ref{Model},
belong to $\mathcal{C}_{\mathrm{mixed}}(\R^d;r,L^{3/2}r)$.
Next, since $\zeta_G$ is $\varepsilon$-pseudo-decreasing,
\begin{align*}
\widehat{\zeta}_{G_t}(z)=\xi_{G_t}(z+\|w_t\|^2)
\geq\zeta_G(z+\|w_t\|^2)
\geq\varepsilon\,\zeta_G(z+W^2),
\qquad\forall z\in[0,\infty),
\end{align*}
and the function $z\mapsto\varepsilon\zeta_G(z+W^2)$
is continuous and $\varepsilon$-pseudo-decreasing.
Moreover, we have
\[\widehat{\zeta}_{G_t}(z)\leq L(1+z+\|w_t\|^2)^{-(d+\nu_G)/2}
\leq L(1+z)^{-(d+\nu_G)/2},\]
since $d+\nu_G>0$, and,
under \ref{Model}, for any $i\in\{1,\dots,d\}$ we have
\begin{align*}
(1+z)^{i/2}\Big|\frac{\dd^i}{\dd z^i}
\widehat{\zeta}_{G_t}(z)\Big|
\leq(1+z+\|w_t\|^2)^{i/2}
\Big|\frac{\dd^i\xi_{G_t}}{\dd z^i}(z+\|w_t\|^2)\Big|
\leq\frac{L}{(1+z)^{(d+\nu_G)/2}}.
\end{align*}
Hence \ref{Model1} (respectively, \ref{Model}) holds
for the representation \eqref{eq:model_redef}, with
$L$ replaced by $L^{3/2}r$ and with $\zeta_G$ replaced by
$\varepsilon\zeta_G(\cdot+W^2)$.
\end{proof}

By Propositions~~\ref{prop:example} and \ref{prop:equiv_Model}, to check that \ref{A3} holds
(respectively, that \ref{A3} and \ref{B1} hold), it
is therefore enough to check \ref{Obs2} for the
representation \eqref{eq:model_redef}, that is,
\begin{align}\label{eq:O2_cond}
\sup_{t\geq 1}\Big\|
\widehat{f}^{-1}_{t+1}(\widehat{z}_{t+1})
-\tilde{f}_{t+1}\big(\widehat{f}_t^{-1}(\widehat{z}_t)\big)
\Big\|<\infty,
\qquad\text{where}\quad
\widehat{f}_t^{-1}(\widehat{z}_t)
=f_t^{-1}\big(S_{G_t}^{1/2}P_tS_{G_t}^{-1/2}z_t\big).
\end{align}
When $f_t(x)=A_tx$ for some matrix
$A_t\in\R^{r\times d}$ of rank $d$, condition
\eqref{eq:O2_cond} reads
\begin{align*}
\sup_{t\geq 1}\Big\|
\big(A_{t+1}^\top S^{-1}_{G_{t+1}}A_{t+1}\big)^{-1}
A_{t+1}^\top S^{-1}_{G_{t+1}}z_{t+1}
-\tilde{f}_{t+1}\Big(\big(A_t^\top S^{-1}_{G_t}A_t\big)^{-1}
A_t^\top S^{-1}_{G_t}z_t\Big)\Big\|<\infty.
\end{align*}

\subsubsection{Exponential potential functions\label{sub:exp}}

The boundedness condition on $w_t$ is not
needed when $\xi_{G_t}(z)=\exp(-z)$ for any $t\geq 1$. Before proving this claim, 
observe that the Feynman--Kac model does not change when the
potential functions are multiplied by positive
constants. Below we say that two
sequences of functions $G_{1,t}$ and
$G_{2,t}$ are \emph{equivalent} if, for
any $t\geq 1$, there exists $c_t\in(0,\infty)$ such
that $G_{1,t}=c_tG_{2,t}$; equivalent sequences of
potential functions define the same distributions
$\eta_t$ and $\widehat{\eta}_t$.

When $\xi_{G_t}(z)=\exp(-z)$ for any $t\geq 1$, the result stated in \eqref{eq:GH} becomes
\begin{align*}
G_t(x)=\exp\big(-\|\widehat{z}_t-\widehat{f}_t(x)\|^2\big)
\,\exp\big(-\|w_t\|^2\big),
\end{align*}
so that $G_t$ is equivalent to the
sequence of functions defined by \eqref{eq:Gt_def}
with
\begin{align}\label{eq:model_redef2}
\zeta_{G_t}=\xi_{G_t},\quad
y_t=\widehat{z}_t,\quad
H_t=\widehat{f}_t,\quad
\Sigma_{G_t}=I_r.
\end{align}
In \eqref{eq:model_redef2}, the functions
$\zeta_{G_t}$ no longer depend on
$w_t$. Applying
Proposition~\ref{prop:equiv_Model} with $w_t=0$ for
any $t\geq 1$ shows that \ref{Model1} (respectively,
\ref{Model}) holds for the representation
\eqref{eq:model_redef2}, even when
$\sup_{t\geq 1}\|w_t\|=\infty$. If, in addition,
\eqref{eq:O2_cond} holds, then, by
Proposition~\ref{prop:example}, \ref{A3}
(respectively, \ref{A3} and \ref{B1}) holds for this
representation, and hence the results of
Section~\ref{sec:main} apply to the original model.
Thus, for exponential potential functions and linear observation
spaces, \ref{A3} (respectively,
\ref{A3} and \ref{B1}) holds if
\eqref{eq:O2_cond} holds, whether or not the
observations are bounded.

\subsection{Extensions}

\subsubsection{General transition
kernels\label{sub:extension}}

Propositions~\ref{prop:example} and~\ref{prop:example2}
extend, under less explicit conditions, to Markov
transitions not of the form
\eqref{eq:Mt_def2}. Specificity,  let $G_t$ be as in
\eqref{eq:Gt_def} and $M_{t+1}$ as in
\eqref{eq:Mt_def}, for some measurable functions
$m_{t+1}\colon\R^d\times\R^d\to(0,\infty)$. The
model then depends on the observations
$y_t$ and on the parameters
$\zeta_{G_t}$, $H_t$,
$m_{t+1}$ and $\Sigma_{G_t}$,
for which we consider the following two conditions.
\begin{enumerate}[label=\textbf{(P{{\arabic*}'})}]
\setcounter{enumi}{0}
\item\label{Model11} There exist constants
$(\varepsilon,\nu_G,\nu_M,L)\in(0,1]\times(-d,\infty)
\times(0,\infty)\times[1,\infty)$ and two continuous
and $\varepsilon$-pseudo-decreasing functions
$\zeta_G\colon[0,\infty)\to(0,1]$ and
$\zeta_M\colon[0,\infty)\to(0,\infty)$ such that:
\begin{enumerate}[label=\textbf{\alph*.},ref=\theenumi\textbf{\alph*}]
\item $\Sigma_{G_t}$, $H_t$ and $\zeta_{G_t}$ satisfy
the conditions of \ref{Model1} for any $t\geq 1$;
\item for any $t\geq 1$, there exists a measurable
function $F_{t+1}\colon\R^d\to\R^d$ such that, for
any $x,x'\in\R^d$,
\begin{align*}
\frac{1}{L}\,\zeta_M\big(L\|x'-F_{t+1}(x)\|^2\big)
\leq m_{t+1}(x,x')
\leq L\big(1+\|x'-F_{t+1}(x)\|^2\big)^{-\frac{d+\nu_M}{2}}.
\end{align*}
\end{enumerate}
\item\label{Model3} There exist constants
$(\varepsilon,\nu_G,\nu_M,L)\in(0,1]\times(0,\infty)^2
\times[1,\infty)$, two continuous and
$\varepsilon$-pseudo-decreasing functions
$\zeta_G\colon[0,\infty)\to(0,1]$ and
$\zeta_M\colon[0,\infty)\to(0,\infty)$, and
measurable functions $F_{t+1}\colon\R^d\to\R^d$,
$t\geq 1$, such that \ref{Model11} holds and, for any
$t\geq 1$:
\begin{enumerate}[label=\textbf{\alph*.},ref=\theenumi\textbf{\alph*}]
\item $H_t$ and $\zeta_{G_t}$ satisfy the conditions
of \ref{Model};
\item for any $x'\in\R^d$,
$m_{t+1}(\cdot,x')\in\mathcal{C}_{\mathrm{mixed}}(\R^d)$
and, for any non-empty sets
$u\subseteq\{1,\dots,d\}$,
\begin{align*}
\big|\partial^u_1m_{t+1}(x,x')\big|
\leq L\big(1+\|x'-F_{t+1}(x)\|^2\big)^{-\frac{d+\nu_M}{2}},
\qquad\forall x\in\R^d.
\end{align*}
\end{enumerate}
\end{enumerate}
Propositions~\ref{prop:example} and~\ref{prop:example2}
remain valid for this larger class of models if
\ref{Model1} is replaced by \ref{Model11} and
\ref{Model} by \ref{Model3}, the functions
$F_{t+1}$ in \ref{Obs2} and in these
propositions being those of \ref{Model11}. Indeed, in
their proofs (see Appendix~\ref{p-prop:example}), the
Markov transitions enter only through: a lower bound
on $m_{t+1}(x,x')$ when $\|x'-F_{t+1}(x)\|$ is bounded
(Lemma~\ref{lemma:Prelim1_A1}), which follows from the
first inequality of \ref{Model11} and the
pseudo-monotonicity of $\zeta_M$; the bound
$\sup_{x,x'}m_{t+1}(x,x')\leq L$
(Lemma~\ref{lemma:proofA1}); and, for \ref{B1}, the
bound $|\partial_1^u m_{t+1}(x,x')|\leq
L(1+\|x'-F_{t+1}(x)\|^2)^{-(d+\nu_M)/2}$ for any
$u\subseteq\{1,\dots,d\}$
(Lemma~\ref{lemma:Prop1_showB1_ex2}), which is
assumed in \ref{Model3} for $u\neq\emptyset$ and in
\ref{Model11} for $u=\emptyset$. Under \ref{Model3},
this bound replaces the application of
Lemma~\ref{lemma:derivative} to $m_{t+1}$, so that no
smoothness of $F_{t+1}$ is needed for \ref{B1}; only
the constants change. This
extension is used in Section~\ref{sub:SDE}.

\subsubsection{Bounded perturbations of the potential
functions}

The next lemma, used in Section \ref{sub:multi_mod}, allows to establish that conditions \ref{A3} and \ref{B1} hold beyond the class of potential function $G_t$ of the form   \eqref{eq:Gt_def}. 

\begin{lemma}\label{lemma:bounded_pert}
Let $g_t$ be functions of the form
\eqref{eq:Gt_def}, that is,
$g_t(x)=\zeta_{G_t}\big((y_t-H_t(x))^\top
\Sigma_{G_t}^{-1}(y_t-H_t(x))\big)$ for any
$t\geq 1$ and $x\in\R^d$, and let
$M_{t+1}$ be as defined in
\eqref{eq:Mt_def}--\eqref{eq:Mt_def2}. Assume that
the Feynman--Kac model with potential functions
$g_t$ and Markov transitions
$M_{t+1}$ satisfies the assumptions of
part~\ref{i-prop:example} of Proposition~\ref{prop:example} or of
Proposition~\ref{prop:example2}. Let
$b_t\colon\R^d\to[c,1]$, $t\geq 1$, be measurable
functions, for some constant $c\in(0,1]$, and
consider the perturbed potential functions
$G_t\coloneqq b_t\,g_t$, $t\geq 1$. Then:
\begin{enumerate}[label=(\roman*)]
\item the Feynman--Kac model with potential
functions $G_t$ and Markov transitions
$M_{t+1}$ satisfies \ref{A3};
\item if, in addition, the assumptions of
part~\ref{ii-prop:example} of Proposition~\ref{prop:example} or of
part~\ref{ii-prop:example2} of Proposition~\ref{prop:example2} hold, and
$b_t\in\mathcal{C}_{\mathrm{mixed}}(\R^d;1,L_b)$ for any $t\geq 1$ and some constant
$L_b\in[1,\infty)$, then the Feynman--Kac model with potential
functions $G_t$ and Markov transitions
$M_{t+1}$ also satisfies \ref{B1}.
\end{enumerate}
\end{lemma}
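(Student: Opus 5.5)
The plan is to reuse, essentially verbatim, the verification of \ref{A3} and \ref{B1} for the unperturbed model $(g_t,M_{t+1})$ given in Appendix~\ref{p-prop:example}, and to check that each ingredient survives multiplication of the potentials by $b_t\in[c,1]$. The basic observation is that, writing $Q^g_{t+1}(x,\dd x')\coloneqq g_t(x)M_{t+1}(x,\dd x')$, we have
\begin{align*}
c\,Q^g_{t+1}(x,A)\leq Q_{t+1}(x,A)=b_t(x)\,Q^g_{t+1}(x,A)\leq Q^g_{t+1}(x,A),
\end{align*}
and that the super-level sets are nested as $C^g_{t,\varrho}\supseteq C_{t,\varrho}\supseteq C^g_{t,\varrho/c}$. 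Here $C^g_{t,\varrho}$ denotes the super-level sets of $g_t$ and $C_{t,\varrho}$ those of $G_t$.

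\emph{Step 1: \ref{A1}.} I take the same sets $D_t=E_t$ used for the unperturbed model. Then \ref{A1a} holds with $\underline{q}$ replaced by $c\underline{q}$.

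For \ref{A1b}, the two-sided minorization for $Q^g$ was proved on $C^g_{t,\varrho}$ for every small $\varrho$: the proof relies on $g_t$ being bounded below on a compact neighbourhood of $H_t^{-1}(y_t)$ and on $m_{t+1}$ being bounded above and below there. I will redo it on $C_{t,\varrho}\subseteq C^g_{t,\varrho}$. For $x\in C_{t,\varrho}$ and $A\subseteq C_{t+1,\varrho}$, write $Q_{t+1}(x,A)=b_t(x)g_t(x)\int_A m_{t+1}(x,x')\dd x'$. The factor $G_t(x)$ lies in $[\varrho,1]$. The density $m_{t+1}(x,x')$ is bounded above by $L$ and below uniformly, because both $x$ and $x'$ range over bounded sets, namely $C^g$-level sets of the bi-Lipschitz $H_t$ around $H_t^{-1}(y_t)$, and \ref{Obs2} together with the modulus $\varphi$ of $F_{t+1}$ keeps $\|x'-F_{t+1}(x)\|$ bounded. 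Hence $\nu_{t+1}=\lambda_d$ works with a new $\xi$ depending on $\varrho$.

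The constraint $\varrho<\underline{q}^2$ must hold with the new $\underline{q}=c\underline{q}_g$. I will therefore choose $\varrho$ small, which is allowed because the bounds above hold for any fixed $\varrho>0$.

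\emph{Step 2: \ref{A2} and \ref{A3}.} I will apply Proposition~\ref{prop:A1p-A2} to $G_t$. The quantity $Q_{t+1}(x,C_{t+1,\bar\varrho})\geq c\,Q^g_{t+1}(x,C^g_{t+1,\bar\varrho/c})$ is positive uniformly over $x\in C_{t,\bar\varrho}\subseteq C^g_{t,\bar\varrho}$. This uses exactly the lower bounds already established for $g_t$ with thresholds $\bar\varrho$ and $\bar\varrho/c$ (we need $\bar\varrho<c$). Next, pick $\varrho$ below the resulting $c_{\bar\varrho}$; the second hypothesis follows in the same way. The final requirement, $\inf_x Q_{t+1}(x,D_{t+1})>0$ on $C_{t,\varepsilon_\upsilon}$, is likewise inherited up to the factor $c$. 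Finally, $\mathcal{V}_\setX(\ind_{E_t})\leq 3^d-1$ by part~\ref{HK_4} of Proposition~\ref{prop:V_HK}, so \ref{A3} follows.

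I expect the main obstacle to be bookkeeping: the original proof was phrased for $g_t$ given by \eqref{eq:Gt_def}, and one must verify that it only used the quantitative bounds (lower bound near $H_t^{-1}(y_t)$, upper decay, boundedness of level sets) that transfer via $c g_t\leq G_t\leq g_t$.

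\emph{Step 3: \ref{B1}.} Here $q_{t+1}(x,x')=b_t(x)g_t(x)m_{t+1}(x,x')$. By part~\ref{HK_5} of Proposition~\ref{prop:V_HK},
\begin{align*}
\mathcal{V}_\setX(q_{t+1}(\cdot,x'))\leq\tfrac13\big(3\mathcal{V}_\setX(b_t)+1\big)\big(3\mathcal{V}_\setX(g_tm_{t+1}(\cdot,x'))+\|g_tm_{t+1}(\cdot,x')\|_\infty\big).
\end{align*}
The integral over $x'$ of $\mathcal{V}_\setX(g_tm_{t+1}(\cdot,x'))$ is uniformly bounded by the unperturbed \ref{B1}.

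I also need $\int\sup_x g_t(x)m_{t+1}(x,x')\dd x'$ to be uniformly bounded. I will obtain this from the decay of $g_t$ and $m_{t+1}$ in \ref{Model1}, using the same splitting as in the proof of Lemma~\ref{lemma:Prop1_showB1_ex2}.

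The difficulty is that $\mathcal{V}_\setX(b_t)$ need not be finite for $b_t\in\mathcal{C}_{\mathrm{mixed}}(\R^d;1,L_b)$ on all of $\R^d$. I would therefore avoid the crude product bound and instead expand $\partial^u_1 q_{t+1}$ by the Leibniz rule into terms $\partial^{v}b_t\,\partial^{u\setminus v}_1(g_tm_{t+1})$. Each derivative of $b_t$ is bounded by $L_b$, so each term is dominated by the bounds already used for $g_tm_{t+1}$ in Lemma~\ref{lemma:Prop1_showB1_ex2}, whose integrability comes from the decay of $g_t$ and $m_{t+1}$. I then conclude with Proposition~\ref{prop:variationHK}.
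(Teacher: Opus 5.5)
Your proposal is correct and follows the paper's proof. The sandwich $c\,g_t\le G_t\le g_t$ transfers \ref{A1}, the hypotheses of Proposition~\ref{prop:A1p-A2} and hence \ref{A3} (with the sets $E_t$); for \ref{B1}, the Leibniz expansion of $\partial_1^u(b_t\,g_t m_{t+1})$ reduces everything to the bound \eqref{eq:ineq_qt} for $g_t m_{t+1}$, including its zeroth-order case, after which Lemma~\ref{lemma:Prop1_showB1} applies. The one slip is in Step~2: the threshold condition needed to invoke the lower bounds for $g_t$ is $\bar\varrho/c<\zeta_G(0)$, not $\bar\varrho<c$, which is harmless since $\bar\varrho$ can be taken small.
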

\begin{proof}
See Appendix \ref{p-lemma:bounded_pert}.
\end{proof}

\subsection{Examples\label{sub:expl_ex}}

Below, we assume that the
initial distribution $\eta_1$ has a positive density
with respect to the Lebesgue measure, so that
$\eta_1(D_1)>0$ for any set $D_1$ of positive
Lebesgue measure.

\subsubsection{Linear Gaussian state-space
models\label{sub:LG_SSM}}

Let $r\geq d\geq 1$, let $A\in\R^{r\times d}$ have
rank $d$, let $B\in\R^{d\times d}$, and let
$\Sigma_G\in\mathcal{S}_r^+$ and
$\Sigma_M\in\mathcal{S}_d^+$. Consider the
state-space model \eqref{eq:SSM} with, for any
$t\geq 1$,
\begin{align}\label{eq:SSM2}
H_t(x)=Ax,\quad F_{t+1}(x)=Bx,\quad
\Sigma_{G_t}=\Sigma_G,\quad\Sigma_{M_{t+1}}=\Sigma_M,
\quad U_t\iid\mathcal{N}_r(0,I_r),\quad
V_{t+1}\iid\mathcal{N}_d(0,I_d),
\end{align}
and let $y_t$ be a sequence in $\R^r$. As
recalled in Section~\ref{sub:motive_example}, the
predictive and filtering distributions of this model
are the prediction and updated models of the
Feynman--Kac model with
\begin{align*}
G_t(x)=\exp\Big(-\frac{1}{2}(y_t-Ax)^\top
\Sigma_G^{-1}(y_t-Ax)\Big),
\qquad
\delta_xM_{t+1}=\mathcal{N}_d(Bx,\Sigma_M),
\qquad\forall x\in\R^d,\quad\forall t\geq 1.
\end{align*}
Since
$G_t(x)=\exp\big(-(y_t-Ax)^\top(2\Sigma_G)^{-1}(y_t-Ax)\big)$, the discussion in
Section~\ref{sub:exp} applies with $\xi_{G_t}(z)=e^{-z}$,
$f_t(x)=Ax$ and $S_{G_t}=2\Sigma_G$. Hence, all the
perturbation bounds of Section~\ref{sec:main} hold for
this model if
\begin{align}\label{eq:cond_LG}
\sup_{t\geq 1}\big\|\Upsilon y_{t+1}-B\Upsilon y_t\big\|
<\infty,
\qquad\text{where}\quad
\Upsilon\coloneqq(A^\top\Sigma_G^{-1}A)^{-1}
A^\top\Sigma_G^{-1}.
\end{align}
For $B=I_d$ and $r=d$, \eqref{eq:cond_LG} reduces
to $\sup_{t\geq 1}\|y_{t+1}-y_t\|<\infty$; the
observations may be unbounded. Condition
\eqref{eq:cond_LG} involves no stability assumption on
$B$: it allows $B=I_d$ (random walk) and matrices with
spectral radius larger than one.

We now consider observations generated by the model,
that is, $y_t=Y_t(\omega)$ for $t\geq 1$, where
$(X_t,Y_t)_{t\geq 1}$ is the process defined by
\eqref{eq:SSM} and \eqref{eq:SSM2}. Since $\Upsilon A=I_d$, the random vectors
\[
Z_t\coloneqq\Upsilon Y_{t+1}-B\Upsilon Y_t
=\Sigma_M^{1/2}V_{t+1}+\Upsilon\Sigma_G^{1/2}U_{t+1}
-B\Upsilon\Sigma_G^{1/2}U_t,\qquad t\geq 1,
\]
form a stationary, $1$-dependent, centered Gaussian
sequence with covariance matrix
$\Gamma\coloneqq\Sigma_M+\Sigma_\Upsilon+B\Sigma_\Upsilon B^\top$,
where $\Sigma_\Upsilon\coloneqq(A^\top\Sigma_G^{-1}A)^{-1}
=\Upsilon\Sigma_G\Upsilon^\top$. Hence
$\sup_{t\geq 1}\|Z_t\|=\infty$ almost surely, and condition
\eqref{eq:cond_LG} fails for almost every realization of
the observations. However, for any $T\geq 2$ and
$q\in(0,1]$, the event
\begin{align*}
\Omega_{T,q}\coloneqq\Big\{\max_{1\leq t<T}\|Z_t\|
\leq\sqrt{2d\,\sigma_{\max}(\Gamma)\log(2dT/q)}\Big\}
\end{align*}
has probability at least $1-q$. For any
$\omega\in\Omega_{T,q}$, the deterministic sequence
$y_t=Y_t(\omega)$ therefore satisfies the condition in
\eqref{eq:cond_LG} on the horizon $\{1,\dots,T\}$, and our
bounds hold on this horizon for this sequence, with
constants depending on $T$ through the above quantity,
which grows like $\|B\|\sqrt{\log(T/q)}$, since
$$\sigma_{\min}(\Sigma_\Upsilon)\|B\|^2\leq\sigma_{\max}(\Gamma)\leq
\sigma_{\max}(\Sigma_M)+\sigma_{\max}(\Sigma_\Upsilon)(1+\|B\|^2),$$
where $\|B\|$ denotes the operator norm of $B$.

To check this claim, note that each coordinate $Z_{t,i}$
of $Z_t$ is a
centered Gaussian random variable with variance at most
$\sigma_{\max}(\Gamma)$. Thus, for any $u\geq 0$ we have
\[
\P\big(|Z_{t,i}|>u\,\sigma_{\max}(\Gamma)^{1/2}\big)\leq
2e^{-u^2/2}.
\]
 A union bound over
$i\in\{1,\dots,d\}$ and $t\in\{1,\dots,T-1\}$, with
$u^2=2\log(2dT/q)$, and the estimate
$\|Z_t\|\leq\sqrt{d}\max_{i}|Z_{t,i}|$ give the claim.
We stress that the factor $\sqrt{\log T}$ cannot be removed, since
$\max_{1\leq t\leq T}\|Z_t\|/\sqrt{\log T}$ has a
positive almost sure limit as $T\to\infty$.

Lastly, we note that for $y_t=0$ the supremum in \eqref{eq:cond_LG} vanishes
for any matrix $B$, including unstable ones.
The resulting model is the so-called harmonic oscillator model of
\citet{delmoral2023coupled}, for which all our
perturbation bounds hold; see also
\citet{gerber2026safety} for the one-dimensional case.

\subsubsection{A tracking model\label{sub:track_ex}}

We consider the 2-DOF (two-degrees-of-freedom)
helicopter model used in
\citet{kanthalakshmi2022active} for fault detection
and diagnosis with particle filters. This laboratory prototype is mounted on a fixed base;
its front propeller controls the pitch axis and its rear
propeller the yaw axis. The pitch angle $\theta_t$
and the yaw angle $\psi_t$ are the two degrees of
freedom of the system, and high-resolution encoders
provide measurements $z_{t,1}$ of $\theta_t$ and
$z_{t,2}$ of $\psi_t$. The system has two inputs,
the pitch motor voltage $u_t$ and the yaw motor
voltage $v_t$. We denote by $\dot{\theta}_t$ and
$\dot{\psi}_t$ the pitch and yaw velocities, and let
$x_t=(\theta_t,\psi_t,\dot{\theta}_t,\dot{\psi}_t)$
and $z_t=(z_{t,1},z_{t,2})$.

To estimate $x_t$ from
$z_t$, \citet{kanthalakshmi2022active}
use the Feynman--Kac model with, for any $t\geq 1$
and $x\in\R^4$,
\begin{align}\label{eq:Hic1}
G_t(x)=\exp\Big(-\big(z_t-H(x)\big)^\top\Sigma_G^{-1}
\big(z_t-H(x)\big)\Big)
\quad\text{and}\quad
\delta_xM_{t+1}=\mathcal{N}_4\big(F_{t+1}(x),\Sigma_M\big),
\end{align}
for some matrices $\Sigma_G\in\mathcal{S}^+_2$ and
$\Sigma_M\in\mathcal{S}_4^+$, where $H(x)=(x_1,x_2)$
and
\[
F_{t+1}(x)=\big(x_1+x_3,\,x_2+x_4,\,
F_{t+1,3}(x),\,F_{t+1,4}(x)\big),
\]
with
\begin{align*}
F_{t+1,3}(x)&=-\frac{a_4x_3+a_2a_3^2x_4^2\sin(x_1)\cos(x_1)
+a_2a_3\cos(x_1)-a_5u_t-a_6v_t}{a_1+a_2a_3^2},\\
F_{t+1,4}(x)&=-\frac{a_8x_4+2a_2a_3^2x_3x_4\sin(x_1)\cos(x_1)
-a_9u_t-a_{10}v_t}{a_7+a_2a_3^2\cos(x_1)^2},
\end{align*}
and $a\in(0,\infty)^{10}$ a vector of parameters.
Observe that this is the Feynman--Kac representation of the
state-space model \eqref{eq:SSM},   with
\[\Sigma_{G_t}^{1/2}U_t\iid\mathcal{N}_2(0,\Sigma_G/2)\quad
\text{and} \quad\Sigma_{M_{t+1}}^{1/2}V_{t+1}\iid
\mathcal{N}_4(0,\Sigma_M).
\]

Since the observations are two-dimensional while the
state is four-dimensional, the function  $H$ is not bi-Lipschitz,
so that \ref{Model1} and \ref{Model} fail and our
results do not apply to \eqref{eq:Hic1}. We can however augment the observations with the finite differences of the
measured angles: for any
$t\geq 1$, let $z_{t,3}\coloneqq z_{t+1,1}-z_{t,1}$ and
$z_{t,4}\coloneqq z_{t+1,2}-z_{t,2}$, which estimate
$\dot{\theta}_t$ and $\dot{\psi}_t$, and let
$y_t\coloneqq(z_{t,1},z_{t,2},z_{t,3},z_{t,4})$.
Given $\tilde{\Sigma}_G\in\mathcal{S}_4^+$, we then
consider the Feynman--Kac model with, for any
$t\geq 1$ and $x\in\R^4$,
\begin{align}\label{eq:FK_helicopter2}
G_t(x)=\exp\Big(-(y_t-x)^\top\tilde{\Sigma}_G^{-1}
(y_t-x)\Big)
\quad\text{and}\quad
\delta_xM_{t+1}=\mathcal{N}_4\big(F_{t+1}(x),\Sigma_M\big).
\end{align}
Since $y_t$ depends on $z_{t+1}$, the distribution
$\widehat{\eta}_t$ of this model is computed with a
delay of one time step, that is, once $z_{t+1}$ is
available. The voltages $u_t$ and
$v_t$ and the observations
$y_t$ are bounded in practice, owing to
physical constraints. Using Lemma~\ref{lemma:Obs2}, we check that the model
\eqref{eq:FK_helicopter2} satisfies all the
assumptions of Proposition~\ref{prop:example2}, while
the functions $F_{t+1}$ do not satisfy
those of Proposition~\ref{prop:example}. Hence, all
the perturbation bounds of Section~\ref{sec:main}
hold for \eqref{eq:FK_helicopter2}.

\subsubsection{A multisensory perception
model\label{sub:multi_mod}}

The next model, from \citet{kutschireiter2017nonlinear},
is a state-space model of the form \eqref{eq:SSM} with
nonlinear observation and transition functions. A frog sitting
between two branches tracks the position
$x_t\in\R$ of an insect flying between them, using
an observation $z_t\in\R^2$ whose two components are
measurements of $x_t$ obtained through its visual and
auditory channels. \citet{kutschireiter2017nonlinear}
use the Feynman--Kac model with, for any $t\geq 1$
and $x\in\R$,
\begin{align}\label{eq:frog}
G_t(x)=\exp\Big(-\big(y_t-\tilde{H}(x)\big)^\top\Sigma_G^{-1}
\big(y_t-\tilde{H}(x)\big)\Big)
\quad\text{and}\quad
\delta_xM_{t+1}=\mathcal{N}_1\big(F(x),\sigma^2_M\big),
\end{align}
for some matrix $\Sigma_G\in\mathcal{S}_2^+$ and constant
$\sigma_M\in(0,\infty)$,   where
$y_t\coloneqq z_{t+1}-z_t$ is the increment of the
measurements (the model being a discretization of a
continuous-time model), and where  $\tilde{H}(x)=(x,\tanh(2x))$ while
$F(x)=x+3x(1-x^2)$. Note that \eqref{eq:frog} is the Feynman--Kac
representation of the state-space model
\eqref{eq:SSM} with
\[\Sigma_{G_t}^{1/2}U_t\iid\mathcal{N}_2(0,\Sigma_G/2)\quad
\text{and}
\quad\Sigma_{M_{t+1}}^{1/2}V_{t+1}\iid
\mathcal{N}_1(0,\sigma_M^2).
\]

The function $\tilde{H}$ is bi-Lipschitz, but $H(\R)$ is the
graph of $x\mapsto\tanh(2x)$, which is a curve in $\R^2$.
Condition~\ref{Obs1} therefore requires
$y_{t,2}=\tanh(2y_{t,1})$ for any $t\geq 1$, which
does not hold for real data. However,  as we now show, by  interpreting the contribution of the auditory channels to the model  as a bounded
perturbation of the potential functions associated to the visual channels, our results  apply for this model.

Assume that $\Sigma_G=\mathrm{diag}(s_1,s_2)$ for some
$s_1,s_2\in(0,\infty)$, (independent sensory channels), and that the observations are bounded, i.e.\
$\sup_{t\geq 1}\|y_t\|\leq\Xi$ for some
$\Xi\in[0,\infty)$. Then $G_t=b_t\,g_t$ for any
$t\geq 1$, where
\begin{align*}
g_t(x)\coloneqq\exp\big(-(y_{t,1}-x)^2/s_1\big)
\quad\text{and}\quad
b_t(x)\coloneqq\exp\big(-(y_{t,2}-\tanh(2x))^2/s_2\big),
\qquad x\in\R.
\end{align*}
Since $|\tanh|\leq 1$, we have
$b_t(x)\in[c,1]$ with
$c\coloneqq\exp\big(-(\Xi+1)^2/s_2\big)$, and
$\sup_{t\geq 1}\|b_t'\|_\infty\leq
4(\Xi+1)/s_2$.

The potential functions $g_t$ are of the
form \eqref{eq:Gt_def} with the observation function
$H_t(x)= x$, for which \ref{Obs1} trivially holds,
and with a Gaussian function $\zeta_{G_t}$, so that
\ref{Model} holds with an arbitrarily large constant
$\nu_G$. The function $F(x)=4x-3x^3$ is not globally
Lipschitz, so that it does not satisfy the
assumptions of Proposition~\ref{prop:example}.
However, $|F'(x)|=|4-9x^2|$ grows quadratically;
since $\nu_G$ can be taken arbitrarily large for
Gaussian potential functions, $F$ satisfies
\eqref{eq:bound_derive_ht0} and
\eqref{eq:bound_derive_ht}. Moreover, since the
observations are bounded and the model is
homogeneous, \ref{Obs2} holds by
Lemma~\ref{lemma:Obs2}. Hence, the Feynman--Kac model
with potential functions $g_t$ satisfies
the assumptions of both parts of
Proposition~\ref{prop:example2}, and, by
Lemma~\ref{lemma:bounded_pert}, the model
\eqref{eq:frog} satisfies \ref{A3} and \ref{B1}. All the perturbation bounds of Section~\ref{sec:main}
therefore hold for this model.
\subsubsection{Partially observed stochastic
differential equations\label{sub:SDE}}

Let $k\geq d\geq 1$, let
$b\colon\R^d\to\R^d$ and
$\sigma\colon\R^d\to\R^{d\times k}$ be measurable
functions, and consider the stochastic differential
equation
\begin{align*}
X_\tau=X_0+\int_0^\tau b(X_s)\,\dd s
+\int_0^\tau\sigma(X_s)\,\dd B_s,
\qquad\tau\in[0,\infty),
\end{align*}
where $X_0$ is an $\R^d$-valued random variable and
$(B_s)_{s\geq 0}$ is a $k$-dimensional Wiener process
independent of $X_0$. The process is observed at
integer times, and we let $M$ be the Markov transition of
the discrete-time process $X_t$ and, for a
sequence $y_t$ in $\R^d$, we let
\begin{align}\label{eq:Model_SDE}
G_t(x)=\exp\big(-\|y_t-x\|^2\big)
\quad\text{and}\quad
M_{t+1}=M,
\qquad\forall x\in\R^d,\quad\forall t\geq 1.
\end{align}
We write $b_i$ for the $i$-th component of $b$ and
$\sigma_{ij}$ for the $(i,j)$-th entry of $\sigma$,
and we assume that:
\begin{enumerate}[label=\textbf{(D{{\arabic*}})}]
\item\label{diff1} for any $i\in\{1,\dots,d\}$ and
$j\in\{1,\dots,k\}$, the functions $b_i$ and
$\sigma_{ij}$ are bounded, and the functions
$\sigma_{ij}$ are H\"older continuous;
\item\label{diff2} there exists a constant $c_\sigma>0$ such that, for any $x_1,x_2\in\R^d$, we have
\[x_1^\top\sigma(x_2)\sigma(x_2)^\top x_1\geq
c_\sigma\|x_1\|^2.\]
\end{enumerate}
Under these assumptions,
$M(x,\dd x')=m(x,x')\,\lambda_d(\dd x')$ for any
$x\in\R^d$, where the transition density $m$
satisfies two-sided Gaussian bounds: for some
constants $(c_1,c_2)\in(0,1)^2$,
\begin{align}\label{eq:densSDE}
c_1\exp\Big(-\frac{\|x'-x\|^2}{c_1}\Big)
\leq m(x,x')
\leq\frac{1}{c_2}\exp\big(-c_2\|x'-x\|^2\big),
\qquad\forall(x,x')\in\R^d\times\R^d;
\end{align}
see \citet[][Chapter~9]{friedman1964partial}
and \citet[][Theorem~1.2]{menozzi2021density},
where these bounds are established by the
parametrix method; see also
\citet{aronson1967bounds} for the analogous bounds
for operators in divergence form.

By \eqref{eq:densSDE} and the definition
\eqref{eq:Model_SDE} of $G_t$, condition
\ref{Model11} holds with $F_{t+1}(x)=x$ for any
$t\geq 1$. Hence, by Section~\ref{sub:extension},
\ref{A3} holds if
$\sup_{t\geq 1}\|y_{t+1}-y_t\|<\infty$, and the total
variation bounds of Section~\ref{sub:TV_estimates}
hold for the model \eqref{eq:Model_SDE}; notice that the observations may be unbounded.

Assume now that, in addition:
\begin{enumerate}[label=\textbf{(D{{\arabic*}})}]
\setcounter{enumi}{2}
\item\label{diff3} for any $i\in\{1,\dots,d\}$ and
$j\in\{1,\dots,k\}$, the functions $b_i$ and
$\sigma_{ij}$ are infinitely differentiable, with
bounded derivatives of all orders.
\end{enumerate}
Then, by \citet[][Theorem~6]{guyon2006euler}, there
exists a constant $c\in(0,\infty)$ such that, for
any non-empty sets $u\subseteq\{1,\dots,d\}$,
\begin{align}\label{eq:derive_sde}
\big|\partial^u_1m(x,x')\big|
\leq\frac{1}{c}\exp\big(-c\|x'-x\|^2\big),
\qquad\forall(x,x')\in\R^d\times\R^d.
\end{align}
Together with \eqref{eq:densSDE}, this shows that
\ref{Model3} holds. Therefore, under conditions
\ref{diff1}--\ref{diff3}, \ref{A3} and \ref{B1} hold
for the model \eqref{eq:Model_SDE} if
$\sup_{t\geq 1}\|y_{t+1}-y_t\|<\infty$, and all the
perturbation bounds of Section~\ref{sec:main} are valid.

\section{Approximate Feynman--Kac
models\label{sec:compareFK}}

\subsection{Context and general result}

Throughout this section, $G^\star_t$ stands
for some sequence of functions in
$\mathcal{B}_1(\setX)$ and $M^\star_{t+1}$
for some sequence of Markov transitions on $\setX$. To
simplify the presentation, we assume that $G_t(x)>0$ and $G^\star_t(x)>0$ for any
$t\geq 1$ and $x\in\setX$, so that \ref{A0} holds with
$W_t=\setX$. For any $t\geq 1$, we let
\[
Q^\star_{t+1}(x,\dd x')\coloneqq
G^\star_t(x)\,M^\star_{t+1}(x,\dd x'),
\]
 and we define
the mappings
$\Psi^\star_t\colon\mathcal{P}(\setX)\to\mathcal{P}(\setX)$
and
$\Phi^\star_{t+1}\colon\mathcal{P}(\setX)\to\mathcal{P}(\setX)$
by
\begin{align*}
\Psi^\star_t(\mu)(\dd x)\coloneqq
\frac{G^\star_t(x)}{\mu(G^\star_t)}\,\mu(\dd x)
\qquad\text{and}\qquad
\Phi^\star_{t+1}(\mu)\coloneqq\Psi^\star_t(\mu)M^\star_{t+1}.
\end{align*}
We consider the two Feynman--Kac prediction models
$\eta_t$ and $\eta^\star_t$
defined  by
\begin{align}\label{eq:model3}
\eta_{t+1}=\Phi_{t+1}(\eta_t),
\qquad\text{and}\qquad
\eta^\star_{t+1}=\Phi^\star_{t+1}(\eta^\star_t)
\qquad\text{with}\quad
\eta^\star_1=\eta_1\in\mathcal{P}(\setX),
\end{align}
and the corresponding updated models
\begin{align}\label{eq:model4}
\widehat{\eta}_t=\Psi_t(\eta_t)
\qquad\text{and}\qquad
\widehat{\eta}^\star_t=\Psi^\star_t(\eta^\star_t).
\end{align}
Thus, $\eta_t$ and
$\widehat{\eta}_t$ are the prediction and
updated models of Section~\ref{sub:setup}, while
$\eta^\star_t$ and
$\widehat{\eta}^\star_t$ are the ones of the Feynman--Kac model with potential
functions $G^\star_t$ and Markov
transitions $M^\star_{t+1}$. In this section, we interpret  the latter model as an approximation of the former model, which is  easier to compute or to
sample.

The next theorem   bounds the distances between $\eta^\star_t$ and
$\eta_t$, and between $\widehat{\eta}^\star_t$ and
$\widehat{\eta}_t$, uniformly in time, in terms of the
function $\vartheta$ defined  in \eqref{eq:vartheta} and the
quantity $\Delta_{Q|Q^\star}\in[0,1]$ defined in  \eqref{eq:delta}.

\begin{theorem}\label{thm:compare}
Let $\eta_t$ and
$\widehat{\eta}_t$ be as defined in
\eqref{eq:model3}--\eqref{eq:model4}. Assume that
\ref{A2} holds for some sets $D_t$ with
$\eta_1(D_1)>0$. Then, there exists a constant
$C_{\eta_1}\in[1,\infty)$, which depends on
$G_t$, $M_{t+1}$ and $\eta_1$
but not on $G^\star_t$ and
$M^\star_{t+1}$, such that
\begin{align*}
\sup_{t\geq 1}\|\eta^\star_t-\eta_t\|_{\mathrm{tv}}
\vee
\sup_{t\geq 1}
\|\widehat{\eta}^\star_t-\widehat{\eta}_t\|_{\mathrm{tv}}
\leq C_{\eta_1}\,\vartheta\big(\Delta_{Q|Q^\star}\big).
\end{align*}
\end{theorem}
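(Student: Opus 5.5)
The plan is to apply Theorem~\ref{thm:main_TV} with $\mu_t\coloneqq\eta^\star_t$ and a suitable weight function $\varpi$, and then to transfer the bound to the updated models through Proposition~\ref{prop:filter_TV}. The key point is a pointwise estimate of the local perturbation
\[
\delta_{t+1}(\eta^\star_t,\eta^\star_{t+1})=\Phi^\star_{t+1}(\eta^\star_t)-\Phi_{t+1}(\eta^\star_t)
\]
in total variation, which involves only one step of the two models. Since $\eta^\star_1=\eta_1$, the first term of $\Delta_{T,\bm{\mu}|\eta_1,\varpi}$ vanishes.

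For any $\mu\in\mathcal{P}(\setX)$, write
\[
\Phi^\star_{t+1}(\mu)-\Phi_{t+1}(\mu)=\frac{\mu Q^\star_{t+1}-\mu Q_{t+1}}{\mu Q^\star_{t+1}(1)}+\mu Q_{t+1}\Big(\frac{1}{\mu Q^\star_{t+1}(1)}-\frac{1}{\mu Q_{t+1}(1)}\Big).
\]
Both terms have total variation at most $2\Delta_{Q|Q^\star}/\mu Q^\star_{t+1}(1)$. This uses $|\mu Q^\star_{t+1}(f)-\mu Q_{t+1}(f)|\leq\Delta_{Q|Q^\star}$ for $f\in\mathcal{B}_1(\setX)$, and the fact that $\mu Q_{t+1}/\mu Q_{t+1}(1)$ is a probability measure. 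Since $\mu Q^\star_{t+1}(1)\geq\mu Q_{t+1}(1)-\Delta_{Q|Q^\star}$, I then lower bound $\mu Q_{t+1}(1)=\mu(G_t)\geq\varrho\,\mu(D_t\cap C_{t,\varrho})$. That is not quite $\mu(D_t)$, so instead I use \ref{A1a}: $\mu Q_{t+1}(1)\geq\mu Q_{t+1}(\ind_{D_{t+1}})\geq\underline{q}\,\mu(D_t)$. Hence, whenever $\Delta_{Q|Q^\star}\leq\underline{q}\mu(D_t)/2$,
\[
\|\Phi^\star_{t+1}(\mu)-\Phi_{t+1}(\mu)\|_{\mathrm{tv}}\leq\frac{4\Delta_{Q|Q^\star}}{\underline{q}\,\mu(D_t)}.
\]
This suggests the weight $\varpi(z)\coloneqq\min\{1,z\}$, which is non-decreasing and vanishes only at $0$. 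Then each weighted local error $\|\eta^\star_{t+1}-\Phi_{t+1}(\eta^\star_t)\|_{\mathrm{tv}}\,\varpi(\eta^\star_t(D_t))$ is at most $(4/\underline{q})\Delta_{Q|Q^\star}$ in the regime $\eta^\star_t(D_t)\geq 2\Delta_{Q|Q^\star}/\underline{q}$. In the complementary regime, the trivial bound $\|\cdot\|_{\mathrm{tv}}\leq1$ gives weighted error at most $\eta^\star_t(D_t)<2\Delta_{Q|Q^\star}/\underline{q}$. In both cases, $\Delta_{T,\bm{\mu}|\eta_1,\varpi}\leq(4/\underline{q})\Delta_{Q|Q^\star}$ for every $T$.

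Theorem~\ref{thm:main_TV} then gives $\sup_t\|\eta^\star_t-\eta_t\|_{\mathrm{tv}}\leq C\vartheta(4\Delta_{Q|Q^\star}/\underline{q})\leq C'\vartheta(\Delta_{Q|Q^\star})$. This uses $\vartheta(az)\leq a(1+\log a)\vartheta(z)$-type elementary bounds for $a\geq1$, which follow from $\log(1+z^{-1}/a)\leq\log(1+z^{-1})$ and $\vartheta$ being increasing. Note that the constant $C$ does not depend on $G^\star_t,M^\star_{t+1}$, because the estimate above uses only $\underline{q}$ and the sets $D_t$ of the reference model.

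For the updated models, I decompose
\[
\widehat{\eta}^\star_t-\widehat{\eta}_t=\big(\Psi^\star_t(\eta^\star_t)-\Psi_t(\eta^\star_t)\big)+\big(\Psi_t(\eta^\star_t)-\widehat{\eta}_t\big).
\]
The second term is bounded by Proposition~\ref{prop:filter_TV}\ref{tv_bounded}. For the first, $\|G^\star_t-G_t\|_\infty\leq\Delta_{Q|Q^\star}$, because $\delta_xQ_{t+1}(1)=G_t(x)$ for $t\geq1$. The same ratio argument then gives a bound of $2\Delta_{Q|Q^\star}/\eta^\star_t(G_t)$, up to the harmless case of large $\Delta_{Q|Q^\star}$. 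Finally, $\eta^\star_t(G_t)\geq\eta_t(G_t)-\|\eta^\star_t-\eta_t\|_{\mathrm{tv}}$, and $\inf_t\eta_t(G_t)>0$ by Lemma~\ref{lemma:lower_phi}. So for $\Delta_{Q|Q^\star}$ below a threshold depending only on the reference model, this is $O(\Delta_{Q|Q^\star})$; above it, the trivial bound $1$ suffices after enlarging $C_{\eta_1}$.

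The main obstacle is the first step: ensuring that the local errors, which blow up when $\eta^\star_t(D_t)$ is small, are controlled without prior knowledge of the approximate flow. The weighting $\varpi$ of Theorem~\ref{thm:main_TV} is designed precisely to absorb this, so the delicate point is checking the two-regime bound carefully.
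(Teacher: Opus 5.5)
Your proposal is correct and follows the paper's proof. It applies Theorem~\ref{thm:main_TV} to $\mu_t=\eta^\star_t$ with $\varpi(z)=z\wedge 1$, uses the bound $\mu(G_t)\geq\underline{q}\,\mu(D_t)$ from \ref{A1a}, and splits $\widehat{\eta}^\star_t-\widehat{\eta}_t$ in the same way, handling the pieces with Proposition~\ref{prop:filter_TV} and with $\inf_t\eta_t(G_t)>0$ from Lemma~\ref{lemma:lower_phi}.

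The only difference is in the one-step estimate. The paper normalizes by the reference constant $\mu Q_{t+1}(1)=\mu(G_t)$, writing $\Phi^\star_{t+1}(\mu)(f)-\Phi_{t+1}(\mu)(f)=(\mu Q^\star_{t+1}-\mu Q_{t+1})(f-c)/\mu(G_t)$ with $c=\Phi^\star_{t+1}(\mu)(f)$. This gives the bound $2\Delta_{Q|Q^\star}/\mu(G_t)$ with no smallness condition, so your two-regime case split is valid but unnecessary.
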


\begin{proof}
 Recall  that \ref{A1} holds under \ref{A2} and let $\underline{q}\in(0,1)$ be as in
\ref{A1a}. Then, note that since
$Q_{t+1}(x,D_{t+1})\leq G_t(x)$, it follows from \ref{A1a} that we have $G_t\geq\underline{q}$ on $D_t$,
and therefore
\begin{align}\label{eq:lower_mu_G}
\mu(G_t)\geq\underline{q}\,\mu(D_t),
\qquad\forall\mu\in\mathcal{P}(\setX),
\quad\forall t\geq 1.
\end{align}

\emph{Step 1:   one-step estimate.} Let $t\geq 1$,
$\mu\in\mathcal{P}(\setX)$ and
$f\in\mathcal{B}_1(\setX)$, and let
$c\coloneqq\Phi^\star_{t+1}(\mu)(f)\in[0,1]$. Since
$\mu Q^\star_{t+1}(f)=c\,\mu Q^\star_{t+1}(1)$ and
$\mu Q_{t+1}(1)=\mu(G_t)$, we have
\begin{align*}
\Phi^\star_{t+1}(\mu)(f)-\Phi_{t+1}(\mu)(f)
=\frac{(\mu Q^\star_{t+1}-\mu Q_{t+1})(f-c)}{\mu(G_t)}.
\end{align*}
As $\|f-c\|_\infty\leq 1$,
$|\nu(g)|\leq 2\|\nu\|_{\mathrm{tv}}\|g\|_\infty$ for
any $\nu\in\mathcal{M}(\setX)$ and bounded measurable
$g$, and
$\|\mu Q^\star_{t+1}-\mu Q_{t+1}\|_{\mathrm{tv}}\leq
\sup_{x\in\setX}\|\delta_xQ^\star_{t+1}
-\delta_xQ_{t+1}\|_{\mathrm{tv}}\leq\Delta_{Q|Q^\star}$,
we check that
\begin{align}\label{eq:one_step_Phi}
\big\|\Phi^\star_{t+1}(\mu)-\Phi_{t+1}(\mu)\big\|_{\mathrm{tv}}
\leq\frac{2\,\Delta_{Q|Q^\star}}{\mu(G_t)}.
\end{align}
Similarly, with
$c'\coloneqq\Psi^\star_t(\mu)(f)$, we have
$\Psi^\star_t(\mu)(f)-\Psi_t(\mu)(f)
=\mu\big((G^\star_t-G_t)(f-c')\big)/\mu(G_t)$ and,
since
$|G^\star_t(x)-G_t(x)|
=|(\delta_xQ^\star_{t+1}-\delta_xQ_{t+1})(\setX)|
\leq\Delta_{Q|Q^\star}$,
\begin{align}\label{eq:one_step_Psi}
\big\|\Psi^\star_t(\mu)-\Psi_t(\mu)\big\|_{\mathrm{tv}}
\leq\frac{\Delta_{Q|Q^\star}}{\mu(G_t)}.
\end{align}

\emph{Step 2: bound for the prediction model.} We apply
Theorem~\ref{thm:main_TV} with
$\bm{\mu}=(\eta^\star_t)_{t\geq 1}$ and
$\varpi(z)\coloneqq z\wedge 1$. Since
$\eta^\star_1=\eta_1$ and
$\eta^\star_{t+1}=\Phi^\star_{t+1}(\eta^\star_t)$,
\eqref{eq:lower_mu_G} and \eqref{eq:one_step_Phi}
give, for any $T\geq 1$,
\begin{align*}
\Delta_{T,\bm{\mu}|\eta_1,\varpi}
=\max_{1\leq t<T}
\big\|\Phi^\star_{t+1}(\eta^\star_t)
-\Phi_{t+1}(\eta^\star_t)\big\|_{\mathrm{tv}}\,
\eta^\star_t(D_t)
\leq\frac{2}{\underline{q}}\,\Delta_{Q|Q^\star}.
\end{align*}
By Theorem~\ref{thm:main_TV}, \eqref{eq:vartheta_scale}
and the monotonicity of $\vartheta$, there exists
a constant $C_1\in[1,\infty)$, depending only on
$G_t$, $M_{t+1}$ and $\eta_1$,
such that
\begin{align}\label{eq:compare_pred}
\sup_{t\geq 1}\|\eta^\star_t-\eta_t\|_{\mathrm{tv}}
\leq C_1\,\vartheta\big(\Delta_{Q|Q^\star}\big).
\end{align}

\emph{Step 3: bound for the updated model.} Observe that, for any $c\in[1,\infty)$ and $z\in(0,\infty)$, we have
\begin{align}\label{eq:vartheta_scale}
\vartheta(cz)\leq c\,\vartheta(z),\quad  \vartheta(z)\geq z\log 2
\end{align}
Then, for any $t\geq 1$,
\begin{align*}
\|\widehat{\eta}^\star_t-\widehat{\eta}_t\|_{\mathrm{tv}}
\leq
\big\|\Psi^\star_t(\eta^\star_t)
-\Psi_t(\eta^\star_t)\big\|_{\mathrm{tv}}
+\big\|\Psi_t(\eta^\star_t)
-\Psi_t(\eta_t)\big\|_{\mathrm{tv}}.
\end{align*}
where, by part~\ref{tv_bounded} of Proposition~\ref{prop:filter_TV} and
\eqref{eq:compare_pred}, the second term is bounded by
$C_2C_1\vartheta(\Delta_{Q|Q^\star})$, where $C_2$
denotes the constant $C_{\eta_1}$ of that
proposition. The first term is null when $G^\star_t=G_t$; in
general, we estimate this term using \eqref{eq:one_step_Psi}. By
Lemma~\ref{lemma:lower_phi},
$g_\star\coloneqq\inf_{t\geq 1}\eta_t(G_t)>0$ and,
since $G_t\in\mathcal{B}_1(\setX)$, we have
\[
\eta^\star_t(G_t)\geq
g_\star-\|\eta^\star_t-\eta_t\|_{\mathrm{tv}}.
\]
 If
$C_1\vartheta(\Delta_{Q|Q^\star})\leq g_\star/2$, then
\eqref{eq:compare_pred} gives
$\eta^\star_t(G_t)\geq g_\star/2$ for any $t\geq 1$,
so that, by \eqref{eq:one_step_Psi},
\begin{align*}
\big\|\Psi^\star_t(\eta^\star_t)
-\Psi_t(\eta^\star_t)\big\|_{\mathrm{tv}}
\leq\frac{2\,\Delta_{Q|Q^\star}}{g_\star}
\leq\frac{2}{g_\star\log 2}\,
\vartheta\big(\Delta_{Q|Q^\star}\big).
\end{align*}
Otherwise, recalling that the total variation distance
between two probability measures is at most one, this
term is bounded by
$1<2C_1\vartheta(\Delta_{Q|Q^\star})/g_\star$. In both situations, this term is bounded by
$2C_1\vartheta(\Delta_{Q|Q^\star})/(g_\star\log 2)$.
This ends the proof of the theorem with the constant
\[
C_{\eta_1}\coloneqq C_1\Big(C_2+\frac{2}{g_\star\log 2}\Big),
\]
which only depends on $G_t$,
$M_{t+1}$ and $\eta_1$.
\end{proof}

When $G^\star_t=G_t$ for any $t\geq 1$, as in
Sections~\ref{sub:EKF} and~\ref{sub:Euler},
$\widehat{\eta}^\star_t=\Psi_t(\eta^\star_t)$, and part~\ref{tv_unbounded} of Proposition~\ref{prop:filter_TV} also
provides estimates of
$|\widehat{\eta}^\star_t(f)-\widehat{\eta}_t(f)|$ for
some unbounded functions $f$, such as the coordinates
of $x$.

\begin{remark}
When $\setX\subseteq\R^d$ is a hyper-rectangle, a
bound similar to that of Theorem~\ref{thm:compare} can
be obtained, under stronger assumptions on the model,
for the norm $\|\cdot\|_{\mathcal{V}_\setX}$ of
Section~\ref{sub:New_norm}, and hence for the
Kolmogorov norm. We do not develop this: for models with absolutely
continuous distributions, the total variation norm is
the strongest choice.
\end{remark}

We illustrate Theorem~\ref{thm:compare} with three
applications, without seeking generality: linearized
Kalman filters, Euler approximations of partially
observed diffusions and nudged state-space models. To the best of our knowledge, the
time-uniform estimates presented below are new.

\subsection{Linearized Kalman filters\label{sub:EKF}}

\subsubsection{Description of the models}

Let $r\geq d\geq 1$ be integers, let
$A\in\R^{r\times d}$ be a matrix of rank $d$, let
$\Sigma_G\in\mathcal{S}_r^+$ and
$\Sigma_M\in\mathcal{S}_d^+$, and let
$y_t$ be a sequence in $\R^r$. Let
$F\colon\R^d\to\R^d$ be a continuously differentiable
function, and denote by $J_F(x)\in\R^{d\times d}$ its
Jacobian matrix at $x\in\R^d$. In this section, for
any $t\geq 1$ and $x\in\R^d$,
\begin{align}\label{eq:Model_ekl}
G_t(x)\coloneqq
\exp\Big(-(y_t-Ax)^\top\Sigma_G^{-1}(y_t-Ax)\Big)
\qquad\text{and}\qquad
\delta_xM_{t+1}\coloneqq
\mathcal{N}_d\big(F(x),\Sigma_M\big),
\end{align}
and $\eta_1\coloneqq\mathcal{N}_d(w_0,\Sigma_0)$ for
some $w_0\in\R^d$ and $\Sigma_0\in\mathcal{S}_d^+$.
This model coincides with the Feynman--Kac
representation of the state-space model \eqref{eq:SSM}
with Gaussian noises,
$H_t(x)=Ax$, $F_{t+1}=F$, $\Sigma_{G_t}=\Sigma_G/2$
and $\Sigma_{M_{t+1}}=\Sigma_M$ for any $t\geq 1$.

When the function $F$ is linear, the distributions $\eta_t$ and
$\widehat{\eta}_t$ are Gaussian, and their means and
covariance matrices are given by the Kalman filter.
When $F$ is nonlinear, one can linearizes $F$ around some sequence of points
$x^\star_{t+1}$ in $\R^d$, and let, for any $t\geq 1$ and $x\in\R^d$,
\begin{align}\label{eq:M_starEK}
G^\star_t\coloneqq G_t
\qquad\text{and}\qquad
\delta_xM^\star_{t+1}\coloneqq
\mathcal{N}_d\Big(F(x^\star_{t+1})
+J_F(x^\star_{t+1})\,(x-x^\star_{t+1}),\,
\Sigma_M\Big).
\end{align}
The resulting distributions $\eta^\star_t$ and
$\widehat{\eta}^\star_t$ are Gaussian and can therefore be computed
exactly by the Kalman recursions. When
$x^\star_{t+1}$ is fixed in advance, each
$x^\star_{t+1}$ being a plausible value of
$X_t\sim\widehat{\eta}_t$, this procedure is the
\emph{linearized Kalman filter}. By contrast, the \emph{extended
Kalman filter}  chooses the points adaptively,
by letting $x^\star_{t+1}\coloneqq
\int_{\R^d}x\,\widehat{\eta}^\star_t(\dd x)$. In what follows, we use Theorem~\ref{thm:compare} to choose
$x^\star_{t+1}$ with a time-uniform error control.

\subsubsection{Local perturbation analysis}

Since $G^\star_t=G_t$ and the total variation norm is
positively homogeneous, the quantity defined in
\eqref{eq:delta} is
\begin{align}\label{eq:ineq_Delta}
\Delta_{Q|Q^\star}
=\sup_{t\geq 1}\sup_{x\in\R^d}
\Big(G_t(x)\,
\big\|\delta_xM^\star_{t+1}-\delta_xM_{t+1}\big\|_{\mathrm{tv}}\Big).
\end{align}
The nonlinearity of $F$ is measured by the oscillation
of its Jacobian
\begin{align}\label{def-kappaF}
\kappa_F\coloneqq
\sup_{x,x'\in\R^d}\big\|J_F(x)-J_F(x')\big\|
\in[0,\infty].
\end{align}
We have $\kappa_F=0$ if and only if $F$ is affine, and
$\kappa_F\leq 2L$ if $F$ is $L$-Lipschitz. For
instance, if $F(x)=Bx+\beta\,g(x)$ with
$B\in\R^{d\times d}$ and $g$ continuously
differentiable with bounded Jacobian, then
$\kappa_F\leq 2\beta\sup_{x\in\R^d}\|J_g(x)\|$.
Writing
\[
F(x)-F(x_0)-J_F(x_0)(x-x_0)
=\int_0^1\big(J_F(x_0+s(x-x_0))-J_F(x_0)\big)(x-x_0)\,\dd s,
\]
we obtain that
\begin{align}\label{eq:taylor}
\big\|F(x)-F(x_0)-J_F(x_0)(x-x_0)\big\|
\leq\kappa_F\,\|x-x_0\|,
\qquad\forall x,x_0\in\R^d.
\end{align}

We also use the following estimate: for any
$m_1,m_2\in\R^d$ and $\Sigma\in\mathcal{S}_d^+$,
\begin{align}\label{eq:TV_gauss}
\big\|\mathcal{N}_d(m_1,\Sigma)
-\mathcal{N}_d(m_2,\Sigma)\big\|_{\mathrm{tv}}
\leq\frac{\|m_1-m_2\|}{\sqrt{2\pi\,\sigma_{\min}(\Sigma)}}.
\end{align}
Indeed, the total variation
distance is invariant under the bijection $z\mapsto\Sigma^{-1/2}(z-m_2)$ and
under rotations, so that the left-hand side of \eqref{eq:TV_gauss} equals
\begin{align*}
\|\mathcal{N}_1(\rho,1)-\mathcal{N}_1(0,1)\|_{\mathrm{tv}}
&=2\Phi_{\mathcal{N}}(\rho/2)-1\leq\rho/\sqrt{2\pi}
\\
\text{with}\quad
\rho&\coloneqq\|\Sigma^{-1/2}(m_1-m_2)\|
\leq\sigma_{\min}(\Sigma)^{-1/2}\|m_1-m_2\|,
\end{align*}
and with
$\Phi_{\mathcal{N}}$ is the standard Gaussian
distribution function; see also
\citet[][Theorem~1]{barsov1987estimates}. We set
\[
C_{\mathcal{N}}\coloneqq(2\pi\,\sigma_{\min}(\Sigma_M))^{-1/2}.
\]
Then, by using   \eqref{eq:taylor} and
\eqref{eq:TV_gauss}, for any $t\geq 1$ and
$x\in\R^d$, we have
\begin{align}\label{eq:bound_TV_normal}
\big\|\delta_xM^\star_{t+1}-\delta_xM_{t+1}\big\|_{\mathrm{tv}}
\leq C_{\mathcal{N}}\,
\big\|F(x)-F(x^\star_{t+1})
-J_F(x^\star_{t+1})(x-x^\star_{t+1})\big\|
\leq\kappa_F\,C_{\mathcal{N}}\,\|x-x^\star_{t+1}\|.
\end{align}

To take advantage of the decay of $G_t$, we set
\[
K\coloneqq(A^\top\Sigma_G^{-1}A)^{-1}A^\top\Sigma_G^{-1},
\]
so that $Ky_t$ is  the maximizer of
$G_t$. Since $A^\top\Sigma_G^{-1}(y_t-AKy_t)=0$, writing
\[
y_t-Ax=(y_t-AKy_t)+A(Ky_t-x)
\]
 gives
\begin{align}\label{eq:New_G}
G_t(x)=c_t\,\widetilde{G}_t(x),
\qquad
\widetilde{G}_t(x)\coloneqq
\exp\Big(-(Ky_t-x)^\top A^\top\Sigma_G^{-1}A\,(Ky_t-x)\Big),
\qquad x\in\R^d,
\end{align}
with
\[
c_t\coloneqq\exp\big(-(y_t-AKy_t)^\top\Sigma_G^{-1}(y_t-AKy_t)\big)\in(0,1].
\]

Multiplying $G_t$ and $G^\star_t$ by the same positive
constant does not change the mappings $\Psi_t$,
$\Phi_{t+1}$, $\Psi^\star_t$ and $\Phi^\star_{t+1}$.
Thus, the models \eqref{eq:model3}--\eqref{eq:model4}
do not change when
$G_t$ and $G^\star_t$ are both replaced by
$\widetilde{G}_t$; we denote by
$\widetilde{\Delta}_{Q|Q^\star}$ the quantity
\eqref{eq:ineq_Delta} computed with $\widetilde{G}_t$
in place of $G_t$. Since $c_t\leq 1$, we have
$\Delta_{Q|Q^\star}\leq\widetilde{\Delta}_{Q|Q^\star}$ and, with
$l_A\coloneqq\sigma_{\min}(A^\top\Sigma_G^{-1}A)>0$,  
\begin{align}\label{eq:A_D}
\widetilde{G}_t(x)\leq
\exp\big(-l_A\,\|Ky_t-x\|^2\big).
\end{align}

The above calculations yield the following result.
\begin{lemma}\label{lemma:extended}
Let $G_t$ and $M_{t+1}$ be as
defined in \eqref{eq:Model_ekl}, and let
$G^\star_t$ and
$M^\star_{t+1}$ be as defined in
\eqref{eq:M_starEK} with $x^\star_{t+1}=Ky_t$ for any
$t\geq 1$. Then,
\begin{align*}
\Delta_{Q|Q^\star}\leq\widetilde{\Delta}_{Q|Q^\star}
\leq C_{A,M}\,\kappa_F,
\qquad
C_{A,M}\coloneqq
\big(4\pi e\,l_A\,\sigma_{\min}(\Sigma_M)\big)^{-1/2}.
\end{align*}
\end{lemma}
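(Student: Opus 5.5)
The first inequality $\Delta_{Q|Q^\star}\leq\widetilde{\Delta}_{Q|Q^\star}$ has already been noted: it follows from $G_t=c_t\widetilde{G}_t$ with $c_t\in(0,1]$, applied inside the supremum \eqref{eq:ineq_Delta}. It therefore remains to bound $\widetilde{\Delta}_{Q|Q^\star}$. By definition,
\[
\widetilde{\Delta}_{Q|Q^\star}=\sup_{t\geq 1}\sup_{x\in\R^d}\widetilde{G}_t(x)\,\|\delta_xM^\star_{t+1}-\delta_xM_{t+1}\|_{\mathrm{tv}}.
\]
I will bound the two factors pointwise. Since $x^\star_{t+1}=Ky_t$, estimate \eqref{eq:bound_TV_normal} gives
\[
\|\delta_xM^\star_{t+1}-\delta_xM_{t+1}\|_{\mathrm{tv}}\leq \kappa_F C_{\mathcal N}\|x-Ky_t\|.
\]
Estimate \eqref{eq:A_D} gives $\widetilde G_t(x)\leq \exp(-l_A\|Ky_t-x\|^2)$. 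Multiplying these and writing $r\coloneqq\|x-Ky_t\|\geq0$, the supremum is at most
\[
\kappa_F C_{\mathcal N}\sup_{r\geq 0} r\,e^{-l_A r^2},
\]
and this bound is uniform in $t$ and $y_t$.

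The remaining step is elementary one-variable calculus. The map $r\mapsto re^{-l_Ar^2}$ has derivative $(1-2l_Ar^2)e^{-l_Ar^2}$. It is therefore maximized at $r=(2l_A)^{-1/2}$, where its value is
\[
(2l_A)^{-1/2}e^{-1/2}=(2e\,l_A)^{-1/2}.
\]
Combining this with $C_{\mathcal N}=(2\pi\sigma_{\min}(\Sigma_M))^{-1/2}$ yields
\[
\widetilde{\Delta}_{Q|Q^\star}\leq \kappa_F\,(2\pi\sigma_{\min}(\Sigma_M))^{-1/2}(2e\,l_A)^{-1/2}=\big(4\pi e\,l_A\,\sigma_{\min}(\Sigma_M)\big)^{-1/2}\kappa_F=C_{A,M}\,\kappa_F,
\]
which is exactly the claimed constant.

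There is no real obstacle here. The only points to check are that $l_A>0$, which holds because $A$ has rank $d$ so that $A^\top\Sigma_G^{-1}A$ is positive definite, and that the case $\kappa_F=\infty$ is trivial. The substantive inputs, namely the Taylor estimate \eqref{eq:taylor}, the Gaussian total variation bound \eqref{eq:TV_gauss} and the recentring \eqref{eq:New_G}, are already in place.
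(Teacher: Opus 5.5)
Your proposal is correct and matches the paper's proof: you combine the pointwise bounds \eqref{eq:bound_TV_normal} (with $x^\star_{t+1}=Ky_t$) and \eqref{eq:A_D}, and then maximize $r\mapsto r\,e^{-l_Ar^2}$ at $r=(2l_A)^{-1/2}$. This gives $C_{\mathcal{N}}/\sqrt{2e\,l_A}=C_{A,M}$, and the first inequality follows from $c_t\leq 1$.
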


\begin{proof}
By \eqref{eq:bound_TV_normal} and \eqref{eq:A_D},
with $x^\star_{t+1}=Ky_t$,
\begin{align*}
\widetilde{G}_t(x)\,
\big\|\delta_xM^\star_{t+1}-\delta_xM_{t+1}\big\|_{\mathrm{tv}}
\leq\kappa_F\,C_{\mathcal{N}}\,\|x-Ky_t\|\,
e^{-l_A\|x-Ky_t\|^2}
\leq\kappa_F\,C_{\mathcal{N}}\sup_{z\geq 0}z\,e^{-l_Az^2}
=\frac{\kappa_F\,C_{\mathcal{N}}}{\sqrt{2e l_A}}
\end{align*}
for any $t\geq 1$ and $x\in\R^d$, and
$C_{\mathcal{N}}/\sqrt{2e l_A}=C_{A,M}$.
\end{proof}

With $x^\star_{t+1}=Ky_t$, the function $F$ is linearized where
$G_t$, the weight of the linearization error, is
maximal. This choice only depends on the observations,
and can be computed before running the Kalman
recursions.

\subsubsection{Time-uniform error bounds}

\begin{proposition}\label{prop:extended}
Let $G_t$ and $M_{t+1}$ be as
defined in \eqref{eq:Model_ekl}, let
$G^\star_t$ and
$M^\star_{t+1}$ be as defined in
\eqref{eq:M_starEK} with $x^\star_{t+1}=Ky_t$ for any
$t\geq 1$, and let
$\eta_1=\mathcal{N}_d(w_0,\Sigma_0)$. Let
$L\in[1,\infty)$ and $\Xi\in[0,\infty)$. Then, there exists a constant
$C\in[1,\infty)$, that only depends on
$(A,\Sigma_G,\Sigma_M,w_0,\Sigma_0,L,\Xi)$, such that
the following estimates hold for any continuously
differentiable and $L$-Lipschitz function $F$ and
every sequence $y_t$ such that
$\|Ky_1-w_0\|\leq\Xi$ and
$\sup_{t\geq 1}\|Ky_{t+1}-F(Ky_t)\|\leq\Xi$:
\begin{align*}
\sup_{t\geq 1}\|\eta^\star_t-\eta_t\|_{\mathrm{tv}}
\vee
\sup_{t\geq 1}
\|\widehat{\eta}^\star_t-\widehat{\eta}_t\|_{\mathrm{tv}}
\leq C\,\vartheta(\kappa_F)
\end{align*}
with $\vartheta$ as in \eqref{eq:vartheta}
and, if $\kappa_F\leq 1/C$, then, for any $t\geq 1$,
\begin{align*}
\Big\|\int_{\R^d}x\,\widehat{\eta}^\star_t(\dd x)
-\int_{\R^d}x\,\widehat{\eta}_t(\dd x)\Big\|
\leq C\,\vartheta(\kappa_F).
\end{align*}
\end{proposition}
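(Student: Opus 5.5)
The plan is to combine Theorem~\ref{thm:compare} with Lemma~\ref{lemma:extended}, and to track the uniformity of the constants through Remark~\ref{rem:constant}. First, replace $G_t$ by the equivalent potentials $\widetilde{G}_t$ of \eqref{eq:New_G}. This changes neither $(\eta_t,\widehat{\eta}_t)$ nor $(\eta^\star_t,\widehat{\eta}^\star_t)$, and by Lemma~\ref{lemma:extended} it gives $\widetilde{\Delta}_{Q|Q^\star}\leq C_{A,M}\kappa_F$.

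The model with $\widetilde{G}_t$ is of the form \eqref{eq:Gt_def}, with
\[
\zeta_{G_t}(z)=e^{-z},\qquad H_t(x)=x,\qquad y_t\leftarrow Ky_t,\qquad \Sigma_{G_t}=(A^\top\Sigma_G^{-1}A)^{-1},
\]
and with $F_{t+1}=F$ and Gaussian $M_{t+1}$.

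\begin{itemize}
\item \ref{Model1} holds with $L'$ depending only on $(A,\Sigma_G,\Sigma_M)$; enlarging $L'$ covers the polynomial tail bounds for Gaussians.
\item \ref{Obs1} is trivial, since $H_t$ is the identity.
\item \ref{Obs2} is exactly the hypothesis $\sup_t\|Ky_{t+1}-F(Ky_t)\|\leq\Xi$.
\item The hypothesis of part~\ref{i-prop:example} of Proposition~\ref{prop:example} holds with $\varphi(z)=Lz$.
\end{itemize}

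Hence \ref{A2} holds with $D_t=E_t=\{x:\|x-Ky_t\|_\infty\leq1\}$. Next, $\eta_1(E_1)=\mathcal{N}_d(w_0,\Sigma_0)(E_1)\geq a>0$, where $a$ depends only on $(w_0,\Sigma_0,\Xi)$ because $\|Ky_1-w_0\|\leq\Xi$. Remark~\ref{rem:constant}\ref{rem1} then shows that the constants of Theorem~\ref{thm:main_TV}, Proposition~\ref{prop:filter_TV}, $\underline q$ and a lower bound on $\inf_t\eta_t(G_t)$ depend only on $(A,\Sigma_G,\Sigma_M,w_0,\Sigma_0,L,\Xi)$. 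Since the proof of Theorem~\ref{thm:compare} builds $C_{\eta_1}$ from exactly these quantities ($C_1$, $C_2$, $g_\star$), applying it to the $\widetilde{G}_t$ representation gives
\[
\sup_t\|\eta^\star_t-\eta_t\|_{\mathrm{tv}}\vee\sup_t\|\widehat\eta^\star_t-\widehat\eta_t\|_{\mathrm{tv}}\leq C_{\eta_1}\vartheta(C_{A,M}\kappa_F).
\]
By \eqref{eq:vartheta_scale} (assuming $C_{A,M}\geq1$ after enlarging it), this is at most $C\vartheta(\kappa_F)$.

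For the mean, I use that $G^\star_t=G_t$, so $\widehat{\eta}^\star_t=\Psi_t(\eta^\star_t)$. I then apply Remark~\ref{rem:unbounded} with $H_t(x)=x$ in the $\widetilde G_t$ representation. Here $K'\coloneqq\sup_{x}\|x-Ky_t\|\widetilde G_t(x)\leq(2el_A)^{-1/2}$, which is uniform in $t$. Part~\ref{tv_unbounded} of Proposition~\ref{prop:filter_TV} then gives, for each coordinate, the bound $2K'C_{\eta_1}\|\eta^\star_t-\eta_t\|_{\mathrm{tv}}$. This holds provided $\|\eta^\star_t-\eta_t\|_{\mathrm{tv}}\leq c_{\eta_1}$, which is guaranteed once $C\vartheta(\kappa_F)\leq c_{\eta_1}$. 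Since $\vartheta$ is continuous, increasing and vanishes at $0$, this follows from $\kappa_F\leq1/C$ after enlarging $C$. Summing over the $d$ coordinates gives the Euclidean bound.

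The main obstacle is uniformity of the constant over all $L$-Lipschitz $F$ and all admissible observation sequences. Theorem~\ref{thm:compare} only asserts a constant depending on the model, so I must check that Remark~\ref{rem:constant}\ref{rem1} covers every ingredient used in its proof. That proof uses $C_1$ from Theorem~\ref{thm:main_TV} with $\varpi(z)=z\wedge1$, $C_2$ from Proposition~\ref{prop:filter_TV}, $\underline q$, and $g_\star$, and the remark covers exactly these. The one subtlety is that $\eta_1(D_1)>0$ must be quantified as $\geq a$ uniformly, which is handled above. A second point to verify is that the rescaled representation still has $\zeta_{G_t}\in(0,1]$ and covariance eigenvalues in $[1/L',L']$ with $L'$ independent of $t$; this holds because $\Sigma_{G_t}$ is constant in $t$.
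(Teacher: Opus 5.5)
Your proposal is correct and follows the paper's proof essentially step for step. The steps match: you reduce to the equivalent potentials $\widetilde{G}_t$, verify \ref{Model1}, \ref{Obs} and part~\ref{i-prop:example} of Proposition~\ref{prop:example} with $\varphi(z)=Lz$, and apply Theorem~\ref{thm:compare} with Lemma~\ref{lemma:extended} and \eqref{eq:vartheta_scale}. You then get uniform constants from part~\ref{rem1} of Remark~\ref{rem:constant} together with $\eta_1(E_1)\geq a$, and handle the mean via Remark~\ref{rem:unbounded} and part~\ref{tv_unbounded} of Proposition~\ref{prop:filter_TV}.

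The only point to tighten is the last step. Keep the first-bound constant $C_0$ fixed, choose $\delta_0$ with $C_0\vartheta(\delta_0)\leq c_{\eta_1}$, and take $C\geq\max\{C_0,1/\delta_0\}$, as the paper does. Enlarging one and the same $C$ on both sides would not work, because $C\vartheta(1/C)=\log(1+C)$ is increasing.
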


\begin{proof}
As explained above, there is no loss of generality in
replacing $G_t$ and $G^\star_t$ by $\widetilde{G}_t$. With
$\widetilde{\Sigma}\coloneqq(A^\top\Sigma_G^{-1}A)^{-1}$,
the model $(\widetilde{G}_t,M_{t+1})$ is of
the form \eqref{eq:Gt_def}--\eqref{eq:Mt_def2}, with
observations $Ky_t$ in $\R^d$,
$H_t(x)=x$, $\Sigma_{G_t}=\widetilde{\Sigma}$,
$\zeta_{G_t}(z)=e^{-z}$, $F_{t+1}=F$,
$\Sigma_{M_{t+1}}=\Sigma_M$ and
$\zeta_{M_{t+1}}(z)=e^{-z/2}$ (see also
Section~\ref{sub:exp}). For this model, \ref{Model1}
holds with 
\[
\zeta_G(z)=e^{-z},\quad \zeta_M(z)=e^{-z/2},\quad
\varepsilon=\nu_G=\nu_M=1
\]
 and a constant $L'\geq L$
depending only on $(d,A,\Sigma_G,\Sigma_M)$;
\ref{Obs1} holds trivially and \ref{Obs2} holds with
the constant $\Xi$. Since $F$ is $L$-Lipschitz,
part~\ref{i-prop:example} of Proposition~\ref{prop:example} applies
with $\varphi(z)=Lz$, so that \ref{A3}, and thus
\ref{A2}, holds. Theorem~\ref{thm:compare}, Lemma~\ref{lemma:extended}
and \eqref{eq:vartheta_scale} then give the first
bound, with the constant
$C_{\eta_1}\max\{1,C_{A,M}\}$, with $C_{\eta_1}$ as in  Theorem~\ref{thm:compare}.

 In particular, from  Remark~\ref{rem:constant} (proved in
Appendix~\ref{p-rem:constant}), condition \ref{A2} holds with the sets
\[
D_t=\{x\in\R^d:\ \|x-Ky_t\|_\infty\leq 1\}
\]
and the condition
$\|Ky_1-w_0\|\leq\Xi$ implies that
$\eta_1(D_1)\geq a$ for some $a\in(0,1]$ depending
only on $(d,w_0,\Sigma_0,\Xi)$. Together with the part~\ref{rem1} of
Remark~\ref{rem:constant}, this gives
the claimed dependence of $C$.

For the second bound, note that
$\widehat{\eta}^\star_t=\Psi_t(\eta^\star_t)$ since
$G^\star_t=G_t$. In the representation above,
$H_t(x)=x$, so that, by the discussion following
Proposition~\ref{prop:example}, part~\ref{ii-prop:example} of
Proposition~\ref{prop:filter_TV} applies to the
coordinate functions $f(x)=x_i$,
$i\in\{1,\dots,d\}$. Let $c_{\eta_1}$ be as in that
proposition and let $\delta_0\in(0,1]$ be such that
\[
C_{\eta_1}\max\{1,C_{A,M}\}\vartheta(\delta_0)\leq
c_{\eta_1}.
\]
 Increasing $C$ if necessary, we may
assume that $1/C\leq\delta_0$. Then $\kappa_F\leq 1/C$
implies $\|\eta^\star_t-\eta_t\|_{\mathrm{tv}}\leq
c_{\eta_1}$ for any $t\geq 1$, and the second bound
follows, after a further increase of $C$, from
part~\ref{tv_unbounded} of Proposition~\ref{prop:filter_TV} applied
to each coordinate.
\end{proof}

By Proposition \ref{prop:extended}, the closer $F$ is to an affine function, the
smaller the time-uniform error of the linearized Kalman
filter. Recall that
$\lim_{z\downarrow 0}\vartheta(z)=0$ and that
$\kappa_F=0$ when $F$ is affine, in which case the
approximation is exact.

\subsection{Euler approximation of partially observed
diffusions\label{sub:Euler}}

\subsubsection{Description of the models}

We consider the stochastic differential equation of
Section~\ref{sub:SDE}, where the functions $b$ and $\sigma$ satisfy
conditions~\ref{diff1}--\ref{diff3}, and we let $M$
denote the Markov transition of the discrete-time
process $X_t$. Recall that
$M(x,\dd x')=m(x,x')\,\lambda_d(\dd x')$ for any
$x\in\R^d$. Given a sequence $y_t$ in
$\R^d$, we let, as in \eqref{eq:Model_SDE},
\begin{align}\label{eq:G_diff}
G_t(x)\coloneqq\exp\big(-\|y_t-x\|^2\big)
\qquad\text{and}\qquad
M_{t+1}\coloneqq M,
\qquad x\in\R^d,\quad t\geq 1.
\end{align}

Here, the function  $m$ is intractable and sampling from $M(x,\dd x')$
is not trivial \citep[see][for exact
simulation methods]{beskos2006exact}. Particle filters designed for this problem exist
\citep{fearnhead2008particle}, with the drawback to have a random running
time. A common alternative is to run a particle filter (Section~\ref{sec:PF}) on a
time-discretization of the model; see for instance
\citet{delmoral2001monte} for particle approximations
of the filter based on Euler schemes.

For an integer $l\geq 1$, the Euler--Maruyama
approximation $(X^l_\tau)_{\tau\geq 0}$ of the
diffusion is defined by $X^l_0=X_0$ and, for any
$i\in\mathbb{N}$ and
$\tau\in[\frac{i-1}{l},\frac{i}{l}]$,
\begin{align*}
X^l_\tau=X^l_{\frac{i-1}{l}}
+\Big(\tau-\frac{i-1}{l}\Big)\,
b\big(X^l_{\frac{i-1}{l}}\big)
+\sigma\big(X^l_{\frac{i-1}{l}}\big)
\big(B_\tau-B_{\frac{i-1}{l}}\big).
\end{align*}
We denote by $\widehat{M}_l$ the Markov transition of
$X^l_t$; that is, one step of this chain consists of $l$ Euler
steps. For a fixed integer
$l^\star\geq 1$, we let
\begin{align}\label{eq:star_diff}
G^\star_t\coloneqq G_t
\qquad\text{and}\qquad
M^\star_{t+1}\coloneqq\widehat{M}_{l^\star},
\qquad t\geq 1.
\end{align}
One can easly sample from $\widehat{M}_{l^\star}(x,\dd x')$, and therefore
  $\eta^\star_t$
and $\widehat{\eta}^\star_t$ can be
approximated by a bootstrap particle filter.

\subsubsection{Local perturbation analysis}

Under \ref{diff1}--\ref{diff3}, the transition
$\widehat{M}_l$ has a density $\widehat{m}_l$ with
respect to $\lambda_d$ (see the remark at the end of
this section), and there exists a constant
$c\in(0,1]$ such that, for any $l\geq 1$ and
$x,x'\in\R^d$,
\begin{align*}
\big|m(x,x')-\widehat{m}_l(x,x')\big|
\leq\frac{1}{c\,l}\,\exp\big(-c\,\|x-x'\|^2\big);
\end{align*}
see \citet[][Theorem~6]{guyon2006euler} and
\citet{bally1996law}. Since
$\|\mu-\nu\|_{\mathrm{tv}}
=\frac{1}{2}\int|p_\mu-p_\nu|\,\dd\lambda_d$ for
probability measures $\mu$ and $\nu$ with densities
$p_\mu$ and $p_\nu$, integration gives
\begin{align}\label{eq:diff_l}
\sup_{x\in\R^d}
\big\|\delta_xM-\delta_x\widehat{M}_l\big\|_{\mathrm{tv}}
\leq\frac{C_{\mathrm{E}}}{l},
\qquad
C_{\mathrm{E}}\coloneqq\frac{\pi^{d/2}}{2\,c^{1+d/2}},
\qquad\forall l\geq 1.
\end{align}

\begin{lemma}\label{lemma:loc_diff}
Let $G_t$, $M_{t+1}$,
$G^\star_t$ and
$M^\star_{t+1}$ be as defined in
\eqref{eq:G_diff} and \eqref{eq:star_diff}, and
assume that $b$ and $\sigma$ satisfy
\ref{diff1}--\ref{diff3}. Then
$\Delta_{Q|Q^\star}\leq C_{\mathrm{E}}/l^\star$, with
$C_{\mathrm{E}}$ as in \eqref{eq:diff_l}.
\end{lemma}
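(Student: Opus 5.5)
The plan is to reduce $\Delta_{Q|Q^\star}$ to the kernel discrepancy already controlled in \eqref{eq:diff_l}, using that the potential functions coincide. Since $G^\star_t=G_t$ and $Q_{t+1}(x,\dd x')=G_t(x)M_{t+1}(x,\dd x')$, we have, for any $t\geq 1$ and $x\in\R^d$,
\begin{align*}
\delta_xQ^\star_{t+1}-\delta_xQ_{t+1}
=G_t(x)\big(\delta_x\widehat{M}_{l^\star}-\delta_xM\big).
\end{align*}
By positive homogeneity of the total variation norm, this gives
\begin{align*}
\big\|\delta_xQ^\star_{t+1}-\delta_xQ_{t+1}\big\|_{\mathrm{tv}}
=G_t(x)\,\big\|\delta_x\widehat{M}_{l^\star}-\delta_xM\big\|_{\mathrm{tv}},
\end{align*}
exactly as in \eqref{eq:ineq_Delta}.

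Next, use $G_t(x)=\exp(-\|y_t-x\|^2)\leq 1$ and then \eqref{eq:diff_l} with $l=l^\star$. This bounds the right-hand side by $C_{\mathrm{E}}/l^\star$ uniformly in $t\geq1$ and $x\in\R^d$. Taking the supremum over $t$ and $x$ in the definition \eqref{eq:delta} yields $\Delta_{Q|Q^\star}\leq C_{\mathrm{E}}/l^\star$.

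The only substantive ingredient is \eqref{eq:diff_l}, which the paper takes as given from the density estimate of \citet[][Theorem~6]{guyon2006euler} under \ref{diff1}--\ref{diff3}. For completeness, it follows by writing $\|\delta_xM-\delta_x\widehat{M}_l\|_{\mathrm{tv}}=\frac12\int|m(x,x')-\widehat{m}_l(x,x')|\,\dd x'$ and integrating the Gaussian bound $\frac{1}{cl}e^{-c\|x-x'\|^2}$ in $x'$. This gives $\frac{1}{2cl}(\pi/c)^{d/2}=C_{\mathrm{E}}/l$.

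No real obstacle is expected. The one point to check is that the constant $c$ in the Guyon bound is uniform in $l$ and in $x,x'$, so that the final bound holds uniformly in $t$ and $x$.
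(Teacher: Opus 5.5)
Your proposal is correct and follows the paper's proof: since $G^\star_t=G_t\leq 1$, you bound $\Delta_{Q|Q^\star}$ by $\sup_{x}\|\delta_xM-\delta_x\widehat{M}_{l^\star}\|_{\mathrm{tv}}$ and then apply \eqref{eq:diff_l}. Your recomputation of $C_{\mathrm{E}}=\pi^{d/2}/(2c^{1+d/2})$ from the Gaussian integral also matches the paper's constant.
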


\begin{proof}
Since $G^\star_t=G_t\leq 1$, we have
$\Delta_{Q|Q^\star}\leq\sup_{x\in\R^d}
\|\delta_xM-\delta_x\widehat{M}_{l^\star}\|_{\mathrm{tv}}$,
and the result follows from \eqref{eq:diff_l}.
\end{proof}

\subsubsection{Time-uniform error bounds}

\begin{proposition}\label{prop:compare_diff}
Let $G_t$ and $M_{t+1}$ be as
defined in \eqref{eq:G_diff}, and let
$\eta_1\in\mathcal{P}(\R^d)$ have a positive density
with respect to $\lambda_d$. Assume that the functions $b$ and
$\sigma$ satisfy \ref{diff1}--\ref{diff3} and that
$\sup_{t\geq 1}\|y_{t+1}-y_t\|<\infty$. Then, there exists a constant
$C_{\eta_1}\in[1,\infty)$ such that, for any integer $l^\star\geq 1$, if
$G^\star_t$ and
$M^\star_{t+1}$ are as defined in
\eqref{eq:star_diff}, then
\begin{align*}
\sup_{t\geq 1}\|\eta^\star_t-\eta_t\|_{\mathrm{tv}}
\vee
\sup_{t\geq 1}
\|\widehat{\eta}^\star_t-\widehat{\eta}_t\|_{\mathrm{tv}}
\leq C_{\eta_1}\,\vartheta(1/l^\star)
\end{align*}
and, if $l^\star\geq C_{\eta_1}$, then, for any
$t\geq 1$,
\begin{align*}
\Big\|\int_{\R^d}x\,\widehat{\eta}^\star_t(\dd x)
-\int_{\R^d}x\,\widehat{\eta}_t(\dd x)\Big\|
\leq C_{\eta_1}\,\vartheta(1/l^\star).
\end{align*}
\end{proposition}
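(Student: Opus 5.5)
The plan follows the proof of Proposition~\ref{prop:extended}, with Lemma~\ref{lemma:loc_diff} in place of Lemma~\ref{lemma:extended}.

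\emph{Step 1: check \ref{A2} for the exact model \eqref{eq:G_diff}.} By Section~\ref{sub:SDE}, the bounds \eqref{eq:densSDE} show that \ref{Model11} holds with $F_{t+1}(x)=x$, $H_t(x)=x$ and $\zeta_{G_t}(z)=e^{-z}$ (after rescaling $\Sigma_{G_t}=I_d/2$). Condition \ref{Obs1} is trivial. Condition \ref{Obs2} reduces to $\sup_{t}\|y_{t+1}-y_t\|<\infty$, which is assumed. The identity map satisfies the condition of part~\ref{i-prop:example} of Proposition~\ref{prop:example} with $\varphi(z)=z$. By the extension in Section~\ref{sub:extension}, \ref{A3}, hence \ref{A2}, holds for some sets $D_t$; by Remark~\ref{rem:constant} one can take $D_t=\{x:\|x-y_t\|_\infty\le1\}$. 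Since $\eta_1$ has a positive Lebesgue density and $D_1$ has positive Lebesgue measure, $\eta_1(D_1)>0$.

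\emph{Step 2: total variation bound.} The model $(G_t,M)$ and $\eta_1$ do not depend on $l^\star$. Theorem~\ref{thm:compare} therefore gives a constant $C'_{\eta_1}$, independent of $(G^\star_t,M^\star_{t+1})$, such that both suprema are bounded by $C'_{\eta_1}\vartheta(\Delta_{Q|Q^\star})$. By Lemma~\ref{lemma:loc_diff}, $\Delta_{Q|Q^\star}\le C_{\mathrm E}/l^\star$. Monotonicity of $\vartheta$ and \eqref{eq:vartheta_scale} (taking $c=\max\{1,C_{\mathrm E}\}$) then give the first bound with constant $C'_{\eta_1}\max\{1,C_{\mathrm E}\}$.

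\emph{Step 3: posterior means.} Since $G^\star_t=G_t$, we have $\widehat\eta^\star_t=\Psi_t(\eta^\star_t)$. With $H_t(x)=x$, Remark~\ref{rem:unbounded} applies to $f(x)=x_i$: we have $|x_i-y_{t,i}|G_t(x)\le\sup_{z\ge0}ze^{-z^2}=:K<\infty$, uniformly in $t$. Thus $f^\pm_{t,i}=(x_i-y_{t,i})^\pm/K$ satisfy part~\ref{tv_unbounded} of Proposition~\ref{prop:filter_TV} with $a_t=0$. Let $c_{\eta_1}$ be the threshold of that proposition, and choose $C_{\eta_1}$ large enough that $l^\star\ge C_{\eta_1}$ implies $C'_{\eta_1}\max\{1,C_{\mathrm E}\}\vartheta(1/l^\star)\le c_{\eta_1}$; this is possible since $\vartheta(z)\to0$ as $z\downarrow0$. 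Step~2 then gives $\|\eta^\star_t-\eta_t\|_{\mathrm{tv}}\le c_{\eta_1}$ for all $t$. The proposition yields $|\widehat\eta^\star_t(f^\pm_{t,i})-\widehat\eta_t(f^\pm_{t,i})|\lesssim\vartheta(1/l^\star)$. Linearity, together with the vanishing of $f\mapsto\Psi_t(\eta^\star_t)(f)-\widehat\eta_t(f)$ on constants, gives the bound for each coordinate up to a factor $2K\sqrt d$, and we enlarge $C_{\eta_1}$ accordingly.

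The only delicate point is Step~1. The transition density is known only through the two-sided Gaussian bounds \eqref{eq:densSDE}, so \ref{A2} must come from the extension to general kernels in Section~\ref{sub:extension} rather than from Proposition~\ref{prop:example} directly. We also have to confirm that the resulting constants, which depend only on the model and $\eta_1$ and not on $l^\star$, are uniform; this follows because Theorem~\ref{thm:compare} explicitly does not depend on $M^\star_{t+1}$.
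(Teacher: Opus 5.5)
Your proposal is correct and follows essentially the paper's own argument: \ref{A3} (hence \ref{A2}) for the model \eqref{eq:G_diff} via Section~\ref{sub:SDE} and the extension of Section~\ref{sub:extension}, then Theorem~\ref{thm:compare} combined with Lemma~\ref{lemma:loc_diff} and \eqref{eq:vartheta_scale}, and finally the posterior-mean bound obtained exactly as in the proof of Proposition~\ref{prop:extended}, using $G^\star_t=G_t$, $H_t(x)=x$, Remark~\ref{rem:unbounded} and part~\ref{tv_unbounded} of Proposition~\ref{prop:filter_TV}. One cosmetic slip: $G_t(x)=e^{-\|y_t-x\|^2}$ corresponds to $\zeta_{G_t}(z)=e^{-z}$ with $\Sigma_{G_t}=I_d$, not $I_d/2$; this does not affect the verification of \ref{Model11}.
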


\begin{proof}
By Section~\ref{sub:SDE}, \ref{A3}, and hence
\ref{A2}, holds for the model \eqref{eq:G_diff} and some sets $D_t$, and under the assumption imposed on $\eta_1$  we necessarily have $\eta_1(D_1)>0$. The first bound then
follows from Theorem~\ref{thm:compare},
Lemma~\ref{lemma:loc_diff} and
\eqref{eq:vartheta_scale}; the constant does not depend on $l^\star$, since the
constant of Theorem~\ref{thm:compare} does not depend
on the approximating model. The second estimate is
checked as in the proof of
Proposition~\ref{prop:extended}, using the fact that $G^\star_t=G_t$ and that $H_t(x)=x$ in
\eqref{eq:G_diff}.
\end{proof}

\begin{remark}\label{rem:Diff}
Under \ref{diff1}--\ref{diff3}, for any $l\geq 1$ the Markov transition
$\widehat{M}_{l}$ has a density $m_l$ with
respect to $\lambda_d$
\citep[see e.g.][Theorem~6]{guyon2006euler}. By
\citet[][Theorem~2.1]{menozzi2010some},
\eqref{eq:densSDE} holds with $m$ replaced by
$m_l$, for some constants
$(c_1,c_2)\in(0,1)^2$ independent of $l$, and by
\citet[][Theorem~6]{guyon2006euler},
\eqref{eq:derive_sde} holds with $m$ replaced by
$m_l$, for some constant $c\in(0,1)$ independent of $l$. Thus, following the discussion in Section~\ref{sub:SDE}, when we have
\[
\sup_{t\geq 1}\|y_{t+1}-y_t\|<\infty,
\]
 then
\ref{A3} and \ref{B1}   hold for the approximating
model $(G^\star_t,M^\star_{t+1})$. This property is used in Section~\ref{sec:PF}.
\end{remark}

\subsection{Nudged state-space models\label{sub:nudging}}

\subsubsection{Description of the models}

Let $r\geq d\geq 1$ be integers, let
$\Sigma_G\in\mathcal{S}_r^+$ and
$\Sigma_M\in\mathcal{S}_d^+$, and let
$H\colon\R^d\to\R^r$ and $F\colon\R^d\to\R^d$ be
continuous functions. We consider the state-space
model \eqref{eq:SSM} with Gaussian noises and
\begin{align}\label{eq:SSM_nudge}
H_t=H,\quad F_{t+1}=F,\quad
\Sigma_{G_t}=\Sigma_G,\quad
\Sigma_{M_{t+1}}=\Sigma_M,\quad
U_t\iid\mathcal{N}_r(0,I_r),\quad
V_{t+1}\iid\mathcal{N}_d(0,I_d),
\quad t\geq 1 
\end{align}
with observations $y_t$ in $\R^r$. To handle model
misspecification, one may modify the model so that the
observed data become more likely, and, to do so, below we consider the
nudging approach of
\citet{gonzalez2025nudging}. Assume that $H$ is
continuously differentiable, denote by
$J_H(x)\in\R^{r\times d}$ its Jacobian matrix at
$x\in\R^d$, and let, for any $t\geq 1$ and
$x\in\R^d$,
\begin{align*}
v_t(x)\coloneqq J_H(x)^\top\Sigma_G^{-1}\big(y_t-H(x)\big)
=-\frac{1}{2}\nabla_x
\Big[\big(y_t-H(x)\big)^\top\Sigma_G^{-1}\big(y_t-H(x)\big)\Big],
\end{align*}
so that $x+\chi\,v_t(x)$, $\chi\geq 0$, is a gradient
step towards states that better explain $y_t$. For
$\chi\in[0,\infty)$, let $H_{1,\chi}\coloneqq H$ and,
for any $t\geq 1$ and $x\in\R^d$, let
\begin{align*}
H_{t+1,\chi}(x)\coloneqq H\big(x+\chi\,v_{t+1}(x)\big)
\qquad\text{and}\qquad
F_{t+1,\chi}(x)\coloneqq F\big(x+\chi\,v_t(x)\big).
\end{align*}
The nudged state-space model of \citet{gonzalez2025nudging} is
the state-space model \eqref{eq:SSM} in which $H_t$
and $F_{t+1}$ are replaced by $H_{t,\chi}$ and
$F_{t+1,\chi}$. The functions $v_t$ determine the direction
of the correction and $\chi$ its magnitude. The parameter $\chi$ is chosen according to the degree of misspecification; and it is expected that a small value for $\chi$ leads to a nudged model close to the original one; hence, there is a  need to control the dependence of the nudged model on $\chi$.

For $\chi\in[0,\infty)$ and $t\geq 1$, let
$G_{t,\chi}\colon\R^d\to(0,1]$ and $M_{t+1,\chi}$ be
defined by
\begin{align*}
G_{t,\chi}(x)\coloneqq
\exp\Big(-\frac{1}{2}\big(y_t-H_{t,\chi}(x)\big)^\top
\Sigma_G^{-1}\big(y_t-H_{t,\chi}(x)\big)\Big)
\qquad\text{and}\qquad
\delta_xM_{t+1,\chi}\coloneqq
\mathcal{N}_d\big(F_{t+1,\chi}(x),\Sigma_M\big).
\end{align*}
For $\chi\in[0,\infty)$, we denote by $\eta_{t,\chi}$ and
$\widehat{\eta}_{t,\chi}$ the prediction and updated
models associated with the potential functions
$G_{t,\chi}$ and the Markov transitions
$M_{t+1,\chi}$, with $\eta_{1,\chi}=\eta_1$. By
Section~\ref{sub:motive_example}, the flow
$\eta_{t+1,\chi}$ (respectively,
$\widehat{\eta}_{t,\chi}$) are the predictive (respectively, filtering) distributions under the nudged state-space model
with parameter $\chi$; the case $\chi=0$ is the original
model. To compare the nudged and the original models,
with the notation of this section, we fix
$\chi^\star\in[0,\infty)$ and let, for any $t\geq 1$,
\begin{align}\label{eq:Model_nudge}
G_t\coloneqq G_{t,0},\qquad
M_{t+1}\coloneqq M_{t+1,0},\qquad
G^\star_t\coloneqq G_{t,\chi^\star},\qquad
M^\star_{t+1}\coloneqq M_{t+1,\chi^\star},
\end{align}
so that $\eta_t=\eta_{t,0}$,
$\widehat{\eta}_t=\widehat{\eta}_{t,0}$,
$\eta^\star_t=\eta_{t,\chi^\star}$ and
$\widehat{\eta}^\star_t=\widehat{\eta}_{t,\chi^\star}$.

\subsubsection{Local perturbation analysis}

We assume that there exist some constants
$L\in[1,\infty)$ and $\alpha\in(0,1]$ such that
\begin{align}\label{eq:assume_nudge}
\|H(x)-H(x')\|\leq L\|x-x'\|
\qquad\text{and}\qquad
\|F(x)-F(x')\|\leq L\|x-x'\|^\alpha,
\qquad\forall x,x'\in\R^d.
\end{align}
Increasing $L$ if necessary, we also assume that
$\sigma_{\max}(\Sigma_G^{-1})\leq L$ and
$\sup_{x\in\R^d}\|J_H(x)^\top\Sigma_G^{-1}\|\leq L$;
the latter assumption is possible since $\|J_H(x)\|\leq L$ by
\eqref{eq:assume_nudge}. We let
$l_G\coloneqq\sigma_{\min}(\Sigma_G^{-1})$ and,
for any $t\geq 1$ and $x\in\R^d$,
$e_t(x)\coloneqq\|y_t-H(x)\|$, so that
\begin{align}\label{eq:bound_vt}
\|v_t(x)\|\leq L\,e_t(x)
\qquad\text{and}\qquad
G_t(x)\leq\exp\Big(-\frac{l_G}{2}\,e_t(x)^2\Big).
\end{align}

Let $t\geq 1$ and $x\in\R^d$ be fixed. Then, since
\begin{align*}
\delta_xQ^\star_{t+1}-\delta_xQ_{t+1}
=(G^\star_t(x)-G_t(x))\,\delta_xM^\star_{t+1}
+G_t(x)\,(\delta_xM^\star_{t+1}-\delta_xM_{t+1}),
\end{align*}
we have
\begin{align}\label{eq:decomp}
\big\|\delta_xQ^\star_{t+1}-\delta_xQ_{t+1}\big\|_{\mathrm{tv}}
\leq\big|G^\star_t(x)-G_t(x)\big|
+G_t(x)\,
\big\|\delta_xM^\star_{t+1}-\delta_xM_{t+1}\big\|_{\mathrm{tv}}.
\end{align}

\emph{First term of \eqref{eq:decomp}.} The first term
vanishes for $t=1$. For $t\geq 2$, let
$H^\star_t(x)\coloneqq H(x+\chi^\star v_t(x))$. Then, writing
$y_t-H^\star_t(x)=(y_t-H(x))+(H(x)-H^\star_t(x))$, we have
\begin{equation}\label{eq:n1}
\begin{split}
\big(y_t-H^\star_t(x)\big)^\top\Sigma_G^{-1}
\big(y_t-H^\star_t(x)\big)
&=\big(y_t-H(x)\big)^\top\Sigma_G^{-1}\big(y_t-H(x)\big)
+R_t(x),\\
R_t(x)&\coloneqq
\big(H(x)-H^\star_t(x)\big)^\top\Sigma_G^{-1}
\big(H(x)-H^\star_t(x)\big)\\
&\qquad\hskip2cm+2\big(H(x)-H^\star_t(x)\big)^\top\Sigma_G^{-1}
\big(y_t-H(x)\big).
\end{split}
\end{equation}
By \eqref{eq:assume_nudge} and \eqref{eq:bound_vt},
$\|H(x)-H^\star_t(x)\|\leq L\chi^\star\|v_t(x)\|
\leq L^2\chi^\star e_t(x)$, so that
\begin{equation}\label{eq:n2}
|R_t(x)|
\leq L\big(L^2\chi^\star e_t(x)\big)^2
+2L\,\big(L^2\chi^\star e_t(x)\big)\,e_t(x)
\leq 2\,C_\chi\,\chi^\star\,e_t(x)^2,
\qquad
C_\chi\coloneqq L^3\big(L^2\chi^\star+1\big).
\end{equation}
Since $G^\star_t(x)/G_t(x)=e^{-R_t(x)/2}$ and
$|1-e^z|\leq|z|e^{|z|}$ for any $z\in\R$,
\eqref{eq:bound_vt} and \eqref{eq:n2} give
\begin{equation}\label{eq:n3}
\begin{split}
\big|G_t(x)-G^\star_t(x)\big|
&=G_t(x)\,\big|1-e^{-R_t(x)/2}\big|
\leq C_\chi\,\chi^\star\,e_t(x)^2\,
e^{C_\chi\chi^\star e_t(x)^2}\,G_t(x)\\
&\leq C_\chi\,\chi^\star\,e_t(x)^2\,
\exp\Big(-\frac{1}{2}\big(\lambda_G-2C_\chi\chi^\star\big)
e_t(x)^2\Big).
\end{split}
\end{equation}
Let $\bar{\chi}>0$ be the unique solution of
$4\bar{\chi}L^3(L^2\bar{\chi}+1)=l_G$, and assume
that $\chi^\star\leq\bar{\chi}$. Then
$C_\chi\leq l_G/(4\bar{\chi})$ and
$2C_\chi\chi^\star\leq l_G/2$, so that, by
\eqref{eq:n3},
\begin{align}\label{eq:Bound_nG}
\big|G_t(x)-G^\star_t(x)\big|
\leq\chi^\star\,\frac{l_G}{4\bar{\chi}}
\sup_{u\geq 0}u\,e^{-l_Gu/4}
=C_{\mathrm{N},1}\,\chi^\star,
\qquad
C_{\mathrm{N},1}\coloneqq\frac{1}{e\,\bar{\chi}}.
\end{align}

\emph{Second term of \eqref{eq:decomp}.} By
\eqref{eq:TV_gauss}, \eqref{eq:assume_nudge} and
\eqref{eq:bound_vt}, with
$C_{\mathcal{N}}=(2\pi\,\sigma_{\min}(\Sigma_M))^{-1/2}$,
\begin{align*}
\big\|\delta_xM^\star_{t+1}-\delta_xM_{t+1}\big\|_{\mathrm{tv}}
\leq C_{\mathcal{N}}\,\big\|F\big(x+\chi^\star v_t(x)\big)-F(x)\big\|
\leq L\,C_{\mathcal{N}}\,\big(\chi^\star\|v_t(x)\|\big)^\alpha
\leq L^{1+\alpha}C_{\mathcal{N}}\,\big(\chi^\star e_t(x)\big)^\alpha.
\end{align*}
Hence, using \eqref{eq:bound_vt} once more,
\begin{align}\label{eq:Bound_nM}
G_t(x)\,
\big\|\delta_xM^\star_{t+1}-\delta_xM_{t+1}\big\|_{\mathrm{tv}}
\leq L^{1+\alpha}C_{\mathcal{N}}\,(\chi^\star)^\alpha
\sup_{z\geq 0}z^\alpha e^{-\lambda_Gz^2/2}
=C_{\mathrm{N},2}\,(\chi^\star)^\alpha,
\end{align}
with
$C_{\mathrm{N},2}\coloneqq L^{1+\alpha}C_{\mathcal{N}}
(\alpha/(e l_G))^{\alpha/2}$.

By using the above calculations, we obtain the following result.
\begin{lemma}\label{lemma:loc_nudge}
Let $G_t$, $M_{t+1}$,
$G^\star_t$ and
$M^\star_{t+1}$ be as defined in
\eqref{eq:Model_nudge}, and assume that $H$ is
continuously differentiable and that
\eqref{eq:assume_nudge} holds for some constants
$L\in[1,\infty)$ and $\alpha\in(0,1]$. The, there exists a constant
$C_{\mathrm{N}}\in(0,\infty)$, that only depends on $(L,\alpha,\Sigma_G,\Sigma_M)$,
such that for any $\chi^\star\in[0,\infty)$ we have
\[
\Delta_{Q|Q^\star}\leq C_{\mathrm{N}}\,(\chi^\star)^\alpha.
\]

\end{lemma}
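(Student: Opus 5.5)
The plan is to assemble the two bounds already derived before the statement, \eqref{eq:Bound_nG} and \eqref{eq:Bound_nM}, through the decomposition \eqref{eq:decomp}, and to handle separately the regime of large $\chi^\star$, where \eqref{eq:Bound_nG} was not established. Fix $t\geq 1$ and $x\in\R^d$ and start from \eqref{eq:decomp}. Note that the supremum in \eqref{eq:delta} runs over $t>1$, i.e.\ over $Q_{t+1}$ with $t\geq 1$.

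\emph{Case $\chi^\star\leq\bar{\chi}$.} Here $\bar{\chi}$ depends only on $(L,l_G)$, hence only on $(L,\Sigma_G)$. For $t=1$ the first term of \eqref{eq:decomp} vanishes, since $G^\star_1=G_1$ because $H_{1,\chi}=H$. For $t\geq 2$, \eqref{eq:Bound_nG} bounds it by $C_{\mathrm{N},1}\chi^\star$. Since $\chi^\star\leq\bar{\chi}$ and $\alpha\in(0,1]$, we have
\[
\chi^\star=(\chi^\star)^{\alpha}(\chi^\star)^{1-\alpha}\leq\bar{\chi}^{1-\alpha}(\chi^\star)^\alpha,
\]
so this term is at most $C_{\mathrm{N},1}\bar{\chi}^{1-\alpha}(\chi^\star)^\alpha$. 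The second term is bounded by $C_{\mathrm{N},2}(\chi^\star)^\alpha$ by \eqref{eq:Bound_nM}, which holds for every $\chi^\star\geq 0$. The one point to check there is that the supremum $\sup_{z\geq 0}z^\alpha e^{-l_Gz^2/2}$ is finite and computed correctly; the $\lambda_G$ appearing in the displays should be read as $l_G$. Taking the supremum over $x$ and $t$ then gives
\[
\Delta_{Q|Q^\star}\leq\big(C_{\mathrm{N},1}\bar{\chi}^{1-\alpha}+C_{\mathrm{N},2}\big)(\chi^\star)^\alpha.
\]

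\emph{Case $\chi^\star>\bar{\chi}$.} I would not try to control $G^\star_t-G_t$ here. Instead I use the trivial bound $\|\delta_xQ^\star_{t+1}-\delta_xQ_{t+1}\|_{\mathrm{tv}}\leq 1$. This holds because both measures are nonnegative with total mass at most one, so $|\nu(A)|\leq\max\{\delta_xQ^\star_{t+1}(A),\delta_xQ_{t+1}(A)\}\leq 1$. Consequently
\[
\Delta_{Q|Q^\star}\leq 1\leq\bar{\chi}^{-\alpha}(\chi^\star)^\alpha.
\]

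The lemma then follows with
\[
C_{\mathrm{N}}\coloneqq\max\big\{C_{\mathrm{N},1}\bar{\chi}^{1-\alpha}+C_{\mathrm{N},2},\ \bar{\chi}^{-\alpha}\big\},
\]
which depends only on $(L,\alpha,\Sigma_G,\Sigma_M)$.

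The main obstacle is verifying \eqref{eq:Bound_nG} carefully. Setting $u=e_t(x)^2$, the exponent in \eqref{eq:n3} has to be at most $-l_Gu/4$ once $2C_\chi\chi^\star\leq l_G/2$. This needs $C_\chi$, evaluated at $\chi^\star\leq\bar{\chi}$, to be bounded by its value at $\bar{\chi}$, and that holds by monotonicity of $C_\chi$ in $\chi^\star$. The remaining supremum is $\sup_{u\geq 0}ue^{-l_Gu/4}=4/(el_G)$, which yields exactly $C_{\mathrm{N},1}=1/(e\bar{\chi})$.
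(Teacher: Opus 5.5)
Your proposal is correct and follows the paper's proof exactly. You split at $\bar{\chi}$. For $\chi^\star\leq\bar{\chi}$ you combine \eqref{eq:decomp}, \eqref{eq:Bound_nG} and \eqref{eq:Bound_nM} with $\chi^\star\leq\bar{\chi}^{1-\alpha}(\chi^\star)^\alpha$, and for $\chi^\star>\bar{\chi}$ you use the trivial bound $\Delta_{Q|Q^\star}\leq 1<(\chi^\star/\bar{\chi})^\alpha$; this gives the same constant $C_{\mathrm{N}}$ as the paper. Your additional checks are all accurate: the first term of \eqref{eq:decomp} vanishes at $t=1$, monotonicity of $C_\chi$ gives $2C_\chi\chi^\star\leq l_G/2$, the supremum equals $4/(el_G)$, and $\lambda_G$ should be read as $l_G$. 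They fill in details the paper leaves implicit.
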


\begin{proof}
If $\chi^\star\leq\bar{\chi}$, then
\eqref{eq:decomp}, \eqref{eq:Bound_nG} and
\eqref{eq:Bound_nM} give
\[
\Delta_{Q|Q^\star}\leq C_{\mathrm{N},1}\chi^\star
+C_{\mathrm{N},2}(\chi^\star)^\alpha\leq
(C_{\mathrm{N},1}\bar{\chi}^{1-\alpha}+C_{\mathrm{N},2})(\chi^\star)^\alpha.
\]
If $\chi^\star>\bar{\chi}$, then
$\Delta_{Q|Q^\star}\leq 1<
(\chi^\star/\bar{\chi})^\alpha$. The lemma holds with
\[
C_{\mathrm{N}}\coloneqq\max\{C_{\mathrm{N},1}\bar{\chi}^{1-\alpha}
+C_{\mathrm{N},2},\,\bar{\chi}^{-\alpha}\},
\]
 and $\bar{\chi}$
depends only on $L$ and $\Sigma_G$.
\end{proof}

\subsubsection{Time-uniform error bounds}

\begin{proposition}\label{prop:compare_nudge}
Let $\eta_1\in\mathcal{P}(\R^d)$ have a positive density
with respect to $\lambda_d$. Assume that $H$ is
continuously differentiable and $L$-bi-Lipschitz for
some $L\in[1,\infty)$, that \eqref{eq:assume_nudge}
holds for some $\alpha\in(0,1]$, that
$y_t\in H(\R^d)$ for any $t\geq 1$, and that
\begin{equation}\label{cond-H-1}
\sup_{t\geq 1}\|H^{-1}(y_{t+1})-F(H^{-1}(y_t))\|<\infty.
\end{equation}
Then, there exists a constant $C_{\eta_1}\in[1,\infty)$ such
that, for any $\chi\in[0,\infty)$,
\begin{align*}
\sup_{t\geq 1}\|\eta_{t,\chi}-\eta_{t,0}\|_{\mathrm{tv}}
\vee
\sup_{t\geq 1}
\|\widehat{\eta}_{t,\chi}-\widehat{\eta}_{t,0}\|_{\mathrm{tv}}
\leq C_{\eta_1}\,\vartheta\big(\chi^\alpha\big).
\end{align*}
\end{proposition}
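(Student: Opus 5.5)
The plan is to deduce Proposition~\ref{prop:compare_nudge} from Theorem~\ref{thm:compare} and Lemma~\ref{lemma:loc_nudge}. The first task is to check that the \emph{reference} model $(G_t,M_{t+1})=(G_{t,0},M_{t+1,0})$ satisfies \ref{A2} with some sets $D_t$ such that $\eta_1(D_1)>0$. This model is the Feynman--Kac representation \eqref{eq:Gt_def}--\eqref{eq:Mt_def2} with
\[
H_t=H,\quad F_{t+1}=F,\quad \Sigma_{G_t}=\Sigma_G,\quad \Sigma_{M_{t+1}}=\Sigma_M,\quad \zeta_{G_t}(z)=e^{-z/2},\quad \zeta_{M_{t+1}}(z)=e^{-z/2}.
\]
We verify \ref{Model1}:
\begin{itemize}
\item \ref{mat} holds after enlarging $L$ so that the eigenvalues of $\Sigma_G$ and $\Sigma_M$ lie in $[1/L,L]$.
\item \ref{lips} holds because $H$ is assumed $L$-bi-Lipschitz.
\item \ref{zeta} and \ref{upper_GtMt} hold with $\zeta_G=\zeta_M=e^{-z/2}$, which is continuous and non-increasing (hence $1$-pseudo-decreasing), and with any $\nu_G,\nu_M>0$, since $e^{-z/2}\le C_k(1+z)^{-k}$ for every $k$.
\end{itemize}
Condition \ref{Obs1} is the assumption $y_t\in H(\R^d)$, and \ref{Obs2} is exactly \eqref{cond-H-1}.

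For part~\ref{i-prop:example} of Proposition~\ref{prop:example}, we need a non-decreasing $\varphi$ with $\|F(x_1)-F(x_2)\|\le\varphi(\|x_1-x_2\|)$. By \eqref{eq:assume_nudge} we can take $\varphi(z)=Lz^\alpha$. Hence \ref{A3}, and therefore \ref{A2}, holds for some sets $D_t$. Since $\eta_1$ has a positive Lebesgue density, we still need $\lambda_d(D_1)>0$. I expect this to be the only delicate point. I would resolve it with Remark~\ref{rem:constant}\ref{rem1}, which says \ref{A2} holds with $D_t=E_t=\{x:\|x-H^{-1}(y_t)\|_\infty\le1\}$. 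These are unit cubes, so $\eta_1(E_1)>0$ automatically.

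With \ref{A2} and $\eta_1(D_1)>0$ established, Theorem~\ref{thm:compare} applies to the approximating model $(G^\star_t,M^\star_{t+1})=(G_{t,\chi},M_{t+1,\chi})$. Here $G_{t,\chi}\in(0,1]$ is positive, so the standing positivity assumption of that section is met. Theorem~\ref{thm:compare} gives a constant $C_{\eta_1}$, independent of the approximating model and hence of $\chi$, such that
\[
\sup_t\|\eta_{t,\chi}-\eta_{t,0}\|_{\mathrm{tv}}\vee\sup_t\|\widehat\eta_{t,\chi}-\widehat\eta_{t,0}\|_{\mathrm{tv}}\le C_{\eta_1}\vartheta(\Delta_{Q|Q^\star}).
\]
Lemma~\ref{lemma:loc_nudge} applies because $H$ is $C^1$ and \eqref{eq:assume_nudge} holds; it gives $\Delta_{Q|Q^\star}\le C_{\mathrm N}\chi^\alpha$, with $C_{\mathrm N}$ independent of $\chi$. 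Since $\vartheta$ is increasing and satisfies $\vartheta(cz)\le c\,\vartheta(z)$ for $c\ge1$ by \eqref{eq:vartheta_scale}, we get
\[
\vartheta(\Delta_{Q|Q^\star})\le\max\{1,C_{\mathrm N}\}\,\vartheta(\chi^\alpha).
\]
The claimed bound then follows with constant $C_{\eta_1}\max\{1,C_{\mathrm N}\}$, uniformly in $\chi\in[0,\infty)$. The case $\chi=0$ is trivial, since then $\vartheta(0)=0$ and the two models coincide.
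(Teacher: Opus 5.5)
Your proposal is correct and follows essentially the same route as the paper. You verify \ref{Model1} and \ref{Obs} for the reference model with $\zeta_G=\zeta_M=e^{-z/2}$, invoke part~\ref{i-prop:example} of Proposition~\ref{prop:example} with $\varphi(z)=Lz^\alpha$ to get \ref{A3} (hence \ref{A2}), and conclude via Theorem~\ref{thm:compare}, Lemma~\ref{lemma:loc_nudge} and \eqref{eq:vartheta_scale}. Your treatment of $\eta_1(D_1)>0$ just makes explicit what the paper leaves implicit: the sets produced by Lemma~\ref{lemma:proofA1} are cubes $K_{t,\delta}$ of positive Lebesgue measure, so any measure with a positive density charges $D_1$.
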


\begin{proof}
The model $(G_t,M_{t+1})$ is of the form
\eqref{eq:Gt_def}--\eqref{eq:Mt_def2}, and satisfies
\ref{Model1} (with $\zeta_G(z)=\zeta_M(z)=e^{-z/2}$
and a constant $L'\geq L$ depending only on
$(d,L,\Sigma_G,\Sigma_M)$) as well as \ref{Obs}. Since
$\|F(x_1)-F(x_2)\|\leq L\|x_1-x_2\|^\alpha$, part~\ref{i-prop:example}
of Proposition~\ref{prop:example} applies with
$\varphi(z)=Lz^\alpha$, so that \ref{A3}, and hence
\ref{A2}, holds for some sets $D_t$. Under the assumption imposed on $\eta_1$, we necessarily have 
$\eta_1(D_1)>0$. Then, the result of the proposition   follows from
Theorem~\ref{thm:compare} and
Lemma~\ref{lemma:loc_nudge}, applied with
$\chi^\star=\chi$, and from \eqref{eq:vartheta_scale};
the constant does not depend on $\chi$, since the
constant of Theorem~\ref{thm:compare} does not depend
on the approximating model.
\end{proof}

\begin{remark}
By Lemma~\ref{lemma:Obs2}, bounded observations
satisfy the condition (\ref{cond-H-1})
of Proposition~\ref{prop:compare_nudge}.
\end{remark}

 Proposition \ref{prop:compare_nudge} shows that the nudged filter converges to the original one as
$\chi\to 0$, at rate $\vartheta(\chi^\alpha)$,
uniformly in time. Observe that, since $G_{t,\chi}\neq G_{t,0}$ when
$\chi>0$, this relies on
the estimate \eqref{eq:one_step_Psi} for the updated
model.

\section{Error bounds for particle filters\label{sec:PF}}

In this section, we set
$\setX=\R^d\cap[\underline{x},\bar{x}]$ for some
hyper-rectangle
$[\underline{x},\bar{x}]\subseteq\bar{\R}^d$ (e.g.~$\setX=\R^d$), and we
fix some $\eta_1\in\mathcal{P}(\setX)$. We also assume that $G_t(x)>0$ for any $t\geq 1$ and
$x\in\setX$, so that \ref{A0} holds with $W_t=\setX$ and
$\Psi_t(\mu)$ and $\Phi_{t+1}(\mu)$
are well defined for every $\mu\in\mathcal{P}(\setX)$. As in
Section~\ref{sec:main}, we let
$\eta_{t+1}=\Phi_{t+1}(\eta_t)$ and
$\widehat{\eta}_t=\Psi_t(\eta_t)$ for any $t\geq 1$.

\subsection{Description of the algorithm\label{sub:algoPF}}

Let $N\geq 1$ be some number of particles. The
(bootstrap) particle filter is defined by the Markov
chain
$X^N_t\coloneqq(X^{N,1}_t,\dots,X^{N,N}_t)$,
$t\geq 1$, taking its values in $\setX^N$ and defined as follows. First,
$X^{N,1}_1,\dots,X^{N,N}_1\iid\eta_1$. Then, for any $t\geq 1$, given $X^N_t$, the particles
$X^{N,1}_{t+1},\dots,X^{N,N}_{t+1}$ are independent
with common distribution
\begin{align*}
\Phi_{t+1}(\eta^N_t)
=\sum_{n=1}^N
\frac{G_t(X^{N,n}_t)}{\sum_{m=1}^N G_t(X^{N,m}_t)}\,
M_{t+1}\big(X^{N,n}_t,\cdot\big),
\qquad\text{where}\quad
\eta^N_t\coloneqq\frac{1}{N}\sum_{n=1}^N\delta_{X^{N,n}_t}.
\end{align*}

Each new particle is therefore obtained by
selecting one of the current particles with a
probability proportional to its potential value
(selection step, a.k.a.\ multinomial resampling), and
then by moving it according to the Markov transition
$M_{t+1}$ (mutation step). The random probability measures $\eta^N_t$ and
$\widehat{\eta}^N_t\coloneqq\Psi_t(\eta^N_t)$ are the particle approximations of the prediction
distribution $\eta_t$ and of the updated distribution
$\widehat{\eta}_t$. Since $G_t>0$, \ref{A0} holds with
$W_t=\setX$, so that
$\eta^N_t\in\mathcal{P}_{W_t}(\setX)=\mathcal{P}(\setX)$ for any
$t\geq 1$. Thus, the results of Section~\ref{sec:main} can be applied to
the sequence $\eta^N_t$, for any
realization of the algorithm.

The above algorithm is known under a variety of names. In
signal processing and Bayesian statistics, it is called
particle filter \citep{delmoral1996nonlinear,
del1997nonlinear}, bootstrap filter
\citep{gordon1993novel}, Monte Carlo filter
\citep{kitagawa1996monte} or sequential Monte Carlo
method; in applied probability, interacting particle
system, genetic algorithm with mutation and selection
transitions, or population Monte Carlo
\citep{delmoral1996nonlinear,DelMoral:book,del2013mean,
delmoral2012concentration}.
In computational physics, Monte Carlo methods based on
the Feynman--Kac formula go back to
\citet{metropolis1949monte}, \citet{donsker1950sampling}
and \citet{kalos1962monte}, and are known as diffusion
or quantum Monte Carlo methods, with variants such as
fixed-node \citep{reynolds1982fixed}, pure diffusion
\citep{caffarel1986treatment,caffarel1988development}
and reptation \citep{baroni1999reptation} Monte Carlo.
The version with a constant number of walkers, which
coincides with the above algorithm, is called stochastic
reconfiguration Monte Carlo
\citep{hetherington1984observations,sorella1998green,
sorella2000green,assaraf2000diffusion}. We refer to
\citet{foulkes2001quantum} and
\citet{mareschal2021early} for reviews, and to
\citet{delmoral2003particle},
\citet{delmoral2004particle},
\citet{rousset2006control},
\citet{cances2006quantum},
\citet{elmakrini2007diffusion} and
\citet{caffarel2025mathematical} for mathematical
analyses in this context.

To simplify  the presentation, we restrict our attention to the bootstrap particle filter. However, as explained in Section \ref{sub:PF_extension}, the results presented below apply to a broader class of algorithms, including particle filters based on some low variance resampling schemes and mean filed particle models.

\subsection{Local perturbations\label{sub:local_PF}}

For any $t\geq 1$, we let $\F^N_t$ be the
$\sigma$-algebra generated by $X^N_1,\dots,X^N_t$, and
we let $\F^N_0\coloneqq\{\emptyset,\Omega\}$ and
$\eta^N_0\coloneqq\eta_1$. Since $\Phi_1$ is the
identity mapping on $\mathcal{P}(\setX)$, the local
perturbations \eqref{eq:delta_s} of the sequence
$\mu_t=\eta^N_t$ are the random signed measures
\[
\delta^N_t\coloneqq\delta_t\big(\eta^N_{t-1},\eta^N_t\big)
=\eta^N_t-\Phi_t(\eta^N_{t-1}),
\qquad t\geq 1,
\]
and, given $\F^N_{t-1}$, the measure $\eta^N_t$ is the empirical
measure of $N$ independent draws from
$\Phi_t(\eta^N_{t-1})$.

This local error cannot be measured with the total
variation norm. Indeed, $\eta^N_t$ has a finite support,
while $\Phi_t(\eta^N_{t-1})$ typically has a density
with respect to the Lebesgue measure; in this case,
taking the support of $\eta^N_t$ as the set $A$ in the
definition of $\|\cdot\|_{\mathrm{tv}}$ shows that
$\|\delta^N_t\|_{\mathrm{tv}}=1$
almost surely. On the other hand, the Kolmogorov distance between an
empirical distribution and its sampling distribution
can be estimated using
Dvoretzky--Kiefer--Wolfowitz-type inequalities.

Specifically, below we will use the following Dvoretzky--Kiefer--Wolfowitz-type inequality.
\begin{lemma}\label{lemma:DKW}
There exists a constant $c_d>0$ such that, for any integer $N\geq 1$ and any collection of probability measures $\{\mu_n\}_{n=1}^N\subset\mathcal{P}(\R^d)$, we have, for all $\kappa>0$,
\begin{align*}
\P\Big(\Big\|\frac{1}{N}\sum_{n=1}^N\delta_{X_n}-\frac{1}{N}\sum_{n=1}^N\mu_n\Big\|_{\mathrm{K}}>\kappa\Big)\leq 2 e^{-c_d N\kappa^2} 
\end{align*}
 where the random variables $\{X_n\}_{n=1}^N$ are independent and such that $X_n\sim\mu_n$ for all $n\in\{1,\dots,N\}$.
\end{lemma}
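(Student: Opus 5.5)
The plan is to split the deviation into a mean term and a fluctuation term. The mean is controlled by a symmetrization and VC argument; the fluctuation is controlled by McDiarmid's bounded differences inequality.

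\emph{Setup.} Write $Z\coloneqq\|\frac1N\sum_{n}\delta_{X_n}-\frac1N\sum_n\mu_n\|_{\mathrm{K}}$. Both $b\mapsto\frac1N\sum_n\ind\{X_n\leq b\}$ and $b\mapsto\frac1N\sum_n\mu_n((-\infty,b])$ are right-continuous in each coordinate. Hence the supremum defining $Z$ can be restricted to $b\in\mathbb{Q}^d$, so $Z$ is a measurable random variable.

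\emph{Step 1 (concentration).} $Z$ is a function of the independent variables $X_1,\dots,X_N$. Replacing a single $X_n$ changes each $\frac1N\sum_m\ind\{X_m\leq b\}$ by at most $1/N$, hence changes $Z$ by at most $1/N$. McDiarmid's inequality therefore gives
\begin{align*}
\P\big(Z-\E[Z]>u\big)\leq e^{-2Nu^2}, \qquad \text{for all } u>0.
\end{align*}

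\emph{Step 2 (mean bound).} This is the main obstacle: I need $\E[Z]\leq C_d/\sqrt{N}$ with no $\log N$ factor. First take a ghost sample $X'_n\sim\mu_n$, independent of everything. By Jensen, $\E[Z]\leq\E\sup_b|\frac1N\sum_n(\ind\{X_n\leq b\}-\ind\{X'_n\leq b\})|$. Exchanging $X_n$ and $X'_n$ preserves the joint law, because the two have the same marginal $\mu_n$; this holds even though the $\mu_n$ differ across $n$. Inserting Rademacher signs $\varepsilon_n$ then yields $\E[Z]\leq 2\,\E\sup_b|\frac1N\sum_n\varepsilon_n\ind\{X_n\leq b\}|$. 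Conditionally on the $X_n$, this is a Rademacher process indexed by the orthants $(-\infty,b]$, a VC class of dimension $d$. I would apply Dudley's entropy integral together with the uniform covering bound $N(\varepsilon,L_2(Q))\leq (K_d/\varepsilon)^{2d}$ for VC classes (Haussler / van der Vaart--Wellner, Thm.~2.6.4/2.6.7). Since $\int_0^1\sqrt{\log(K_d/\varepsilon)}\,\dd\varepsilon<\infty$, this gives a conditional bound $C_d/\sqrt N$ uniformly in the sample. If I instead used Massart's finite-class lemma with the $(N+1)^d$ traces, I would get an extra $\sqrt{\log N}$, which is why chaining is needed here.

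\emph{Step 3 (combination).} Choose $c_d\coloneqq\min\{1/2,\ \log 2/(4C_d^2)\}$. If $\kappa\geq 2C_d/\sqrt N$, then $\kappa-\E[Z]\geq\kappa/2$, and Step 1 gives $\P(Z>\kappa)\leq e^{-N\kappa^2/2}\leq 2e^{-c_dN\kappa^2}$. If $\kappa<2C_d/\sqrt N$, then $c_dN\kappa^2<4c_dC_d^2\leq\log 2$, so the right-hand side $2e^{-c_dN\kappa^2}$ exceeds $1$ and the bound holds trivially. This covers all $\kappa>0$, which proves the lemma.
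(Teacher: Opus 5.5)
Your proof is correct, but it follows a different route from the paper.

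\textbf{The paper's route.} The paper does not separate the mean from the fluctuation. It cites a ready-made chaining result, Theorem~11.2.6 of \citet{del2013mean}, and applies it directly to the non-identically distributed sample. That theorem bounds the $\psi_2$-Orlicz norm of $\sqrt{N}\|\delta^N\|_{\mathrm{K}}$ by a uniform-entropy constant $C_d$. Markov's inequality applied to $e^{Y^2/C_d^2}$ then gives the tail bound $2e^{-N\kappa^2/C_d^2}$. Finally, $C_d<\infty$ follows from the VC covering bound for orthants in \citet{vandervaart1996weak}.

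\textbf{Your route.} You write $Z=(Z-\E[Z])+\E[Z]$ and treat the two parts separately.
\begin{itemize}
\item McDiarmid gives concentration $e^{-2Nu^2}$ around the mean, and this part does not depend on the dimension.
\item The dimension enters only through $\E[Z]\lesssim C_d/\sqrt{N}$. You get this by symmetrization and Dudley's integral with the same VC entropy bound. You correctly note that exchangeability of each pair $(X_n,X'_n)$ is all that symmetrization needs when the $\mu_n$ differ.
\item The case split in Step~3 puts small $\kappa$ into the trivial regime.
\end{itemize}

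\textbf{Comparison.} The paper's argument is shorter, because the chaining and the exponential integrability are packaged into one citation. Yours is more modular and self-contained. It also gives the sharper statement $\P(Z>\kappa)\le e^{-2N(\kappa-\E[Z])^2}$. For $\kappa\gg C_d/\sqrt{N}$ its exponent approaches the DKW constant $2$, whereas the Orlicz route gives an exponent of order $1/C_d^2$ throughout.

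\textbf{Two small points to tidy.}
\begin{itemize}
\item Dudley's bound controls increments, not $\sup_b|\cdot|$ directly. You should either add the expectation of the absolute value at one fixed index (at most $1/\sqrt{N}$ after your $1/N$ normalization) or include the zero function in the class.
\item Your constant $C_d$ changes by a factor $2$ between Step~2's target $\E[Z]\le C_d/\sqrt N$ and the conditional Rademacher bound. This only affects the value of $c_d$.
\end{itemize}
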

\begin{proof}
Recall that the Orlicz norm of a non-negative
random variable $Y$ associated with the function $\psi(u)=2^{u^2}-1$ is
$\pi_\psi(Y)=\inf\{a>0:\E[e^{Y^2/a^2}]\leq 2\}$. If
$\pi_\psi(Y)\leq b$ for some $b>0$, then $\E[e^{Y^2/c^2}]\leq 2$ for any
$c>b$, and, by letting $c\downarrow b$ and using the monotone convergence theorem, it follows that  $\E[e^{Y^2/b^2}]\leq 2$. Therefore, by Markov's inequality
\begin{align}\label{eq:orlicz}
\P\Big(Y\geq b\sqrt{x+\log 2}\Big)
=\P\big(e^{Y^2/b^2}\geq 2e^{x}\big)\leq e^{-x},\quad\forall x\geq 0.
\end{align}

We now let $\mathcal{C}_d=\big\{\ind_{(-\infty,x]}:\,x\in\R^d\}$ and $\delta^N=\frac{1}{N}\sum_{n=1}^N\delta_{X_n}-\frac{1}{N}\sum_{n=1}^N\mu_n$. By \citet[][Theorem~11.2.6]{del2013mean}, the Orlicz norm   of the random variable 
$\sqrt{N}\sup_{f\in\mathcal{C}_d}
|\delta^N(f)|=\sqrt{N}\|\delta^N\|_{\mathrm{K}}$  associated with the function $\psi(u)=2^{u^2}-1$ is bounded by
\begin{align*}
C_d\coloneqq\max\Big\{1,\,12^2\int_0^2
\sqrt{\log\big(8+\mathcal{N}(\mathcal{C}_d,\epsilon)^2\big)}
\,\dd\epsilon\Big\}
\end{align*}
where $\mathcal{N}(\mathcal{C}_d,\epsilon)$ is the supremum over $\mu\in\mathcal{P}(\R^d)$ of the minimal
number of  $\mathsf{L}^2(\mu)$-balls of radius
$\epsilon$  needed to cover $\mathcal{C}_d$. By applying \eqref{eq:orlicz} with  $Y \coloneqq \sqrt{N}\|\delta^N\|_{\mathrm{K}}$ and $b:=C_d$, we obtain that
\begin{align*}
\P\Big(\|\delta^N\|_{\mathrm{K}}\geq N^{-1/2} C_d\sqrt{x+\log(2)}\Big)\leq e^{-x},\,\,\forall x\geq 0\,\,\Leftrightarrow\,\, \P\Big(\|\delta^N\|_{\mathrm{K}}\geq\kappa\Big)\leq 2e^{-\kappa^2 N/C_d^2},\quad\forall\kappa>0
\end{align*}
and to conclude the proof of the lemma it remains to show that $C_d<\infty$.

By  \citet[][Example~2.6.1 and
Theorem~2.6.4]{vandervaart1996weak}, there exists a universal constant $c>0$  such that $\mathcal{N}(\mathcal{C}_d,\epsilon)\leq c (d+1)(a e)^{d+1}\epsilon^{-2d}$, and it follows from \citet[][Remark  11.2.7 and (11.15)]{del2013mean}, that $C_d\leq c' d^{1/2}$ for some universal constant $c'>0$. This ends the proof of the lemma.
\end{proof}

In the special case where $d=1$ and the random variables $X_n$ are identically distributed, the result of Lemma \ref{lemma:DKW} holds with $c_d=2$ by \citet[][Corollary~1 and Comment~2]{massart1990tight}. We also note that  \citet[][Lemma~4.1]{naaman2021tight} corresponds to the result stated in Lemma \ref{lemma:DKW}, with $c_d=2$ and with the multiplicative constant 2 replaced by $d\log(N+1)$.

By combining Lemma \ref{lemma:DKW} with Hoeffding's
inequality for a given test function, we check that the local $\delta_t^N$
errors are uniformly small on the time horizon $T$ with
high probability. Specifically, for any $q\in(0,1]$ and integers $T\geq 1$ and $N\geq 1$, we let
\begin{align}\label{eq:delta_pT}
\varepsilon_{N,T,q}\coloneqq
\sqrt{\frac{1+\log(T/q)}{N}}
\end{align}
and prove the following result.
\begin{lemma}\label{lemma:thmPF}
There exists a constant $C_d\in[1,\infty)$, depending
only on $d$, such that, for any $q\in(0,1]$, any
integers $T\geq 1$ and  $N\geq 1$, and any (deterministic) functions
$f_1,\dots,f_T\in\mathcal{B}_1(\setX)$,
\begin{align*}
\P\Big(\max_{1\leq t\leq T}
\big|\delta^N_t(f_t)\big|
\vee
\max_{1\leq t\leq T}
\big\|\delta^N_t\big\|_{\mathrm{K}}
\leq C_d\,\varepsilon_{N,T,q}\Big)\geq 1-q.
\end{align*}
\end{lemma}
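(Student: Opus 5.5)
The plan is a union bound over the $2T$ events $\{|\delta^N_t(f_t)|>\kappa\}$ and $\{\|\delta^N_t\|_{\mathrm{K}}>\kappa\}$, $1\leq t\leq T$, with each event controlled conditionally on $\F^N_{t-1}$. Conditionally on $\F^N_{t-1}$, the particles $X^{N,1}_t,\dots,X^{N,N}_t$ are i.i.d.\ with law $\Phi_t(\eta^N_{t-1})$; for $t=1$ this law is $\eta_1$, since $\eta^N_0=\eta_1$ and $\Phi_1$ is the identity. Since $\setX\subseteq\R^d$, all measures can be viewed as elements of $\mathcal{P}(\R^d)$. For $b\in\setX$, $\setX\cap(-\infty,b]$ is the trace of $(-\infty,b]$, so $\|\cdot\|_{\mathrm{K}}$ on $\setX$ is at most the Kolmogorov norm on $\R^d$ of the pushforward. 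Lemma~\ref{lemma:DKW}, applied with $\mu_n=\Phi_t(\eta^N_{t-1})$ for every $n$, gives almost surely
\[
\P\big(\|\delta^N_t\|_{\mathrm{K}}>\kappa\,\big|\,\F^N_{t-1}\big)\leq 2e^{-c_dN\kappa^2}.
\]
For the fixed deterministic $f_t\in\mathcal{B}_1(\setX)$, the quantity $\delta^N_t(f_t)$ is the centered mean of $N$ conditionally i.i.d.\ variables in $[0,1]$. Conditional Hoeffding then yields $\P(|\delta^N_t(f_t)|>\kappa\mid\F^N_{t-1})\leq 2e^{-2N\kappa^2}$. Taking expectations removes the conditioning, and we may assume $c_d\leq 2$.

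Summing over $t\leq T$ and both event types bounds the failure probability by $4Te^{-c_dN\kappa^2}$. Choose $\kappa=C_d\varepsilon_{N,T,q}$, so that $N\kappa^2=C_d^2(1+\log(T/q))$. It suffices to have
\[
c_dC_d^2\big(1+\log(T/q)\big)\geq\log(4T/q)=\log 4+\log(T/q),
\]
which holds whenever $c_dC_d^2\geq\log 4$, since $\log(T/q)\geq 0$. Taking $C_d\coloneqq\max\{1,\sqrt{\log 4/c_d}\}$ therefore gives failure probability at most $q$, and $C_d$ depends only on $d$.

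The only delicate point is the conditioning: Lemma~\ref{lemma:DKW} and Hoeffding have to be applied to random sampling laws. Both inequalities hold uniformly over the sampling law, and $\eta^N_t$ is a measurable function of the conditionally independent draws, so the conditional application is justified via regular conditional distributions. The restriction from $\R^d$ to the hyper-rectangle $\setX$ is routine.
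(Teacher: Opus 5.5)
Your proof is correct and follows the paper's argument essentially verbatim: conditional Hoeffding and Lemma~\ref{lemma:DKW} at each time step, a union bound over the $2T$ events giving $4Te^{-c_dN\kappa^2}$ with $c_d$ replaced by $\min\{2,c_d\}$, and the same choice $C_d=\max\{1,(\log 4/c_d)^{1/2}\}$. Your remark that the Kolmogorov norm on $\setX$ is dominated by the Kolmogorov norm on $\R^d$ of the pushforward is a small point the paper leaves implicit.
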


\begin{proof}
Let $c'_d\coloneqq\min\{2,c_d\}$, with $c_d>0$  as in Lemma~\ref{lemma:DKW}, and let $C_d\coloneqq\max\{1,(\log(4)/c'_d)^{1/2}\}$  and $\kappa_{N,T,q}\coloneqq C_d\,\varepsilon_{N,T,q}$.

For any $t\geq 1$, given $\mathcal{F}^N_{t-1}$, the measure $\eta^N_t$ is the empirical measure of $N$
independent draws from $\Phi_t(\eta^N_{t-1})$. Hence, by Hoeffding's inequality (recall that $f_t$ takes
its values in $[0,1]$) and by Lemma~\ref{lemma:DKW}, we have
\[
\P\big(|\delta^N_t(f_t)|>\kappa\,\big|\,\mathcal{F}^N_{t-1}\big)\leq 2e^{-2N\kappa^2}
\quad\text{and}\quad
\P\big(\|\delta^N_t\|_{\mathrm{K}}>\kappa\,\big|\,\mathcal{F}^N_{t-1}\big)\leq 2e^{-c_dN\kappa^2},\quad\forall \kappa>0,\quad\forall t\geq 1.
\]
By taking expectations and by
using a union bound over $t\in\{1,\dots,T\}$ and over the two events, it follows that
\begin{align}\label{eq:kappa_in}
\P\Big(\max_{1\leq t\leq T}
\big|\delta^N_t(f_t)\big|
\vee
\max_{1\leq t\leq T}
\big\|\delta^N_t\big\|_{\mathrm{K}}
>\kappa\Big)\leq 4T e^{-c'_dN\kappa^2},\quad\forall \kappa>0.
\end{align}
The result of the lemma then follows by applying \eqref{eq:kappa_in} with $\kappa=\kappa_{N,T,q}$, and noting that
\[
4Te^{-c'_d N\kappa_{N,T,q}^2}=4Te^{-c'_dC_d^2(1+\log(T/q))}\leq 4T e^{-\log(4)}\,(q/T)^{ c'_d C_d^2}\leq q,
\]
where the first inequality holds since  $c'_d C_d^2\geq\log(4)$ while the second ienquality holds since $c'_d C_d^2\geq 1$ and $q/T\leq 1$.
\end{proof}

\subsection{High-probability trajectory bounds}

\begin{theorem}\label{thm:PF}
Assume that \ref{A3} and \ref{B1} hold and that
$\eta_1(D_1)>0$, with the sets $D_t$ as in
\ref{A3}. Then, there exist constants
$C_{\eta_1}\in(0,\infty)$ and $c_{\eta_1}\in(0,1]$ such
that, for any $q\in(0,1]$ and any integers
$T\geq 1$ and $N\geq 1$, and with the function $\vartheta$ as defined  in \eqref{eq:vartheta}, the following statements hold.
\begin{enumerate}[label=(\roman*)]
\item\label{PF_i} With probability at least $1-q$,
\begin{align*}
\max_{1\leq t\leq T}\|\eta^N_t-\eta_t\|_{\mathrm{K}}
\vee
\max_{1\leq t\leq T}
\|\widehat{\eta}^N_t-\widehat{\eta}_t\|_{\mathrm{K}}
\leq C_{\eta_1}\,\vartheta\big(\varepsilon_{N,T,q}\big).
\end{align*}
\item\label{PF_ii} For any (deterministic) functions
$f_1,\dots,f_T\in\mathcal{B}_1(\setX)$, with
probability at least $1-q$,
\begin{align*}
\max_{1\leq t\leq T}\big|\eta^N_t(f_t)-\eta_t(f_t)\big|
\vee
\max_{1\leq t\leq T}
\big|\widehat{\eta}^N_t(f_t)-\widehat{\eta}_t(f_t)\big|
\leq C_{\eta_1}\,\vartheta\big(\varepsilon_{N,T,q}\big).
\end{align*}
\item\label{PF_iii} If
$\varepsilon_{N,T,q}\leq c_{\eta_1}$, then, for any
(deterministic) measurable functions
$f_1,\dots,f_T\colon\setX\to\R$ for which there exist
constants $a_1,\dots,a_T\in\R$ such that
$(f_t+a_t)G_t\in\mathcal{B}_1(\setX)$ for any
$t\in\{1,\dots,T\}$, with probability at least $1-q$,
\begin{align*}
\max_{1\leq t\leq T}
\big|\widehat{\eta}^N_t(f_t)-\widehat{\eta}_t(f_t)\big|
\leq C_{\eta_1}\,\vartheta\big(\varepsilon_{N,T,q}\big).
\end{align*}
\end{enumerate}
\end{theorem}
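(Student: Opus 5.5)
The proof is pathwise. We apply the deterministic results of Section~\ref{sub:K_estimates} to the random sequence $\bm{\mu}=(\eta^N_t)_{t\geq 1}$, on the high-probability event supplied by Lemma~\ref{lemma:thmPF}. Since $G_t>0$, $\eta^N_t\in\mathcal{P}(\setX)=\mathcal{P}_{W_t}(\setX)$ for every realization, so Theorem~\ref{thm:main} (with $\varpi\equiv 1$) and Proposition~\ref{prop:filter} apply with constants independent of $\omega$. We fix $q$, $T$, $N$ and deterministic $f_1,\dots,f_T\in\mathcal{B}_1(\setX)$.

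Let $\Omega^N$ be the event of Lemma~\ref{lemma:thmPF}, applied with the test functions $f_t$ in part (ii) and $g_t=(f_t+a_t)G_t$ in part (iii), which are deterministic and lie in $\mathcal{B}_1(\setX)$. A single application with these functions gives probability at least $1-q$. The same event also controls $\delta^N_t(G_tf_t)$ if we wish. For (ii) we need both $f_t$ and $G_tf_t$; running the union-bound argument of Lemma~\ref{lemma:thmPF} with three families only changes $C_d$, so we take that as given.

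On $\Omega^N$, Theorem~\ref{thm:norm} gives $\|\delta^N_t\|_{\mathcal{V}_\setX}\leq\|\delta^N_t\|_{\mathrm{K}}\leq C_d\varepsilon_{N,T,q}$. Hence $\tilde{\Delta}_{T,\bm{\mu}|\eta_1,1}\leq C_d\varepsilon_{N,T,q}$, and Theorem~\ref{thm:main} together with the scaling $\vartheta(cz)\leq c\,\vartheta(z)$ from \eqref{eq:vartheta_scale} gives $\max_t\|\eta^N_t-\eta_t\|_{\mathcal{V}_\setX}\leq C\vartheta(\varepsilon_{N,T,q})$. Part (i) of Theorem~\ref{thm:norm} turns this into the Kolmogorov bound for $\eta^N_t$, at the cost of a factor $2$.

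For $\widehat{\eta}^N_t=\Psi_t(\eta^N_t)$, part~\ref{i-prop:filter} of Proposition~\ref{prop:filter} gives the $\mathcal{V}_\setX$ bound, and Theorem~\ref{thm:norm} again converts it to the Kolmogorov norm, which proves (i). For (ii), the second statement of Theorem~\ref{thm:main} bounds $|\eta^N_t(f_t)-\eta_t(f_t)|$ by $|\delta^N_t(f_t)|+C\vartheta(\varepsilon)$. The first term is at most $C_d\varepsilon\leq C_d\vartheta(\varepsilon)/\log 2$, since $\varepsilon\leq 1$ can be assumed (otherwise the bounds are trivial after enlarging constants, because $\vartheta(\varepsilon)\geq\vartheta(1)=\log 2$). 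Part~\ref{ii-prop:filter} of Proposition~\ref{prop:filter} then requires $|\eta^N_t(G_tf_t)-\eta_t(G_tf_t)|$, which is handled in the same way with the test function $G_tf_t\in\mathcal{B}_1(\setX)$.

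For (iii), choose $c_{\eta_1}$ so that $C\vartheta(c_{\eta_1})$ is at most the threshold $c$ of part~\ref{K_unbounded} of Proposition~\ref{prop:filter}; this is possible because $\vartheta$ is continuous with $\vartheta(0)=0$. Then $\varepsilon_{N,T,q}\leq c_{\eta_1}$ ensures the hypothesis $\|\eta^N_t-\eta_t\|_{\mathcal{V}_\setX}\leq c$ holds. Applying that part with $g_t$, and bounding $|\eta^N_t(g_t)-\eta_t(g_t)|$ as in (ii), completes the argument.

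The main obstacle is mild: the test-function events must be built from deterministic functions so that Hoeffding applies conditionally, which explains why the $f_t$ are required to be deterministic. Beyond that, the work is bookkeeping of constants, each uniform in $N$, $T$, $q$ and $\omega$.
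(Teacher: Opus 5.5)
Your proposal is correct and follows the paper's proof essentially step by step. You apply Theorem~\ref{thm:main} and Proposition~\ref{prop:filter} pathwise to $(\eta^N_t)$ on the event of Lemma~\ref{lemma:thmPF}; the only cosmetic difference is that you enlarge $C_d$ through a three-family union bound, whereas the paper intersects two events at level $q/2$ and uses $\varepsilon_{N,T,q/2}\leq\sqrt{2}\,\varepsilon_{N,T,q}$. One small fix: $\varpi\equiv 1$ formally violates the hypothesis $\varpi(z)=0\Leftrightarrow z=0$ of Theorem~\ref{thm:main}, so take $\varpi(z)=z\wedge 1$ as the paper does; since $\varpi\leq 1$, this gives the same bound $\tilde{\Delta}_{T,\bm{\mu}|\eta_1,\varpi}\leq\max_{t\leq T}\|\delta^N_t\|_{\mathrm{K}}$.
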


\begin{proof}
Let $q\in(0,1]$ and $T,N\geq 1$, and write
$\varepsilon\coloneqq\varepsilon_{N,T,q}$. In what follows we use the following three facts. First,
since
$1+\log(2)+z\leq 2(1+z)$ for any $z\geq 0$, we have
$\varepsilon_{N,T,q/2}\leq\sqrt{2}\,\varepsilon$. Second, by \eqref{eq:vartheta_scale},
$\vartheta(cz)\leq c\,\vartheta(z)$ for any
$c\geq 1$ and $z\geq 0$. Third, $z\leq\vartheta(z)/\log(2)$ for any $z\in[0,1]$, and
$\vartheta(z)\geq\log( 2)$ for any $z\geq 1$; thus any quantity bounded by $C\varepsilon$ and by $1$ is bounded by
$(C\vee 1)\,\vartheta(\varepsilon)/\log(2)$.

To prove the theorem w e apply Theorem~\ref{thm:main} and
Proposition~\ref{prop:filter} with
$\bm{\mu}=(\eta^N_t)_{t\geq 1}$, which is possible recalling that $\eta^N_t\in\mathcal{P}_{W_t}(\setX)$, and with
$\varpi(z)\coloneqq z\wedge 1$. With the conventions of
Section~\ref{sub:local_PF}, $$\mu_1-\eta_1
=\eta^N_1-\Phi_1(\eta^N_0)=\eta^N_1-\eta_1$$ and, since $\varpi\leq 1$
and by part~\ref{norm_ii} of Theorem~\ref{thm:norm},
\begin{align}\label{eq:PF_Delta}
\tilde{\Delta}_{T,\bm{\mu}|\eta_1,\varpi}\leq
\max_{1\leq t\leq T}
\big\|\delta^N_t\big\|_{\mathrm{K}}.
\end{align}

\emph{Part~\ref{PF_i}.} Let $E_T$ be the event of
Lemma~\ref{lemma:thmPF} (with $f_t\equiv 0$ and
probability parameter $q$), so that $\P(E_T)\geq 1-q$.
On $E_T$, by \eqref{eq:PF_Delta},
Theorem~\ref{thm:main} and part~\ref{norm_i} of
Theorem~\ref{thm:norm},
\[
\max_{t\leq T}\|\eta^N_t-\eta_t\|_{\mathrm{K}}\leq
2\max_{t\leq T}\|\eta^N_t-\eta_t\|_{\mathcal{V}_\setX}
\leq 2C_{\eta_1,\varpi}\,\vartheta(C_d\varepsilon),
\]
 and, by
part~\ref{i-prop:filter} of Proposition~\ref{prop:filter}, the same
bound holds for
$\max_{t\leq T}\|\widehat{\eta}^N_t-\widehat{\eta}_t\|_{\mathrm{K}}$
up to a multiplicative constant. The result stated in part~\ref{PF_i} of the theorem follows
follows.

\emph{Part~\ref{PF_ii}.} Let $E_T$ and $\tilde{E}_T$ be the
events of Lemma~\ref{lemma:thmPF} associated with the
functions $\{f_t\}_{t=1}^T\subset\mathcal{B}_1(\setX)$ and $\{G_tf_t\}_{t=1}^T$,
respectively (note that $G_tf_t\in\mathcal{B}_1(\setX)$),
both with probability parameter $q/2$, so that
$\P(E_T\cap \tilde{E}_T)\geq 1-q$. On $E_T\cap \tilde{E}_T$, all the
local errors in the statement of
Lemma~\ref{lemma:thmPF}, for both families of
functions, are bounded by $\sqrt{2}\,C_d\,\varepsilon$. By
the second inequality of Theorem~\ref{thm:main},
applied with $f=f_t$ and with $f=G_tf_t$, and by
\eqref{eq:PF_Delta}, for any $t\leq T$ and
$g_t\in\{f_t,G_tf_t\}$,
\begin{align*}
|\eta^N_t(g_t)-\eta_t(g_t)|\leq\sqrt{2}\,C_d\,\varepsilon
+C_{\eta_1,\varpi}\,\vartheta\big(\sqrt{2}\,C_d\,\varepsilon\big).
\end{align*}
This yields the estimate for the prediction measures
(using the three facts above), and, together with
part~\ref{ii-prop:filter} of Proposition~\ref{prop:filter}  and the
bound stated in part~\ref{PF_i} of the theorem on
$\|\eta^N_t-\eta_t\|_{\mathcal{V}_\setX}$, gives the bound
for the updated measures.

\emph{Part~\ref{PF_iii}.} Let $g_t\coloneqq(f_t+a_t)G_t$
and let $E_T$ and $\tilde{E}_T$ be as in the proof of
part~\ref{PF_ii}, with $G_tf_t$ replaced by $g_t$. On
$E_T\cap \tilde{E}_T$,
$\max_{t\leq T}\|\eta^N_t-\eta_t\|_{\mathcal{V}_\setX}
\leq C_{\eta_1,\varpi}\vartheta(\sqrt{2}C_d\varepsilon)$,
which is at most the constant $c_{\eta_1}$ of
Proposition~\ref{prop:filter} provided that $\varepsilon$ is
small enough, since $\vartheta$ is continuous and
$\vartheta(0)=0$. Part~\ref{K_unbounded} of
Proposition~\ref{prop:filter} and the bound on
$|\eta^N_t(g_t)-\eta_t(g_t)|$ obtained in the proof of
part~\ref{PF_ii} now yield the result, after decreasing $c_{\eta_1}$ and
increasing $C_{\eta_1}$ if necessary.
\end{proof}

\subsection{Extensions\label{sub:PF_extension}}

Above, we restricted our attention to the bootstrap particle filter, but the    results presented in Theorem ~\ref{thm:PF}  apply to a broader class of algorithms. Specially, they apply as soon as, for any $t\geq 1$, given $X^N_t$, the particles
$X^{N,n}_{t+1}$, $n\in\{1,\dots,N\}$, are independent and such that $\E[\eta_{t+1}^N(f)|\F^N_t]=\Phi_{t+1}(\eta_t^N)$. Indeed, in addition to the bounds derived in Section \ref{sec:main}, the proof of these two theorems  only use Lemma \ref{lemma:DKW} and Hoeffding's inequality, which both only assume independence.  

For instance,  the conclusion of Theorem~\ref{thm:PF}   hold for a particle filter algorithm based on stratified resampling. To define this algorithm, let  $U_t^N:=(U_t^{N,1},\dots,U_t^{N,N})$ be a sequence of independent uniform random variables on $[0,1]^N$, and, for any $t\geq 1$ and $n\in\{1,\dots,N\}$, let
\begin{align*}
V_t^{N,n}:=\frac{(n-1)+U_t^{N,n}}{N}.
\end{align*}
Then, for  particle filter with stratified resampling, the distributions $\eta_t^N$ and $\hat{\eta}_t^N$ are  defined as above, with $X^{N,1}_1,\dots,X^{N,N}_1\iid\eta_1$  and where, for all $t\geq 1$ and  $n\in\{1,\dots,N\}$, 
\begin{align*}
X_{t+1}^n\sim M_{t+1}(X_t^{A_t^n},\cdot)\quad \text{ with }\quad A_t^n:=\min\bigg\{ m\in\{1,\dots,N\}:\, V_t^{N,n}\leq \sum_{k=1}^m \frac{G_t(X_t^{N,k})}{\sum_{i=1}^N G_t(X_t^{N,i})}\bigg\}.
\end{align*}

It is also not difficult to see that, at any time $t\geq 1$, a  particle filter  based on Baker resampling \citep[a.k.a.~residual resampling; see e.g.][Section 9.5]{chopin2020introduction} generates particles $X^{N,n}_{t+1}$, $n\in\{1,\dots,N\}$  which are, given $X^N_t$,  independent and such that $\E[\eta_{t+1}^N(f)|\F^N_t]=\Phi_{t+1}(\eta_t^N)(f)$ for any functions $f\in\mathcal{B}_1(\setX)$. Hence, the conclusions of Theorem \ref{thm:PF} also apply to this algorithm.

We finally  stress that the results of this section also apply to
mean field particle models defined by conditionally
independent particles. Let $K_{t+1,\mu}(x,\dd x')$,
$\mu\in\mathcal{P}(\setX)$, be Markov transitions,
measurable in $(\mu,x)$, satisfying the compatibility
condition
\begin{align}\label{eq:compat_K}
\mu K_{t+1,\mu}=\Phi_{t+1}(\mu),
\qquad\mu\in\mathcal{P}(\setX),\quad t\geq 1,
\end{align}
and assume that, given $X^N_t$, the particles
$X^{N,n}_{t+1}$, $n\in\{1,\dots,N\}$, are independent
with
\begin{align*}
X^{N,n}_{t+1}\sim K_{t+1,\eta^N_t}(X^{N,n}_t,\cdot),
\qquad n\in\{1,\dots,N\}.
\end{align*}
The bootstrap filter corresponds to
$K_{t+1,\mu}(x,\cdot)=\Phi_{t+1}(\mu)$. When
$G_t\leq 1$, another choice is the acceptance-rejection
(or killing/regeneration) selection
\begin{align*}
K_{t+1,\mu}(x,\cdot)=G_t(x)\,M_{t+1}(x,\cdot)
+\big(1-G_t(x)\big)\,\Phi_{t+1}(\mu),
\end{align*}
for which \eqref{eq:compat_K} follows from
$\mu(G_tM_{t+1})=\mu(G_t)\,\Phi_{t+1}(\mu)$; see
\citet[][Section~4.4]{del2013mean} for these mean field
particle models and other choices of $K_{t+1,\mu}$.
Given $\F^N_t$,
the particles $X^{N,n}_{t+1}$ are independent with
respective distributions
$K_{t+1,\eta^N_t}(X^{N,n}_t,\cdot)$, whose average is
$\Phi_{t+1}(\eta^N_t)$ by \eqref{eq:compat_K}. Therefore,  Theorem~\ref{thm:PF} applies to this class of algorithms. 

\subsection{Discussion\label{sub:discussionPF}}

\subsubsection{Dependence of the error bounds on $T$ and $N$}

The dependence on $(T,N)$ in part~\ref{PF_i} of Theorem \ref{thm:PF} is optimal, up
to the factor $\log(1+1/\varepsilon_{N,T,q})$. To prove this claim, let
$d=1$, $\setX=\R$, $G_t\equiv 1$ and
$M_{t+1}(x,\dd x')=\nu(\dd x')$ for some probability measure
$\nu\in\mathcal{P}(\R)$ with a positive continuous
density with respect to $\lambda_1$, and let $\eta_1=\nu$. Then \ref{A3} and \ref{B1} hold, $\eta_t=\nu$
for any $t\geq 1$, and the  measures $\eta^N_t$ are
independent empirical measures of $N$ independent
draws from $\nu$. As $N\to\infty$ with $T$ fixed,
$\sqrt{N}\max_{1\leq t\leq T}\|\eta^N_t-\nu\|_{\mathrm{K}}$
converges in law to the maximum of $T$ independent
random variables with the Kolmogorov distribution,
whose tail is equivalent to $2e^{-2x^2}$ as
$x\to\infty$. This maximum is of order
$\sqrt{\log(T)/2}$ with high probability as
$T\to\infty$, whence
$\max_{1\leq t\leq T}\|\eta^N_t-\eta_t\|_{\mathrm{K}}$
cannot be of order smaller than
$\sqrt{\log(T)/N}$.

The additional factor
$\log(1+1/\varepsilon_{N,T,q})$ involved in
bounds in Theorem \ref{thm:PF} is inherited from the
perturbation bounds of Section~\ref{sec:main}
(see Remark~\ref{rem:log_factor}). We do not know whether
this factor can be removed under our conditions.

\subsubsection{Dependence of the error bounds on $T$ and $q$}

Let $\rho\in(0,1)$. Then, by part~\ref{PF_i} of Theorem~\ref{thm:PF}, with probability
at least $1-q$, the Kolmogorov distance between the
particle approximations and their targets is at most
$\rho$ at all times $t\in\{1,\dots,T\}$ if
$\varepsilon_{N,T,q}$ is of order
$\rho/\log(1/\rho)$, that is, if 
\begin{align}\label{eq:lower_N}
N\geq  C'_{\eta_1}\big(1+\log(1+\rho^{-1})\big)^2 \rho^{-2}\big(1+\log(T/q)\big)
\end{align}
with $C'_{\eta_1}>0$ some constant  depending only on $d$ and on the constant $C_{\eta_1}$ in Theorem \ref{thm:PF}. Hence, according to \eqref{eq:lower_N},  for a given precision $\rho$, the number of particles should grow  logarithmically in $T$
and $1/q$. This dependence of $N$ to the parameters $T$ and $q$ is sharp. Indeed,  by   \citet[][Proposition 3.5]{gerber2026safety} and Proposition \ref{prop:example}, it follows that, for a given precision $\rho\in(0,1/2)$ and under our conditions \ref{A3} and \ref{B1}, the minimum value of $N$ needed to ensure that
\begin{align*}
\P\Big(\max_{1\leq t\leq T}\|\eta^N_t-\eta_t\|_{\mathrm{K}}
\vee
\max_{1\leq t\leq T}
\|\widehat{\eta}^N_t-\widehat{\eta}_t\|_{\mathrm{K}}\leq \rho\Big)\geq 1-q
\end{align*}
cannot be  of order smaller than $\log(T/q)$.

 The same  lower bound \eqref{eq:lower_N} on $N$ applies for the result stated in part~\ref{PF_iii} of Theorem~\ref{thm:PF}, which covers unbounded
functions such as, for the class of state-space models introduced in Section \ref{sec:examples}, the components of $H_t$ (see Remark \ref{rem:unbounded}),
and hence the filtering mean when $H_t(x)=A x$ for some full rank column matrix $A$ (e.g.~$H_t(x)=x$).

\subsubsection{The examples of Section~\ref{sub:expl_ex}}

Section~\ref{sub:expl_ex} provides examples of models for which conditions \ref{A3} and \ref{B1} hold, and thus for which Theorem \ref{thm:PF} applies. We focus here on the partially observed diffusions model introduced in  Section~\ref{sub:SDE}. 

In this model, it is challenging to sample from $M_{t+1}(x,\dd x')$, but the
bootstrap particle filter can be easily deployed on the Euler--Maruyama approximation of the model presented
Section~\ref{sub:Euler}, with $l^\star\geq 1$ discretization
steps per unit of time. By Remark \ref{rem:Diff},  conditions
\ref{A3} and \ref{B1} hold for this approximate model
if the functions $b$ and $\sigma$ satisfy
\ref{diff1}--\ref{diff3} and
$\sup_{t\geq 1}\|y_{t+1}-y_t\|<\infty$, without
requiring the observations to be bounded. Let
$\eta^{N,\star}_t$ denote the particle
approximation of the prediction distributions
$\eta^\star_t$ of the approximate model. By Remarks~\ref{rem:constant} and \ref{rem:Diff}, the constant of
Theorem~\ref{thm:PF} for the approximate model does
not depend on $l^\star$. Then, by using Proposition~\ref{prop:compare_diff},
Theorem~\ref{thm:PF}, 
the inequality
$\|\cdot\|_{\mathrm{K}}\leq\|\cdot\|_{\mathrm{tv}}$ and
the triangle inequality, with probability at least
$1-q$ we have
\begin{align*}
\max_{1\leq t\leq T}\|\eta^{N,\star}_t-\eta_t\|_{\mathrm{K}}
\leq C_{\eta_1}\Big(\vartheta\big(\varepsilon_{N,T,q}\big)
+\vartheta(1/l^\star)\Big),
\end{align*}
and similarly for the updated distributions. The two sources of errors are balanced when $N$ is of order
$(l^\star)^2$, up to logarithmic factors. Since the
cost of the algorithm is of order $N\,l^\star\,T$, it follows from \eqref{eq:lower_N} that
reaching a Kolmogorov error $\rho$ over the horizon $T$
costs of order $\rho^{-3}\,T$ operations, up to
logarithmic factors.

\subsubsection{Comparison with existing results}

To the best of our
knowledge, high-probability
estimates similar to those presented in Theorem \ref{thm:PF} have only been obtained in \citet{delmoral2012concentration},
\citet[][Corollary~14.5.7]{del2013mean},
 and
\citet[][Theorem 3.3]{gerber2026safety}, for restricted classes of
models. Specifically, under uniform regularity conditions on the
Feynman--Kac semigroup, which require in particular
bounded ratios of the potential functions and summable
Dobrushin coefficients, \citet[][Corollary
and~14.5.7]{del2013mean} provides    estimates   of order $\sqrt{\log(T)/N}$ and with explicit constants;
these conditions however exclude the non-compact models
considered here. The result in \citet[][Theorem 3.3]{gerber2026safety} is identical to part \ref{PF_i} of Theorem \ref{thm:PF}, but only applies to one dimensional linear Gaussian state-space models with observations $y_t=0$. By contrast, as illustrated in Section \ref{sec:examples}, Theorem \ref{thm:PF} is valid to a broad class of models, including models defined on non-compact state-spaces and models with unstable Markov transitions.

The first non-asymptotic (in $N$) time-uniform convergence estimates for
particle filters were obtained in
\citet[][Theorem~3.1]{delmoral2001stability} and in
\citet[][Theorem~2.11]{del2000branching}, where uniform
$\mathsf{L}^p$-error bounds are also established for
empirical processes indexed by classes of functions,
under uniform mixing conditions on the Markov
transitions and bounded potential functions; see also
\citet{delmoral2000convergence} for Glivenko--Cantelli
and Donsker theorems for these empirical processes.  Deriving a trajectory estimate from a fixed-time
$\mathsf{L}^p$ estimate  requires a
union bound over $t$, which yields a
dependence of order $(T/q)^{1/p}$. By contrast, in Theorem \ref{thm:PF},   $T$ and $q$
enter $\varepsilon_{N,T,q}$ only through
$\sqrt{\log(T/q)}$.

\section{Concluding remarks\label{sec:concludion}}

We conclude with some directions for future research.

First, the perturbation bounds of Section~\ref{sec:main} apply to any sequence of probability measures
$\bm{\mu}=(\mu_t)_{t\geq 1}$ approximating the flow $ \eta_t$, and only require controlling the
largest local perturbation \eqref{eq:maxPerturbation}. In this work, they have been applied to
approximate Feynman--Kac models, including linearized Kalman filters (Section~\ref{sec:compareFK}), and to
particle filters (Section~\ref{sec:PF}). A natural next step is to use them to study other filtering
algorithms, such as the extended Kalman filter, for which the linearization points are computed from the
approximate filter itself, and the ensemble Kalman filter. For these algorithms, the sequence $\bm{\mu}$ is
not the flow of a fixed approximating Feynman--Kac model, and the main task is to bound the local
perturbations $d_2(\mu_t,\Phi_t(\mu_{t-1}))$. 

Second, the second part of \ref{B1}, which requires the integrals
$\int_\setX\mathcal{V}_\setX(q_{t+1}(\cdot,x'))\,\bar{\varsigma}_{t+1}(\dd x')$ to be bounded uniformly in
$t$, is restrictive when $\setX$ is unbounded: it requires the functions $q_{t+1}(\cdot,x')$ to vanish at
infinity in an integrable way and, in typical models (e.g.\ when $\setX=\R$ and
$m_{t+1}(x,x')=m(x'-x)$ for some continuous density $m$), it prevents the functions $G_t$ from being
bounded below away from zero. This condition is only used to prove
Lemma~\ref{lemma:Q_smooth}, on which the Lipschitz estimate of Lemma~\ref{lemma:Lipschitz} relies, and
future research should aim at relaxing \ref{B1}.

Third, the bounds obtained in this work are expressed in terms of the total variation norm and of the
Kolmogorov distance, and therefore mainly concern bounded test functions; only some specific unbounded
functions can be handled, as discussed for instance in Remark \ref{rem:unbounded}. To control general
unbounded functions,  it would be of interest to replace the total
variation norm by a weighted norm $\|\cdot\|_V$ associated with some function $V\geq 1$, as in the Lyapunov
approaches discussed in Section~\ref{sub:literature}, and the Kolmogorov distance by a Wasserstein distance.

Finally, for the class of state-space models introduced in Section~\ref{sec:examples}, the functions $H_t$
are required to be bi-Lipschitz, which implies that the dimension $r$ of the observations is at least
the dimension $d$ of the state space. Whether the perturbation bounds of Section~\ref{sec:main} hold for
these models when $r<d$ is an open question.

\appendix
\addcontentsline{toc}{section}{Appendix}
 \addtocontents{toc}{\protect\setcounter{tocdepth}{1}}

\section{Index of notation\label{sec:notation_index}}

The table lists the symbols used in several sections,
with the place where they are introduced.

{\small
\renewcommand{\arraystretch}{1.15}
\setlength{\emergencystretch}{2em}
\begin{longtable}{@{}p{0.25\textwidth}p{0.58\textwidth}p{0.11\textwidth}@{}}
\toprule
Symbol & Description & Reference \\
\midrule
\endfirsthead
\toprule
Symbol & Description & Reference \\
\midrule
\endhead
\midrule
\multicolumn{3}{r@{}}{\emph{Continued on next page}} \\
\endfoot
\bottomrule
\endlastfoot

\multicolumn{3}{@{}l}{\textbf{Spaces, measures and
function classes}} \\
$\mathcal{P}(\setX)$, $\mathcal{M}(\setX)$
 & Probability measures, resp.\ finite signed
   measures, on $(\setX,\mathcal{E})$
 & \S\ref{sub:setup} \\
$\mathcal{B}_1(\setX)$
 & Measurable functions $f\colon\setX\to[0,1]$
 & \S\ref{sub:setup} \\
$\setX=\R^d\cap[\underline{x},\bar{x}]$
 & State space when it is a possibly unbounded
   hyper-rectangle (e.g.\ $\setX=\R^d$)
 & \eqref{eq:setX} \\
 $\lambda_d$
 & Lebesgue measure on $\R^d$
 & \S\ref{sub:setup} \\
$\varepsilon$-pseudo-decreasing
 & $f(z)\geq\varepsilon f(z')$ whenever $z\leq z'$
 & \S\ref{sub:basic_notation} \\
$\partial^u f$, $\mathcal{C}_{\mathrm{mixed}}(\setX)$
 & Mixed partial derivative w.r.t.\ the coordinates
   in $u$; functions with continuous mixed first-order
   derivatives
 & \S\ref{sub:basic_notation} \\
$\mathcal{C}_{\mathrm{mixed}}(\R^d;r,g)$
 & $\R^r$-valued functions in
   $\mathcal{C}_{\mathrm{mixed}}$ with
   $|\partial^u f_i|\leq g$
 & \S\ref{sub:basic_notation} \\
$L$-bi-Lipschitz
 & $L^{-1}\|x-x'\|\leq\|f(x)-f(x')\|\leq L\|x-x'\|$
 & \S\ref{sub:basic_notation} \\
$\mathcal{N}_d(m,\Sigma)$
 & Gaussian distribution on $\R^d$ with mean $m$ and
   covariance $\Sigma$
 & \S\ref{sub:basic_notation} \\
\addlinespace

\multicolumn{3}{@{}l}{\textbf{Hardy--Krause variation
and norms}} \\
$\mathcal{V}_{[a,b]}(f)$, $\mathcal{V}_\setX(f)$
 & HK-variation of $f$ on a bounded hyper-rectangle,
   resp.\ on the possibly unbounded set $\setX$
 & \eqref{eq:HK}, \eqref{eq:genHK} \\
$\mathcal{B}_{\mathrm{HK}}(\setX)$,
$\|\cdot\|_{\mathrm{HK}}$
 & Functions with $\mathcal{V}_\setX(f)<\infty$, normed
   by $\|f\|_\infty+\mathcal{V}_\setX(f)$
 & \eqref{eq:HK_norm} \\
$\Alt_w(f;R)$, $\Alt(f;R)$, $\tilde{\mathcal{V}}_{[\tilde{a},\tilde{b}]}(f)$
 & Alternating sums of the values of $f$ at the vertices of a face of $R$, resp.\ of $R$; variation in the sense of Vitali
 & \eqref{eq:alternating} \\
$\|\cdot\|_{\mathrm{tv}}$
 & Total variation norm
 & \S\ref{sub:basic_notation} \\
$\|\cdot\|_{\mathrm{K}}$
 & Kolmogorov norm
   $\sup_{b\in\setX}|\nu((-\infty,b]\cap\setX)|$
 & \eqref{eq:K-norm} \\
$\|\cdot\|_{\mathcal{V}_\setX}$
 & HK dual norm on $\mathcal{M}(\setX)$, equivalent to
   $\|\cdot\|_{\mathrm{K}}$ on $\mathcal{P}(\setX)$
 & \S\ref{sub:New_norm} \\
$\mathcal{F}_2(\setX)\subseteq\mathcal{F}_1(\setX)$,
$d_i$
 & Classes of test functions and pseudo-metrics
   $d_i(\mu_1,\mu_2)=
   \sup_{f\in\mathcal{F}_i}|\mu_1(f)-\mu_2(f)|$
 & \S\ref{sub:assumptions_gen} \\
\addlinespace
\multicolumn{3}{@{}l}{\textbf{The Feynman--Kac
model}} \\
$G_t$, $M_{t+1}$
 & Potential functions in $\mathcal{B}_1(\setX)$ and
   Markov transitions
 & \S\ref{sub:setup} \\
$Q_{t+1}$, $Q_{s,t}$
 & $Q_{t+1}(x,\dd y)=G_t(x)M_{t+1}(x,\dd y)$ and
   $Q_{s,t}=Q_{s+1}\cdots Q_t$
 & \S\ref{sub:setup} \\
$\eta_t$, $\widehat{\eta}_t$
 & Prediction and updated 
   Feynman--Kac measures,
   $\eta_t=\gamma_t/\gamma_t(1)$
 & \eqref{eq:def_eta_t} \\
$\mathcal{P}_{W_t}(\setX)$,
$\mathcal{P}_{W_t|1}(\setX)$
 & $\{\mu:\mu(W_t)>0\}$ and
   $\mathcal{P}_{W_t}(\setX)\cap\mathcal{P}_1(\setX)$
 & \S\ref{sub:setup}, \S\ref{sub:assumptions_gen} \\
$\Psi_t$
 & Updating map
   $\Psi_t(\mu)(\dd x)=G_t(x)\mu(\dd x)/\mu(G_t)$
 & \eqref{eq:def_Phi} \\
$\Phi_{t+1}$, $\Phi_{s,t}$
 & The map $\Phi_{t+1}(\mu)=\Psi_t(\mu)M_{t+1}$ and its
   semigroup $\Phi_t\circ\cdots\circ\Phi_{s+1}$
 & \eqref{eq:def_Phi} \\
$C_{t,\varrho}$
 & Super-level set $\{x:G_t(x)\geq\varrho\}$
 & \S\ref{sub:primitive} \\
\addlinespace

\multicolumn{3}{@{}l}{\textbf{Perturbation
analysis}} \\
$\bm{\mu}=(\mu_t)_{t\geq1}$, $\widehat{\mu}_t$
 & Approximations of $\eta_t$, and
   $\widehat{\mu}_t=\Psi_t(\mu_t)$
 & \S\ref{sub:setup} \\
$\delta_t(\mu,\nu)$
 & Local perturbation $\nu-\Phi_t(\mu)$, with $\Phi_1$
   the identity
 & \eqref{eq:delta_s} \\
$\Delta_{T,\bm{\mu},\|\cdot\|}$,
$\Delta_{T,\bm{\mu}|\eta}$
 & Largest local perturbation up to time $T$ w.r.t. $\|\cdot\|$, resp.\ $d_2$
 & \eqref{eq:maxPerturbation} \\
$\varpi$, $\Delta_{T,\bm{\mu}|\eta,\varpi}$
 & Weight function and largest weighted local
   perturbation
 & \eqref{eq:Delta_phi} \\
$\vartheta(z)$
 & Rate function $z\log(1+z^{-1})$, with
   $\vartheta(0)=0$
 & \eqref{eq:vartheta}\\
\addlinespace

\multicolumn{3}{@{}l}{\textbf{Assumptions and
state-space models}} \\
$D_t$, $\underline{q}$, $\xi$, $\varrho$
 & Sets and constants of the local minorization
   condition \ref{A1}
 & \ref{A1} \\
$m_{t+1}$, $q_{t+1}$
 & Density of $M_{t+1}$ and
   $q_{t+1}(x,x')=G_t(x)m_{t+1}(x,x')$
 & \ref{B1} \\
$H_t$, $F_{t+1}$
 & Observation and state transition functions of
   \eqref{eq:SSM}
 & \eqref{eq:functions} \\
$\zeta_{G_t}$, $\zeta_{M_{t+1}}$
 & Radial profiles of the noise densities
 & \eqref{eq:functions} \\
$\zeta_G$, $\zeta_M$, $\varepsilon$, $\nu_G$,
$\nu_M$, $L$
 & Lower bounds, pseudo-monotonicity, tail decay and
   uniform constants in \ref{Model1}--\ref{Model}
 & \ref{Model1} \\
\addlinespace

\multicolumn{3}{@{}l}{\textbf{Applications}} \\
$G^\star_t$, $M^\star_{t+1}$, $\eta^\star_t$,
$\widehat{\eta}^\star_t$
 & Approximating Feynman--Kac model and its prediction
   and updated measures
 & \eqref{eq:model3} \\
$\Delta_{Q|Q^\star}$
 & $\sup_t\sup_x
   \|Q^\star_{t+1}(x,\cdot)-Q_{t+1}(x,\cdot)\|_{\mathrm{tv}}$
 & \eqref{eq:delta} \\
$\eta^N_t$, $\widehat{\eta}^N_t$
 & Particle filter approximations of $\eta_t$ and
   $\widehat{\eta}_t$ with $N$ particles
 & \S\ref{sub:algoPF} \\
$\delta^N_t$
 & Local perturbation $\eta^N_t-\Phi_t(\eta^N_{t-1})$
   of the particle filter
 & \S\ref{sub:local_PF} \\
$\varepsilon_{N,T,q}$
 & High-probability bound on the particle filter's
   local perturbations
 & \eqref{eq:delta_pT} \\

\end{longtable}
}

\section{Proofs\label{app:proofs}}

\subsection{Proofs of the results in Section~\ref{sec:notation}\label{sec:proofs}}

\subsubsection{Proof of Proposition~\ref{prop:order}\label{p-prop:order}}

\begin{proof}

\noindent\textbf{Proof of part \ref{basic_HK1}.}
Let $[\tilde{a},\tilde{b}]\subseteq[a,b]$, so that
$a\leq\tilde{a}\leq\tilde{b}\leq b$.
Recall the conventions on
hyper-rectangles with a side reduced to a point
introduced in Section~\ref{sub:HK-var}. In what follows, for any disjoint sets
$u,w\subseteq\{1,\dots,d\}$ such that $u\cup w\neq\emptyset$, we identify
$\R^{|u\cup w|}$ with the set of vectors whose coordinates are indexed by the
elements of $u\cup w$ (in increasing order) and, with a slight abuse of notation,
for any hyper-rectangles $[\alpha,\beta]\subseteq\R^{|u|}$ and
$[\tilde{\alpha},\tilde{\beta}]\subseteq\R^{|w|}$ we let
\begin{align*}
[\alpha,\beta]\times[\tilde{\alpha},\tilde{\beta}]\coloneqq\big\{x\in\R^{|u\cup w|}:\ \alpha\leq x^u\leq \beta\text{ and } \tilde{\alpha}\leq x^{w}\leq \tilde{\beta}\big\}.
\end{align*}
All the inequalities below are between elements of $[0,\infty]$.

\emph{Step 1: change of anchor.}
Let $m\geq 1$ and $\varphi\colon\R^m\to\R$. Then, by
\citet[][Proposition~6]{owen2005multidimensional}, for any  $c\leq e$ in $\R^m$ we have
\begin{equation*}
\varphi(c)=\varphi(e)+\sum_{\emptyset\neq u\subseteq\{1,\dots,m\}}
(-1)^{|u|}\sum_{w\subseteq u}(-1)^{|w|}\,
\varphi\big(\langle c^w,e\rangle_w\big).
\end{equation*}
By \eqref{eq:alternating}, the inner sum is equal to $\Alt_u(\varphi;[c,e])$, while
$\Alt_\emptyset(\varphi;[c,e])=\varphi(e)$. Therefore,
\begin{equation}\label{eq:anchor_change}
\varphi(c)=\sum_{u\subseteq\{1,\dots,m\}}(-1)^{|u|}\,
\Alt_u\big(\varphi;[c,e]\big).
\end{equation}
Moreover, if $\varphi$ does not depend on its $i$-th coordinate for some $i\in\{1,\dots,m\}$, then
$\Alt_u(\varphi;[c,e])=0$ for any set $u\ni i$, since in \eqref{eq:alternating} the terms associated with
the sets $s$ and $s\cup\{i\}$, $s\subseteq u\setminus\{i\}$, cancel out.

We now let $u\subseteq\{1,\dots,d\}$ be  a non-empty set
and $z\in[\tilde{a}^u,\tilde{b}^u]$, we let $\varphi_z(y)\coloneqq f(\langle z,y\rangle_u)$ for any
$y\in\R^d$, and, for any $w\subseteq -u$, we let
\begin{equation*}
\Lambda_{u,w}(z)\coloneqq
\Alt_{w}\big(\varphi_z;[\tilde{b},b]\big)=\sum_{s\subseteq w}(-1)^{|s|}\,
f\big(\langle z,\langle\tilde{b}^s,b\rangle_s\rangle_u\big),
\end{equation*}
so that, in particular, $\Lambda_{u,\emptyset}(z)=f(\langle z,b\rangle_u)$. Since the function $\varphi_z$
does not depend on the coordinates $y^u$, by applying \eqref{eq:anchor_change} with $m=d$,
$\varphi=\varphi_z$, $c=\tilde{b}$ and $e=b$, only the sets $w\subseteq -u$ contribute to the sum,
and we obtain
\begin{equation}\label{eq:f_Lambda}
f(\langle z,\tilde{b}\rangle_u)=\varphi_z(\tilde{b})
=\sum_{w\subseteq -u}(-1)^{|w|}\,\Lambda_{u,w}(z).
\end{equation}
Since $\Alt(\cdot\,;R)$ is linear with respect to the function for any
hyper-rectangle $R$, it follows from \eqref{eq:f_Lambda} and from the triangle
inequality that, for any $P\in\mathsf{P}([\tilde{a}^u,\tilde{b}^u])$,
\begin{equation*}
\sum_{R\in P}\big|\Alt\big(f(\langle\cdot,\tilde{b}\rangle_u);R\big)\big|
\leq\sum_{w\subseteq -u}\sum_{R\in P}\big|\Alt(\Lambda_{u,w};R)\big|
\leq\sum_{w\subseteq -u}
\tilde{\mathcal{V}}_{[\tilde{a}^u,\tilde{b}^u]}
(\Lambda_{u,w}).
\end{equation*}
Taking the supremum over $P$ (the next inequality being trivial if a side of
$[\tilde{a}^u,\tilde{b}^u]$ is reduced to a point, since its left-hand side is then null by convention) yields
\begin{equation*}
\tilde{\mathcal{V}}_{[\tilde{a}^u,\tilde{b}^u]}
\big(f(\langle\cdot,\tilde{b}\rangle_u)\big)
\leq\sum_{w\subseteq -u}
\tilde{\mathcal{V}}_{[\tilde{a}^u,\tilde{b}^u]}
(\Lambda_{u,w}).
\end{equation*}
By summing this inequality over all non-empty sets $u\subseteq\{1,\dots,d\}$ and using the
definition \eqref{eq:HK} of $\mathcal{V}_{[\tilde{a},\tilde{b}]}(f)$ (anchored at $\tilde{b}$), we obtain
\begin{align}\label{eq:T_u_bound}
\mathcal{V}_{[\tilde{a},\tilde{b}]}
(f)=\sum_{\emptyset\neq u\subseteq\{1,\dots,d\}}\tilde{\mathcal{V}}_{[\tilde{a}^u,\tilde{b}^u]}
\big(f(\langle\cdot,\tilde{b}\rangle_u)\big)\leq  \sum_{\emptyset\neq u\subseteq\{1,\dots,d\}}\sum_{w\subseteq -u}
\tilde{\mathcal{V}}_{[\tilde{a}^u,\tilde{b}^u]}(\Lambda_{u,w}).
\end{align}

\emph{Step 2: composition of  alternating sums.} As above, let $u\subseteq\{1,\dots,d\}$ be  a non-empty set and
$w\subseteq-u$, and let $v\coloneqq u\cup w$ and   $R=[\alpha,\beta]  \subseteq [\tilde{a}^u,\tilde{b}^u]$ be a hyper-rectangle.
Then, by the definition of $\Lambda_{u,w}$ and by \eqref{eq:alternating},
\begin{equation*}
\Alt(\Lambda_{u,w};R)
=\sum_{t\subseteq u}\sum_{s\subseteq w}(-1)^{|t|+|s|}\,f(x_{t,s}) 
\end{equation*}
where $x_{t,s}\in\R^d$ denotes the vector such that $x_{t,s}^{t}=\alpha^{t}$,
$x_{t,s}^{u\setminus t}=\beta^{u\setminus t}$, $x_{t,s}^{s}=\tilde{b}^{s}$
and $x_{t,s}^{-(u\cup s)}=b^{-(u\cup s)}$ (with $\alpha^t$ and $\beta^{u\setminus t}$
denoting the corresponding coordinates of $\alpha,\beta\in\R^{|u|}$). The points
$\{x_{t,s}^{v},\,t\subseteq u,\,s\subseteq w\}$ are exactly the vertices of the hyper-rectangle
$R\times[\tilde{b}^w,b^w]$, the vertex $x_{t,s}^v$ being obtained by taking the
coordinates of the lower corner $(\alpha,\tilde{b}^w)$ in the set $t\cup s$. Therefore,
by \eqref{eq:alternating},
\begin{equation}\label{eq:Delta_product}
\Alt(\Lambda_{u,w};R)
=\Alt\big(f(\langle\cdot,b\rangle_v);
R\times[\tilde{b}^w,b^w]\big),
\end{equation}
from which we conclude that
\begin{equation*}
\sum_{R\in P}\big|\Alt(\Lambda_{u,w};R)\big|
=\sum_{R\in P}\Big|\Alt
\big(f(\langle\cdot,b\rangle_v);
R\times[\tilde{b}^w,b^w]\big)\Big|,\quad \forall P\in\mathsf{P}([\tilde{a}^u,\tilde{b}^u]).
\end{equation*}

\emph{Step 3: summation over the faces.}
Fix a non-empty set $v\subseteq\{1,\dots,d\}$ and let
\begin{align*}
\mathcal{S}_v\coloneqq\Big\{(u,w):\ u,w\subseteq\{1,\dots,d\},\ u\neq\emptyset,\ u\cap w=\emptyset,\ u\cup w=v\Big\}.
\end{align*}
In addition, for any $(u,w)\in \mathcal{S}_v$ let
$\mathcal{R}_{u,w|v}\coloneqq
[\tilde{a}^u,\tilde{b}^u]\times[\tilde{b}^w,b^w]$ and note that
$\mathcal{R}_{u,w|v}\subseteq [a^v,b^v]$ since $a\leq\tilde{a}\leq\tilde{b}\leq b$.

Let $(u,w)\in \mathcal{S}_v$. If $\tilde{a}_i=\tilde{b}_i$ for some $i\in u$ then
$\tilde{\mathcal{V}}_{[\tilde{a}^u,\tilde{b}^u]}(\Lambda_{u,w})=0$ by convention, while if
$\tilde{b}_i=b_i$ for some $i\in w$ then $\Lambda_{u,w}\equiv 0$, since
$\Lambda_{u,w}(z)$ is then an alternating sum over a hyper-rectangle with a side
reduced to a point, and thus again
$\tilde{\mathcal{V}}_{[\tilde{a}^u,\tilde{b}^u]}(\Lambda_{u,w})=0$. In both cases, the inequality
\eqref{eq:inc1} stated below trivially holds. Otherwise, $\tilde{a}^u<\tilde{b}^u$ and $\tilde{b}^w<b^w$,
and
\begin{align}\label{eq:inc0}
\{R\times[\tilde{b}^w,b^w]:\ R\in P\}\in\mathsf{P}(\mathcal{R}_{u,w|v}),\quad\forall  P\in\mathsf{P}([\tilde{a}^u,\tilde{b}^u]).
\end{align}
Consequently,
\begin{equation}\label{eq:inc1}
\begin{split}
\tilde{\mathcal{V}}_{[\tilde{a}^u,\tilde{b}^u]}(\Lambda_{u,w})=\sup_{P\in \mathsf{P}([\tilde{a}^u,\tilde{b}^u])} \sum_{R\in P}
\Big|\Alt(\Lambda_{u,w};R)\Big|&=\sup_{P\in \mathsf{P}([\tilde{a}^u,\tilde{b}^u])} \sum_{R\in P}
\Big|\Alt\big(f(\langle\cdot,b\rangle_v);
R\times[\tilde{b}^w,b^w]\big)\Big|\\
&\leq\sup_{P'\in \mathsf{P}(\mathcal{R}_{u,w|v})}\sum_{R'\in P'}
\Big|\Alt\big(f(\langle\cdot,b\rangle_v);
R'\big)\Big|\\
&=\tilde{\mathcal{V}}_{\mathcal{R}_{u,w|v}}
\big(f(\langle\cdot,b\rangle_v)\big)
\end{split}
\end{equation}
where the second equality holds by \eqref{eq:Delta_product} and the inequality holds by \eqref{eq:inc0}.

To proceed further, we observe that, for any elements $(u,w)$ and $(u',w')$ of $\mathcal{S}_v$ such that $(u,w)\neq (u',w')$,  the two hyper-rectangles $\mathcal{R}_{u,w|v}$ and $\mathcal{R}_{u',w'|v}$   have disjoint interiors. Indeed, if
$(u,w)\neq(u',w')$ then $u\neq u'$ (since $w=v\setminus u$ and $w'=v\setminus u'$), and thus there is a coordinate
$i\in u\cap w'$ or $i\in w\cap u'$, and the $i$-th
sides $[\tilde{a}_i,\tilde{b}_i]$ and
$[\tilde{b}_i,b_i]$ overlap at most at the point
$\tilde{b}_i$.
Assume that $a^v<b^v$ (otherwise, every $\mathcal{R}_{u,w|v}$ has a side reduced to a point,
so that both sides of \eqref{eq:inc2} below are null by convention), and let $G_v\in\mathsf{P}([a^v,b^v])$
be the grid splitting generated, in each coordinate $i\in v$, by the points
$a_i\leq\tilde{a}_i\leq\tilde{b}_i\leq b_i$ (repeated points being removed).
Each hyper-rectangle $\mathcal{R}_{u,w|v}$ with no side reduced to a point is a cell of $G_v$,
since in each coordinate $i\in v$ its side is either $[\tilde{a}_i,\tilde{b}_i]$ or $[\tilde{b}_i,b_i]$,
and, by the above observation, distinct elements of $\mathcal{S}_v$ correspond to distinct cells. Since the
hyper-rectangles $\mathcal{R}_{u,w|v}$ with a side reduced to a point have a null variation by convention, 
it follows from \citet[][Lemma 1]{owen2005multidimensional} that
\begin{align}\label{eq:inc2}
\sum_{(u,w)\in\mathcal{S}_v}\tilde{\mathcal{V}}_{\mathcal{R}_{u,w|v}}\big(f(\langle\cdot,b\rangle_v)\big)\leq
\sum_{C\in G_v}\tilde{\mathcal{V}}_{C}\big(f(\langle\cdot,b\rangle_v)\big)=
\tilde{\mathcal{V}}_{[a^v,b^v]}\big(f(\langle\cdot,b\rangle_v)\big).
\end{align}

\emph{Conclusion.} The proof of part \ref{basic_HK1}   of the proposition is completed by noting that
\begin{align*}
\mathcal{V}_{[\tilde{a},\tilde{b}]}
(f)\leq  \sum_{\emptyset\neq u\subseteq\{1,\dots,d\}}\sum_{w\subseteq -u}
\tilde{\mathcal{V}}_{[\tilde{a}^u,\tilde{b}^u]}(\Lambda_{u,w})&=  \sum_{\emptyset\neq v\subseteq\{1,\dots,d\}}\sum_{(u,w)\in\mathcal{S}_v}\tilde{\mathcal{V}}_{[\tilde{a}^u,\tilde{b}^u]}(\Lambda_{u,w})\\
&\leq \sum_{\emptyset\neq v\subseteq\{1,\dots,d\}}\sum_{(u,w)\in\mathcal{S}_v}\tilde{\mathcal{V}}_{\mathcal{R}_{u,w|v}}
\big(f(\langle\cdot,b\rangle_v)\big)\\
&\leq \sum_{\emptyset\neq v\subseteq\{1,\dots,d\}}\tilde{\mathcal{V}}_{[a^v,b^v]}
\big(f(\langle\cdot,b\rangle_v)\big)\\
&=\mathcal{V}_{[a,b]}(f)
\end{align*}
where the first inequality holds by \eqref{eq:T_u_bound}, the first equality holds by regrouping the terms
according to $v=u\cup w$ (the map $(u,w)\mapsto u\cup w$ being a bijection from
$\{(u,w):\ \emptyset\neq u\subseteq\{1,\dots,d\},\ w\subseteq -u\}$ onto
$\bigcup_{\emptyset\neq v\subseteq\{1,\dots,d\}}\mathcal{S}_v$ with fibres $\mathcal{S}_v$),
the second inequality holds by \eqref{eq:inc1}, the third inequality holds by \eqref{eq:inc2},
and the last equality holds by the definition \eqref{eq:HK} of $\mathcal{V}_{[a,b]}(f)$.

 \noindent \textbf{Proof of part \ref{basic_HK2}  and of the ``if'' direction of
part   \ref{basic_HK3}.}
Assume that there exists a (possibly empty) set
$v\subsetneq\{1,\dots,d\}$ such that
$f(x)=f(\langle x^v,b\rangle_v)$ for any
$x\in[a,b]$, that is, $f$ depends on $x$ only
through $x^v$. By
\citet[][Proposition~8]{owen2005multidimensional},
we then have
$\tilde{\mathcal{V}}_{[a^u,b^u]}
\big(f(\langle\cdot,b\rangle_u)\big)=0$
for any sets $u\subseteq\{1,\dots,d\}$ such that
$u\cap(-v)\neq\emptyset$, and therefore
\begin{equation}\label{eq:lemma_lower}
\mathcal{V}_{[a,b]}(f)
=\sum_{\emptyset\neq u\subseteq v}
\tilde{\mathcal{V}}_{[a^u,b^u]}
\big(f(\langle\cdot,b\rangle_u)\big).
\end{equation}
If $v\neq\emptyset$, the right-hand side of
\eqref{eq:lemma_lower} is, by definition
\eqref{eq:HK}, the HK-variation of the function
$f(\langle\cdot,b\rangle_v)$ on $[a^v,b^v]$ anchored at $b^v$, since
$f\big(\langle\langle z,b^v\rangle_u,b\rangle_v\big)=f(\langle z,b\rangle_u)$ for any non-empty set $u\subseteq v$ and any $z\in\R^{|u|}$, so that
$\mathcal{V}_{[a,b]}(f)=
\mathcal{V}_{[a^v,b^v]}
\big(f(\langle\cdot,b\rangle_v)\big)$,
which ends the proof of part \ref{basic_HK2}. If $v=\emptyset$, that is, if
$f(x)=f(b)$ for any $x\in[a,b]$, the sum in
\eqref{eq:lemma_lower} is empty, and hence
$\mathcal{V}_{[a,b]}(f)=0$ since empty sums equal
zero by convention. This proves that
$\mathcal{V}_{[a,b]}(f)=0$ whenever $f$ is constant
on $[a,b]$.

\noindent \textbf{Proof of the ``only if'' direction of part \ref{basic_HK3}}. 
It remains to show that $\mathcal{V}_{[a,b]}(f)=0$
implies that $f$ is constant on $[a,b]$. By
\citet[][Proposition~6]{owen2005multidimensional},
for any $x\in[a,b]$ we have
\begin{equation}\label{eq:B1_HK}
f(x)=f(b)+\sum_{\emptyset\neq u\subseteq\{1,\dots,d\}}
(-1)^{|u|}\sum_{w\subseteq u}(-1)^{|w|}\,
f\big(\langle x^w,b\rangle_w\big).
\end{equation}
Let $x\in[a,b]$ and let $u\subseteq\{1,\dots,d\}$ be
non-empty. Since
$\langle x^w,b\rangle_w
=\big\langle(\langle x^w,b\rangle_w)^u,b\big\rangle_u$
for any $w\subseteq u$, the inner sum in
\eqref{eq:B1_HK} involves only the function
$f(\langle\cdot,b\rangle_u)$ evaluated at the
$2^{|u|}$ vertices of the hyper-rectangle
$R_{x,u}\coloneqq[x^u,b^u]\subseteq[a^u,b^u]$. More
precisely,
\begin{equation*}
\sum_{w\subseteq u}(-1)^{|w|}\,
f\big(\langle x^w,b\rangle_w\big)
=\Alt
\big(f(\langle\cdot,b\rangle_u);R_{x,u}\big).
\end{equation*}
If $x_i=b_i$ for some $i\in u$, then $R_{x,u}$ has a side reduced to a point and the
left-hand side of the next inequality is null. Otherwise, we have $a^u\leq x^u<b^u$, and $R_{x,u}$ is a cell
of the grid splitting of $[a^u,b^u]$ generated, in each coordinate $i\in u$, by the points
$a_i\leq x_i<b_i$ (the point $a_i$ being removed when $x_i=a_i$). In both cases, it follows that
\begin{equation*}
\Big|\sum_{w\subseteq u}(-1)^{|w|}\,
f\big(\langle x^w,b\rangle_w\big)\Big|
\leq\sup_{P\in\mathsf{P}([a^u,b^u])}
\sum_{R\in P}
\Big|\Alt
\big(f(\langle\cdot,b\rangle_u);R\big)\Big|
=\tilde{\mathcal{V}}_{[a^u,b^u]}
\big(f(\langle\cdot,b\rangle_u)\big).
\end{equation*}
Together with \eqref{eq:B1_HK}, this shows that
\begin{align}\label{eq:lms2}
|f(x)-f(b)|
\leq\sum_{\emptyset\neq u\subseteq\{1,\dots,d\}}
\tilde{\mathcal{V}}_{[a^u,b^u]}
\big(f(\langle\cdot,b\rangle_u)\big)
=\mathcal{V}_{[a,b]}(f),
\qquad\forall x\in[a,b].
\end{align}
Consequently, if $\mathcal{V}_{[a,b]}(f)=0$, then
$f(x)=f(b)$ for any $x\in[a,b]$, which completes the
proof of part   \ref{basic_HK3} and of the proposition.
\end{proof}
 
\subsection{Proofs of the results in Section~\ref{sec:preliminaries}}

\subsubsection{Proof of Lemma~\ref{lemma:V_well_def} \label{p-lemma:V_well_def}}

\begin{proof}
To prove the first assertion of the lemma we let $(n_k)_{k\geq 1}$  and $(m_k)_{k\geq 1}$ be two sequences  in $\mathbb{N}$ such that
\begin{align*}
\lim_{k\rightarrow\infty}\mathcal{V}_{[a_{n_k},b_{n_k}]}(f)=\limsup_{n\rightarrow\infty}\mathcal{V}_{[a_{n},b_{n}]}(f)=:l,\quad  \lim_{k\rightarrow\infty}\mathcal{V}_{[a_{m_k},b_{m_k}]}(f)=\liminf_{n\rightarrow\infty}\mathcal{V}_{[a_{n},b_{n}]}(f)=:\underline{l}.
\end{align*}
Next, we let $k\geq 1$ be arbitrary  and   $p_k\in\mathbb{N}$ be such that the inclusion $[a_{n_k},b_{n_k}]\subseteq  [a_{m_p},b_{m_p}]$ holds for any $p\geq p_k$. Then,  by using part \ref{basic_HK1} of Proposition~\ref{prop:order}, we have
\begin{align*}
\mathcal{V}_{ [a_{m_p},b_{m_p}]}(f)\geq \mathcal{V}_{ [a_{n_k},b_{n_k}]}(f),\quad\forall p\geq p_k
\end{align*}
implying that $\underline{l}\geq \mathcal{V}_{ [a_{n_k},b_{n_k}]}(f)$. Since $k\geq 1$ is arbitrary, we conclude that $\underline{l}\geq l$, and thus $l=\underline{l}$.

To prove the second assertion of the lemma, we let  $(a'_n)_{n\geq 1}\in\mathcal{L}(\setX)$ and  $(b'_n)_{n\geq 1}\in\mathcal{U}(\setX)$ be arbitrary, and  $l'=\lim_{n\rightarrow\infty}\mathcal{V}_{ [a'_n,b'_n]}(f)$. Without loss of generality, we assume that $l'\geq l$, and below we   show    that $l'\leq l$. To this end, we let $c\in(0,l')$,  $n_c\in\mathbb{N}$  be such that $\mathcal{V}_{[a'_{n_c},b'_{n_c}]}(f)\geq c$, and we note that the properties of the sequences $(a_n)_{n\geq 1}$, $(a'_n)_{n\geq 1}$, $(b_n)_{n\geq 1}$ and $(b'_n)_{n\geq 1}$ ensure that there exists an $m_c\in\mathbb{N}$ for which the inclusion $[a'_{n_c},b'_{n_c}]\subseteq  [a_{m},b_{m}]$ holds for any $m\geq m_c$. Then,  by using part \ref{basic_HK1} of Proposition~\ref{prop:order}, we obtain that   
\[
l=\lim_{m\rightarrow\infty} \mathcal{V}_{[a_{m},b_{m}]}(f)\geq  \mathcal{V}_{ [a'_{n_c},b'_{n_c}]}(f)\geq c
\]
 and thus, since the constant $c\in(0,l')$ is arbitrary, it follows that $l \geq l'$. This shows that $l=l'$ and  the proof of the lemma is complete.
\end{proof}

\subsubsection{Proof of Proposition~\ref{prop:set_BHK}\label{p-prop:set_BHK}}

\begin{proof}
Below we let $f\in\mathcal{B}_{\mathrm{HK}}(\setX)$ be fixed, and we note first that if $\underline{x}=\bar{x}$   then $f(x)=f(\bar{x})$ for any $x\in\setX$, in which case the function $f$ is trivially bounded and measurable. 

We now   prove the proposition assuming that $\underline{x}<\bar{x}$, starting by    showing that the function  $f$ is  measurable. To this end, recall that, by convention, if $\setX\subseteq\R^d$ then  $\mathcal{E}$ is the Borel $\sigma$-algebra on $\setX$. Thus, if $\setX=[0,1]^d$ then the function $f$ is measurable   by \citet[][Theorem 3.1]{aistleitner2017functions}. Assume now that $\setX=[\underline{x},\bar{x}]$ for some arbitrary  $\underline{x}<\bar{x}$, and let $\tilde{f}=f\circ \phi_{\underline{x},\bar{x}}$ with the function $\phi_{\underline{x},\bar{x}}:[0,1]^d\rightarrow[\underline{x},\bar{x}]$   defined by
\begin{align}\label{eq:phi_fn}
\phi_{\underline{x},\bar{x}}(x)=  \underline{x}+ x(\bar{x}-\underline{x}),\quad x\in[0,1]^d.
\end{align}
By \citet[][Proposition 10]{owen2005multidimensional}, we have 
\[
\mathcal{V}_{[0,1]^d}(\tilde{f})=\mathcal{V}_{[\underline{x},\bar{x}]}(f)<\infty
\]
 and thus, by  \citet[][Theorem 3.1]{aistleitner2017functions}, the function $\tilde{f}:[0,1]^d\rightarrow\R$ is   measurable. Noting that $f=\tilde{f}\circ \phi_{\underline{x},\bar{x}}^{-1}$, it follows that $f$ is the composition of two   measurable   functions  and   is  therefore  measurable. 

We now consider the  case where the set $\setX$ is unbounded.  In this scenario,  we let $(a_n)_{n\geq 1}\in\mathcal{L}(\setX)$ and   $(b_n)_{n\geq 1}\in\mathcal{U}(\setX)$ be arbitrary, and, for any $n\geq 1$, we let   $g_n: [a_n,b_n]\rightarrow\R$ be such that $g_n(x)=f(x)$ for any $x\in   [a_n,b_n]$  and $f_n:\setX\rightarrow \R$ be   such that $f_n(x)=f(x)\ind_{[a_n,b_n]}(x)$ for any $x\in\setX$. In addition,  we let $n'\in\mathbb{N}$ be such that $a_n<b_n$ for any $n\geq n'$; note that such an  integer $n'$ exists since we are assuming that $\underline{x}<\bar{x}$. By Lemma~\ref{lemma:V_well_def}, we have 
\[
\lim_{n\rightarrow\infty}\mathcal{V}_{  [a_n,b_n]}(f)=\mathcal{V}_\setX(f)<\infty,
\]
 and thus  $\mathcal{V}_{  [a_n,b_n]}(f)<\infty$ for any $n\geq 1$.
Therefore, for any $n\geq n'$   the function $g_n$  is measurable, as proved above, which implies that  the function  $f_n$ is also measurable   for any $n\geq n'$. Noting that $f(x)=\lim_{n\rightarrow\infty} f_n(x)$ for any $x\in\setX$, it follows that the function $f$ is     also   measurable. This concludes to show that $f$ is measurable   whenever the hyper-rectangle $\setX$ is such that $\underline{x}<\bar{x}$.

To show that $\|f\|_\infty<\infty$,   we let $(a_n)_{n\geq 1}$, $(b_n)_{n\geq 1}$ and $(f_n)_{n\geq 1}$ be as above. Then, owing to \eqref{eq:lms2}, we have
\begin{align}\label{eq:lms22}
|f(x)-f(b_n)|\leq  \mathcal{V}_{[a_n,b_n]}(f),\quad\forall x\in  [a_n,b_n],\quad\forall n\geq 1
\end{align}
from which we conclude that  
\begin{align*}
\limsup_{n\rightarrow\infty}|f(x)-f(b_n)|\leq \limsup_{n\rightarrow\infty}\mathcal{V}_{[a_n,b_n]}(f)=\mathcal{V}_\setX(f)<\infty,\quad\forall x\in\setX 
\end{align*}
and thus $\sup_{n\geq 1}|f(b_n)|<\infty$. Together with \eqref{eq:lms22}, this shows that
\begin{align*}
\sup_{x\in\setX}|f(x)|\leq \sup_{n\geq 1}|f(b_n)|+\mathcal{V}_\setX(f)<\infty
\end{align*}
and  the proof of the proposition in the scenario where   $\underline{x}<\bar{x}$ is complete.

To conclude the proof of the proposition it remains to consider the case where there exists a   non-empty set $u\subsetneq\{1,\dots,d\}$ such that $\underline{x}_i<\bar{x}_i$ for any $i\in u$ and such that $\underline{x}_i=\bar{x}_i$ for  any $i\in -u$. To this aim, we let $f_{u}:\setX_u\rightarrow\R$ be such that  
\[
f_{u}(z)=f\big(\langle z,\bar{x}\rangle_u\big)\quad\mbox{\rm for all $z\in\setX_u$.}
\]
 As   proved above,  if $\tilde{\setX}=\R^k\cap[a,b]$ for some integer $k\geq 1$ and some set $[a,b]\subseteq\bar{\R}^k$ such that $a<b$,  then any function $g\in\mathcal{B}_{\mathrm{HK}}(\tilde{\setX})$ is bounded and measurable. Thus, the function $f_u$ is bounded and measurable  and, by noting that $f(x)=f_u(x^{u})$ for any $x\in \setX$,  we   conclude that the function $f$  is also  bounded and measurable. This ends the proof of the proposition.
\end{proof}

\subsubsection{Proof of parts \ref{HK_5}--\ref{HK_7} of Proposition~\ref{prop:V_HK}\label{p-prop:V_HK}}

\begin{proof}

\noindent \textbf{Proof of   \ref{HK_5}.} Let $f,g:\setX\rightarrow\R$ and   assume first that $\setX=[\underline{x},\bar{x}]$. If $\underline{x}= \bar{x}$ then it follows from part \ref{basic_HK3} of Proposition~\ref{prop:order} that $\mathcal{V}_{\setX}(fg)=0$, in which case the result stated in \ref{HK_5} trivially holds. We now consider the situation where $\underline{x}\neq \bar{x}$, noting that, by part \ref{basic_HK2} of Proposition~\ref{prop:order}, to establish \ref{HK_5} in this scenario it  is enough to consider the case where $\underline{x}<\bar{x}$.  Thus, we assume that $\underline{x}<\bar{x}$, and we let $\tilde{f}=f\circ \phi_{\underline{x},\bar{x}}$ and $\tilde{g}=g\circ \phi_{\underline{x},\bar{x}}$, with the function $\phi_{\underline{x},\bar{x}}:[0,1]^d\rightarrow[\underline{x},\bar{x}]$  defined  in \eqref{eq:phi_fn}. Then,
\begin{align}\label{eq:sub_mult}
\mathcal{V}_{\setX}(fg)=\mathcal{V}_{[0,1]^d}(\tilde{f}\tilde{g})&\leq \frac{1}{3}\big(3\mathcal{V}_{[0,1]^d}(\tilde{f})+\|\tilde{f}\|_\infty\big)\big(3\mathcal{V}_{[0,1]^d}(\tilde{g})+\|\tilde{g}\|_\infty\big)\\
&=\frac{1}{3}\big(3\mathcal{V}_{\setX}(f)+\|f\|_\infty\big)\big(3\mathcal{V}_\setX(g)+\|g\|_\infty\big)
\end{align}
where the two equalities hold by \citet[][Proposition 10]{owen2005multidimensional} and where the inequality holds  by \citet[][Corollary  4.3 and Lemma 5.2]{aistleitner2017functions}.
 
Next, to prove \ref{HK_5} in the case where $\setX$ is unbounded,  we let  $(a_n)_{n\geq 1}\in\mathcal{L}(\setX)$ and $(b_n)_{n\geq 1}\in\mathcal{U}(\setX)$ be arbitrary. Then, as we just proved,
\begin{align*}
\mathcal{V}_{ [a_n,b_n]}( fg)\leq \frac{1}{3}\big(3\mathcal{V}_{ [a_n,b_n]}(f)+\|f\|_\infty\big)\big(3\mathcal{V}_{ [a_n,b_n]}(g)+\|g\|_\infty\big),\quad\forall n\geq 1 
\end{align*}
impying that
\begin{align*}
\mathcal{V}_{\setX}( fg)&=\limsup_{n\rightarrow\infty}\mathcal{V}_{[a_n,b_n]}( fg)\\
&\leq  \frac{1}{3} \big(3 \limsup_{n\rightarrow\infty}\mathcal{V}_{[a_n,b_n]}(f)+\|f\|_\infty\big)\big(3\limsup_{n\rightarrow\infty} \mathcal{V}_{[a_n,b_n]}(g)+\|g\|_\infty\big)\\
&= \frac{1}{3}\big(3 \mathcal{V}_{\setX}(f)+\|f\|_\infty\big)\big(3\mathcal{V}_{\setX}(g)+\|g\|_\infty\big)
\end{align*}
where the two equalities hold by Lemma~\ref{lemma:V_well_def}. This ends the proof of part \ref{HK_5} of the proposition.

\noindent \textbf{Proof of   \ref{HK_6}.} Let  $b\in\setX$ be such that $\setX\cap(-\infty,b]\subsetneq \setX$. Then, we first note that $\ind_{\setX\cap(-\infty,b]}(\bar{x})=0$  since  $\setX\cap(-\infty,b]\subsetneq \setX$, and thus, by \eqref{eq:lms2}, we have
\begin{align}\label{eq:Eq_V}
\ind_{\setX\cap(-\infty,b]}(x)\leq \ind_{\setX\cap(-\infty,b]}(\bar{x})+\mathcal{V}_{\setX}(\ind_{\setX\cap(-\infty,b]})=\mathcal{V}_{\setX}(\ind_{\setX\cap(-\infty,b]}),\quad\forall x\in\setX
\end{align}
implying that $\mathcal{V}_{\setX}(\ind_{\setX\cap(-\infty,b]})\geq 1$.
 
To proceed further we first   assume   that $\setX=[\underline{x},\bar{x}]$, and we let $u\subseteq\{1,\dots,d\}$ be such that $\underline{x}_i<\bar{x}_i$ if and only if $i\in u$. Then $u\neq\emptyset$ since we are assuming that the set $\setX$ is  such that   $\setX\cap(-\infty,b]\subsetneq \setX$ for some $b\in\setX$, and let $c\in [0,1]^d$ be such that   $c_i=(b_i-\underline{x}_i)/(\bar{x}_i-\underline{x}_i)\in[0,1]$ for any $i\in u$. Then,
\begin{align}\label{eq:V1}
\mathcal{V}_{\setX}(\ind_{\setX\cap(-\infty,b]})=\mathcal{V}_{[\underline{x},\bar{x}]}(\ind_{[\underline{x},b]})=\mathcal{V}_{[\underline{x}^u,\bar{x}^u]}(\ind_{[\underline{x}^u,b^u]})=\mathcal{V}_{[0,1]^{|u|}}(\ind_{[0,c]})\geq 1
\end{align}
where the second equality holds by part \ref{basic_HK2} of Proposition~\ref{prop:order}, where the third equality holds by \citet[][Proposition 10]{owen2005multidimensional}, and where the inequality holds by \eqref{eq:Eq_V}. On the other hand, it follows from   \citet[][Corollary  4.3]{aistleitner2017functions} that $\mathcal{V}_{[0,1]^{|u|}}(\ind_{[0,c]})\leq 1$ which, together with \eqref{eq:V1}, implies that $\mathcal{V}_{\setX}(\ind_{\setX\cap(-\infty,b]})=1$ if $\setX=[\underline{x},\bar{x}]$.

Next, to show that \ref{HK_6} also holds if  the set $\setX$ is unbounded, we let $(a_n)_{n\geq 1}\in \mathcal{L}(\setX)$ and $(b_n)_{n\geq 1}\in \mathcal{U}(\setX)$ be arbitrary, and we let   $n_1\in\mathbb{N}$ be such that $a_n\leq  b\leq b_n$ for any $n\geq n_1$; observe that such an integer $n_1$ indeed exists. As proved above, for any $n\geq n_1$ we have $\mathcal{V}_{[a_n,b_n]}(\ind_{[a_n,b]})=1$, which, together with Lemma~\ref{lemma:V_well_def}, implies that  
\[
\mathcal{V}_\setX\big(\ind_{\setX\cap(-\infty,b]}\big)=\lim_{n\rightarrow\infty}\mathcal{V}_{[a_n,b_n]}(\ind_{[a_n,b]})=1.
\]

\noindent \textbf{Proof of \ref{HK_7}.}
Recall that $\mathcal{E}$ is the Borel $\sigma$-algebra on $\setX$, so that $f$ is
$(\mathcal{E}\otimes\mathcal{E})$-measurable, and note that the assumption $g(x)\in\R$ means that
$f(x,\cdot)$ is $\nu$-integrable for any $x\in\setX$. In addition, since $\nu$ is $\sigma$-finite, it
has the same null sets as some probability measure on $(\setX,\mathcal{E})$, and thus any universally
measurable subset of $\setX$ is $\bar{\nu}$-measurable. In particular, $\int_\setX\phi\,\dd\bar{\nu}$ is
well defined for any universally measurable function $\phi\colon\setX\to[0,\infty]$.

\emph{Step 1: Vitali variation on a bounded hyper-rectangle.}
Let $k\geq 1$, $[\tilde{a},\tilde{b}]\subset\R^k$ be a bounded hyper-rectangle and
$\tilde{f}\colon[\tilde{a},\tilde{b}]\times\setX\to\R$ be a
$(\mathcal{B}([\tilde{a},\tilde{b}])\otimes\mathcal{E})$-measurable function such that $\tilde{f}(z,\cdot)$
is $\nu$-integrable for any $z\in[\tilde{a},\tilde{b}]$, and let
\[
\tilde{g}(z)\coloneqq \int_\setX \tilde{f}(z,x')\,\nu(\dd x'),\qquad z\in[\tilde{a},\tilde{b}].
\]
We show that the function $x'\mapsto\tilde{\mathcal{V}}_{[\tilde{a},\tilde{b}]}(\tilde{f}(\cdot,x'))$ is
universally measurable and that
\begin{equation}\label{eq:compareVitalli}
\tilde{\mathcal{V}}_{[\tilde{a},\tilde{b}]}(\tilde{g})\leq \int_\setX \tilde{\mathcal{V}}_{[\tilde{a},\tilde{b}]}\big(\tilde{f}(\cdot,x')\big)\,\bar{\nu}(\dd x').
\end{equation}
If $[\tilde{a},\tilde{b}]$ has a side reduced to a point, then
$\tilde{\mathcal{V}}_{[\tilde{a},\tilde{b}]}(\tilde{g})=0$ and
$\tilde{\mathcal{V}}_{[\tilde{a},\tilde{b}]}(\tilde{f}(\cdot,x'))=0$ for any $x'\in\setX$ by convention,
and both claims trivially hold. We therefore assume that $\tilde{a}<\tilde{b}$.

First, for any hyper-rectangle $R=[\hat{a},\hat{b}]\subseteq[\tilde{a},\tilde{b}]$, since $\Alt(\tilde{g};R)$
is a finite linear combination of values of $\tilde{g}$ and since the functions
$\tilde{f}(z,\cdot)$ are $\nu$-integrable,
\begin{equation}\label{eq:compareVitalli1}
\Alt\big(\tilde{g};R\big)=\sum_{v\subseteq\{1,\dots,k\}}(-1)^{|v|}\int_\setX \tilde{f}\big(\langle \hat{a}^v,\hat{b}\rangle_v, x'\big)\,\nu(\dd x')
=\int_\setX \Alt\big(\tilde{f}(\cdot,x');R\big)\,\nu(\dd x').
\end{equation}

Next, we show the measurability claim. For any
$\bm{m}=(m_1,\dots,m_k)\in\mathbb{N}^k$, let
\begin{align*}
S_{\bm{m}}\coloneqq\prod_{i=1}^{k}\Big\{(s_{1,i},\dots,s_{m_i-1,i})\in\R^{m_i-1}:\
\tilde{a}_i< s_{1,i}<\dots <s_{m_i-1,i}<\tilde{b}_i\Big\},
\end{align*}
with the convention that the $i$-th factor is a set with a single element when $m_i=1$, and, for any
$s\in S_{\bm{m}}$, let
\begin{align*}
P_{\bm{m}}(s)\coloneqq\Big\{\textstyle\times_{i=1}^{k}
\big[s_{j_i,i},s_{j_i+1,i}\big]:\
j_i\in\{0,\dots,m_i-1\},\ i\in\{1,\dots,k\}\Big\},\quad\text{with }(s_{0,i}, s_{m_i,i})\coloneqq(\tilde{a}_i,\tilde{b}_i).
\end{align*}
Then $\mathsf{P}([\tilde{a},\tilde{b}])=\{P_{\bm{m}}(s):\ \bm{m}\in\mathbb{N}^k,\ s\in S_{\bm{m}}\}$. Let
$\bm{m}\in\mathbb{N}^k$. The set $S_{\bm{m}}$ is (homeomorphic to) an open subset of a Euclidean space, and is
therefore a Polish space. Moreover, the vertices of the hyper-rectangles in $P_{\bm{m}}(s)$ are continuous
functions of $s$, and $\tilde{f}$ is jointly measurable, so that the mapping
\begin{align*}
(s,x')\in S_{\bm{m}}\times\setX\mapsto \Lambda_{\bm{m}}(s,x')\coloneqq\sum_{R\in P_{\bm{m}}(s)}\big|\Alt(\tilde{f}(\cdot,x');R)\big|
\end{align*}
is $(\mathcal{B}(S_{\bm{m}})\otimes\mathcal{E})$-measurable. Hence, by \citet[][Corollary 2.13]{Crauel2003}, the function
\begin{align*}
\phi_{\bm{m}}\colon x'\in\setX\mapsto\sup_{s\in S_{\bm{m}}}\Lambda_{\bm{m}}(s,x')\in[0,\infty]
\end{align*}
is universally measurable. Since $\mathbb{N}^k$ is countable, it follows that the function
\begin{align*}
x'\in\setX\mapsto \tilde{\mathcal{V}}_{[\tilde{a},\tilde{b}]}\big(\tilde{f}(\cdot,x')\big)=\sup_{\bm{m}\in\mathbb{N}^k}\phi_{\bm{m}}(x')
\end{align*}
is universally measurable.

Finally, let $P\in\mathsf{P}([\tilde{a},\tilde{b}])$. By \eqref{eq:compareVitalli1}, and since $\bar{\nu}$
extends $\nu$,
\begin{align*}
\sum_{R\in P}\big|\Alt(\tilde{g};R)\big|
=\sum_{R\in P}\Big|\int_\setX \Alt\big(\tilde{f}(\cdot,x');R\big)\,\nu(\dd x')\Big|
&\leq\int_\setX\sum_{R\in P}\big|\Alt\big(\tilde{f}(\cdot,x');R\big)\big|\,\bar{\nu}(\dd x')\\
&\leq \int_\setX \tilde{\mathcal{V}}_{[\tilde{a},\tilde{b}]}\big(\tilde{f}(\cdot,x')\big)\,\bar{\nu}(\dd x'),
\end{align*}
and \eqref{eq:compareVitalli} follows by taking the supremum over $P\in\mathsf{P}([\tilde{a},\tilde{b}])$.

\emph{Step 2: HK-variation.}
Let $(a_n)_{n\geq 1}\in\mathcal{L}(\setX)$ and $(b_n)_{n\geq 1}\in\mathcal{U}(\setX)$ be monotone (see
Section~\ref{sub:basic_notation}), and let $n_0\geq 1$ be such that $a_n\leq b_n$ for any $n\geq n_0$,
so that
\begin{align}\label{eq:an_bn}
[a_n,b_n]\subseteq[a_{n+1},b_{n+1}]\subseteq\setX,\quad\forall n\geq n_0.
\end{align}
Let $n\geq n_0$ and $\emptyset\neq u\subseteq\{1,\dots,d\}$. For any $z\in[a^u_n,b^u_n]$ we have
$\langle z,b_n\rangle_u\in[a_n,b_n]\subseteq\setX$, and the function
$(z,x')\mapsto f(\langle z,b_n\rangle_u,x')$ is jointly measurable on $[a^u_n,b^u_n]\times\setX$, as the
composition of $f$ with a measurable mapping. Thus, Step~1 applies with $k=|u|$,
$[\tilde{a},\tilde{b}]=[a^u_n,b^u_n]$, $\tilde{f}(z,x')=f(\langle z,b_n\rangle_u,x')$ and
$\tilde{g}(z)=g(\langle z,b_n\rangle_u)$. By summing the resulting inequalities over $u$ and by using the
definition \eqref{eq:HK} of the HK-variation, we obtain that the function
$x'\mapsto \mathcal{V}_{[a_n,b_n]}\big(f(\cdot,x')\big)$ is universally measurable and that
\begin{align}\label{eq:V_lim1}
\mathcal{V}_{[a_n,b_n]}(g)\leq \int_\setX  \mathcal{V}_{[a_n,b_n]}\big(f(\cdot,x')\big)\,\bar{\nu}(\dd x'),\quad\forall n\geq n_0.
\end{align}
By \eqref{eq:an_bn} and part \ref{basic_HK1} of Proposition~\ref{prop:order}, for any $x'\in\setX$ the
sequence $\big(\mathcal{V}_{[a_n,b_n]}(f(\cdot,x'))\big)_{n\geq n_0}$ is non-decreasing and, by
Lemma~\ref{lemma:V_well_def}, it converges to $\mathcal{V}_\setX(f(\cdot,x'))$. In particular, the function
$x'\mapsto\mathcal{V}_\setX(f(\cdot,x'))$ is universally measurable, as the pointwise limit of
universally measurable functions, which proves the first assertion of part \ref{HK_7}. Finally, by
Lemma~\ref{lemma:V_well_def}, \eqref{eq:V_lim1} and the monotone convergence theorem,
\begin{align*}
\mathcal{V}_{\setX}(g)=\lim_{n\rightarrow\infty}\mathcal{V}_{[a_n,b_n]}(g)\leq \lim_{n\rightarrow\infty}\int_\setX  \mathcal{V}_{[a_n,b_n]}\big(f(\cdot,x')\big)\,\bar{\nu}(\dd x')
=\int_\setX\mathcal{V}_{\setX}\big(f(\cdot,x')\big)\,\bar{\nu}(\dd x'),
\end{align*}
which proves the second assertion of part \ref{HK_7}. The proof of the proposition is complete.
\end{proof}

 \subsubsection{Proof of Proposition~\ref{prop:Banach}\label{p-prop:Banach}}
 
 \begin{proof}
 
 We first assume that $\setX=[\underline{x},\bar{x}]$. If $\underline{x}=\bar{x}$  then  $\mathcal{V}_\setX(f)=0$ by part \ref{basic_HK3} of Proposition~\ref{prop:order}, in which case the result of the proposition trivially holds. We assume now that $\underline{x}\neq \bar{x}$ and note that, by part \ref{basic_HK2} of Proposition~\ref{prop:order}, to prove the result of the proposition in this context it is enough to consider the case where  $\underline{x}<\bar{x}$. In this scenario, the result of the proposition follows by first noting that,   for any hype-rectangle $[a,b]\subset\R^d$ such that $a<b$,  we have   \citep[][Proposition 10]{owen2005multidimensional} 
\begin{align}\label{eq:use_ab}
\mathcal{V}_{[a,b]}(f)=\mathcal{V}_{[0,1]^d}(\tilde{f}),\quad \text{where $\tilde{f}=f\circ \phi_{a,b}$  with  $\phi_{a,b}$ as defined by in \eqref{eq:phi_fn}},\quad\forall f\in \mathcal{B}_{\mathrm{HK}}([a,b]),
\end{align} 
 and then by using  \citet[][Corollary  4.3 and Lemma 5.1]{aistleitner2017functions}.

We now prove the result of the proposition in the case where the set $\setX=\R^d\cap [\underline{x},\bar{x}]$ is unbounded. By using again part \ref{basic_HK2} of Proposition~\ref{prop:order}, we can check    that to establish the result of the proposition in  this scenario it is enough to consider the case where $\underline{x}<\bar{x}$. 

To prove the proposition when $\setX$ is unbounded and $\underline{x}<\bar{x}$, let $(f_k)_{k\geq 1}$ be a Cauchy sequence in $(\mathcal{B}_{\mathrm{HK}}(\setX),\|\cdot\|_{\mathrm{HK}})$, and let $(a_n)_{n\geq 1}\in \mathcal{L}(\setX)$ and $(b_n)_{n\geq 1}\in \mathcal{U}(\setX)$ be such that $[a_n,b_n]\subseteq[a_{n+1},b_{n+1}]$ for any $n\geq 1$. Since $\|\cdot\|_\infty\leq\|\cdot\|_{\mathrm{HK}}$, the sequence $(f_k)_{k\geq 1}$ converges uniformly to some bounded function $f\colon\setX\to\R$. Let $\epsilon>0$ and let $K_\epsilon\geq 1$ be such that $\mathcal{V}_\setX(f_k-f_l)\leq\epsilon$ for any $k,l\geq K_\epsilon$. For any $n\geq 1$ and any $k,l\geq K_\epsilon$, part \ref{basic_HK1} of Proposition~\ref{prop:order} and Lemma~\ref{lemma:V_well_def} give $\mathcal{V}_{[a_n,b_n]}(f_l-f_k)\leq\mathcal{V}_\setX(f_l-f_k)\leq\epsilon$. Since $f_l-f_k\to f-f_k$ uniformly as $l\to\infty$, the lower semicontinuity of the HK-variation on bounded hyper-rectangles \citep[][Proposition~2.3 and Corollary~4.3]{aistleitner2017functions}, combined with \eqref{eq:use_ab}, yields
\begin{align*}
\mathcal{V}_{[a_n,b_n]}(f-f_k)\leq\liminf_{l\to\infty}\mathcal{V}_{[a_n,b_n]}(f_l-f_k)\leq\epsilon,\qquad\forall n\geq 1,\quad\forall k\geq K_\epsilon.
\end{align*}
Letting $n\to\infty$ and using Lemma~\ref{lemma:V_well_def}, we obtain $\mathcal{V}_\setX(f-f_k)\leq\epsilon$ for any $k\geq K_\epsilon$. Hence, by part~\ref{HK_1} of Proposition~\ref{prop:V_HK}, $\mathcal{V}_\setX(f)\leq\mathcal{V}_\setX(f_{K_\epsilon})+\epsilon<\infty$, so that $f\in\mathcal{B}_{\mathrm{HK}}(\setX)$, and $\|f-f_k\|_{\mathrm{HK}}\leq\|f-f_k\|_\infty+\epsilon$ for any $k\geq K_\epsilon$. Thus $f_k\to f$ in $(\mathcal{B}_{\mathrm{HK}}(\setX),\|\cdot\|_{\mathrm{HK}})$, which proves the completeness.
\end{proof}
 
 \subsubsection{Proof of Proposition~\ref{prop:variationHK} }\label{p-prop:variationHK}
 \begin{proof}
Let $f\in\mathcal{C}_{\mathrm{HK}}(\setX)$, let
$(b_n)_{n\geq 1}\in\mathcal{U}(\setX)$ be such that
\begin{align}\label{eq:CHK_cond}
\lim_{n\to\infty}V_{\setX\cap(-\infty,b_n]}(f)
=V_\setX(f),
\end{align}
and let $(a_n)_{n\geq 1}\in\mathcal{L}(\setX)$ be
arbitrary. Let $n_0\geq 1$ be such that
$a_n\leq b_n$ for any $n\geq n_0$; then, for any
$n\geq n_0$, $[a_n,b_n]$ is a bounded hyper-rectangle
contained in $\setX$. By
Lemma~\ref{lemma:V_well_def}, the limit
$\mathcal{V}_\setX(f)=
\lim_{n\to\infty}\mathcal{V}_{[a_n,b_n]}(f)$ exists
in $[0,\infty]$, and it therefore suffices to show
that
\begin{align}\label{eq:goal_varHK}
\lim_{n\to\infty}\mathcal{V}_{[a_n,b_n]}(f)=V_\setX(f).
\end{align}
Since $V_\setX(f)<\infty$ by definition of
$\mathcal{C}_{\mathrm{HK}}(\setX)$, this will also
show that $f\in\mathcal{B}_{\mathrm{HK}}(\setX)$.

\emph{Step 1: HK-variation on bounded
hyper-rectangles.}
Let $n\geq n_0$ and let
$\emptyset\neq u\subseteq\{1,\dots,d\}$. Since
$f\in\mathcal{C}_{\mathrm{mixed}}(\setX)$, the function
$z\mapsto f(\langle z,b_n\rangle_u)$ has a continuous
mixed derivative
$z\mapsto\partial^u f(\langle z,b_n\rangle_u)$ on
$[a_n^u,b_n^u]$. Thus, by \eqref{eq:smoothV} and \eqref{eq:HK}, 
\begin{align}\label{eq:W_n}
\mathcal{V}_{[a_n,b_n]}(f)
=\sum_{\emptyset\neq u\subseteq\{1,\dots,d\}}
\int_{[a_n^u,b_n^u]}
\big|\partial^u f\big(\langle z,b_n\rangle_u\big)\big|
\,\lambda_{|u|}(\dd z),
\end{align}
where, with the convention of
Section~\ref{sub:HK-var}, a term is zero if
$[a_n^u,b_n^u]$ has a side reduced to a point.
Similarly, since $\setX\cap(-\infty,b_n]
=\R^d\cap[\underline{x},b_n]$ is bounded above, so
that $\mathcal{I}_{\setX\cap(-\infty,b_n]}=\emptyset$,
the definition of $V_{\setX\cap(-\infty,b_n]}(f)$ gives
\begin{align}\label{eq:V_bn}
V_{\setX\cap(-\infty,b_n]}(f)
=\sum_{\emptyset\neq u\subseteq\{1,\dots,d\}}
\int_{(\setX\cap(-\infty,b_n])_u}
\big|\partial^u f\big(\langle z,b_n\rangle_u\big)\big|
\,\lambda_{|u|}(\dd z).
\end{align}

\emph{Step 2: upper bound.}
For any $n\geq n_0$ and
$\emptyset\neq u\subseteq\{1,\dots,d\}$, we have
$[a_n^u,b_n^u]\subseteq(\setX\cap(-\infty,b_n])_u$, and
the integrands in \eqref{eq:W_n} and \eqref{eq:V_bn}
are the same non-negative function. Therefore
\[
\mathcal{V}_{[a_n,b_n]}(f)\leq
V_{\setX\cap(-\infty,b_n]}(f)
\]
 for any $n\geq n_0$,
and \eqref{eq:CHK_cond} yields
\begin{align}\label{eq:upper_varHK}
\limsup_{n\to\infty}\mathcal{V}_{[a_n,b_n]}(f)
\leq V_\setX(f).
\end{align}

\emph{Step 3: lower bound.}
Let $u\subseteq\{1,\dots,d\}$ be non-empty and such
that $\mathcal{I}_\setX\subseteq u$. For
$i\notin u$ we have $i\notin\mathcal{I}_\setX$, so
that $\bar{x}_i<\infty$ and, by definition of
$\mathcal{U}(\setX)$, $b_{n,i}=\bar{x}_i$ for any
$n\geq 1$. Consequently,
\begin{align*}
\langle z,b_n\rangle_u=\langle z,\bar{x}\rangle_u,
\qquad\forall z\in\R^{|u|},\quad\forall n\geq 1,
\end{align*}
that is, the face of $[a_n,b_n]$ anchored at $b_n$
and associated with $u$ lies in the face of $\setX$
anchored at $\bar{x}$. The corresponding term of
\eqref{eq:W_n} is therefore
\begin{align*}
\int_{\R^{|u|}}\ind_{[a_n^u,b_n^u]}(z)\,
\big|\partial^u f\big(\langle z,\bar{x}\rangle_u\big)\big|
\,\lambda_{|u|}(\dd z),
\end{align*}
in which the integrand no longer depends on $n$
except through the indicator function. Now let
$z\in\setX_u$ and $i\in u$. If $\underline{x}_i>-\infty$,
then $a_{n,i}=\underline{x}_i\leq z_i$ for any $n$,
while if $\underline{x}_i=-\infty$, then
$a_{n,i}\to-\infty$; similarly, $b_{n,i}=\bar{x}_i\geq
z_i$ for any $n$ if $\bar{x}_i<\infty$, while
$b_{n,i}\to\infty$ otherwise. Thus
$\ind_{[a_n^u,b_n^u]}(z)=1$ for any $n$ large enough,
and thus $\lim_{n\to\infty}\ind_{[a_n^u,b_n^u]}(z)
=\ind_{\setX_u}(z)$ for any $z\in\setX_u$. By Fatou's
lemma,
\begin{align*}
\liminf_{n\to\infty}\int_{[a_n^u,b_n^u]}
\big|\partial^u f\big(\langle z,\bar{x}\rangle_u\big)\big|
\,\lambda_{|u|}(\dd z)
\geq\int_{\setX_u}
\big|\partial^u f\big(\langle z,\bar{x}\rangle_u\big)\big|
\,\lambda_{|u|}(\dd z).
\end{align*}
Since all the terms in \eqref{eq:W_n} are
non-negative, we may drop those for which
$\mathcal{I}_\setX\not\subseteq u$ and sum the last
inequality over the remaining sets $u$, which gives
\begin{align}\label{eq:lower_varHK}
\liminf_{n\to\infty}\mathcal{V}_{[a_n,b_n]}(f)
\geq
\sum_{\substack{\mathcal{I}_\setX\subseteq u
\subseteq\{1,\dots,d\}\\ u\neq\emptyset}}
\int_{\setX_u}
\big|\partial^u f\big(\langle z,\bar{x}\rangle_u\big)\big|
\,\lambda_{|u|}(\dd z)
=V_\setX(f).
\end{align}

\emph{Conclusion.}
Together, \eqref{eq:upper_varHK} and
\eqref{eq:lower_varHK} prove \eqref{eq:goal_varHK}.
Thus $\mathcal{V}_\setX(f)=V_\setX(f)<\infty$, and in
particular $f\in\mathcal{B}_{\mathrm{HK}}(\setX)$,
which completes the proof.
\end{proof}

\subsubsection{Proof of Proposition~\ref{prop:Multivariate_KH1}\label{p-prop:Multivariate_KH1}}

To prove Proposition~\ref{prop:Multivariate_KH1} we first need to extend  the conclusions of Lemma~\ref{lemma:aistleitner} to functions defined on an arbitrary bounded hyper-rectangle $[a,b]\subset\R^d$.
\begin{lemma}\label{lemma:discrepancy2}
Let $[a,b]\subset\R^d$  be a hyper-rectangle and $f\in \mathcal{B}_{\mathrm{HK}}([a,b])$. Then,
\begin{align*}
 |\mu_1(f)-\mu_2(f)|\leq \mathcal{V}_{[a,b]}(f)\,\|\mu_1-\mu_2\|_{\mathrm{K}},\quad\forall \mu_1,\mu_2\in\mathcal{P}([a,b]).
\end{align*}
\end{lemma}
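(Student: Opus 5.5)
The plan is to reduce Lemma~\ref{lemma:discrepancy2} to Lemma~\ref{lemma:aistleitner} by an affine change of variables, handling degenerate sides separately.

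\emph{Nondegenerate case $a<b$.} Take the map $\phi_{a,b}\colon[0,1]^d\to[a,b]$ of \eqref{eq:phi_fn}. It is a coordinatewise increasing affine bijection. Set $\tilde f\coloneqq f\circ\phi_{a,b}$ and $\tilde\mu_i\coloneqq\mu_i\circ\phi_{a,b}$, the pushforward of $\mu_i$ under $\phi_{a,b}^{-1}$; these are in $\mathcal{P}([0,1]^d)$. Then:
\begin{itemize}
\item $\mu_i(f)=\tilde\mu_i(\tilde f)$; both integrals are well defined because $f$ is bounded and measurable by Proposition~\ref{prop:set_BHK}.
\item By \citet[][Proposition~10]{owen2005multidimensional}, as already used in \eqref{eq:use_ab}, $\mathcal{V}_{[0,1]^d}(\tilde f)=\mathcal{V}_{[a,b]}(f)$.
\item Because $\phi_{a,b}$ is coordinatewise increasing, $\phi_{a,b}^{-1}\big([a,b]\cap(-\infty,c]\big)=[0,\phi_{a,b}^{-1}(c)]$ for each $c\in[a,b]$. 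Hence the CDFs of $\mu_i$ and $\tilde\mu_i$ correspond, and $\|\tilde\mu_1-\tilde\mu_2\|_{\mathrm K}=\|\mu_1-\mu_2\|_{\mathrm K}$.
\end{itemize}
Applying Lemma~\ref{lemma:aistleitner} to $\tilde f,\tilde\mu_1,\tilde\mu_2$ then gives the claim.

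\emph{Degenerate case.} Let $u\coloneqq\{i:\ a_i<b_i\}$. If $u=\emptyset$, then $[a,b]$ is a single point, so $\mu_1=\mu_2$ and both sides vanish. Otherwise every $x\in[a,b]$ satisfies $x=\langle x^u,b\rangle_u$, so $f(x)=f_u(x^u)$ with $f_u\coloneqq f(\langle\cdot,b\rangle_u)$ defined on $[a^u,b^u]$. Part~\ref{basic_HK2} of Proposition~\ref{prop:order} gives $\mathcal{V}_{[a,b]}(f)=\mathcal{V}_{[a^u,b^u]}(f_u)$. Let $\mu_i^u$ be the marginals of $\mu_i$ on the coordinates in $u$. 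Then $\mu_i(f)=\mu_i^u(f_u)$. Moreover $\|\mu^u_1-\mu^u_2\|_{\mathrm K}\le\|\mu_1-\mu_2\|_{\mathrm K}$, since for $c\in[a^u,b^u]$ the orthant $(-\infty,c]$ in $\R^{|u|}$ pulls back to $[a,b]\cap(-\infty,\langle c,b\rangle_u]$. The nondegenerate case applied in dimension $|u|$ finishes the proof.

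The only point needing care is checking that the Kolmogorov norms match exactly (or are bounded) under these transformations, in particular that orthants anchored at the lower corner map to orthants. This is where the increasing affine structure of $\phi_{a,b}$ and the anchoring convention at $b$ are used. The rest is bookkeeping.
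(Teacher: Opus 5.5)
Your proposal is correct and follows the paper's proof essentially step for step. For $a<b$ the paper also rescales via $\phi_{a,b}$ from \eqref{eq:phi_fn}, uses Owen's Proposition~10 together with the invariance of the Kolmogorov distance, and applies Lemma~\ref{lemma:aistleitner}. In the degenerate case it likewise passes to the marginals on the non-degenerate coordinates, using part~\ref{basic_HK2} of Proposition~\ref{prop:order} and $\|\mu_{1,v}-\mu_{2,v}\|_{\mathrm{K}}\leq\|\mu_1-\mu_2\|_{\mathrm{K}}$; your explicit check that lower orthants pull back to lower orthants spells out a step the paper leaves implicit.
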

\begin{proof}

We let  $\mu_1,\mu_2\in\mathcal{P}([a,b])$ be arbitrary   and, in a  first step, we show that the result of the lemma  holds when $a<b$. To do so, we let  $\phi_{a.b}:[0,1]^d\rightarrow[a,b]$  be as defined in \eqref{eq:phi_fn} with $\underline{x}=a$ and $\bar{x}=b$,  we let $\tilde{f}=f\circ \phi_{a,b}$,  and, for $i=1,2$,  we let $\tilde{\mu}_i$ denote the pushforward of $\mu_i$ by  $\phi_{a,b}^{-1}$.

Then, on the one hand, by \citet[][Proposition 10]{owen2005multidimensional}  we have 
\[
\mathcal{V}_{[0,1]^d}(\tilde{f})=\mathcal{V}_{[a,b]}(f)
\]
 and, on the other hand,  we have  $\|\mu_1-\mu_2\|_{\mathrm{K}}=\|\tilde{\mu}_1-\tilde{\mu}_2\|_{\mathrm{K}}$. Thus,
\begin{align*}
|\mu_1(f)-\mu_2(f)|=|\tilde{\mu}_1(\tilde{f})-\tilde{\mu}_2(\tilde{f})|\leq \mathcal{V}_{[0,1]^d}(\tilde{f}) \|\tilde{\mu}_1-\tilde{\mu}_2\|_{\mathrm{K}}=\mathcal{V}_{[a,b]}(f)\|\mu_1-\mu_2\|_{\mathrm{K}}
\end{align*}
where the first inequality holds by Lemma~\ref{lemma:aistleitner}. 

To conclude the proof of the lemma we now assume  that there exists a  set $v\subsetneq \{1,\dots,d\}$ such that $a_i=b_i$ if and only if $i\in -v$. If $v=\emptyset$ then $|\mu_1(f)-\mu_2(f)|=0$ and the result of the lemma trivially holds. We now assume that $v\neq\emptyset$  and let $f_v:[a^{v},b^{v}]\rightarrow\R$ be such that 
\[
f_v(z)=f(\langle z,b\rangle_{v})\quad\mbox{\rm for all $z\in [a^{v},b^{v}]$.}
\]
 Then $f(x)=f_v(x^{v})$ for any $x\in [a,b]$ and, letting $X\sim\mu_1$ and $Y\sim \mu_2$, we let $\mu_{1,v}$ denote the distribution of $X^{v}$ and $\mu_{2,v}$ denote the distribution of $Y^{v}$. 

 As   proved above,  if for some $k\geq 1$ the set $[\tilde{a},\tilde{b}]\subset\R^k$ is such that $\tilde{a}<\tilde{b}$, then the inequality  
\[
|\tilde{\mu}_1(g)-\tilde{\mu}_2(g)|\leq \mathcal{V}_{[\tilde{a},\tilde{b}]}(g)\|\tilde{\mu}_1-\tilde{\mu}_2\|_{\mathrm{K}}
\]
 holds for any functions  $g\in\mathcal{B}_{\mathrm{HK}}([\tilde{a},\tilde{b}])$ and any probability measures $\tilde{\mu}_1,\tilde{\mu}_2\in\mathcal{P}([\tilde{a},\tilde{b}])$. Therefore,
\begin{align*}
|\mu_1(f)-\mu_2(f)|=|\mu_{1,v}(f_{v})-\mu_{2,v}(f_{v})|\leq \mathcal{V}_{[a^{v},b^{v}]}(f_v)\|\mu_{1,v}-\mu_{2,v}\|_{\mathrm{K}}&\leq \mathcal{V}_{[a^{v},b^{v}]}(f_v)\|\mu_{1}-\mu_{2}\|_{\mathrm{K}}\\
&=\mathcal{V}_{[a,b]}(f)\|\mu_{1}-\mu_{2}\|_{\mathrm{K}}
\end{align*}
where the last equality holds by part \ref{basic_HK2} of Proposition~\ref{prop:order}. The proof of the lemma is complete.
\end{proof}

To prove Proposition~\ref{prop:Multivariate_KH1} we will also need the following result:
\begin{lemma}\label{lem:technical}
Let  $\mu\in\mathcal{P}(\setX)$ and $A\in\mathcal{E}$ be such that $\mu(A)>0$, and let $\mu_A(\dd x)=\mu(\dd x)\ind_A(x)/\mu(A)$. Then, $\sup_{B\in\mathcal{E}}\big|\mu_A(B)-\mu(B)\big|=  \mu(A^c)$.
\end{lemma}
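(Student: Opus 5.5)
The plan is to write $\mu_A-\mu$ explicitly as a signed measure and compute its total variation by hand. For any $B\in\mathcal{E}$ we have
\begin{align*}
\mu_A(B)-\mu(B)=\frac{\mu(A\cap B)}{\mu(A)}-\mu(A\cap B)-\mu(A^c\cap B)
=\mu(A\cap B)\,\frac{\mu(A^c)}{\mu(A)}-\mu(A^c\cap B),
\end{align*}
using $1-\mu(A)=\mu(A^c)$. This splits the difference into a non-negative part carried by $A$ and a non-positive part carried by $A^c$.

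Next I bound the supremum from above. Both terms lie in $[0,\mu(A^c)]$: the first because $\mu(A\cap B)\leq\mu(A)$, and the second trivially. The difference of two numbers in $[0,\mu(A^c)]$ has absolute value at most $\mu(A^c)$, so $|\mu_A(B)-\mu(B)|\leq\mu(A^c)$ for every $B\in\mathcal{E}$.

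For the reverse inequality I exhibit sets attaining the bound. Taking $B=A^c$ kills the first term and gives $|\mu_A(A^c)-\mu(A^c)|=\mu(A^c)$. Alternatively, taking $B=A$ gives $\mu_A(A)-\mu(A)=1-\mu(A)=\mu(A^c)$. Either choice shows the supremum is attained and equals $\mu(A^c)$.

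Measurability of $\mu_A$ is immediate because $\mu(A)>0$, and that is the only hypothesis the argument uses. I expect no genuine obstacle; the only step needing care is the algebraic rewriting above, which gives the sign decomposition that makes both bounds transparent.
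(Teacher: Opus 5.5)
Your proposal is correct and follows essentially the same route as the paper: both write $\mu_A(B)-\mu(B)=\mu(A\cap B)\,\mu(A^c)/\mu(A)-\mu(A^c\cap B)$, bound each of the two terms by $\mu(A^c)$, and attain the supremum with $B=A$ (your alternative choice $B=A^c$ works equally well).
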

\begin{proof}

This result is folklore; we include a proof.

Let $B\in \mathcal{E}$ and note that $\mu_A(B)-\mu(B)=b(1-a)/a-c$ where
\begin{align*}
a=\mu(A),\quad b=\mu(A\cap B),\quad c=\mu(A^c\cap B).
\end{align*}
Then, noting that $0\leq b\leq a$ and $0\leq c\leq 1-a$, we have
\begin{align*}
-(1-a)\leq -c\leq b(1-a)/a-c= \mu_A(B)-\mu(B)\leq b(1-a)/a\leq 1-a
\end{align*}
showing that $\sup_{B\in\mathcal{E}} |\mu_A(B)-\mu(B) |\leq 1-a=1-\mu(A)$. The proof of the lemma is completed upon noting that $\sup_{B\in\mathcal{E}} |\mu_A(B)-\mu(B) |\geq |\mu_A(A)-\mu(A) |=1-\mu(A)$.
\end{proof}

We now prove Proposition~\ref{prop:Multivariate_KH1}:

\begin{proof}[Proof of Proposition~\ref{prop:Multivariate_KH1}]

In what follows we let $\mu_1,\mu_2\in\mathcal{P}(\setX)$ and $f\in \mathcal{B}_{\mathrm{HK}}(\setX)$ be fixed, and we let $\epsilon\in(0,1)$ be arbitrary. Then, by Lemma~\ref{lemma:V_well_def}, since  $\mathcal{V}_\setX(f)<\infty$  there exist    $ a_\epsilon, b_\epsilon \in\setX$ such that $a_\epsilon\leq b_\epsilon$ and such that 
\begin{align}\label{eq:sum}
\mathcal{V}_{A_\epsilon}(f)\leq \mathcal{V}_\setX(f)+\epsilon,\quad \mu_1(A_\epsilon)\geq 1-\epsilon,\quad\mu_2(A_\epsilon)\geq 1-\epsilon,\quad A_\epsilon:=[a_\epsilon,b_\epsilon] 
\end{align} 
and below we let $\mu_{1,A_\epsilon}$ and $\mu_{2,A_\epsilon}$ denote the restriction of the probability measures $\mu_1$ and $\mu_2$ to the set $A_\epsilon$. 

With this notation in place, we have the decomposition
\begin{align*}
\mu_1(f)-\mu_2(f )&=\mu_{1,A_\epsilon}(f)-\mu_{2,A_\epsilon }(f)+\mu_1(f\ind_{A_\epsilon})\Big(1-\frac{1}{\mu_1(A_\epsilon)}\Big)\\
&-\mu_2(f\ind_{A_\epsilon})\Big(1-\frac{1}{\mu_2(A_\epsilon)}\Big)+\mu_1(f\ind_{A_\epsilon^c})-\mu_2(f\ind_{A_\epsilon^c})
\end{align*}
which implies that
\begin{equation}\label{eq:dis0}
\begin{split}
|\mu_1(f)-\mu_2(f)|&\leq |\mu_{1,A_\epsilon}(f)-\mu_{2,A_\epsilon}(f)|+2\|f\|_\infty\bigg( \frac{\mu_1(A_\epsilon^c)}{\mu_1(A_\epsilon)}+  \frac{\mu_2(A_\epsilon^c)}{\mu_2(A_\epsilon)}\bigg).
\end{split}
\end{equation}
Moreover, by  Lemma~\ref{lemma:discrepancy2},  
\begin{align}\label{eq:dis1}
|\mu_{1,A_\epsilon}(f)-\mu_{2,A_\epsilon}(f)|&\leq \mathcal{V}_{A_\epsilon}(f)\|\mu_{1,A_\epsilon}-\mu_{2,A_\epsilon}\|_{\mathrm{K}}
\end{align}
where, by Lemma~\ref{lem:technical},  
\begin{align}\label{eq:dis3}
\|\mu_{1,A_\epsilon}-\mu_{2,A_\epsilon}\|_{\mathrm{K}}\leq \|\mu_{1}-\mu_{2}\|_{\mathrm{K}}+\mu_1(A_\epsilon^c)+\mu_2(A_\epsilon^c).
\end{align}
By combining \eqref{eq:sum}-\eqref{eq:dis3},  we obtain that
\begin{equation}\label{eq:dis4}
\begin{split}
|\mu_1(f)-\mu_2(f)|&\leq   \mathcal{V}_\setX(f)\|\mu_{1}-\mu_{2}\|_{\mathrm{K}}+ \epsilon \Big(1+2\mathcal{V}_\setX(f)+2\epsilon +\frac{4 \|f\|_\infty}{1-\epsilon}\Big) 
\end{split}
\end{equation}
where, by Proposition~\ref{prop:set_BHK},   $\|f\|_\infty<\infty$  since $\mathcal{V}_\setX(f)<\infty$. Therefore, recalling that in \eqref{eq:dis4} the constant $\epsilon\in(0,1)$ is arbitrary, it follows that  
\[
|\mu_1(f)-\mu_2(f)|\leq \mathcal{V}_\setX(f)\|\mu_{1}-\mu_{2}\|_{\mathrm{K}},
\]
 and the proof of the Proposition is complete.    
\end{proof}

\subsection{Proofs of the results in Section~\ref{sub:assumptions_gen}}

\subsubsection{Proof of Lemma~\ref{lemma:main_Gen}\label{p-lemma:main_Gen}}

\begin{proof}

Below, we let the   functions $\varphi_1$ and $\varphi_2$ and the constant $\delta\in(0,1)$  be as  in \ref{Astar}, and we let   $\bm{\mu}=(\mu_t)_{t\geq 1}$ be an  arbitrary sequence in $\mathcal{P}(\setX)$ such that $\mu_t\in\mathcal{P}_{W_t|1}(\setX)$ for any $t\geq 1$. In addition, in what follows we use the convention that $\mu_0=\eta_1$, and  we recall that $\Phi_1$ is the identity mapping on $\mathcal{P}(\setX)$ while $\Phi_{t,t}=\Phi_1$ for any integers $t\geq 1$.

First, under \ref{Astar}, we have
\begin{equation}\label{eq:part1}
\begin{split}
 d_1\Big(\Phi_{p,t}(\mu_{p}),\Phi_{p,t}\big(\Phi_{p}(\mu_{p-1})\big) \Big)&\leq \varphi_1\Big(\mu_p(f_p), \Phi_{p}(\mu_{p-1})(f_p)  \Big) e^{-\delta(t-p)},\quad\forall t>p\geq 1
 \end{split}
\end{equation}
as well as
\begin{equation}\label{eq:part2}
\begin{split}
d_1\Big(\Phi_{p,t}(\mu_{p}),\Phi_{p,t}\big(\Phi_{p}(\mu_{p-1})\big)\Big)&\leq   \varphi_2\big(\Phi_{p}(\mu_{p-1})(f_p)\big)\, d_2\big(\mu_p,\Phi_{p}(\mu_{p-1})\big),\quad\forall t>p\geq 1.
\end{split}
\end{equation}
 
Next, by the telescoping decomposition \eqref{eq:decomp_mu}, introduced in \citet[][Chapter 7]{DelMoral:book} for studying particle filter algorithms, for any $t\geq 1$ and any function $f\in\mathcal{F}_1(\setX)$ we have 
\begin{align}\label{eq:part3}
|\mu_{t}(f)-\eta_t(f)|\leq |\mu_t(f)-\Phi_{t}(\mu_{t-1})(f)|+\sum_{p=1}^{t-1}d_1\Big(\Phi_{p,t}(\mu_{p}),\Phi_{p,t}\big(\Phi_{p}(\mu_{p-1})\big)\Big),\quad\forall t\geq 1.
\end{align}

To proceed further, we let $T\geq 1$ be fixed   and we let  $\bar{\varphi}:[0,\infty)\rightarrow (0,\infty]$ be defined by
\begin{align*}
\bar{\varphi}(z)= \max\Big\{1,\varphi_1(  z,z),\, \varphi_2 ( z )\Big\},\quad z\in[0,\infty).
\end{align*}
Then, using \eqref{eq:part1} for $p\leq t-k$, \eqref{eq:part2} for the at most $k-1$ indices $t-k<p<t$, and $d_2(\mu_t,\Phi_t(\mu_{t-1}))\leq\Delta_{T,\bm{\mu}|\eta_1}$ with $\bar{\varphi}\geq 1$, for any integer  $k\geq 1$ we have  
\begin{equation}\label{eq:part4}
\begin{split}
&\sup_{1\leq t\leq T}\bigg(d_2\big(\mu_t,\Phi_t(\mu_{t-1})\big)+ \sum_{p=1}^{t-1 }d_1\Big(\Phi_{p,t}(\mu_{p}),\Phi_{p,t}\big(\Phi_{p}(\mu_{p-1})\big)\Big)\bigg)\\
&\qquad\leq \bar{\varphi}\big(\iota_{T,\bm{\mu}|\eta_1}\big)\sup_{1\leq t\leq T} \bigg( \sum_{p=1}^{t-k}   e^{-\delta(t-p)}+k\Delta_{T,\bm{\mu}|\eta_1}\bigg)
\leq  \bar{\varphi}\big(\iota_{T,\bm{\mu}|\eta_1}\big)\Big( \frac{ e^{-\delta k}}{1-e^{-\delta}}   +k\Delta_{T,\bm{\mu}|\eta_1}\Big). 
\end{split}
\end{equation}

We assume henceforth that  $0<\Delta_{T,\bm{\mu}|\eta_1}\leq 1$ (the result of the lemma being trivial if $\Delta_{T,\bm{\mu}|\eta_1}=0$), and use \eqref{eq:part4} to show that
\begin{equation}\label{eq:part44}
\begin{split}
&\sup_{1\leq t\leq T}\bigg(d_2\big(\mu_t,\Phi_t(\mu_{t-1})\big)+\sum_{p=1}^{t-1 }d_1\Big(\Phi_{p,t}(\mu_{p}),\Phi_{p,t}\big(\Phi_{p}(\mu_{p-1})\big)\Big)\bigg)\\
&\qquad\leq \varphi\big(\iota_{T,\bm{\mu}|\eta_1}\big) \vartheta\big(\Delta_{T,\bm{\mu}|\eta_1}\big)
\end{split}
\end{equation}
with the function  $\varphi:[0,\infty)\rightarrow [1,\infty]$  as defined in \eqref{eq:def_varphi}.

To do so we assume first that $\Delta_{T,\bm{\mu}|\eta_1}<\delta$. In this scenario, \eqref{eq:part44} follows by applying \eqref{eq:part4} with $k=  \lceil \log(\delta/\Delta_{T,\bm{\mu}|\eta_1})/\delta\rceil$,   noting that, for this definition of $k$, 
\begin{equation*}
\begin{split}
\frac{ e^{-\delta k}}{1-e^{-\delta}}   +k\Delta_{T,\bm{\mu}|\eta_1}& \leq \frac{1 }{\delta(1-e^{-\delta})}\Big(2\Delta_{T,\bm{\mu}|\eta_1} +\Delta_{T,\bm{\mu}|\eta_1} \log\big(\Delta^{-1}_{T,\bm{\mu}|\eta_1}\big)\Big)\\
&\leq \frac{4}{\log(2)\delta(1-e^{-\delta})\,}\vartheta\big(\Delta_{T,\bm{\mu}|\eta_1}\big).
\end{split}
\end{equation*}
On the other hand, if $\Delta_{T,\bm{\mu}|\eta_1}\geq \delta$ then \eqref{eq:part44} follows by applying \eqref{eq:part4}   with $k=1$, noting that
\begin{align*}
 \frac{e^{-\delta}}{1-e^{-\delta}}   + \Delta_{T,\bm{\mu}|\eta_1} \leq    \frac{ 2 }{\delta(1-e^{-\delta})}\,     \Delta_{T,\bm{\mu}|\eta_1}\leq \frac{2}{\log(2)\delta(1-e^{-\delta})\,}\vartheta\big(\Delta_{T,\bm{\mu}|\eta_1}\big),
\end{align*}
where the last inequality uses $z\leq\vartheta(z)/\log 2$ for $z\in[0,1]$.

Therefore, recalling that $\mathcal{F}_2(\setX)\subseteq\mathcal{F}_1(\setX)$, it follows from \eqref{eq:part3} and \eqref{eq:part44} that 
\begin{align*}
d_2\big(\mu_{t},\eta_t\big)&=\sup_{f\in\mathcal{F}_2(\setX)}|\mu_{t}(f)-\eta_t(f)|\leq \varphi\big(\iota_{T,\bm{\mu}|\eta_1}\big) \vartheta\big(\Delta_{T,\bm{\mu}|\eta_1}\big)
\end{align*}
which shows   the first assertion of the lemma.

The  second assertion of the lemma directly follows from  \eqref{eq:part3} and \eqref{eq:part44}, and the proof of the lemma is therefore complete.
\end{proof}

\subsubsection{Proof of Theorem~\ref{thm:main_Gen}\label{p-thm:main_Gen}}

\begin{proof}

Below, we let  the function $\varphi_2$ be as in \ref{Astar} and the function $\varphi:[0,\infty)\rightarrow[1,\infty]$ be as defined in \eqref{eq:def_varphi}, and we  let $\bm{\mu}=(\mu_t)_{t\geq 1}$ be an arbitrary sequence in $\mathcal{P}(\setX)$ such that $\mu_t\in\mathcal{P}_{W_t|1}(\setX)$ for any $t\geq 1$. In addition, for any   integers $T\geq 1$, we let quantities $\Delta_{T,\bm{\mu}|\eta_1}$ and $\iota_{T,\bm{\mu}|\eta_1}$ be as defined in  Section~\ref{sub:perturbation_gen}, and we let $f_0=f_1$ and $\mu_0=\eta_1$.  Finally, to simplify the notation in what follows, for any $T\geq 1$ we let $\Delta_{T}=\Delta_{T,\bm{\mu}|\eta_1,\varpi}$.

To prove the result of the theorem, we note first that
\begin{align}\label{eq:B_delta}
\Delta_{T,\bm{\mu}|\eta_1}\leq \frac{\Delta_{T}}{\inf_{1\leq t\leq T}\varpi\big(\mu_{t-1}(f_{t-1})\big)}\leq \frac{\Delta_{T}}{ \varpi\big( \iota_{T,\bm{\mu}|\eta_1}\big)},\quad\forall T\geq 1.
\end{align}
Therefore, letting $\tilde{\varphi}:[0,\infty)\rightarrow[1,\infty]$ be such that $\tilde{\varphi}(z)=\varphi(z)/\varpi(z)$ for any $z\in[0,\infty)$, and recalling that $\|\varpi\|_\infty\leq 1$, it follows from Lemma~\ref{lemma:main_Gen}  and \eqref{eq:B_delta} that, for any $T\geq 1$ such that $\Delta_{T}\leq\varpi(\iota_{T,\bm{\mu}|\eta_1})$, we have 
\begin{align}\label{eq:thm_main_e1}
\max_{1\leq t\leq T}d_2\big(\mu_t,\eta_t\big)\leq \tilde{\varphi}\big(\iota_{T,\bm{\mu}|\eta_1}\big) \vartheta\big(\Delta_{T}\big)
\end{align}
while, for any functions $f\in\mathcal{F}_1(\setX)$,
\begin{align}\label{eq:thm_main_e2}
\big|\mu_{t}(f) - \eta_t(f)\big|\leq   \big|\mu_{t}(f)-\Phi_{t}(\mu_{t-1})(f)\big|+ \tilde{\varphi}\big(\iota_{T,\bm{\mu}|\eta_1}\big)   \vartheta\big(\Delta_{T}\big),\quad\forall t\in\{1,\dots,T\}.
\end{align}

For any $t\geq 1$, we  let 
\[
\tilde{\iota}_{t}=\min_{s\in\{1,\dots,t\}}\min\{\mu_s(f_s),\Phi_{s+1}(\mu_{s})(f_{s+1})\}.
\]
 Then,   recalling that we have  $\eta_1(f_1)>0$ by assumption, and noting that the function $\tilde{\varphi}$ is non-increasing, it follows from  \eqref{eq:thm_main_e1} and \eqref{eq:thm_main_e2} that,   to prove the theorem, it is enough to establish the following result: 
\begin{align}\label{eq:thm}
\exists  (\xi,\upsilon)\in(0,\infty)^2  \text{ s.t.  for all $T\geq 1$  we have  }   \Delta_{T}\leq \xi\implies \tilde{\varphi}(\tilde{\iota}_{T})\leq \upsilon.
\end{align}
 
To show \eqref{eq:thm}, we let $\varrho=\inf_{t\geq 1}\Phi_{1,t}(\eta_1)(f_{t})$, noting that  $\varrho>0$ under \ref{Astar} since $\eta(f_1)>0$. Then,    for any $t\geq 1$, we have
\begin{equation}\label{eq:lowePhiT}
\begin{split}
\Phi_{t+1}(\mu_t)(  f_{t+1})&\geq \eta_{t+1}(f_{t+1})- d_2\big(\Phi_{t+1}(\mu_t),\eta_{t+1}\big)\\
&\geq \eta_{t+1}(f_{t+1})-\varphi_2\big(\eta_t(f_t)\big) d_2\big(\mu_t, \eta_t\big)\geq \varrho-\varphi_2(\varrho)\,d_2\big( \mu_t,\eta_t\big) 
\end{split}
\end{equation}
where the first      inequality uses the fact that $f_{t+1}\in\F_2(\setX)$    under \ref{Astar}, where the second inequality uses the Lipschitz condition \eqref{eq:lipschitz_gen} imposed on the model by \ref{Astar}, and where the last inequality uses the fact that the function $\varphi_2$ is non-increasing. We also observe that, under \ref{Astar},  
\begin{align*}
\mu_t(f_{t}) \geq  \eta_t(f_{t})- d_2(\mu_t, \eta_t),\quad\forall t\geq 1
\end{align*}
which, together with \eqref{eq:lowePhiT}, implies that
\begin{align}\label{eq:thm11}
d_2(\mu_t,\eta_t) \leq c_\varrho:=\min\Big\{1,\frac{\varrho}{2},\frac{\varrho}{2  \varphi_2(\varrho)}\Big\}\implies \min\Big\{\mu_t(f_t),\Phi_{t+1}(\mu_t)(f_{t+1})\Big\}\geq \frac{\varrho}{2 },\quad\forall t\geq 1. 
\end{align}
Since the function $\tilde{\varphi}$ is non-increasing and such that $\tilde{\varphi}(z)=\infty\Leftrightarrow z=0$, it follows from \eqref{eq:thm11} that to show \eqref{eq:thm} it is enough to prove the following result:
\begin{align}\label{eq:thm2}
\exists   \xi\in(0,1)  \text{ s.t. for all $T\geq 1$  we have }\Delta_{T}\leq  \xi\implies\, \sup_{1\leq t\leq T}d_2(\mu_t,\eta_t\big) \leq   c_\varrho.
\end{align}

To this aim, we show by induction on $T$ that \eqref{eq:thm2} holds with
\begin{align}\label{eq:CC}
\xi= \varpi(\varrho/2) \bigg(\frac{c_\varrho}{2 C_* \tilde{\varphi}(\varrho/2)}\bigg)^2 \quad\text{ where } C_*=\max\Big\{1,\sup_{x\in[0,1]} x^{1/2}\log(1+1/x)\Big\}<\infty.
\end{align}
Noting that $\Delta_{1}= \varpi(\eta_1(f_1)) d_2(\mu_1,\eta_1)$, it follows that
\begin{align*}
d_2(\mu_1,\eta_1)=\frac{\Delta_{1}}{\varpi(\eta_1(f_1))} \leq \frac{\Delta_{1}}{\varpi(\varrho)}
\end{align*}
and thus, since $\xi\leq\varpi(\varrho/2) c_\varrho\leq\varpi(\varrho) c_\varrho$, the implication in \eqref{eq:thm2} holds for $T=1$. 

Assume now that the implication in \eqref{eq:thm2} holds for $T=s$ for some $s\geq 1$, and that $\Delta_{s+1}\leq\xi$. Then $\Delta_s\leq\xi$ and, by the inductive hypothesis and \eqref{eq:thm11}, $\iota_{s+1,\bm{\mu}|\eta_1}\geq\varrho/2$, so that $\Delta_{s+1}\leq\xi\leq\varpi(\varrho/2)\leq\varpi(\iota_{s+1,\bm{\mu}|\eta_1})$ and \eqref{eq:thm_main_e1} applies with $T=s+1$. Noting that, by \eqref{eq:thm_main_e1},  
\begin{align*}
 \sup_{t\in\{1,\dots,s+1\}} d_2(\mu_t,\eta_t)&\leq \tilde{\varphi}\big(\min\big\{\tilde{\iota}_s,\eta_1(f_1)\big\}\big)\vartheta\big(\Delta_{s+1}\big)
\end{align*}
where $\eta_1(f_1)\geq \varrho$, it follows that, under the inductive hypothesis and using \eqref{eq:thm11}, we have the implication
\begin{equation}\label{eq:work1}
\begin{split}
 \Delta_{s}\leq \xi\implies \sup_{t\in\{1,\dots,s+1\}} d_2(\mu_t,\eta_t)&\leq   \tilde{\varphi}(\varrho/2)\,\vartheta\big(\Delta_{s+1}\big).
\end{split}
\end{equation}

To proceed further, observe that 
\begin{align}\label{eq:work2}
 \tilde{\varphi}(\varrho/2)\vartheta\big(\Delta_{s+1}\big)\leq c_\varrho\Leftrightarrow  \vartheta\big(\Delta_{s+1}\big)\leq \frac{c_\varrho}{\tilde{\varphi}(\varrho/2)}
\end{align}
and note that
\begin{equation}\label{eq:work3}
\Delta_{s+1}\leq \xi \implies C_*\Delta_{s+1}^{1/2}\leq \varpi(\varrho/2)^{1/2}  \frac{c_\varrho}{\tilde{\varphi}(\varrho/2)}\leq    \frac{c_\varrho}{\tilde{\varphi}(\varrho/2)}.
\end{equation}
The constant $C_*\in[1,\infty)$ defined in \eqref{eq:CC} being  such that $\vartheta\big(\Delta_{s+1}\big)\leq C_*\Delta_{s+1}^{1/2}$, it follows from \eqref{eq:work3} that the following implication holds:
\begin{equation}\label{eq:work4}
\begin{split}
\Delta_{s+1}\leq \xi&\implies  \vartheta\big(\Delta_{s+1}\big)\leq  \frac{c_\varrho}{     \tilde{\varphi}(\varrho/2)}.
\end{split}
\end{equation}
By combining \eqref{eq:work1} and \eqref{eq:work4}, we obtain that
\begin{align*}
\Delta_{s+1}\leq \xi\implies \sup_{t\in\{1,\dots,s+1\}}d_2\big(\mu_t,\eta_t\big)\leq c_\varrho
\end{align*}
showing that  the implication in \eqref{eq:thm2} holds for $T=s+1$. This ends the proof of \eqref{eq:thm2} and thus of the theorem.
\end{proof}

\subsubsection{Proof of Lemma~\ref{lemma:filter_Gen}\label{p-lemma:filter_Gen}}

\begin{proof}
Below, we let   $\mu_t$   be a sequence in $\mathcal{P}(\setX)$ such that $\mu_t\in\mathcal{P}_{W_t|1}(\setX)$ for any $t\geq 1$, and, for any $t\geq 1$, we let $\mathcal{F}_{G_t|b}(\setX)$  denote the set of measurable functions $f:\setX\rightarrow\R$   such that $f G_t\in\mathcal{B}_1(\setX)$. In addition, in what follows the constant $c_\star\in (0,\infty)$ is as in the statement of the lemma.

Let $t\geq 1$ and $f\in\mathcal{F}_{G_t|b}(\setX)$. Since
$\mu_t,\eta_t\in\mathcal{P}_{W_t}(\setX)$ and $G_t>0$ on $W_t$, the quantities
$\mu_t(G_t)$ and $\eta_t(G_t)$ are positive and, since
$G_tf\in\mathcal{B}_1(\setX)$, the quantities $\mu_t(G_tf)$ and
$\eta_t(G_tf)$ belong to $[0,1]$. Writing
\begin{align*}
\Psi_t(\mu_t)(f)-\widehat{\eta}_t(f)
=\frac{\mu_t(G_tf)}{\mu_t(G_t)}-\frac{\eta_t(G_tf)}{\eta_t(G_t)}
=\frac{\mu_t(G_tf)-\eta_t(G_tf)}{\eta_t(G_t)}
+\mu_t(G_tf)\Big(\frac{1}{\mu_t(G_t)}-\frac{1}{\eta_t(G_t)}\Big),
\end{align*}
and noting that
$\mu_t(G_tf)\big(\mu_t(G_t)^{-1}-\eta_t(G_t)^{-1}\big)
=\Psi_t(\mu_t)(f)\,\big(\eta_t(G_t)-\mu_t(G_t)\big)/\eta_t(G_t)$,
we obtain
\begin{align}\label{eq:up_filter_a}
\big|\Psi_t(\mu_t)(f)-\widehat{\eta}_t(f)\big|
\leq\frac{|\mu_t(G_tf)-\eta_t(G_tf)|
+|\Psi_t(\mu_t)(f)|\,|\mu_t(G_t)-\eta_t(G_t)|}{\eta_t(G_t)}.
\end{align}
We now bound $|\Psi_t(\mu_t)(f)|$. Since $0\leq G_tf\leq 1$,
we have $0\leq\Psi_t(\mu_t)(f)\leq 1/\mu_t(G_t)$. If, in
addition, $\|f\|_\infty\leq 1$, then $0\leq G_tf\leq G_t$, so
that $0\leq\Psi_t(\mu_t)(f)\leq 1$. Thus
\begin{align*}
|\Psi_t(\mu_t)(f)|\leq\frac{1}{D_t(f)},
\qquad
D_t(f)\coloneqq 1-\big(1-\mu_t(G_t)\big)
\ind_{(1,\infty]}(\|f\|_\infty)
=\begin{cases}1,&\|f\|_\infty\leq 1,\\
\mu_t(G_t),&\|f\|_\infty>1.\end{cases}
\end{align*}
Since $G_t\leq 1$, we have $D_t(f)\in(0,1]$, so that the first
term in the numerator of \eqref{eq:up_filter_a} is also bounded
by $|\mu_t(G_tf)-\eta_t(G_tf)|/D_t(f)$. Hence
\begin{align*}
\big|\Psi_t(\mu_t)(f)-\widehat{\eta}_t(f)\big|
\leq\frac{|\mu_t(G_tf)-\eta_t(G_tf)|+|\mu_t(G_t)-\eta_t(G_t)|}
{\eta_t(G_t)\,D_t(f)}.
\end{align*}
Multiplying the numerator and the denominator by $c_\star$, and
using $c_\star|\mu_t(G_t)-\eta_t(G_t)|
=|\mu_t(c_\star G_t)-\eta_t(c_\star G_t)|$, we obtain the first
inequality in \eqref{eq:up_filter} below. Finally, since
$c_\star G_t\in\mathcal{F}_2(\setX)$ and
$\mu_t,\eta_t\in\mathcal{P}_{W_t|1}(\setX)\subseteq\mathcal{P}_2(\setX)$,
the definition of $d_2$ gives
$|\mu_t(c_\star G_t)-\eta_t(c_\star G_t)|\leq d_2(\mu_t,\eta_t)$. Hence,
\begin{equation}\label{eq:up_filter}
\begin{split}
\big |\Psi_t(\mu_t)(f)-\widehat{\eta}_t(f)\big|&\leq \frac{c_\star|\mu_t(G_t f)-\eta_t(G_t f)|+|\mu_t(c_\star G_t)-\eta_t(c_\star G_t)|}{c_\star \eta_t(G_t)\big[1-(1-\mu_t(G_t))\ind_{(1,\infty]}(\|f\|_\infty)\big]} \\
&\leq   \frac{c_\star|\mu_t(G_t f)-\eta_t(G_t f)|+d_2(\mu_t,  \eta_t)}{c_\star \eta_t(G_t)\big[1-(1-\mu_t(G_t))\ind_{(1,\infty]}(\|f\|_\infty)\big]}
\end{split}
\end{equation}
and the first assertion of the lemma now comes from the fact that $\varrho:=\inf_{t\geq 1}\eta_t(G_t)>0$. 

To prove the second assertion of the lemma, we let $t\geq 1$ and $f:\setX\rightarrow\R$ be a measurable function such that the function $(f+a_t)G_t\in\mathcal{B}_1(\setX)$ for some constant $a_t\in\R$. In this context, and letting $h_t=f+a_t$, we have
\begin{align*}
\big |\Psi_t(\mu_t)(f)-\widehat{\eta}_t(f)\big|=\big |\Psi_t(\mu_t)(h_t)-\widehat{\eta}_t(h_t)\big|\leq   \frac{c_\star|\mu_t(G_t h_t)-\eta_t(G_t h_t)|+d_2(\mu_t,  \eta_t)}{c_\star \varrho \mu_t(G_t)}
\end{align*}
where the second inequality holds by \eqref{eq:up_filter}, noting that $h_t\in  \mathcal{F}_{G_t|b}(\setX)$ and using the fact that $\eta_t(G_t)\geq \varrho$ by assumption. Then, the second assertion of the lemma follows upon noting that,   since $c_\star G_t\in\F_2(\setX)$ by assumption, we have the implication
\begin{align*}
d_2(\mu_t,\eta_t)\leq \varrho\frac{ c_\star}{2}\implies \mu_t(G_t)\geq \eta_t(G_t)- \frac{1}{c_\star}d_2(\mu_t,\eta_t)\geq\frac{\varrho}{2}.
\end{align*}
 This ends the proof of the lemma.
\end{proof}

\subsection{Proofs of the results in Sections~\ref{sub:primitive}--\ref{sub:K_estimates}}

\subsubsection{Proof of Proposition~\ref{prop:A1p-A2}\label{p-prop:A1p-A2}}

\begin{proof}

Below,    we let the constants $(\bar{\varrho},c_{\bar{\varrho}},\varrho)\in(0,1)^3$ be as in the statement of the proposition, and we start by proving the first assertion of the proposition.
 
To this aim, we first recall that,  for any integers $t\geq s\geq 1$, any $\eta\in\mathcal{P}_{W_s}(\setX)$ and $ f\in\mathcal{B}_1(\setX)$ we have (recall that $\Phi_{s,i}(\eta)\in\mathcal{P}_{W_i}(\setX)$ for any $i\geq s$ by \ref{A0})
\begin{align}\label{eq:identity0}
\eta Q_{s,t}(f)=\Phi_{s,t}(\eta)(f)~\prod_{s\leq i<t}\Phi_{s,i}(\eta)(G_{i})
\end{align}
Since $\|G_t\|_\infty\leq 1$ by assumption, for any integers $t\geq s\geq 1$  we have
\begin{align}\label{eq:lower_G}
\min_{s\leq i\leq t}\Phi_{s,i}(\eta)(G_i)\geq \prod_{s\leq j\leq t}\Phi_{s,j}(\eta)(G_{j}). 
\end{align}

We now  let $\upsilon\in(0,1]$ and $\mu\in\mathcal{P}_{W_1}(\setX)$ be fixed and such that $\mu(C_{1,\bar{\varrho}})\geq \upsilon$. By  assumption, we have 
\[
\mu Q_{1,t+1}(1)\geq c_{\bar{\varrho}}^{t}~\mu(C_{1,\bar{\varrho}})\geq c_{\bar{\varrho}}^{t}\,\upsilon.
\]

Thus, by using \eqref{eq:identity0},  we obtain that
\begin{equation}\label{eq:lower}
\begin{split}
\prod_{1\leq i\leq t}\Phi_{1,i}(\mu)(G_i)=\mu Q_{1,t+1}(1)\geq c_{\bar{\varrho}}^{t}\,\upsilon&\implies \prod_{1\leq i\leq t}\frac{\Phi_{1,i}(\mu)(G_i)}{c_{\bar{\varrho}}}\geq \upsilon>0,\quad\forall t\geq 1. 
\end{split}
\end{equation}
Noting that 
\begin{align*}
\limsup_{t\rightarrow\infty} \Phi_{1,t}(\mu)(G_t)<c_{\bar{\varrho}} \implies \limsup_{t\rightarrow\infty}\prod_{1\leq i\leq t}\frac{\Phi_{1,i}(\mu)(G_i)}{c_{\bar{\varrho}}}=0,
\end{align*}
it follows from \eqref{eq:lower} that $\limsup_{t\rightarrow\infty}\Phi_{1,t}(\mu)(G_t)\geq c_{\bar{\varrho}}$.

We now let $\epsilon\in (0,1)$ be such that  $\varrho<\epsilon c_{\bar{\varrho}}$; note that such an $\epsilon>0$ exists since  $\varrho<c_{\bar{\varrho}}$.  As proved above, we have $\limsup_{t\rightarrow\infty}\Phi_{1,t}(\mu)(G_t)\geq c_{\bar{\varrho}}$ and thus  there exists a strictly increasing sequence $(t_k)_{k\geq 1}$  in $\mathbb{N}\setminus\{1\}$ such that $\Phi_{1,t_k}(\mu)(G_{t_k})\geq \epsilon  c_{\bar{\varrho}}$ for any $k\geq 1$ and such that  
\begin{align}\label{eq:epsilon_cdelta}
\Phi_{1,t}(\mu)(G_{t})< \epsilon c_{\bar{\varrho}},\quad\text{for any $t>1$ such that $t\not\in (t_k)_{k\geq 1}$}.
\end{align}

To proceed further we will need the following result:
\begin{align}\label{eq:implication}
\eta(G_t)\geq \ell_1\implies  \eta(C_{t,\ell_2})\geq  \frac{\ell_1-\ell_2}{1-\ell_2}>0,\quad\forall \ell_1\in(0,1),\quad\forall \ell_2\in(0,\ell_1),\quad\forall\eta\in\mathcal{P}(\setX),\quad\forall t\geq 1.
\end{align}
To result stated in \eqref{eq:implication}  can be checked by using the fact that, for any  $t\geq 1$, $\eta\in\mathcal{P}(\setX)$,   $\ell_1\in(0,1)$ and any $\ell_2\in(0,\ell_1)$, we have
\begin{align*}
\eta(G_t)=\eta(G_t\ind_{C_{t,\ell_2}})+\eta(G_t\ind_{C^c_{t,\ell_2}})\leq \eta(C_{t,\ell_2})+\ell_2(1-\eta(C_{t,\ell_2}))=(1-\ell_2)\eta(C_{t,\ell_2})+\ell_2 
\end{align*}
where
\begin{align*}
(1-\ell_2)\eta(C_{t,\ell_2})+\ell_2< \ell_1\Leftrightarrow \eta(C_{t,\ell_2})<\frac{\ell_1-\ell_2}{1-\ell_2}.
\end{align*}
Consider the parameters 
\[
\kappa=(\epsilon c_{\bar{\varrho}}-\varrho)/(1-\varrho)>0\quad\mbox{\rm and} \quad c_{\varrho,\bar{\varrho}}=\inf_{t\geq 1}\inf_{x\in C_{t,\varrho}}Q_{t+1}(x,C_{t+1,\bar{\varrho}})>0.
\]
 Then, for any   $k\geq 1$,  we have 
\begin{equation}\label{eq:lower2}
\begin{split}
\prod_{t_k\leq i<t_{k+1}}\Phi_{1,i}(\mu)(G_i) &=\prod_{t_k\leq i<t_{k+1}}\Phi_{t_k,i}\big(\Phi_{1,t_k}(\mu)\big)(G_i)=\Phi_{1,t_k}(\mu) Q_{t_k,t_{k+1}}(1)\\
&\geq  c_{\bar{\varrho}}^{t_{k+1}-t_k-1}~ c_{\varrho,\bar{\varrho}}~\Phi_{1,t_k}(\mu)(C_{t_k,\varrho})\geq  c_{\bar{\varrho}}^{t_{k+1}-t_k}~\kappa~ c_{\varrho,\bar{\varrho}}/c_{\bar{\varrho}}
\end{split}
\end{equation}
where the second equality holds by  \eqref{eq:identity0}, where the first inequality uses the fact that   
\begin{align*}
\eta Q_{t_k,t_{k+1}}(1)\geq c_{\bar{\varrho}}^{t_{k+1}-t_k-1} \eta Q_{t_k+1}(C_{t_k+1,\bar{\varrho}})\geq c_{\bar{\varrho}}^{t_{k+1}-t_k-1}\, c_{\varrho,\bar{\varrho}}~\eta(C_{t_k,\varrho}),\quad\forall\eta\in\mathcal{P}(\setX)
\end{align*}  
and where the last inequality holds by applying \eqref{eq:implication} with $\ell_1=\epsilon c_{\bar{\varrho}}$ and with $\ell_2=\varrho$,   recalling that we have $\Phi_{1,t_k}(\mu)(G_{t_k})\geq \epsilon c_{\bar{\varrho}}$.

To conclude the proof of the first assertion of the proposition, we let $\tau_k\coloneqq t_{k+1}-t_k$ for any $k\geq 1$ and show that
\begin{align}\label{eq:Delta_max}
\sup_{k\geq 1}\tau_k\leq \bar{\tau}\coloneqq 1+\frac{\log(\kappa c_{\varrho,\bar{\varrho}})}{\log(\epsilon)}.
\end{align}
To show \eqref{eq:Delta_max} assume that $\tau_{k'}>\bar{\tau}$ for some integer $k'\geq 1$. In this scenario $t_{k'+1}-t_{k'}\geq 2$ and, since by \eqref{eq:epsilon_cdelta}  we have $\Phi_{1,t}(\mu)(G_{t})< \epsilon c_{\bar{\varrho}}$ for any $t\in\{t_k'+1,\dots,t_{k'+1}-1\}$, it follows that
\begin{align}\label{eq:prod_k2}
\prod_{t_{k'}\leq i<t_{k'+1}}\frac{\Phi_{1,i}(\mu)(G_i)}{c_{\bar{\varrho}}}\leq \frac{1}{c_{\bar{\varrho}}}\prod_{t_{k'}< i<t_{k'+1}}\frac{\Phi_{1,i}(\mu)(G_i)}{c_{\bar{\varrho}}}< \frac{1}{c_{\bar{\varrho}}}~\epsilon^{\,\tau_{k'}-1}<\frac{ \kappa\, c_{\varrho,\bar{\varrho}}}{c_{\bar{\varrho}}} 
\end{align}
while, for any $k\geq 1$ by \eqref{eq:lower2},  
\begin{align}\label{eq:prod_k}
\prod_{t_k\leq i<t_{k+1}}\frac{\Phi_{1,i}(\mu)(G_i)}{c_{\bar{\varrho}}}\geq \frac{\kappa\, c_{\varrho,\bar{\varrho}}}{c_{\bar{\varrho}}}.
\end{align}
Since  \eqref{eq:prod_k2} contradicts \eqref{eq:prod_k}, we conclude that \eqref{eq:Delta_max} holds.

By using \eqref{eq:lower_G}, \eqref{eq:lower2} and \eqref{eq:Delta_max}, and recalling that $t_1\geq 2$ and that $\inf_{k\geq 1}\Phi_{1,t_k}(\mu)(G_{t_k})\geq \epsilon  c_{\bar{\varrho}}$, we obtain that
\begin{align}\label{eq:zeta_inf}
\inf_{t\geq t_1}\Phi_{1,t}(\mu)(G_t)\geq  \zeta_1\coloneqq \exp\Big(\min\Big\{\log(\epsilon\, c_{\bar{\varrho}}),\bar{\tau}\log(c_{\bar{\varrho}})+\log(\kappa c_{\varrho,\bar{\varrho}})\Big\}\Big).
\end{align}
On the other hand, by  \eqref{eq:epsilon_cdelta}
  \begin{align*}
    \prod_{1\leq i<t_1}\frac{\Phi_{1,i}(\mu)(G_i)}{c_{\bar{\varrho}}}\leq \frac{1}{c_{\bar{\varrho}}}\prod_{2\leq i<t_1}\frac{\Phi_{1,i}(\mu)(G_i)}{c_{\bar{\varrho}}}\leq c_{\bar{\varrho}}^{-1}\,\epsilon^{t_1-2}
   \end{align*}
   and thus, by using \eqref{eq:lower},  we conclude that $t_1\leq \tau:=\lfloor 2+ \log(c_{\bar{\varrho}}\upsilon)/\log(\epsilon)\rfloor $. By combining \eqref{eq:lower_G} and \eqref{eq:lower}, we obtain that
\begin{align*}
\inf_{1\leq t<t_1}\Phi_{1,t}(\mu)(G_t)\geq \zeta_2\coloneqq c_{\bar{\varrho}}^{\tau}\upsilon 
\end{align*}
which, together with \eqref{eq:zeta_inf}, shows that $\inf_{t\geq  1}\Phi_{1,t}(\mu)(G_t)\geq \zeta:=\min\{\zeta_1,\zeta_2\}$. In this context, if follows from    \eqref{eq:implication} that 
 \begin{align*}
 \inf_{t\geq 1}\Phi_{1,t}(\mu)(C_{t,\varepsilon})\geq\frac{\zeta-\varepsilon}{1-\varepsilon},\quad  \forall\varepsilon\in(0,\zeta) 
\end{align*} 
showing that the implication stated in the proposition holds e.g.~with  $\varepsilon_\upsilon=\zeta/2$ and $\kappa_{\upsilon}=\zeta/(2-\zeta)$.

To prove the second assertion of proposition, we let $D_{t}$ be a sequence in $\mathcal{E}$ such that, for some constant $\upsilon\in(0,1)$, we have
\begin{align}\label{eq:use}
\inf_{t\geq 1}\inf_{x\in C_{t,\varepsilon_\upsilon}} Q_{t+1}(x, D_{t+1})>0
\end{align}
with  $\varepsilon_\upsilon\in(0,1)$ as in the first assertion of the proposition.

Next, we let $\mu\in\mathcal{P}_{W_1}(\setX)$ be such that $\mu(D_{1})>0$ and such that $\mu(C_{1,\bar{\varrho}})\geq \upsilon$; observe that such a measure $\mu$ exists since $\upsilon<1$. Then, 
\begin{align*}
\inf_{t\geq 1}\Phi_{1,t+1}(\mu)(D_{t+1})&\geq \inf_{t\geq 1}\Phi_{1,t}(\mu)\big(Q_{t+1}(D_{t+1})\big)\\
&\geq \Big(\inf_{t\geq 1}\Phi_{1,t}(\mu)(C_{t,\varepsilon_\upsilon})\Big)\Big(\inf_{t\geq 1}\inf_{x\in C_{t,\varepsilon_\upsilon}} Q_{t+1}(x,D_{t+1})\Big)>0
\end{align*}
where the first inequality uses the fact that $\sup_{t\geq 1}\|G_t\|_\infty\leq 1$ and where the last inequality holds by \eqref{eq:use} and the result stated in the first part of the proposition. Together with the fact that we have $\mu(D_{1})>0$, this shows that $\inf_{t\geq 1}\Phi_{1,t}(\mu)(D_{t})>0$ and the proof of the proposition is complete.
\end{proof}

\subsubsection{Proof of Lemma~\ref{lemma:forgetting}\label{p-lemma:forgetting}}

\begin{proof}
As already mentioned, the proof of the lemma follows closely the proofs of Lemma 13 and Proposition 5 in \citet{MLE_SSM}.

In what follows, for any functions $f,h:\setX\rightarrow\R$  we let $(f\otimes h):\setX^2\rightarrow\R$ be the function such that 
\[
(f\otimes h)( x_1,x_2)=f(x_1)~h(x_2)
\]
 for any $(x_1,x_2)\in\bar{\setX}\coloneqq \setX^2$, and we let the constants $(\varrho,\xi)\in(0,1)\times(1,\infty)$, the measurable sets $D_t$ and the measures $\nu_{t+1}$ be as in \ref{A1}. In addition, the result of the lemma trivially holds if $t=s$, and in what  follows we let $t> s\geq 1$ and $\mu_1,\mu_2\in\mathcal{P}_{W_s}(\setX)$ be fixed.

As a first step to prove the lemma, we let  $f,h\in\mathcal{B}_1(\setX)$ and  study  the quantity
\begin{align*}
\Theta_{s,t}(f,h)\coloneqq  \mu_1 Q_{s,t}(f)~ \mu_2 Q_{s,t}(h)- \mu_1 Q_{s,t}(h)~ \mu_2 Q_{s,t}(f).
\end{align*}
To this aim, for any  $j> i\geq 1$ and  any $(x,y)\in\bar{\setX}$, we let  
\[
\bar{Q}_{i+1}\big((x,y),\dd(x',y')\big)=Q_{i+1}(x,\dd x')~ Q_{i+1}(y,\dd y')
\]
 and   
\begin{align*}
 \bar{Q}_{i,j}\big((x,y),\dd(x',y')\big)=Q_{i,j}(x,\dd x')~ Q_{i,j}(y,\dd y')=(\bar{Q}_{i+1}\cdots\bar{Q}_j)\big((x,y),\dd(x',y')\big)
 \end{align*}
so that
 \begin{align}\label{eq:barQ1}
 \Theta_{s,t}(f,h)= (\mu_1\otimes\mu_2)\bar{Q}_{s,t}\big(f\otimes h-h\otimes f\big).
 \end{align}
 
 To proceed further,    for any $i\geq 1$ we let   $C_i=C_{i,\varrho}$,  $\bar{\nu}_{i+1}=\nu_{i+1}\otimes\nu_{i+1}$   and   $\bar{C}_i=C_{i}\times C_{i}$. Then, under \ref{A1}, 
 \begin{align}\label{eq:barQ}
 \frac{1}{\xi^2}~ \bar{\nu}_{i+1}(\bar{A}\cap\bar{C}_{i+1})\leq \bar{Q}_{i+1}( \bar{x},\bar{A}\cap\bar{C}_{i+1})\leq  \xi^2 ~\bar{\nu}_{i+1}(\bar{A}\cap\bar{C}_{i+1}),\quad\forall \bar{x}\in\bar{C}_i,\quad\forall\bar{A}\in\bar{\mathcal{E}}\coloneqq \mathcal{E}\otimes\mathcal{E}.
 \end{align}
Then, by using \eqref{eq:barQ1}-\eqref{eq:barQ}  and following line-by-line the computations in  \citet[][proof of Proposition 5]{MLE_SSM}, it can be shown that
\begin{align}\label{eq:L1}
\Theta_{s,t}(f,h)=(\mu_1\otimes\mu_2)\bar{L}_{s,t}\big(f\otimes h-h\otimes f\big),\quad \bar{L}_{s,t}\coloneqq \bar{L}_{s+1}\cdots \bar{L}_t
\end{align}
where, for any $i\geq 1$, the unnormalized kernel $\bar{L}_{i+1}$ acting on $(\bar{\setX},\bar{\mathcal{E}})$ is such that
\begin{align*}
\bar{L}_{i+1}(\bar{x},\bar{A})=\bar{Q}_{i+1}(\bar{x},\bar{A})-\ind_{\bar{C}_i}(\bar{x}) \frac{1}{\xi^2}\,\bar{\nu}_{i+1}(\bar{A}\cap \bar{C}_{i+1})\geq 0,\quad \forall (\bar{x},\bar{A})\in\bar{\setX}\times \bar{\mathcal{E}}.
\end{align*}
Using \eqref{eq:barQ}, we verify that
\begin{align}\label{eq:L2}
\bar{L}_{i+1}(\bar{x},\bar{A}\cap \bar{C}_{i+1})&\leq  r_\xi\,  \bar{Q}_{i+1}(\bar{x},\bar{A}\cap \bar{C}_{i+1}),\quad r_\xi:=1- \frac{1}{\xi^4}\in(0,1),\quad\forall  (\bar{x},\bar{A})\in\bar{C}_i\times\bar{\mathcal{E}},\quad\forall i\geq 1
\end{align}
and thus, letting $\bar{\phi}=f\otimes h-h\otimes f$, it follows from \eqref{eq:L1} and \eqref{eq:L2} that
\begin{align}\label{eq:L3}
\big|\Theta_{s,t}(f,h)\big|\leq \int_{\bar{\setX}^{t-s+1}} |\bar{\phi}(\bar{x}_{t})|\,  r_\xi^{\sum_{i=s}^{t-1}\ind_{\bar{C}_i\times\bar{C}_{i+1}}(\bar{x}_i,\bar{x}_{i+1})} (\mu_1\otimes\mu_2)(\dd \bar{x}_s) \prod_{s\leq i<t} \bar{Q}_{i+1}(\bar{x}_i,\dd\bar{x}_{i+1}).
\end{align}

We now let $\beta\in (0,1)$ be arbitrary and note that, by using Lemma 17 in \citet{douc2009forgetting},  for any  elements $\{\bar{x}_i\}_{i=s}^{t}$ of $\bar{\setX}$ we have the implication 
\begin{align*}
\sum_{s\leq i<t}\ind_{\bar{C}_i\times\bar{C}_{i+1}}(\bar{x}_i,\bar{x}_{i+1})< \lfloor \beta (t-s)\rfloor\implies \sum_{s\leq i<t}\ind_{\bar{C}_i^c}(\bar{x}_i)\geq a_{\beta,s,t}\coloneqq \frac{t-s-\lfloor \beta (t-s)\rfloor}{2}.
\end{align*}
By using this latter result and \eqref{eq:L3}, we obtain that
\begin{align}\label{eq:Delta}
\big|\Theta_{s,t}(f,h)\big|&\leq r_\xi^{\lfloor \beta (t-s)\rfloor} \int_{\bar{\setX}^{t-s+1}} |\bar{\phi}(\bar{x}_{t})|  (\mu_1\otimes\mu_2)(\dd \bar{x}_s) \prod_{s\leq i<t} \bar{Q}_{i+1}(\bar{x}_i,\dd\bar{x}_{i+1})+ \Gamma_{s,t}
\end{align}
where 
\begin{align*}
\Gamma_{s,t}=\int_{\bar{\setX}^{t-s+1}} \ind\Big(\sum_{s\leq i<t}\ind_{\bar{C}_i^c}(\bar{x}_i)\geq a_{\beta,s,t}\Big) ~ (\mu_1\otimes\mu_2)(\dd \bar{x}_s) \prod_{s\leq i<t} \bar{Q}_{i+1}(\bar{x}_i,\dd\bar{x}_{i+1}).
\end{align*}

Next, for any $i\geq 1$, $\|Q_{i+1}(1)\|_\infty=\|G_i\|_\infty\leq 1$  and we let   $\bar{G}_i(\bar{x})=G_i(x_1)G_i(x_2)$ for any $\bar{x}=(x_1,x_2)\in\bar{\setX}$ and any $i\geq 1$. Then we have
\begin{equation}\label{eq:Gamma_s}
\begin{split}
\Gamma_{s,t}=&\int_{\bar{\setX}^{t-s+1}} \ind\Big(\sum_{s\leq i<t}\ind_{\bar{C}_i^c}(\bar{x}_i)\geq a_{\beta,s,t}\Big)~ \varrho^{ \sum_{s\leq i<t}\ind_{\bar{C}_i^c}(\bar{x}_i) }  ~(\mu_1\otimes\mu_2)(\dd \bar{x}_s) \prod_{s\leq i<t} \frac{\bar{Q}_{i+1}(\bar{x}_i,\dd\bar{x}_{i+1})}{\varrho^{\ind_{\bar{C}_i^c}(\bar{x}_i)}}\\
&\leq \varrho^{a_{\beta,s,t}}\int_{\bar{\setX}^{t-s+1}}    (\mu_1\otimes\mu_2)(\dd \bar{x}_s) \prod_{s\leq i<t} \frac{\bar{Q}_{i+1}(\bar{x}_i,\dd\bar{x}_{i+1})}{\varrho^{\ind_{\bar{C}_i^c}(\bar{x}_i)}}\leq \varrho^{a_{\beta,s,t}}\bigg(\sup_{i\geq 1}\sup_{\bar{x}\in\setX^2} \frac{\bar{G}_i(\bar{x})}{\varrho^{\ind_{\bar{C}_i^c}(\bar{x})}}\bigg)^{t-s}.
\end{split}
\end{equation}
To control the last term in the above equation, we  let $i\geq 1$ and $\bar{x}=(x_1,x_2)\in\bar{\setX}$ be arbitrary. Then 
\begin{align*}
\bar{x}\in\bar{C}_i\implies \frac{\bar{G}_i(\bar{x})}{\varrho^{\ind_{\bar{C}_i^c}(\bar{x})}}=\bar{G}_i(\bar{x})\leq 1 
\end{align*}
while, recalling that $x_1 \in C^c_i\implies G_i(x_1)< \varrho$,
\begin{align*}
x_1\in C^c_i\implies \frac{\bar{G}_i(\bar{x})}{\varrho^{\ind_{\bar{C}_i^c}(\bar{x})}}=\frac{G_i(x_1)G_i(x_2)}{\varrho^{\ind_{\bar{C}_i^c}(\bar{x})}}=\frac{G_i(x_1)G_i(x_2)}{\varrho}\leq \frac{G_i(x_1)}{\varrho }\leq 1.
\end{align*}
The above calculations show that
\begin{align*}
\bigg(\sup_{i\geq 1}\sup_{\bar{x}\in\setX^2} \frac{\bar{G}_i(\bar{x})}{\varrho^{\ind_{\bar{C}_i^c}(\bar{x})}}\bigg)^{t-s}\leq 1,
\end{align*}
which, together with \eqref{eq:Gamma_s}, yields  $\Gamma_{s,t}\leq \varrho^{a_{\beta,s,t}}$.  By plugging this estimate in   \eqref{eq:Delta}, we obtain that
\begin{align}\label{eq:boundDelta}
\big|\Theta_{s,t}(f,h)\big|&\leq  r_\xi^{\lfloor \beta (t-s)\rfloor} \Big(\mu_1 Q_{s,t}(f)~\mu_2 Q_{s,t}(h)+\mu_1 Q_{s,t}(h)~\mu_2 Q_{s,t}(f)\Big)+ \varrho^{a_{\beta,s,t}}
\end{align}
and this ends our analysis of the quantity $\Theta_{s,t}(f,h)$ for two arbitrary functions $f,h\in\mathcal{B}_1(\setX)$.

To prove the result of the lemma, we   let $g\in\mathcal{B}_1(\setX)$ be a  arbitrary  and observe first that
\begin{equation}\label{eq:main}
\begin{split}
|\Phi_{s,t}(\mu_1)(g)-\Phi_{s,t}(\mu_2)(g)|&=\frac{|\Theta_{s,t}(g,\ind_{\setX})|}{\mu_1 Q_{s,t}(1)\mu_2 Q_{s,t}(1)}\\
&\leq \frac{r_\xi^{\lfloor \beta (t-s)\rfloor} \big(\mu_1 Q_{s,t}(g)~\mu_2 Q_{s,t}(1)+\mu_1 Q_{s,t}(1)~\mu_2 Q_{s,t}(g)\big)+\varrho^{a_{\beta,s,t}}}{\mu_1 Q_{s,t}(1)~\mu_2 Q_{s,t}(1)}\\
&\leq 2 r_\xi^{\lfloor \beta (t-s)\rfloor}+\frac{\varrho^{a_{\beta,s,t}}}{\mu_1 Q_{s,t}(1)~\mu_2 Q_{s,t}(1)}\leq   2 r_\xi^{\lfloor \beta (t-s)\rfloor}+\frac{\varrho^{a_{\beta,s,t}}}{\mu_1(D_s)~\mu_2(D_s)\underline{q}^{\,t-s}}
\end{split}
\end{equation}
with the constant $\underline{q}\in(0,1)$ as defined in \ref{A1} and where the first inequality holds by applying  \eqref{eq:boundDelta} with $f=g$ and with $h=\ind_\setX$.

To complete the proof, we let $k=t-s$ to simplify the notation. Then
\begin{equation}\label{eq:beta}
\begin{split}
\varrho^{\frac{k-\lfloor \beta k\rfloor}{2}} \underline{q}^{-k}&=\exp\Big(-\frac{k-\lfloor \beta k\rfloor}{2}\log(1/\varrho)+k\log(1/\underline{q})\Big)\\
&\leq \exp\Big(-\frac{k-  \beta k-1}{2}\log(1/\varrho)+k\log(1/\underline{q})\Big)\\
&\leq\varrho^{-1} \exp\Big(-k\Big(\frac{(1-\beta)\log(1/\varrho)}{2}-\log(1/\underline{q})\Big)\Big)
\end{split}
\end{equation}
where 
\begin{align}\label{eq:beta2}
\frac{(1-\beta)\log(1/\varrho)}{2}-\log(1/\underline{q})>0\Leftrightarrow \log(1/\varrho)> 2 \frac{ \log(1/\underline{q})}{1-\beta}\Leftrightarrow \varrho<   \underline{q}^{\frac{2}{1-\beta}}.
\end{align}
By assumption, $\varrho<\underline{q}^2$ and thus, since in the above calculations the constant $\beta\in(0,1)$ is arbitrary, it follows from \eqref{eq:beta}-\eqref{eq:beta2} that we can choose this constant sufficiently close to  zero to ensure that we have $\varrho^{\frac{k-\lfloor \beta k\rfloor}{2}} \underline{q}^{-k}\leq \varrho^{-1} e^{-\kappa k}$ for any $k\geq 1$ and some $\kappa>0$. The result of the lemma  for the case where $t>s$ then follows from \eqref{eq:main}, and the proof of the lemma is therefore complete.
\end{proof}

\subsubsection{Proof of Lemma~\ref{lemma:lower_phi}\label{p-lemma:lower_phi}}

\begin{proof}
Let $\mu^*\in\mathcal{P}_{W_1}(\setX)$ be as in
\ref{A2}, let
\begin{equation}\label{def-epsi-ref-ap}
\epsilon\coloneqq\inf_{t\geq 1}\Phi_{1,t}(\mu^*)(D_t)>0,
\end{equation}
let $(c,\delta)$ be as in
Lemma~\ref{lemma:forgetting} and let
$\underline{q}\in(0,1)$ be as in \ref{A1}; recall that
\ref{A1} holds under \ref{A2}. We prove the following
slightly more general statement: for any $s\geq 1$
and $\mu\in\mathcal{P}_{W_s}(\setX)$ with
$\mu(D_s)>0$,
\begin{align}\label{eq:lower_st}
\inf_{t\geq s}\Phi_{s,t}(\mu)(D_t)\geq
\min\Big\{\mu(D_s)\,\underline{q}^{\,\bar{k}(\mu(D_s))},
\frac{\epsilon}{2}\Big\}>0,
\end{align}
where $\bar{k}\colon(0,1]\to\{0,1,\dots\}$ is defined by
\begin{align}\label{eq:kmu_def}
\bar{k}(z)\coloneqq\max\Big\{0,\Big\lceil\frac{1}{\delta}
\log\Big(\frac{2c}{\epsilon^2z}\Big)\Big\rceil\Big\},
\qquad z\in(0,1].
\end{align}
Observe that the first assertion of the lemma is \eqref{eq:lower_st}
with $s=1$.

Let $s\geq 1$, let $\mu\in\mathcal{P}_{W_s}(\setX)$ be
such that $\mu(D_s)>0$, and let
$\mu^*_s\coloneqq\Phi_{1,s}(\mu^*)$, so that
$\mu^*_s(D_s)\geq\epsilon$ and
$\Phi_{s,t}(\mu^*_s)=\Phi_{1,t}(\mu^*)$ for any
$t\geq s$. By Lemma~\ref{lemma:forgetting}, for any
$t\geq s$,
\begin{align*}
\Phi_{s,t}(\mu)(D_t)\geq\Phi_{s,t}(\mu^*_s)(D_t)
-\frac{c\,e^{-\delta(t-s)}}{\mu(D_s)\,\mu^*_s(D_s)}
\geq\epsilon-\frac{c\,e^{-\delta(t-s)}}{\epsilon\,\mu(D_s)},
\end{align*}
and the right-hand side is at least $\epsilon/2$ when
$t\geq s+\bar{k}(\mu(D_s))$. On the other hand, since
$\|G_i\|_\infty\leq 1$ for any $i\geq 1$, we have
$\mu Q_{s,t}(1)\leq 1$ and, by \ref{A1a},
\begin{align*}
\Phi_{s,t}(\mu)(D_t)\geq\mu Q_{s,t}(\ind_{D_t})
\geq\mu(D_s)\,\underline{q}^{\,t-s},
\qquad\forall t\geq s,
\end{align*}
which is at least
$\mu(D_s)\underline{q}^{\,\bar{k}(\mu(D_s))}$ when
$t<s+\bar{k}(\mu(D_s))$. This ends the proof of
\eqref{eq:lower_st}.

Finally, by \ref{A1a}, we have
\[
G_t(x)\geq Q_{t+1}(x,D_{t+1})\geq\underline{q}
\]
 for any
$x\in D_t$, so that
\begin{align}\label{eq:l_G}
\inf_{t\geq s}\Phi_{s,t}(\mu)(G_t)\geq
\underline{q}~\inf_{t\geq s}\Phi_{s,t}(\mu)(D_t),
\qquad\forall s\geq 1,
\end{align}
which ends the proof of the second assertion of the lemma.
\end{proof}
 
\subsubsection{Proof of Lemma~\ref{lemma:ratio}\label{p-lemma:ratio}}

\begin{proof}

In  what follows, we let  the constants $(c,\delta)\in(0,\infty)^2$ be as in Lemma~\ref{lemma:forgetting}, the constant  $\underline{q}\in(0,1)$ be as  in \ref{A1} (recall that \ref{A1} holds under \ref{A2}), the probability measure  $\mu^*\in\mathcal{P}_{W_1}(\setX)$  be as in \ref{A2}, and we let $\epsilon$ as in (\ref{def-epsi-ref-ap}).

Next, to prove the result of the lemma, we let $t\geq s\geq 1$ be two integers,  $\mu_1\in\mathcal{P}(\setX)$ be arbitrary, and $\mu_2\in\mathcal{P}_{W_s}(\setX)$ be such that $\mu_2(D_s)>0$. In addition, we let $x_s\in D_s$  and $\tilde{\mu}_1\coloneqq (\mu_1+\delta_{x_s})/2\in\mathcal{P}(\setX)$.

Then, 
\begin{align}\label{eq:identity2}
\frac{\mu_1 Q_{s,t}(1)}{\mu_2 Q_{s,t}(1)}\leq 2\frac{\tilde{\mu}_1 Q_{s,t}(1)}{\mu_2 Q_{s,t}(1)}=2\prod_{s\leq i<t}\frac{\Phi_{s,i}(\tilde{\mu}_1)(G_{i})}{\Phi_{s,i}(\mu_2)(G_{i})} 
\end{align}
where the equality holds by   \eqref{eq:identity0} and where, by Lemma~\ref{lemma:forgetting} and noting that $\tilde{\mu}_1(D_s)\geq 1/2$,
\begin{align}\label{eq:identity3}
\frac{\Phi_{s,i}(\tilde{\mu}_1)(G_{i})}{\Phi_{s,i}(\mu_2)(G_{i})}\leq 1+\frac{2c\, e^{-\delta(i-s)}}{\mu_2(D_s)\Phi_{s,i}(\mu_2)(G_{i})},\quad\forall i\geq s.
\end{align}

To proceed further, we let $\bar{k}:[0,1]\rightarrow(0,\infty]$ be as defined in \eqref{eq:kmu_def}  and $\kappa:[0,1]\rightarrow [0,\infty]$ be defined by
\begin{align*}
\kappa(z)= z \underline{q} \min\Big\{z\,\underline{q}^{\, \bar{k}(z)},\frac{\epsilon}{2} \Big\},\quad z\in (0,1].
\end{align*}
With this notation in place, by using \eqref{eq:lower_st} and \eqref{eq:l_G}, we obtain that
\begin{align*}
\mu_2(D_s)\Phi_{s,i}(\mu_2)(G_{i})\geq\kappa\big(\mu_2(D_s)\big),\quad\forall i\geq s 
\end{align*}
which,  together with    \eqref{eq:identity2} and \eqref{eq:identity3}, shows that  
\begin{align*}
\frac{\mu_1 Q_{s,t}(1)}{\mu_2 Q_{s,t}(1)}&\leq  2 \prod_{s\leq i<t}\bigg(1+\frac{2 c\, e^{-\delta(i-s)}}{\kappa(\mu_2(D_s))}\bigg) \leq 2\exp\bigg(\frac{2c  }{\kappa(\mu_2(D_s)) (1-e^{-\delta})}\bigg).
\end{align*}
The proof of the lemma is then completed upon noting that the function $\kappa$  is increasing and such that $\kappa(z)=0\Leftrightarrow z=0$.
\end{proof}

\subsubsection{Proof of Lemma~\ref{lemma:Lipschitz_tv}\label{p-lemma:Lipschitz_tv}}

\begin{proof}
In what follows, we let the function $\varphi:[0,1]\rightarrow[1,\infty]$ be as in Lemma~\ref{lemma:ratio}, and, to prove the lemma,  we let  $t\geq s\geq 1$ be two integers,  $\mu_1 \in\mathcal{P}_{W_s}(\setX)$ and $x_s\in D_s$ be arbitrary, and $ \mu_2 \in\mathcal{P}_{W_s}(\setX)$ be such that $\mu_2(D_s)>0$.

In this context, by Lemma~\ref{lemma:ratio}, we have
\begin{align}\label{eq:Bound_Qs}
Q_{s,t}(f)(x)=\delta_x Q_{s,t}(f)\leq  \delta_x Q_{s,t}(1)\leq \varphi(1)~\delta_{x_s} Q_{s,t}(1),\quad\forall x\in\setX,\quad\forall f\in\mathcal{B}_1(\setX)
\end{align}
and thus, for any functions  $f\in\mathcal{B}_1(\setX)$,
\begin{equation*}
\begin{split}
|\Phi_{s,t}(\mu_1)(f)-\Phi_{s,t}(\mu_2)(f)|&\leq \frac{|\mu_1 Q_{s,t}(f)-\mu_2 Q_{s,t}(f)|+|\mu_1 Q_{s,t}(1)-\mu_2 Q_{s,t}(1)|}{\mu_2 Q_{s,t}(1)  }\\
&\leq 2\varphi(1) \|\mu_1-\mu_2\|_{\mathrm{tv}}\,\frac{\delta_{x_s}Q_{s,t}(1)}{\mu_2 Q_{s,t}(1)}\leq 2\varphi(1) \|\mu_1-\mu_2\|_{\mathrm{tv}}\,\varphi(\mu_2(D_s))
\end{split}
\end{equation*}
where the last inequality holds by Lemma~\ref{lemma:ratio}. The proof of the lemma is complete.

\end{proof}

\subsubsection{Proof of Lemma~\ref{lemma:Lipschitz}\label{p-lemma:Lipschitz}}

To prove Lemma~\ref{lemma:Lipschitz} we will need the following result:

\begin{lemma}\label{lemma:Q_smooth}
Assume that condition \ref{B1} holds. Then, for any integers $t> s\geq 1$ and any functions $f\in\mathcal{B}_1(\setX)$,  
\begin{align*}
 \mathcal{V}_\setX\big(Q_{s,t}(f)\big)\leq \int_{\setX} Q_{s+1,t}(1)(x')~ \mathcal{V}_\setX\big(q_{s+1}(\cdot ,x')\big)\bar{\varsigma}_{s+1}(\dd x')
\end{align*}
with  the functions $(q_{i+1})_{i\geq 1}$ and the measures $(\bar{\varsigma}_{i+1})_{i\geq 1}$ as in \ref{B1}.  
\end{lemma}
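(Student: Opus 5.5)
The plan is to write $Q_{s,t}(f)$ as an integral of $q_{s+1}(\cdot,x')$ against a non-negative weight, and then apply part~\ref{HK_7} of Proposition~\ref{prop:V_HK} directly. By definition, $Q_{s,t}=Q_{s+1}Q_{s+1,t}$, with the convention $Q_{t,t}=Q_1$ the identity. Under \ref{B1}, $Q_{s+1}(x,\dd x')=G_s(x)m_{s+1}(x,x')\,\varsigma_{s+1}(\dd x')=q_{s+1}(x,x')\,\varsigma_{s+1}(\dd x')$. Hence, for any $x\in\setX$,
\begin{align*}
Q_{s,t}(f)(x)=\int_\setX q_{s+1}(x,x')\,Q_{s+1,t}(f)(x')\,\varsigma_{s+1}(\dd x').
\end{align*}
We set $F(x,x')\coloneqq q_{s+1}(x,x')\,Q_{s+1,t}(f)(x')$. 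This function is jointly measurable, because $q_{s+1}$ is measurable on $\setX^2$ and $x'\mapsto Q_{s+1,t}(f)(x')$ is measurable, being an integral of a measurable bounded function against kernels. Moreover $g(x)\coloneqq\int F(x,x')\,\varsigma_{s+1}(\dd x')=Q_{s,t}(f)(x)\in[0,1]$ is finite, since $0\leq Q_{s+1,t}(f)\leq Q_{s+1,t}(1)\leq 1$ (all $G_i\leq 1$) and $\int q_{s+1}(x,x')\,\varsigma_{s+1}(\dd x')=G_s(x)\leq 1$. Because $\varsigma_{s+1}$ is $\sigma$-finite, part~\ref{HK_7} of Proposition~\ref{prop:V_HK} applies and gives
\begin{align*}
\mathcal{V}_\setX\big(Q_{s,t}(f)\big)\leq\int_\setX\mathcal{V}_\setX\big(F(\cdot,x')\big)\,\bar{\varsigma}_{s+1}(\dd x').
\end{align*}

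For fixed $x'$, the function $F(\cdot,x')$ is the function $q_{s+1}(\cdot,x')$ multiplied by the constant $c(x')\coloneqq Q_{s+1,t}(f)(x')\geq 0$. By part~\ref{HK_2} of Proposition~\ref{prop:V_HK}, $\mathcal{V}_\setX(F(\cdot,x'))=c(x')\,\mathcal{V}_\setX(q_{s+1}(\cdot,x'))$. Since $f\in\mathcal{B}_1(\setX)$ and the kernels are positive, $c(x')\leq Q_{s+1,t}(1)(x')$. Substituting this bound gives the claimed inequality.

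The only point needing care is the measurability of the integrand in the final integral against the completion $\bar{\varsigma}_{s+1}$. Part~\ref{HK_7} of Proposition~\ref{prop:V_HK} already gives universal measurability of $x'\mapsto\mathcal{V}_\setX(q_{s+1}(\cdot,x'))$, and $Q_{s+1,t}(1)$ is Borel measurable, so their product is $\bar{\varsigma}_{s+1}$-measurable. The integrals are then well defined in $[0,\infty]$, and the monotonicity step is legitimate.

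The main obstacle, to the extent there is one, is checking the hypotheses of part~\ref{HK_7}: that $F$ is jointly measurable and that $g$ is real valued. Both follow from the boundedness $G_i\leq 1$ noted above.
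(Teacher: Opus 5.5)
Your proposal is correct and matches the paper's proof step for step. The paper writes $Q_{s,t}(f)$ as an integral of $Q_{s+1,t}(f)(x')\,q_{s+1}(\cdot,x')$ against $\bar{\varsigma}_{s+1}$ and applies part~\ref{HK_7}, then part~\ref{HK_2}, of Proposition~\ref{prop:V_HK}, before bounding $Q_{s+1,t}(f)\leq Q_{s+1,t}(1)$. Your explicit check of the hypotheses of part~\ref{HK_7} (joint measurability and finiteness of the integral) is slightly more careful than the paper's.
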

\begin{proof}

In what follows, we let $t> s\geq 1$ and $f\in\mathcal{B}_1(\setX)$ be arbitrary. Then,  noting that
\begin{align*}
Q_{s,t}(f)(x)=\int_{\setX} Q_{s+1,t}(f)(x') ~q_{s+1}(x,x')\bar{\varsigma}_{s+1}(\dd x'),\quad\forall x\in\setX,
\end{align*}
it follows that
\begin{align*}
\mathcal{V}_\setX\big(Q_{s,t}(f)\big) \leq \int_{\setX}\mathcal{V}_\setX\Big(Q_{s+1,t}(f)(x')~ q_{s+1}(\cdot,x')\Big)\bar{\varsigma}_{s+1}(\dd x')&=\int_{\setX}Q_{s+1,t}(f)(x') \mathcal{V}_\setX\big(q_{s+1}(\cdot,x')\big)\bar{\varsigma}_{s+1}(\dd x')\\
&\leq \int_{\setX}Q_{s+1,t}(1)(x') \mathcal{V}_\setX\big(q_{s+1}(\cdot,x')\big)\bar{\varsigma}_{s+1}(\dd x') 
\end{align*}
where the first inequality and the equality  holds parts \ref{HK_7} and \ref{HK_2} of Proposition~\ref{prop:V_HK}, respectively. The proof of the lemma is complete.
\end{proof}

\begin{remark}\label{rem:Q_smooth}
Applying the result of Lemma~\ref{lemma:Q_smooth} with $t=s+1$ and $f\equiv 1$ shows that, under \ref{B1}, we have $\sup_{t\geq 1}\mathcal{V}_\setX(G_t)<\infty$.
\end{remark}

We now prove Lemma~\ref{lemma:Lipschitz}:

\begin{proof}[Proof of Lemma~\ref{lemma:Lipschitz}]

In what follows, we let  the functions $q_{t+1}$ and the measures $\lambda_{t+1}$ be as in \ref{B1}, the constant $\underline{q}\in(0,1)$   be as in \ref{A1} (recall that \ref{A1} holds under \ref{A2}), and we let the function  $\varphi:[0,1]\rightarrow[1,\infty]$ be as in Lemma~\ref{lemma:ratio}. In addition, we let $(\mathcal{B}_{\mathrm{HK}}(\setX),\|\cdot\|_{\mathrm{HK}})$ be the Banach space of functions defined in Section~\ref{sub:HK_gen}.

We now let $t> s\geq 1$ be two integers. Then, under \ref{B1} and by Lemma~\ref{lemma:Q_smooth}, for any functions $f\in\mathcal{B}_1(\setX)$ we have  $Q_{s,t}(f)\in \mathcal{B}_{\mathrm{HK}}(\setX)$. Next, to prove the result of the lemma, we let $\mu_1 \in\mathcal{P}_{W_s}(\setX)$ and $x_s\in D_s$ be arbitrary, and we let $\mu_2\in\mathcal{P}_{W_s}(\setX)$ be such that $\mu_2(D_s)>0$.  In addition, for  any integers $i\geq 1$,  we  let $\zeta_{i+1}\coloneqq 1+\int_\setX  \mathcal{V}_\setX\big(q_{i+1}(\cdot,x')\big)\bar{\varsigma}_{i+1}(\dd x')$ and $\pi_{i+1}\in\mathcal{P}(\setX)$ be the probability measure  defined by
\begin{align*}
\pi_{i+1}(A)&=\frac{\delta_{x_s}(A)+\int_A \mathcal{V}_\setX\big(q_{i+1}(\cdot,x')\big)\bar{\varsigma}_{i+1}(\dd x')}{\zeta_{i+1}},\quad A\in\mathcal{E}.
\end{align*}

With this notation in place, by using \eqref{eq:Bound_Qs} and Lemma~\ref{lemma:Q_smooth}, we have
\begin{align*}
\|Q_{s,t}(f)\|_{\mathrm{HK}}\leq   \varphi(1)~\delta_{x_s}Q_{s,t}(1) +(1+\zeta_{s+1}) ~\pi_{s+1} Q_{s+1,t}(1)<\infty,\quad\forall f\in\mathcal{B}_1(\setX)
\end{align*}
and thus, for any functions $f\in\mathcal{B}_1(\setX)$,
\begin{equation}\label{eq:KK1}
\begin{split}
|\Phi_{s,t}(\mu_1)(f)-\Phi_{s,t}(\mu_2)(f)|&\leq \frac{|\mu_1 Q_{s,t}(f)-\mu_2 Q_{s,t}(f)|+|\mu_1 Q_{s,t}(1)-\mu_2 Q_{s,t}(1)|}{\mu_2 Q_{s,t}(1)  }\\
&\leq \big(\varphi(1)+ 1+\zeta_{s+1}\big)  \|\mu_1-\mu_2\|_{\mathcal{V}_{\setX}}\bigg(\frac{\delta_{x_s}Q_{s,t}(1)}{\mu_2 Q_{s,t}(1)}+\frac{  \pi_{s+1} Q_{s+1,t}(1)}{\mu_2 Q_{s,t}(1)}\bigg)\\
&\leq \big(\varphi(1)+ 1+\zeta_{s+1}\big)  \|\mu_1-\mu_2\|_{\mathcal{V}_{\setX}}\bigg(\varphi\big(\mu_2(D_s)\big)+\frac{  \pi_{s+1} Q_{s+1,t}(1)}{\mu_2 Q_{s,t}(1)}\bigg)
\end{split}
\end{equation}
where the last inequality holds by Lemma~\ref{lemma:ratio}.

Moreover, $\mu_2(G_s)=\mu_2 Q_{s+1}(1)\geq \underline{q}\mu_2(D_s)$, and thus
\begin{equation}\label{eq:KKK1}
\begin{split}
\frac{ \pi_{s+1} Q_{s+1,t}(1)}{\mu_2 Q_{s,t}(1)}&=\frac{ \pi_{s+1} Q_{s+1,t}(1)}{\mu_2(G_s)\Phi_{s+1}(\mu_2) Q_{s+1,t}(1)}\\
&\leq\frac{1}{\underline{q}\,\mu_2(D_s)}  \frac{ \pi_{s+1} Q_{s+1,t}(1)}{\Phi_{s+1}(\mu_2) Q_{s+1,t}(1)}
\leq \frac{\varphi\big(\Phi_{s+1}(\mu_2)(D_{s+1})\big)}{\underline{q}\,\mu_2(D_s)}
\end{split}
\end{equation}
where the second inequality holds by Lemma~\ref{lemma:ratio}.

Finally,       $\zeta:=\sup_{i\geq 1}\zeta_{i+1}<\infty$  under \ref{B1}, while, since $\|G_s\|_\infty\leq 1$, we have   
\[
\Phi_{s+1}(\mu_2)(D_{s+1})\geq \mu_2 Q_{s+1}(D_{s+1})\geq \underline{q}\mu_2(D_s).
\]
  Therefore,  recalling that the  function $\varphi$   is  non-increasing, by using \eqref{eq:KK1} and \eqref{eq:KKK1} we obtain that
\begin{align*}
\|\Phi_{s,t}(\mu_1) -\Phi_{s,t}(\mu_2)\|_{\mathrm{tv}}\leq \big(\varphi(1)+ 1+\zeta\big)  \|\mu_1-\mu_2\|_{\mathcal{V}_{\setX}}\bigg(\varphi\big(\mu_2(D_s)\big)+\frac{ \varphi\big(\underline{q} \mu_2(D_s)\big)}{\underline{q} \mu_2(D_s)}\bigg) 
\end{align*}
and the result of the lemma follows.
\end{proof}

\subsection{Proofs of the results in Section~\ref{sec:examples}\label{p-prop:example}}

Throughout this appendix, $G_t$ and
$M_{t+1}$ are as defined in
\eqref{eq:Gt_def}--\eqref{eq:Mt_def2}, and we let
\begin{align}\label{eq:qt_def}
q_{t+1}(x,x')\coloneqq G_t(x)\,m_{t+1}(x,x'),
\qquad x,x'\in\R^d,\quad t\geq 1.
\end{align}
When \ref{Obs1} holds, we let
$x_t\coloneqq H_t^{-1}(y_t)$ for any $t\geq 1$ and
\begin{align}\label{eq:Kt_def}
K_{t,\delta}\coloneqq[x_t-\delta,\,x_t+\delta]
=\{x\in\R^d:\ \|x-x_t\|_\infty\leq\delta\},
\qquad\delta\in(0,\infty),\quad t\geq 1.
\end{align}
Finally, for $k\geq 1$ and $\nu>0$, we let
$c_{k,\nu}\coloneqq\int_{\R^k}(1+\|u\|^2)^{-(k+\nu)/2}
\lambda_k(\dd u)<\infty$.

\subsubsection{Technical lemmas}\label{sub:diff_opp}

\begin{lemma}\label{lemma:immersion}
Let $r\geq d\geq 1$, $\nu\in(0,\infty)$, $L\in[1,\infty)$,
let $f\colon\R^d\to\R^r$ be $L$-bi-Lipschitz, and let
$\phi\colon[0,\infty)\to[0,\infty)$ be a measurable
function such that
\[
C_\phi\coloneqq\sup_{z\geq 0}\left(\phi(z)(1+z)^{(d+\nu)/2}\right)
<\infty.
\]
 Then, for any $y\in f(\R^d)$:
\begin{enumerate}[label=(\roman*)]
\item\label{i-lemma:immersion} $\int_{\R^d}\phi\big(\|y-f(x)\|^2\big)\,
\lambda_d(\dd x)\leq C_\phi\,L^d\,c_{d,\nu}$;
\item\label{ii-lemma:immersion} if $d>1$, then, for any
$(b_n)_{n\geq 1}\in\mathcal{U}(\R^d)$ and any non-empty
sets $u\subsetneq\{1,\dots,d\}$,
\begin{align*}
\lim_{n\to\infty}\int_{\R^{|u|}}
\phi\Big(\big\|y-f\big(\langle z,b_n\rangle_u\big)\big\|^2\Big)
\,\lambda_{|u|}(\dd z)=0.
\end{align*}
\end{enumerate}
\end{lemma}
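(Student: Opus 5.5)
Both parts rest on the bi-Lipschitz property: since $y\in f(\R^d)$, write $y=f(x_0)$ for some $x_0\in\R^d$. Then $\|y-f(x)\|=\|f(x_0)-f(x)\|\geq\|x-x_0\|/L$ for all $x$. The bound $\phi(z)\leq C_\phi(1+z)^{-(d+\nu)/2}$ is non-increasing in $z$, so
\[
\phi\big(\|y-f(x)\|^2\big)\leq C_\phi\big(1+\|x-x_0\|^2/L^2\big)^{-(d+\nu)/2}.
\]

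\emph{Part (i).} Integrate the displayed bound over $\R^d$ and substitute $u=(x-x_0)/L$, whose Jacobian is $L^d$. This gives exactly $C_\phi L^d c_{d,\nu}$, and $c_{d,\nu}<\infty$ because $\nu>0$.

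\emph{Part (ii).} Fix a non-empty $u\subsetneq\{1,\dots,d\}$, so that $-u\neq\emptyset$. Since $\mathcal{U}(\R^d)$ has $\mathcal{I}_{\R^d}=\{1,\dots,d\}$, we have $b_{n,i}\to\infty$ for every $i$. For $z\in\R^{|u|}$ the point $x=\langle z,b_n\rangle_u$ satisfies $\|x-x_0\|^2=\|z-x_0^u\|^2+\|b_n^{-u}-x_0^{-u}\|^2$. Set $\rho_n\coloneqq\|b_n^{-u}-x_0^{-u}\|$, so $\rho_n\to\infty$. Writing $k=|u|$, the integral is bounded by
\[
C_\phi\int_{\R^k}\big(1+(\|w\|^2+\rho_n^2)/L^2\big)^{-(d+\nu)/2}\,\lambda_k(\dd w).
\]
Factor out $(1+\rho_n^2/L^2)$: with $a_n\coloneqq(1+\rho_n^2/L^2)^{1/2}$, we have $1+(\|w\|^2+\rho_n^2)/L^2=a_n^2\big(1+\|w\|^2/(L a_n)^2\big)$. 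The substitution $w=La_n v$ then shows the integral equals
\[
C_\phi\,L^k\,a_n^{\,k-(d+\nu)}\,c_{k,d+\nu-k}.
\]
Here $c_{k,d+\nu-k}<\infty$ because $d+\nu-k>0$. The exponent $k-d-\nu$ is negative since $k<d$, so the bound tends to $0$ as $a_n\to\infty$. Dominated convergence is not even needed.

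The main care point is part (ii): one must check that the decay exponent $d+\nu$ exceeds the dimension $k$ of the face, so that the $k$-dimensional integral is finite and the scaling leaves a negative power of $a_n$. This holds because $u$ is a strict subset, which forces $k\leq d-1$; that is also why the statement assumes $d>1$, since otherwise no non-empty strict subset $u$ exists.
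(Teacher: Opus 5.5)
Your proof is correct. Part (i) is exactly the paper's argument: the pointwise bound from the bi-Lipschitz property, then the change of variables $x=x_0+Lu$.

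For part (ii), the paper starts from the same pointwise bound and the same orthogonal splitting $\|\langle z,b_n\rangle_u-x_0\|^2=\|z-x_0^u\|^2+\rho_n^2$. It then concludes by dominated convergence, using the $n$-independent majorant $C_\phi(1+\|z-x_0^u\|^2/L^2)^{-(d+\nu)/2}$, which is integrable on $\R^{|u|}$ because $d+\nu>|u|$, together with pointwise convergence to zero.

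You instead evaluate the majorizing integral exactly by rescaling. This gives the explicit bound $C_\phi L^k c_{k,d+\nu-k}(1+\rho_n^2/L^2)^{-(d+\nu-k)/2}$. Your route is slightly longer but quantitative: it yields a decay rate in $\rho_n$. The paper's route only gives the limit, which is all that Lemma~\ref{lemma:Prop1_showB1} needs.

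One small point of attribution: $d+\nu>k$ holds for every $k\leq d$ simply because $\nu>0$. What the strictness $u\subsetneq\{1,\dots,d\}$ actually buys is $-u\neq\emptyset$, hence $\rho_n\to\infty$. You do use this correctly at the start of part (ii).
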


\begin{proof}
Let $x_y\coloneqq f^{-1}(y)$. Since $f$ is
$L$-bi-Lipschitz,
$\|y-f(x)\|\geq\|x-x_y\|/L$ for any $x\in\R^d$, and,
since $z\mapsto(1+z)^{-(d+\nu)/2}$ is decreasing,
\begin{align}\label{eq:immersion_bound}
\phi\big(\|y-f(x)\|^2\big)
\leq C_\phi\Big(1+\frac{\|x-x_y\|^2}{L^2}\Big)^{-\frac{d+\nu}{2}},
\qquad\forall x\in\R^d.
\end{align}
Part~\ref{i-lemma:immersion} follows by integrating \eqref{eq:immersion_bound}
and using the change of variables $x=x_y+Lu$. For
part~\ref{ii-lemma:immersion}, let $z\in\R^{|u|}$ and
$R_n\coloneqq\|b_n^{-u}-x_y^{-u}\|$. Then we have
\[
\|\langle z,b_n\rangle_u-x_y\|^2=\|z-x_y^u\|^2+R_n^2,
\]
so that, by \eqref{eq:immersion_bound},
\begin{align*}
\phi\Big(\big\|y-f\big(\langle z,b_n\rangle_u\big)\big\|^2\Big)
\leq C_\phi\Big(1+\frac{\|z-x_y^u\|^2+R_n^2}{L^2}\Big)^{-\frac{d+\nu}{2}}
\leq C_\phi\Big(1+\frac{\|z-x_y^u\|^2}{L^2}\Big)^{-\frac{d+\nu}{2}}.
\end{align*}
The right-hand side is integrable on $\R^{|u|}$ since
$d+\nu>|u|$, and the middle term converges to zero for
any $z\in\R^{|u|}$ since $R_n\to\infty$ (recall that
$b_{n,i}\to\infty$ for any $i$ and that $-u\neq\emptyset$).
Part~\ref{ii-lemma:immersion} follows by dominated convergence.
\end{proof}

\begin{lemma}\label{lemma:derivative}
Let $k_1,k_2\geq 1$ be integers, and let $\varepsilon\in(0,1]$
and $C_1\in[1,\infty)$. Then, there exists a constant
$C_2\in(0,\infty)$, depending only on
$(k_1,k_2,\varepsilon,C_1)$, such that the following
holds. Let $g\colon[0,\infty)\to[0,\infty)$ be $k_1$
times continuously differentiable and
$\psi\colon[0,\infty)\to(0,\infty)$ be
$\varepsilon$-pseudo-decreasing, such that
\begin{align}\label{eq:bound_g_fn}
(1+z)^{i/2}\,\big|g^{(i)}(z)\big|\leq\psi(z),
\qquad\forall z\in[0,\infty),\quad
\forall i\in\{1,\dots,k_1\}.
\end{align}
In addition, let
$\varphi\colon\R^{k_1}\to[1,\infty)$, let
$h\in\mathcal{C}_{\mathrm{mixed}}(\R^{k_1};k_2,\varphi)$
and let $A\in\mathcal{S}^+_{k_2}$ be such that
\[
1/C_1\leq\sigma_{\min}(A)\leq\sigma_{\max}(A)\leq C_1.
\]
Then, the function
$f(v,w)\coloneqq g\big((w-h(v))^\top A(w-h(v))\big)$,
$(v,w)\in\R^{k_1}\times\R^{k_2}$  satisfies, for any
non-empty set $u\subseteq\{1,\dots,k_1\}$,
\begin{align*}
\big|\partial_1^uf(v,w)\big|
\leq C_2\,\varphi(v)^{|u|}\,
\psi\big(\sigma_{\min}(A)\|w-h(v)\|^2\big),
\qquad\forall(v,w)\in\R^{k_1}\times\R^{k_2}.
\end{align*}
\end{lemma}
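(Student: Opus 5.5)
The plan is a direct multivariate chain rule (a Faà di Bruno computation for mixed first-order derivatives), followed by a careful bookkeeping of the powers of $(1+z)$. Write $Q(v,w)\coloneqq(w-h(v))^\top A(w-h(v))$ and $f=g\circ Q$. Fix a non-empty $u\subseteq\{1,\dots,k_1\}$ and differentiate once in each coordinate of $u$. Each variable appears at most once, so the mixed derivative expands over set partitions $\pi$ of $u$:
\begin{align*}
\partial^u_1 f(v,w)=\sum_{\pi}g^{(|\pi|)}\big(Q(v,w)\big)\prod_{B\in\pi}\partial^B_1 Q(v,w).
\end{align*}
The number of partitions depends only on $k_1$, so it suffices to bound each term.

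Next, compute $\partial^B_1 Q$. Since $Q$ is a quadratic form in $e\coloneqq w-h(v)$, the Leibniz rule for mixed derivatives gives two kinds of terms. The first is $-2(\partial^B h)^\top A e$. The second is a sum over splittings $B=B_1\sqcup B_2$ with both parts non-empty of $(\partial^{B_1}h)^\top A\,\partial^{B_2}h$. Because $h\in\mathcal{C}_{\mathrm{mixed}}(\R^{k_1};k_2,\varphi)$, each component of $\partial^{B}h$ is bounded by $\varphi(v)$, so $\|\partial^B h\|\leq\sqrt{k_2}\,\varphi(v)$. Using $\|A\|\leq C_1$ and $\varphi\geq1$, this yields
\[
|\partial^B_1 Q|\leq C\,\varphi(v)^{|B|}\,(1+\|e\|),
\]
with $C$ depending only on $(k_1,k_2,C_1)$. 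The factor $1+\|e\|$ is needed only from the first kind of term; the power $\varphi(v)^{|B|}$ dominates $\varphi(v)$ and $\varphi(v)^2$ because $|B|\geq1$ or $|B|\geq2$ respectively, and $\varphi\geq1$. Multiplying over $B\in\pi$, the term indexed by $\pi$ is bounded by
\[
C\,\varphi(v)^{|u|}\,(1+\|e\|)^{|\pi|}\,\big|g^{(|\pi|)}(Q)\big|.
\]

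The main step is matching powers. Set $z\coloneqq Q(v,w)$. Since $\sigma_{\min}(A)\|e\|^2\leq z$ and $\sigma_{\min}(A)\geq1/C_1$, we get $1+\|e\|\leq C(1+z)^{1/2}$. Hence $(1+\|e\|)^{i}|g^{(i)}(z)|\leq C(1+z)^{i/2}|g^{(i)}(z)|\leq C\psi(z)$ by \eqref{eq:bound_g_fn}, with $i=|\pi|\in\{1,\dots,k_1\}$. Finally, $z\geq z'\coloneqq\sigma_{\min}(A)\|w-h(v)\|^2$, so $\varepsilon$-pseudo-monotonicity gives $\psi(z)\leq\varepsilon^{-1}\psi(z')$. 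This is exactly the point where the argument of $\psi$ becomes $\sigma_{\min}(A)\|w-h(v)\|^2$.

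Summing over the partitions then gives the claim, with $C_2$ depending only on $(k_1,k_2,\varepsilon,C_1)$. I expect the only delicate point to be the bookkeeping showing that each block of a partition contributes at most one factor $(1+\|e\|)$, so that the total power is exactly $|\pi|$ and matches the derivative order $i=|\pi|$ in \eqref{eq:bound_g_fn}. Continuity of the mixed derivatives, needed to justify the expansion, follows from $g\in C^{k_1}$ and $h\in\mathcal{C}_{\mathrm{mixed}}$.
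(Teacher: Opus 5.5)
Your proposal is correct and follows essentially the same route as the paper. Both arguments use the Fa\`a di Bruno expansion of $\partial^u_1 f$ over set partitions of $u$ and the Leibniz bound $|\partial^B Q|\lesssim\varphi(v)^{|B|}(1+\|w-h(v)\|)$. Both then convert $1+\|w-h(v)\|\lesssim(1+Q)^{1/2}$, so that a partition with $j$ blocks is matched with the hypothesis on $g^{(j)}$, and finish with the $\varepsilon$-pseudo-monotonicity of $\psi$ to replace $Q$ by $\sigma_{\min}(A)\|w-h(v)\|^2$. The only cosmetic difference is that the paper already writes the bound on $\partial^B Q$ in terms of $1+Q^{1/2}$, whereas you convert after taking the product over blocks.
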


\begin{proof}
Fix $w\in\R^{k_2}$, let $e(v)\coloneqq w-h(v)$ and $Q(v)\coloneqq e(v)^\top Ae(v)$ for any $v\in\R^{k_1}$,
so that $f(\cdot,w)=g\circ Q$; all derivatives are taken with respect to $v$, and
$\partial^\emptyset e\coloneqq e$. Since $\sigma_{\min}(A)\|e\|^2\leq Q\leq C_1\|e\|^2$ and $C_1\geq1$, we
have $1+\|e\|\leq C_1^{1/2}(1+Q^{1/2})$. Let $B\subseteq\{1,\dots,k_1\}$ be non-empty. By the Leibniz rule,
$\partial^BQ=\sum(\partial^{B_1}e)^\top A\,\partial^{B_2}e$, where the sum is over the $2^{|B|}$ ordered
pairs $(B_1,B_2)$ of disjoint sets such that $B_1\cup B_2=B$. Since $\|\partial^{B'}e\|\leq k_2^{1/2}\varphi$
for any non-empty set $B'$, $\|A\|\leq C_1$ and $\varphi\geq 1$ (and since $|B|\geq2$ when $B_1$ and $B_2$ are
both non-empty), each term of this sum is bounded by $C_1k_2\varphi^{|B|}(1+\|e\|)$, and thus
\begin{align}\label{eq:dQ_bound}
|\partial^BQ(v)|\leq 2^{|B|}C_1^{3/2}k_2\,\varphi(v)^{|B|}\big(1+Q(v)^{1/2}\big),\qquad\forall v\in\R^{k_1}.
\end{align}
In particular, $Q\in\mathcal{C}_{\mathrm{mixed}}(\R^{k_1})$. By the Fa\`a di Bruno formula for mixed
partial derivatives \citep{hardy2006combinatorics}, which follows from the chain and product rules by
induction on $|u|$, we have $f(\cdot,w)\in\mathcal{C}_{\mathrm{mixed}}(\R^{k_1})$ and, for any non-empty
set $u\subseteq\{1,\dots,k_1\}$,
\[
\partial^u_1f(v,w)=\sum_{\pi\in\Pi(u)}g^{(|\pi|)}\big(Q(v)\big)\prod_{B\in\pi}\partial^BQ(v),\qquad\forall v\in\R^{k_1},
\]
where $\Pi(u)$ denotes the set of partitions of $u$ into non-empty blocks and $|\pi|$ the number of
blocks of $\pi$. Let $\pi\in\Pi(u)$ and $j\coloneqq|\pi|\leq|u|\leq k_1$. Since $\sum_{B\in\pi}|B|=|u|$ and
$(1+z^{1/2})^j\leq 2^j(1+z)^{j/2}$ for any $z\geq0$, it follows from \eqref{eq:dQ_bound} and
\eqref{eq:bound_g_fn} that the term associated with $\pi$ is bounded by
\[
2^{|u|}\big(C_1^{3/2}k_2\big)^j\varphi(v)^{|u|}\big(1+Q(v)^{1/2}\big)^j\big|g^{(j)}\big(Q(v)\big)\big|
\leq\big(4C_1^{3/2}k_2\big)^{k_1}\varphi(v)^{|u|}\,\psi\big(Q(v)\big).
\]
Finally, $\Pi(u)$ has at most $\mathrm{B}_{k_1}$ elements, where $\mathrm{B}_{k_1}$ denotes the Bell number,
and
\begin{align*}
\psi(Q(v))\leq\varepsilon^{-1}\psi(\sigma_{\min}(A)\|e(v)\|^2)
\end{align*}
 since
$\sigma_{\min}(A)\|e(v)\|^2\leq Q(v)$ and $\psi$ is $\varepsilon$-pseudo-decreasing. Hence, the result of
the lemma holds with $C_2\coloneqq\varepsilon^{-1}\mathrm{B}_{k_1}(4C_1^{3/2}k_2)^{k_1}$.
\end{proof}

\begin{lemma}\label{lemma:Prelim1_A2}
Assume that \ref{Model1} and \ref{Obs1} hold. Then
$\inf_{t\geq 1}\|G_t\|_\infty\geq\zeta_G(0)$ and, for
any $\varrho\in(0,\zeta_G(0))$, there exist constants
$0<\delta_1\leq\delta_2<\infty$ such that, for any $t\geq 1$, we have
\[
K_{t,\delta_1}\subseteq C_{t,\varrho}
\subseteq K_{t,\delta_2}.
\]

\end{lemma}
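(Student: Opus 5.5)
The plan is to work directly from the definition \eqref{eq:Gt_def}, $G_t(x)=\zeta_{G_t}(Z_t(x))$ with the quadratic form $Z_t(x)\coloneqq(y_t-H_t(x))^\top\Sigma_{G_t}^{-1}(y_t-H_t(x))$. Throughout, $x_t=H_t^{-1}(y_t)$, which is well defined by \ref{Obs1}. The eigenvalue bounds in \ref{mat} and the $L$-bi-Lipschitz property in \ref{lips} give the two-sided estimate
\begin{align*}
L^{-3}\|x-x_t\|^2\leq L^{-1}\|y_t-H_t(x)\|^2\leq Z_t(x)\leq L\|y_t-H_t(x)\|^2\leq L^3\|x-x_t\|^2 .
\end{align*}
Both inclusions will follow from this, combined with the lower bound $\zeta_G\leq\zeta_{G_t}$ and the upper bound on $\zeta_{G_t}$ in \ref{upper_GtMt}.

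For the first claim, observe that $Z_t(x_t)=0$. By \ref{zeta} this gives $\|G_t\|_\infty\geq G_t(x_t)=\zeta_{G_t}(0)\geq\zeta_G(0)$, uniformly in $t$.

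For the inner inclusion, fix $\varrho\in(0,\zeta_G(0))$. Since $\zeta_G$ is continuous with $\zeta_G(0)>\varrho$, there is $z_0>0$ with $\zeta_G(z)\geq\varrho$ on $[0,z_0]$. This step uses continuity at $0$ only, not pseudo-monotonicity. If $x\in K_{t,\delta_1}$, then $\|x-x_t\|\leq\sqrt d\,\delta_1$, so $Z_t(x)\leq L^3d\delta_1^2$. Choosing $\delta_1\coloneqq\sqrt{z_0/(L^3d)}$ then gives $G_t(x)\geq\zeta_G(Z_t(x))\geq\varrho$, that is, $x\in C_{t,\varrho}$.

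For the outer inclusion, use \ref{upper_GtMt}: $G_t(x)\leq L(1+Z_t(x))^{-(d+\nu_G)/2}$. Since $d+\nu_G>0$, the right-hand side tends to $0$, so there is $z_1<\infty$ with $L(1+z)^{-(d+\nu_G)/2}<\varrho$ for $z>z_1$. Hence $G_t(x)\geq\varrho$ forces $Z_t(x)\leq z_1$, and therefore $\|x-x_t\|_\infty\leq\|x-x_t\|\leq\sqrt{L^3z_1}$. Set $\delta_2\coloneqq\max\{\delta_1,\sqrt{L^3z_1}\}$.

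All constants depend only on $d$, $L$, $\nu_G$, $\zeta_G$ and $\varrho$, so they are uniform in $t$. No step is a real obstacle. The only point needing care is to compare $Z_t$ with $\|x-x_t\|^2$ in both directions, using the inverse-Lipschitz constant of $H_t$ and $\sigma_{\max}(\Sigma_{G_t})\leq L$.
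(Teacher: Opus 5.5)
Your proof is correct and follows the paper's argument essentially line by line. It uses the same two-sided comparison $L^{-3}\|x-x_t\|^2\leq Z_t(x)\leq L^3\|x-x_t\|^2$, continuity of $\zeta_G$ at $0$ for the inner inclusion, and the tail bound in \ref{upper_GtMt} for the outer one; the only differences are cosmetic, namely that the paper makes the outer radius explicit via $z_\varrho=(L/\varrho)^{2/(d+\nu_G)}$ and deduces $\delta_1\leq\delta_2$ from the nested nonempty cubes instead of taking a maximum.
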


\begin{proof}
Let $t\geq 1$, $x\in\R^d$ and
$Z_t(x)\coloneqq(y_t-H_t(x))^\top\Sigma_{G_t}^{-1}
(y_t-H_t(x))$, so that $G_t(x)=\zeta_{G_t}(Z_t(x))$.
Since $y_t=H_t(x_t)$, conditions \ref{mat}--\ref{lips}
give
\begin{align}\label{eq:Zt_bounds}
L^{-3}\|x-x_t\|^2\leq Z_t(x)\leq L^3\|x-x_t\|^2.
\end{align}
In particular, $G_t(x_t)=\zeta_{G_t}(0)\geq\zeta_G(0)$
by \ref{zeta}, which ends the proof of the first assertion. Now
let $\varrho\in(0,\zeta_G(0))$. If
$x\in C_{t,\varrho}$, then, by \ref{upper_GtMt},
$\varrho\leq L(1+Z_t(x))^{-(d+\nu_G)/2}$, so that
$Z_t(x)\leq z_\varrho\coloneqq(L/\varrho)^{2/(d+\nu_G)}$
and, by \eqref{eq:Zt_bounds}, we have
\[
\|x-x_t\|_\infty\leq\|x-x_t\|\leq
\delta_2\coloneqq(L^3z_\varrho)^{1/2}.
\]
 Conversely,
since $\zeta_G$ is continuous and
$\zeta_G(0)>\varrho$, there exists $z^\star>0$ such
that $\zeta_G\geq\varrho$ on $[0,z^\star]$. If
$x\in K_{t,\delta_1}$ with
$\delta_1\coloneqq(z^\star/(dL^3))^{1/2}$, then
$Z_t(x)\leq L^3d\delta_1^2=z^\star$ by
\eqref{eq:Zt_bounds}, and hence
$G_t(x)\geq\zeta_G(Z_t(x))\geq\varrho$ by \ref{zeta}.
Finally, $\delta_1\leq\delta_2$ since
$\emptyset\neq K_{t,\delta_1}\subseteq K_{t,\delta_2}$.
\end{proof}

\begin{lemma}\label{lemma:Prelim1_A1}
Assume that \ref{Model1} and \ref{Obs} hold, and that
\begin{align}\label{eq:omega_def}
\omega(\delta)\coloneqq\sup_{t\geq 1}\,
\sup_{x\in K_{t,\delta}}
\big\|F_{t+1}(x)-F_{t+1}(x_t)\big\|<\infty,
\qquad\forall\delta\in(0,\infty).
\end{align}
Then, for any $\delta,\delta'\in(0,\infty)$,
\begin{align}\label{eq:lower_xixi}
\xi_{\delta,\delta'}\coloneqq\inf_{t\geq 1}\,
\inf_{(x,x')\in K_{t,\delta}\times K_{t+1,\delta'}}
q_{t+1}(x,x')>0.
\end{align}
In addition, \eqref{eq:omega_def} holds under each of the
four conditions on $F_{t+1}$ in
Propositions~\ref{prop:example}
and~\ref{prop:example2}.
\end{lemma}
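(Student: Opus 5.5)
The lower bound in \eqref{eq:lower_xixi} will come from two separate estimates. For $x\in K_{t,\delta}$ we bound $G_t(x)$ from below, and for $(x,x')\in K_{t,\delta}\times K_{t+1,\delta'}$ we bound $m_{t+1}(x,x')$ from below. Both will use only the uniform constants of \ref{Model1} together with the fact that the relevant quadratic forms stay bounded.

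\emph{Potential factor.} As in the proof of Lemma~\ref{lemma:Prelim1_A2}, \eqref{eq:Zt_bounds} gives $Z_t(x)\leq L^3\|x-x_t\|^2\leq L^3 d\delta^2$ for $x\in K_{t,\delta}$. By \ref{zeta} and $\varepsilon$-pseudo-monotonicity we then get
\[
G_t(x)\geq\zeta_G(Z_t(x))\geq\varepsilon\,\zeta_G(L^3d\delta^2)>0 .
\]

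\emph{Transition factor.} We write
\[
x'-F_{t+1}(x)=(x'-x_{t+1})+\big(x_{t+1}-F_{t+1}(x_t)\big)+\big(F_{t+1}(x_t)-F_{t+1}(x)\big).
\]
The first term has norm at most $\sqrt d\,\delta'$, the second at most $\Xi$ by \ref{Obs2}, and the third at most $\omega(\delta)$ by \eqref{eq:omega_def}. Set $R\coloneqq\sqrt d\,\delta'+\Xi+\omega(\delta)$. By \ref{mat}, the quadratic form in \eqref{eq:Mt_def2} is bounded by $LR^2$, and the normalising factor is bounded below because $\det(\Sigma_{M_{t+1}})\leq L^d$. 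For the constant $C_{M_{t+1}}$ we use \ref{upper_GtMt}, which gives $C_{M_{t+1}}\leq L c_{d,\nu_M}$. Combining this with \ref{zeta} and pseudo-monotonicity,
\[
m_{t+1}(x,x')\geq\frac{\varepsilon\,\zeta_M(LR^2)}{L^{1+d/2}c_{d,\nu_M}},
\]
which is uniform in $t$. Multiplying the two bounds yields \eqref{eq:lower_xixi}.

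\emph{The four conditions imply \eqref{eq:omega_def}.} Under part~\ref{i-prop:example} of Proposition~\ref{prop:example}, $\omega(\delta)\leq\varphi(\sqrt d\,\delta)$. Under part~\ref{ii-prop:example}, $F_{t+1}\in\mathcal{C}_{\mathrm{mixed}}(\R^d;d,L)$, so every first-order partial derivative is bounded by $L$; by the mean value theorem $F_{t+1}$ is $dL$-Lipschitz, giving $\omega(\delta)\leq dL\sqrt d\,\delta$. Under part~\ref{i-prop:example2} of Proposition~\ref{prop:example2}, $\omega(\delta)\leq\varphi(\sqrt d\,\delta,\|H_t(x)\|)$ with $x\in K_{t,\delta}$. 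Here $\|H_t(x)\|\leq\|y_t\|+L\|x-x_t\|\leq\Xi+L\sqrt d\,\delta$, and $\varphi$ is non-decreasing, so $\omega(\delta)$ is finite. Under part~\ref{ii-prop:example2}, \eqref{eq:bound_derive_ht} bounds the first-order partials along the segment $[x_t,x]\subseteq K_{t,\delta}$ by $L(1+(\Xi+L\sqrt d\,\delta)^{\nu_G(1-1/L)/d})$. This is a uniform bound, so the mean value theorem again gives finite $\omega(\delta)$; alternatively one can invoke the remark preceding Lemma~\ref{lemma:Obs2}.

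\emph{Main obstacle.} The only delicate point is obtaining a $t$-uniform lower bound on the transition density. This requires bounding $C_{M_{t+1}}$ from above through \ref{upper_GtMt}, with $\nu_M>0$ ensuring $c_{d,\nu_M}<\infty$, and controlling the mismatch $x_{t+1}-F_{t+1}(x_t)$ through \ref{Obs2}. Everything else reduces to bookkeeping.
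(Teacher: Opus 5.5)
Your proof is correct and follows the paper's argument essentially line by line. It uses the same lower bound $\varepsilon\,\zeta_G(L^3d\delta^2)$ on $G_t$ via \eqref{eq:Zt_bounds}, the same three-term decomposition of $x'-F_{t+1}(x)$ with $R=\sqrt d\,\delta'+\Xi+\omega(\delta)$, and the same bound $C_{M_{t+1}}\det(\Sigma_{M_{t+1}})^{1/2}\leq L^{1+d/2}c_{d,\nu_M}$. The only harmless deviations are in the last assertion:
\begin{itemize}
\item Under \eqref{eq:bound_derive_ht0}, you bound $\|H_t(x)\|\leq\Xi+L\sqrt d\,\delta$, whereas the paper applies \eqref{eq:bound_derive_ht0} with $x_1=x_t$, so that $\|H_t(x_t)\|=\|y_t\|\leq\Xi$.
\item Under \eqref{eq:bound_derive_ht}, you use a direct mean-value argument on the convex set $K_{t,\delta}$, whereas the paper reduces part~\ref{ii-prop:example2} to part~\ref{i-prop:example2} of Proposition~\ref{prop:example2} through Lemma~\ref{lemma:tech_th}.
\end{itemize}
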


\begin{proof}
Let $t\geq 1$ and
$(x,x')\in K_{t,\delta}\times K_{t+1,\delta'}$. By
\eqref{eq:Zt_bounds}, $Z_t(x)\leq L^3d\delta^2$, so
that, by \ref{zeta} and since $\zeta_G$ is
$\varepsilon$-pseudo-decreasing,
$G_t(x)\geq\varepsilon\,\zeta_G(L^3d\delta^2)$. Next,
by \ref{Obs2} and \eqref{eq:omega_def},
\begin{align*}
\|x'-F_{t+1}(x)\|
\leq\|x'-x_{t+1}\|+\|x_{t+1}-F_{t+1}(x_t)\|
+\|F_{t+1}(x_t)-F_{t+1}(x)\|
\leq R\coloneqq d^{1/2}\delta'+\Xi+\omega(\delta).
\end{align*}
Since $\sigma_{\min}(\Sigma_{M_{t+1}})\geq 1/L$, we have $\big(x'-F_{t+1}(x)\big)^\top
\Sigma_{M_{t+1}}^{-1}\big(x'-F_{t+1}(x)\big)\leq LR^2$ so that, by \ref{zeta},
\begin{align*}
\zeta_{M_{t+1}}\Big(\big(x'-F_{t+1}(x)\big)^\top
\Sigma_{M_{t+1}}^{-1}\big(x'-F_{t+1}(x)\big)\Big)\geq \varepsilon\,\zeta_M(LR^2).
\end{align*}
 In addition, by \ref{mat} and \ref{upper_GtMt}, we have
$C_{M_{t+1}}\det(\Sigma_{M_{t+1}})^{1/2}\leq
L^{1+d/2}c_{d,\nu_M}$ and thus
\begin{align*}
q_{t+1}(x,x')\geq
\frac{\varepsilon^2\,\zeta_G(L^3d\delta^2)\,
\zeta_M(LR^2)}{L^{1+d/2}\,c_{d,\nu_M}}>0,
\end{align*}
which ends the proof of \eqref{eq:lower_xixi}.

To show the last assertion of the lemma, note first that
\[
\|x-x_t\|\leq d^{1/2}\delta\quad\mbox{\rm for all
$x\in K_{t,\delta}$.}
\]
Then, under the condition of part~\ref{i-prop:example}  
of Proposition~\ref{prop:example},
$\omega(\delta)\leq\varphi(d^{1/2}\delta)$ while,  under
\eqref{eq:bound_derive_ht0}, since
$\|H_t(x_t)\|=\|y_t\|\leq\Xi$ and $\varphi$ is
coordinate-wise non-decreasing, we have
$\omega(\delta)\leq\varphi(d^{1/2}\delta,\Xi)$. Under
the condition of part~\ref{ii-prop:example}   of
Proposition~\ref{prop:example}, $F_{t+1}$ is
$(dL)$-Lipschitz, so that the condition of part~~\ref{i-prop:example} of the proposition 
holds. Finally, under \eqref{eq:bound_derive_ht}, condition
\eqref{eq:bound_derive_ht0} holds by
Lemma~\ref{lemma:tech_th} below.
\end{proof}

\begin{lemma}\label{lemma:tech_th}
Assume that \ref{Model} holds and that, for any
$t\geq 1$,
$F_{t+1}\in\mathcal{C}_{\mathrm{mixed}}(\R^d;d,\infty)$
satisfies \eqref{eq:bound_derive_ht}. Then,
\eqref{eq:bound_derive_ht0} holds with the
coordinate-wise non-decreasing function
\begin{align*}
\varphi(z_1,z_2)\coloneqq 2L\,d\,z_1\big(1+z_2+Lz_1\big)^{\alpha},
\qquad z_1,z_2\in[0,\infty),
\qquad\alpha\coloneqq\frac{\nu_G(1-1/L)}{d}.
\end{align*}
\end{lemma}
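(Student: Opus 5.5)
We need to show $\|F_{t+1}(x_1)-F_{t+1}(x_2)\|\leq 2Ld\,\|x_1-x_2\|(1+\|H_t(x_1)\|+L\|x_1-x_2\|)^{\alpha}$. The plan is a mean-value (integral) estimate along the segment from $x_2$ to $x_1$, combined with the growth bound \eqref{eq:bound_derive_ht} on first-order partial derivatives and the Lipschitz property of $H_t$ to control $\|H_t\|$ along the segment.

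First, I write $F_{t+1}(x_1)-F_{t+1}(x_2)=\int_0^1 J_{F_{t+1}}(x_s)(x_1-x_2)\,\dd s$ with $x_s\coloneqq x_2+s(x_1-x_2)$. This is legitimate since $F_{t+1}\in\mathcal{C}_{\mathrm{mixed}}(\R^d;d,\infty)$ gives continuous first partials $\partial^{\{i\}}F_{t+1}$. The columns of the Jacobian are the vectors $\partial^{\{i\}}F_{t+1}$, so by the triangle inequality
\begin{align*}
\|J_{F_{t+1}}(x)v\|\leq\sum_{i=1}^d|v_i|\,\|\partial^{\{i\}}F_{t+1}(x)\|\leq d\,\|v\|\max_i\|\partial^{\{i\}}F_{t+1}(x)\|.
\end{align*}
Applying \eqref{eq:bound_derive_ht} with $u=\{i\}$ then yields
\begin{align*}
\|F_{t+1}(x_1)-F_{t+1}(x_2)\|\leq Ld\,\|x_1-x_2\|\sup_{s\in[0,1]}\big(1+\|H_t(x_s)\|^{\alpha}\big).
\end{align*}

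Second, I bound the supremum. Since $H_t$ is $L$-Lipschitz (by \ref{Model}, condition \ref{lips}), we have $\|H_t(x_s)\|\leq\|H_t(x_1)\|+L(1-s)\|x_1-x_2\|\leq\|H_t(x_1)\|+L\|x_1-x_2\|$. Because $\alpha\geq0$, it follows that
\begin{align*}
1+\|H_t(x_s)\|^{\alpha}\leq 2\big(1+\|H_t(x_1)\|+L\|x_1-x_2\|\big)^{\alpha},
\end{align*}
using that both $1$ and $\|H_t(x_s)\|^\alpha$ are bounded by the right-hand base raised to $\alpha$. This gives exactly $\varphi(\|x_1-x_2\|,\|H_t(x_1)\|)$.

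Finally, $\varphi$ is coordinate-wise non-decreasing, since $z_1\mapsto z_1(\cdot+Lz_1)^\alpha$ and $z_2\mapsto(1+z_2+\cdot)^\alpha$ are non-decreasing when $\alpha\geq0$. The only mild subtlety, which I expect to be the main (small) obstacle, is the case $\alpha=0$, i.e.\ $L=1$. There I interpret $\|H_t\|^0$ as $1$, so the bound in \eqref{eq:bound_derive_ht} is the constant $2L$, and the constant $2$ absorbs it without change.
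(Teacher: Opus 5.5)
Your proof is correct and follows essentially the same route as the paper: an integral mean-value estimate along the segment, the bound $d\max_i\|\partial^{\{i\}}F_{t+1}\|$ on the Jacobian's action, Lipschitz control of $\|H_t\|$ along the segment anchored at the point whose $H_t$-norm appears in $\varphi$, and the elementary inequality $1+b^{\alpha}\leq 2(1+b)^{\alpha}$. Your remark on $\alpha=0$ (i.e.\ $L=1$) is harmless: under either convention for $0^0$, the first-order partials are bounded by at most $2L$, so the same conclusion holds.
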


\begin{proof}
Let $t\geq 1$ and $v,w\in\R^d$, and let
$v_s\coloneqq v+s(w-v)$ for $s\in[0,1]$. Since $H_t$
is $L$-Lipschitz, we have
\[
\|H_t(v_s)\|\leq\|H_t(v)\|+L\|w-v\|,
\]
 and hence, by
\eqref{eq:bound_derive_ht},
\begin{align*}
\big\|F_{t+1}(w)-F_{t+1}(v)\big\|
\leq\|w-v\|\sup_{s\in[0,1]}\|\nabla F_{t+1}(v_s)\|
\leq 2L\,d\,\|w-v\|\big(1+\|H_t(v)\|+L\|w-v\|\big)^{\alpha},
\end{align*}
where we used that the operator norm of the Jacobian
matrix of $F_{t+1}$ at $x$ is at most
$d\max_{i}\|\partial^{\{i\}}F_{t+1}(x)\|$, and that
$1+b^\alpha\leq 2(1+b)^\alpha$ for $b\geq 0$.
\end{proof}

\subsubsection{Proof of Propositions~\ref{prop:example}
and~\ref{prop:example2}}

Propositions~\ref{prop:example}
and~\ref{prop:example2} follow from
Lemma~\ref{lemma:proofA1}, which establishes \ref{A3},
and Lemmas~\ref{lemma:Prop1_showB1}
and~\ref{lemma:Prop1_showB1_ex2}, which establish
\ref{B1}.

\begin{lemma}\label{lemma:proofA1}
Assume that \ref{Model1} and \ref{Obs} hold, and that
the functions $F_{t+1}$ satisfy one of the
four conditions in Propositions~\ref{prop:example}
and~\ref{prop:example2}. Let $\delta\in(0,\infty)$ and
$D_t\coloneqq K_{t,\delta}$ for any $t\geq 1$. Then:
\begin{enumerate}[label=(\roman*)]
\item \ref{A1a} holds with
$\underline{q}\coloneqq\min\{1/2,\,\xi_{\delta,\delta}(2\delta)^d\}$;
\item for every $\ell\in(0,\zeta_G(0))$, there exists a constant
$\xi_\ell\in(1,\infty)$ such that \ref{A1b} holds at
level $\ell$ with the constant $\xi_\ell$ and the measure
$\nu_{t+1}=\lambda_d$ for any $t\geq 1$;
\item for any $0<\ell_2\leq\ell_1<\zeta_G(0)$,
\begin{align*}
\inf_{t\geq 1}\inf_{x\in C_{t,\ell_2}}
Q_{t+1}(x,C_{t+1,\ell_1})>0
\qquad\text{and}\qquad
\inf_{t\geq 1}\inf_{x\in C_{t,\ell_2}}
Q_{t+1}(x,D_{t+1})>0.
\end{align*}
\end{enumerate}
Consequently, \ref{A3} holds with the sets
$D_t=K_{t,\delta}$.
\end{lemma}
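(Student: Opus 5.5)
The plan is to work throughout in the coordinates $x_t=H_t^{-1}(y_t)$, using the sandwich $L^{-3}\|x-x_t\|^2\leq Z_t(x)\leq L^3\|x-x_t\|^2$ from the proof of Lemma~\ref{lemma:Prelim1_A2}, the lower bound $\xi_{\delta,\delta'}>0$ of Lemma~\ref{lemma:Prelim1_A1} for $q_{t+1}$ on $K_{t,\delta}\times K_{t+1,\delta'}$, and the uniform upper bound $m_{t+1}\leq L'$, which follows from \ref{mat}, \ref{upper_GtMt} and the lower bound on $C_{M_{t+1}}$ given by \ref{zeta}.

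\emph{Part (i).} For $x\in D_t=K_{t,\delta}$ we have $Q_{t+1}(x,D_{t+1})\geq\int_{K_{t+1,\delta}}q_{t+1}(x,x')\,\dd x'\geq\xi_{\delta,\delta}(2\delta)^d$, since $\lambda_d(K_{t+1,\delta})=(2\delta)^d$. This gives \ref{A1a} with the stated $\underline{q}$; the minimum with $1/2$ only serves to keep $\underline{q}<1$.

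\emph{Part (ii).} Fix $\ell\in(0,\zeta_G(0))$ and let $\delta_2=\delta_2(\ell)$ be as in Lemma~\ref{lemma:Prelim1_A2}, so that $C_{t,\ell}\subseteq K_{t,\delta_2}$. For $x\in C_{t,\ell}$ and $A\in\mathcal{E}$ we write
\[
Q_{t+1}(x,A\cap C_{t+1,\ell})=\int_{A\cap C_{t+1,\ell}}q_{t+1}(x,x')\,\dd x'.
\]
On $K_{t,\delta_2}\times K_{t+1,\delta_2}$ the density satisfies $\xi_{\delta_2,\delta_2}\leq q_{t+1}\leq L'$, the upper bound using $G_t\leq1$. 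Hence \ref{A1b} holds at level $\ell$ with $\nu_{t+1}=\lambda_d$ and $\xi_\ell=\max\{L',1/\xi_{\delta_2,\delta_2},2\}$.

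\emph{Part (iii).} We first use Lemma~\ref{lemma:Prelim1_A2}: $C_{t,\ell_2}\subseteq K_{t,\delta_2(\ell_2)}$ and $C_{t+1,\ell_1}\supseteq K_{t+1,\delta_1(\ell_1)}$. Then, for $x\in C_{t,\ell_2}$,
\[
Q_{t+1}(x,C_{t+1,\ell_1})\geq\xi_{\delta_2(\ell_2),\delta_1(\ell_1)}\,(2\delta_1(\ell_1))^d>0.
\]
The same argument with $D_{t+1}=K_{t+1,\delta}$ in place of $K_{t+1,\delta_1(\ell_1)}$ gives the second bound.

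\emph{Conclusion.} To get \ref{A3} we assemble \ref{A1}, \ref{A2} and the bound on $\mathcal{V}$.

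\textbf{\ref{A1}.} Choose a level $\varrho\in(0,\min\{\underline{q}^2,\zeta_G(0)\})$. Part (ii) at level $\varrho$ gives \ref{A1b}, and part (i) gives \ref{A1a}.

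\textbf{\ref{A2}.} We use Proposition~\ref{prop:A1p-A2} with $\bar\varrho$ slightly below $\zeta_G(0)$.
\begin{itemize}
\item[] $c_{\bar\varrho}>0$ follows from part (iii) with $\ell_1=\ell_2=\bar\varrho$.
\item[] The proposition also needs a $\varrho'<c_{\bar\varrho}$ with $\inf_t\inf_{x\in C_{t,\varrho'}}Q_{t+1}(x,C_{t+1,\bar\varrho})>0$. Any small $\varrho'$ works, by part (iii) with $\ell_2=\varrho'\leq\ell_1=\bar\varrho$.
\item[] The second assertion of the proposition requires $\inf_t\inf_{x\in C_{t,\varepsilon_\upsilon}}Q_{t+1}(x,D_{t+1})>0$. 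Part (iii) gives this once $\varepsilon_\upsilon<\zeta_G(0)$; we can always decrease $\varepsilon_\upsilon$ so that this holds, since shrinking it only enlarges $C_{t,\varepsilon_\upsilon}$ and the conclusion of the proposition is preserved.
\end{itemize}
This yields a $\mu$ with $\inf_t\Phi_{1,t}(\mu)(D_t)>0$, which is \ref{A2}.

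\textbf{Variation bound.} Finally, $\mathcal{V}_{\R^d}(\ind_{K_{t,\delta}})\leq 3^d-1$ by part~\ref{HK_4} of Proposition~\ref{prop:V_HK}, uniformly in $t$, which completes \ref{A3}.

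The main obstacle is making the constants $\bar\varrho$, $\varrho'$ and $\varepsilon_\upsilon$ compatible with the hypotheses of Proposition~\ref{prop:A1p-A2}. Everything else reduces to the two-sided density bounds on the boxes $K_{t,\delta}$.
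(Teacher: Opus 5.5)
Your proposal is correct and follows the paper's proof essentially step by step. Parts (i)--(iii) use the same box estimates via $\xi_{\delta,\delta'}$ and the bound $m_{t+1}\leq L^{1+d/2}/c_M$; the conclusion takes \ref{A1} at a level below $\min\{\zeta_G(0),\underline{q}^2\}$, then Proposition~\ref{prop:A1p-A2} for \ref{A2}, then part~\ref{HK_4} of Proposition~\ref{prop:V_HK} for \ref{A3}. The paper leaves implicit the compatibility of $\varepsilon_\upsilon$ with the range $(0,\zeta_G(0))$ in part (iii); your remark that $\varepsilon_\upsilon$ can be decreased without affecting the proposition's conclusion handles this correctly.
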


\begin{proof}
By Lemma~\ref{lemma:Prelim1_A1},
$\xi_{\delta,\delta'}>0$ for any
$\delta,\delta'\in(0,\infty)$. Part~(i) of the lemma then follows from
\[
Q_{t+1}(x,D_{t+1})=\int_{K_{t+1,\delta}}
q_{t+1}(x,x')\,\dd x'\geq\xi_{\delta,\delta}(2\delta)^d,
\qquad x\in K_{t,\delta}.
\]

To prove (ii) let $\ell\in(0,\zeta_G(0))$. By
Lemma~\ref{lemma:Prelim1_A2}, there exists a
$\kappa\in(0,\infty)$ such that
$C_{t,\ell}\subseteq K_{t,\kappa}$ for any $t\geq 1$.
Thus, for any $t\geq 1$, $x\in C_{t,\ell}$ and
$A\in\mathcal{E}$,
\[
Q_{t+1}(x,A\cap C_{t+1,\ell})\geq\xi_{\kappa,\kappa}\,
\lambda_d(A\cap C_{t+1,\ell}).
\]
 On the other hand,
since $G_t\leq 1$ and, by \ref{mat}--\ref{upper_GtMt}, we have
\[
m_{t+1}(x,x')\leq L\big(C_{M_{t+1}}
\det(\Sigma_{M_{t+1}})^{1/2}\big)^{-1}\leq L^{1+d/2}/c_M,\quad\forall x,x'\in\setX
\]
with $c_M\coloneqq\int_{\R^d}\zeta_M(\|u\|^2)\,\dd u$, it follows that 
\[
Q_{t+1}(x,A\cap C_{t+1,\ell})\leq
(L^{1+d/2}/c_M)\,\lambda_d(A\cap C_{t+1,\ell}).
\]
Part~(ii) then holds with
$\xi_\ell\coloneqq\max\{2,1/\xi_{\kappa,\kappa},
L^{1+d/2}/c_M\}$.

To prove part (iii), let $0<\ell_2\leq\ell_1<\zeta_G(0)$. By
Lemma~\ref{lemma:Prelim1_A2}, there exist constants
$\kappa_1,\kappa_2\in(0,\infty)$ such that for any
$t\geq 1$ we have
\[
K_{t,\kappa_1}\subseteq C_{t,\ell_1}\quad\mbox{\rm and}\quad
C_{t,\ell_2}\subseteq K_{t,\kappa_2}.
\]
 Thus, for $x\in C_{t,\ell_2}$, we have
\[
Q_{t+1}(x,C_{t+1,\ell_1})\geq
Q_{t+1}(x,K_{t+1,\kappa_1})\geq
\xi_{\kappa_2,\kappa_1}(2\kappa_1)^d\quad\mbox{\rm and}\quad
Q_{t+1}(x,D_{t+1})\geq\xi_{\kappa_2,\delta}(2\delta)^d.
\]

Finally, by applying parts (i) and (ii) of the lemma with
$\ell<\min\{\zeta_G(0),\underline{q}^2\}$, it follows that \ref{A1}
holds with the sets $D_t=K_{t,\delta}$. In addition, by part (iii) of the lemma,
Proposition~\ref{prop:A1p-A2} applies with any
$\bar{\varrho}\in(0,\zeta_G(0))$ and with the sets
$D_t=K_{t,\delta}$, which yields \ref{A2}. Since
$\mathcal{V}_{\R^d}(\ind_{K_{t,\delta}})\leq 3^d-1$ by
part~\ref{HK_4} of Proposition~\ref{prop:V_HK},
\ref{A3} holds. This ends the proof of the lemma.
\end{proof}

\begin{lemma}\label{lemma:Prop1_showB1}
Assume that \ref{Model1} and \ref{Obs1} hold, that
$q_{t+1}(\cdot,x')\in\mathcal{C}_{\mathrm{mixed}}(\R^d)$
for any $t\geq 1$ and $x'\in\R^d$, and that there
exist constants $C\in(0,\infty)$ and $\nu\in(0,\infty)$
such that, with $\psi(z)\coloneqq(1+z)^{-(d+\nu)/2}$,
\begin{align}\label{eq:ineq_qt}
\big|\partial^u_1q_{t+1}(x,x')\big|
\leq C\,\psi\big(\|y_t-H_t(x)\|^2\big)\,
\psi\big(\|x'-F_{t+1}(x)\|^2\big)
\end{align}
for any $t\geq 1$, any $x,x'\in\R^d$, and any non-empty sets
$u\subseteq\{1,\dots,d\}$. Then, \ref{B1} holds with
$\varsigma_{t+1}=\lambda_d$ for any $t\geq 1$.
\end{lemma}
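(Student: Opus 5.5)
To verify \ref{B1} we must show $\sup_{t\geq 1}\int_{\R^d}\mathcal{V}_{\R^d}(q_{t+1}(\cdot,x'))\,\lambda_d(\dd x')<\infty$. The natural route is Proposition~\ref{prop:variationHK}. For fixed $t$ and $x'$, we show that $q_{t+1}(\cdot,x')\in\mathcal{C}_{\mathrm{HK}}(\R^d)$. Since $\mathcal{I}_{\R^d}=\{1,\dots,d\}$, this gives the formula
\[
\mathcal{V}_{\R^d}\big(q_{t+1}(\cdot,x')\big)=\int_{\R^d}\big|\partial_1^{\{1,\dots,d\}}q_{t+1}(x,x')\big|\,\lambda_d(\dd x).
\]
Then Tonelli and \eqref{eq:ineq_qt} with $u=\{1,\dots,d\}$ give
\[
\int\mathcal{V}_{\R^d}\big(q_{t+1}(\cdot,x')\big)\,\dd x'\leq C\int\psi\big(\|y_t-H_t(x)\|^2\big)\Big(\int\psi\big(\|x'-F_{t+1}(x)\|^2\big)\dd x'\Big)\dd x.
\]
The inner integral equals $c_{d,\nu}$ after translating $x'$. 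The outer integral is at most $L^dc_{d,\nu}$ by part~\ref{i-lemma:immersion} of Lemma~\ref{lemma:immersion}, applied with $f=H_t$ (which is $L$-bi-Lipschitz by \ref{lips}), $y=y_t\in H_t(\R^d)$ by \ref{Obs1}, and $\phi=\psi$, so $C_\phi=1$. The resulting bound $C L^d c_{d,\nu}^2$ is uniform in $t$. The measurability issues are already handled by part~\ref{HK_7} of Proposition~\ref{prop:V_HK}, and $\varsigma_{t+1}=\lambda_d$ is already complete, so $\bar\varsigma_{t+1}=\lambda_d$.

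The main work is checking membership in $\mathcal{C}_{\mathrm{HK}}(\R^d)$. We need $V_{\R^d}(q_{t+1}(\cdot,x'))<\infty$, which the bound above gives for $\lambda_d$-a.e.\ $x'$ (and in fact for every $x'$, since $\psi\leq 1$ on the second factor). We also need $\lim_n V_{(-\infty,b_n]}(q_{t+1}(\cdot,x'))=V_{\R^d}(q_{t+1}(\cdot,x'))$ for some $(b_n)\in\mathcal{U}(\R^d)$.

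For the truncated sets, $\mathcal{I}_{(-\infty,b_n]}=\emptyset$, so $V_{(-\infty,b_n]}$ is the sum over all nonempty $u$ of $\int_{\R^{|u|}}\big|\partial_1^u q_{t+1}(\langle z,b_n\rangle_u,x')\big|\,\dd z$, with the region $z\leq b_n^u$. The $u=\{1,\dots,d\}$ term increases to the full integral by monotone convergence. For each proper nonempty $u$ (which only occurs when $d>1$), bound the integrand via \eqref{eq:ineq_qt} by $C\psi\big(\|y_t-H_t(\langle z,b_n\rangle_u)\|^2\big)$, again using $\psi\leq 1$ for the second factor. Part~\ref{ii-lemma:immersion} of Lemma~\ref{lemma:immersion} then shows this term tends to $0$ for any $(b_n)\in\mathcal{U}(\R^d)$. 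Hence the limit condition holds.

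The only delicate point is the order of operations in this last step: the limit must be taken for fixed $x'$ before integrating. This is fine because membership in $\mathcal{C}_{\mathrm{HK}}$ is a pointwise-in-$x'$ statement, and $\mathcal{C}_{\mathrm{mixed}}$ membership is assumed. This completes \ref{B1}, together with the product form $M_{t+1}(x,\dd x')=m_{t+1}(x,x')\lambda_d(\dd x')$ and $\setX=\R^d$.
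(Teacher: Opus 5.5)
Your proposal is correct and follows essentially the same route as the paper: Tonelli, \eqref{eq:ineq_qt} with $u=\{1,\dots,d\}$ and part~(i) of Lemma~\ref{lemma:immersion} give the uniform bound $C L^d c_{d,\nu}^2$. Membership of $q_{t+1}(\cdot,x')$ in $\mathcal{C}_{\mathrm{HK}}(\R^d)$ then follows from convergence of the full-$u$ term together with part~(ii) of Lemma~\ref{lemma:immersion} for the proper faces, so that Proposition~\ref{prop:variationHK} applies; the only cosmetic difference is that you use monotone convergence for the $u=\{1,\dots,d\}$ term where the paper uses dominated convergence.
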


\begin{proof}
Let $t\geq 1$ and $x'\in\R^d$. Since
$\setX=\R^d$, we have
$\mathcal{I}_\setX=\{1,\dots,d\}$, so that
\[
V_\setX\big(q_{t+1}(\cdot,x')\big)=\int_{\R^d}
|\partial_1^{\{1,\dots,d\}}q_{t+1}(x,x')|\,\dd x.
\]
Then, by using Tonelli's theorem, \eqref{eq:ineq_qt} and by part \ref{i-lemma:immersion} of
Lemma~\ref{lemma:immersion},  
\begin{align*}
\int_{\R^d}V_\setX\big(q_{t+1}(\cdot,x')\big)\,\dd x'
\leq C\int_{\R^d}\psi\big(\|y_t-H_t(x)\|^2\big)\,\dd x
\int_{\R^d}\psi\big(\|u\|^2\big)\,\dd u
\leq C\,L^d\,c_{d,\nu}^2.
\end{align*}
It remains to show that
$\mathcal{V}_\setX(q_{t+1}(\cdot,x'))=
V_\setX(q_{t+1}(\cdot,x'))$, for which, by
Proposition~\ref{prop:variationHK}, it suffices to
show that $q_{t+1}(\cdot,x')\in
\mathcal{C}_{\mathrm{HK}}(\setX)$. Let
$(b_n)_{n\geq 1}\in\mathcal{U}(\R^d)$. By using \eqref{eq:ineq_qt} and the fact that $\psi\leq 1$, as well as part \ref{i-lemma:immersion} of Lemma~\ref{lemma:immersion}, we can apply the dominated convergence theorem to obtain that, in the
definition of $V_{\setX\cap(-\infty,b_n]}(q_{t+1}(\cdot,x'))$,
the term associated with $u=\{1,\dots,d\}$ converges
to $V_\setX(q_{t+1}(\cdot,x'))$ as $n\to\infty$

If $d>1$ and
$\emptyset\neq u\subsetneq\{1,\dots,d\}$, then, since
$\psi\leq 1$, \eqref{eq:ineq_qt} gives
\begin{align*}
\int_{\R^{|u|}}\big|\partial^u_1q_{t+1}
\big(\langle z,b_n\rangle_u,x'\big)\big|\,\dd z
\leq C\int_{\R^{|u|}}\psi\Big(\big\|y_t-H_t
\big(\langle z,b_n\rangle_u\big)\big\|^2\Big)\,\dd z,
\end{align*}
which converges to zero as $n\rightarrow\infty$, by part \ref{ii-lemma:immersion} of
Lemma~\ref{lemma:immersion}. Thus
$\lim_{n\to\infty}V_{\setX\cap(-\infty,b_n]}
(q_{t+1}(\cdot,x'))=V_\setX(q_{t+1}(\cdot,x'))<\infty$,
and the proof of the lemma is complete.
\end{proof}

\begin{lemma}\label{lemma:Prop1_showB1_ex2}
Assume that \ref{Model} and \ref{Obs1} hold, and that
either (a) $F_{t+1}\in\mathcal{C}_{\mathrm{mixed}}
(\R^d;d,L)$ for any $t\geq 1$, or (b)
$\sup_{t\geq 1}\|y_t\|<\infty$ and, for any $t\geq 1$, the function 
$F_{t+1}\in\mathcal{C}_{\mathrm{mixed}}(\R^d;d,\infty)$
satisfies \eqref{eq:bound_derive_ht}. Then
\eqref{eq:ineq_qt} holds with $\nu=\min\{\nu_G,\nu_M\}$
in case~(a) and with $\nu=\min\{\nu_G/L,\nu_M\}$ in
case~(b), so that \ref{B1} holds.
\end{lemma}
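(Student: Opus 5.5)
We have to show that, for every non-empty $u\subseteq\{1,\dots,d\}$,
\[
|\partial_1^u q_{t+1}(x,x')|\le C\,\psi(\|y_t-H_t(x)\|^2)\,\psi(\|x'-F_{t+1}(x)\|^2),\qquad \psi(z)=(1+z)^{-(d+\nu)/2}.
\]
Once this bound holds, \ref{B1} follows from Lemma~\ref{lemma:Prop1_showB1}. The plan is to differentiate the product $q_{t+1}=G_t\,m_{t+1}$ with the Leibniz rule for mixed derivatives:
\[
\partial_1^u q_{t+1}(x,x')=\sum_{v\subseteq u}\partial^v G_t(x)\,\partial_1^{u\setminus v} m_{t+1}(x,x').
\]
Each factor is then bounded separately with Lemma~\ref{lemma:derivative}. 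By \ref{mat}, the normalising constant $C_{M_{t+1}}\det(\Sigma_{M_{t+1}})^{1/2}$ is bounded below uniformly in $t$. The zeroth-order terms are controlled by \ref{upper_GtMt} together with the bounds $\sigma_{\min}(\Sigma^{-1})\ge 1/L$.

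For $G_t$ we apply Lemma~\ref{lemma:derivative} with $k_1=d$, $k_2=r$, $g=\zeta_{G_t}$, $h=H_t$ and $A=\Sigma_{G_t}^{-1}$. By \ref{cond_zeta}, $\psi_G(z)=L(1+z)^{-(d+\nu_G)/2}$ is admissible; it is decreasing, hence $1$-pseudo-decreasing. Since $H_t\in\mathcal{C}_{\mathrm{mixed}}(\R^d;r,L)$, we may take $\varphi\equiv L$. This gives $|\partial^vG_t(x)|\lesssim(1+\|y_t-H_t(x)\|^2/L)^{-(d+\nu_G)/2}$, which is $\lesssim\psi(\|y_t-H_t(x)\|^2)$ for any $\nu\le\nu_G$. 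For $m_{t+1}$ we take $h=F_{t+1}$ and $A=\Sigma_{M_{t+1}}^{-1}$, and obtain
\[
|\partial^w_1 m_{t+1}|\lesssim\varphi_F(x)^{|w|}\,(1+\|x'-F_{t+1}(x)\|^2)^{-(d+\nu_M)/2}.
\]
In case (a), $\varphi_F\equiv L$, so the product bound holds with $\nu=\min\{\nu_G,\nu_M\}$ and we are done.

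Case (b) is the main obstacle. Here $\varphi_F(x)=L(1+\|H_t(x)\|^{\alpha})$ with $\alpha=\nu_G(1-1/L)/d$, so it grows, and the growth $\varphi_F^{|w|}\le\varphi_F^d$ must be absorbed by the decay of $\partial^vG_t$. First, since $\|y_t\|\le\Xi$, we have $\|H_t(x)\|\le\Xi+\|y_t-H_t(x)\|$. Hence
\[
\varphi_F(x)^d\lesssim(1+\|y_t-H_t(x)\|^2)^{d\alpha/2}=(1+\|y_t-H_t(x)\|^2)^{\nu_G(1-1/L)/2}.
\]
Multiplying by the $G$-derivative bound $(1+\|y_t-H_t(x)\|^2)^{-(d+\nu_G)/2}$ leaves exponent $-(d+\nu_G/L)/2$. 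This is exactly $\psi$ with $\nu=\nu_G/L$, up to the factor $L$ inside $(1+z/L)$, which only changes constants. The $m$-factor keeps the decay $\nu_M$, so the bound holds with $\nu=\min\{\nu_G/L,\nu_M\}$.

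It remains to check that $q_{t+1}(\cdot,x')\in\mathcal{C}_{\mathrm{mixed}}$. This follows from the Faà di Bruno argument already contained in Lemma~\ref{lemma:derivative} combined with the product rule.
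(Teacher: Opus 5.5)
Your proposal is correct and follows the paper's proof essentially step by step: the Leibniz rule for $q_{t+1}=G_t\,m_{t+1}$, Lemma~\ref{lemma:derivative} applied to each factor (with $\varphi\equiv L$ for $G_t$ and the growing $\varphi_F$ for $m_{t+1}$), and, in case~(b), absorbing $\varphi_F^d\lesssim(1+\|y_t-H_t(x)\|^2)^{\nu_G(1-1/L)/2}$ into the decay of the $G_t$-factor to reach $\nu=\min\{\nu_G/L,\nu_M\}$. One small attribution slip: the uniform lower bound on $C_{M_{t+1}}\det(\Sigma_{M_{t+1}})^{1/2}$ needs \ref{zeta} as well as \ref{mat}, since \ref{zeta} is what gives $C_{M_{t+1}}\geq\int_{\R^d}\zeta_M(\|v\|^2)\,\dd v>0$; conversely, your explicit treatment of the $v=\emptyset$ term via \ref{upper_GtMt} is a detail the paper's proof leaves implicit.
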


\begin{proof}
Let $\psi_G(z)\coloneqq L(1+z)^{-(d+\nu_G)/2}$ and
$\psi_M(z)\coloneqq L(1+z)^{-(d+\nu_M)/2}$, which are
non-increasing. Let $t\geq 1$ and let $u$ be a
non-empty subset of $\{1,\dots,d\}$. By the Leibniz
rule,
\begin{align*}
\partial^u_1q_{t+1}(x,x')=\sum_{v\subseteq u}
\partial^vG_t(x)\;\partial_1^{u\setminus v}m_{t+1}(x,x').
\end{align*}
By \ref{upper_GtMt} and \ref{cond_zeta}, and
Lemma~\ref{lemma:derivative} applied with $g=\zeta_{G_t}$,
$\psi=\psi_G$, $h=H_t$, $\varphi\equiv L$, $w=y_t$ and
$A=\Sigma_{G_t}^{-1}$, together with
$\sigma_{\min}(\Sigma_{G_t}^{-1})\geq 1/L$ and $\psi_G(z/L)\leq L^{(d+\nu_G)/2}\psi_G(z)$ for any $z\geq0$, there
exists a constant $C$ independent of $t$ such that
$|\partial^vG_t(x)|\leq C\psi_G(\|y_t-H_t(x)\|^2)$ for
any $v\subseteq u$. Similarly, since $m_{t+1}$ is a
bounded multiple of $\zeta_{M_{t+1}}$, applying
Lemma~\ref{lemma:derivative} with $g=\zeta_{M_{t+1}}$,
$\psi=\psi_M$, $h=F_{t+1}$, $w=x'$,
$A=\Sigma_{M_{t+1}}^{-1}$ (using similarly that $\psi_M(z/L)\leq L^{(d+\nu_M)/2}\psi_M(z)$) and
$\varphi(x)\coloneqq\max\{1,\max_{\emptyset\neq u'}
\|\partial^{u'}F_{t+1}(x)\|\}$ gives,  for any
$v'\subseteq u$, the estimate
\[
|\partial_1^{v'}m_{t+1}(x,x')|\leq C\varphi(x)^{d}
\psi_M(\|x'-F_{t+1}(x)\|^2).
\]
 Thus
\begin{align}\label{eq:ineq_qt_ex2}
\big|\partial^u_1q_{t+1}(x,x')\big|
\leq C\,\varphi(x)^d\,\psi_G\big(\|y_t-H_t(x)\|^2\big)\,
\psi_M\big(\|x'-F_{t+1}(x)\|^2\big).
\end{align}
In case~(a), $\varphi\leq L$, and \eqref{eq:ineq_qt}
follows from \eqref{eq:ineq_qt_ex2} with
$\nu=\min\{\nu_G,\nu_M\}$. In case~(b), let
$\alpha\coloneqq\nu_G(1-1/L)/d$ and
$\Xi'\coloneqq\sup_{t\geq 1}\|y_t\|$. By
\eqref{eq:bound_derive_ht} and since
$\|H_t(x)\|\leq\Xi'+\|y_t-H_t(x)\|$, we have
\[
\varphi(x)\leq C(1+\|y_t-H_t(x)\|)^{\alpha}.
\]
 Since
$d\alpha=\nu_G-\nu_G/L$, for any $z\geq 0$ we have
\[
(1+z^{1/2})^{d\alpha}\psi_G(z)\leq
C(1+z)^{-(d+\nu_G/L)/2}, 
\]
 and
\eqref{eq:ineq_qt} follows from
\eqref{eq:ineq_qt_ex2} with $\nu=\min\{\nu_G/L,\nu_M\}$.
In both cases, $q_{t+1}(\cdot,x')\in
\mathcal{C}_{\mathrm{mixed}}(\R^d)$ under the
assumptions of the lemma, and \ref{B1} follows from
Lemma~\ref{lemma:Prop1_showB1}.
\end{proof}

\subsubsection{Proof of Remark~\ref{rem:constant}\label{p-rem:constant}}

\begin{proof}
Throughout this subsection, $\mathfrak{P}$ denotes the
list of quantities on which the constants are allowed
to depend in Remark~\ref{rem:constant}, that is,
\[
\mathfrak{P}=(d,r,\varepsilon,\nu_G,\nu_M,L,\Xi,
\zeta_G,\zeta_M,\varphi,a,\varpi)
\]
in case~\ref{rem1}, and the same list without $\varphi$ in
case~\ref{rem2}. We say that a constant is
\emph{admissible} if it can be chosen as a function
of $\mathfrak{P}$ only, and that a function is
admissible if it is determined by $\mathfrak{P}$. The
proof has two steps. We first observe that the proofs
of the results of Section~\ref{sec:main} use the model
only through a few structural constants, and remain
valid when these constants are replaced by bounds
(Step~1). We then show that, for the models of the
remark, admissible bounds are available for all these
structural constants (Step~2). We use the sets
$D_t=K_{t,1}$, with $K_{t,\delta}$ as in
\eqref{eq:Kt_def}.

\paragraph{Step 1: structural constants.}
Assume that \ref{A2} holds for some sets
$D_t$. We call \emph{structural constants}
any numbers $\underline{q}\in(0,1)$,
$\xi\in(1,\infty)$ and $\varrho\in(0,\underline{q}^2)$
for which \ref{A1} holds, together with a number
$\epsilon_\star\in(0,1]$ such that
\[
\inf_{t\geq 1}\Phi_{1,t}(\mu^\star)(D_t)\geq\epsilon_\star
\]
for some $\mu^\star\in\mathcal{P}_{W_1}(\setX)$. In the
setting \eqref{eq:scenarioB}, we add two numbers
$V_D,Z\in[1,\infty)$ such that
\[
1+\sup_{t\geq 1}\mathcal{V}_\setX(\ind_{D_t})\leq V_D\quad\mbox{\rm
and}\quad
\sup_{t\geq 1}\big(1+\int_\setX\mathcal{V}_\setX(q_{t+1}(\cdot,x'))
\,\bar{\varsigma}_{t+1}(\dd x')\big)\leq Z 
\]
with the measures $\varsigma_{t+1}$ as in \ref{B1}. 

An inspection of
the proofs of Appendices~\ref{p-lemma:main_Gen}
to~\ref{p-lemma:Lipschitz} gives the following facts.

(a) \emph{Lemma~\ref{lemma:forgetting}.} In its proof,
the constant $\beta\in(0,1)$ only needs to
satisfy $\varrho<\underline{q}^{2/(1-\beta)}$; since
$\varrho<\underline{q}^2$, one can take
$\beta\coloneqq\frac{1}{2}\big(1-2\log\underline{q}/\log\varrho\big)$.
With $r_\xi\coloneqq 1-\xi^{-4}$ and
\[
\kappa\coloneqq\frac{1}{2}(1-\beta)\log(1/\varrho)
-\log(1/\underline{q})>0,
\]
the results stated in \eqref{eq:main} and
\eqref{eq:beta} then give the lemma with
\begin{align*}
\delta_{\mathrm{f}}\coloneqq\min\Big\{\frac{1}{2},\,
\beta\log(1/r_\xi),\,\kappa\Big\}
\qquad\text{and}\qquad
c_{\mathrm{f}}\coloneqq\frac{2}{r_\xi}+\frac{1}{\varrho},
\end{align*}
since $2r_\xi^{\lfloor\beta k\rfloor}\leq
(2/r_\xi)e^{-\beta k\log(1/r_\xi)}$ and
$\mu_1(D_s)\mu_2(D_s)\leq 1$. 

(b) \emph{Lemma~\ref{lemma:lower_phi}.} Its proof shows
that, for any $s\geq 1$ and
$\mu\in\mathcal{P}_{W_s}(\setX)$,
\begin{align}\label{eq:lower_quant}
\inf_{t\geq s}\Phi_{s,t}(\mu)(D_t)\geq\ell\big(\mu(D_s)\big)
\quad\text{and}\quad
\inf_{t\geq s}\Phi_{s,t}(\mu)(G_t)\geq\underline{q}\,
\ell\big(\mu(D_s)\big),
\qquad
\ell(z)\coloneqq\min\Big\{z\,\underline{q}^{\,\bar{k}(z)},
\frac{\epsilon_\star}{2}\Big\},
\end{align}
with $\bar{k}$ as in \eqref{eq:kmu_def}, computed with
$(c_{\mathrm{f}},\delta_{\mathrm{f}},\epsilon_\star)$.
The function $\ell$ is non-decreasing on $[0,1]$
(because $\bar{k}$ is non-increasing) and positive on
$(0,1]$.

(c) \emph{Lemmas~\ref{lemma:ratio},
\ref{lemma:Lipschitz_tv} and~\ref{lemma:Lipschitz}.}
The functions constructed in their proofs are
explicit functions of
$(\underline{q},c_{\mathrm{f}},\delta_{\mathrm{f}},\epsilon_\star)$,
and, for Lemma~\ref{lemma:Lipschitz}, of $Z$.
Consequently, \ref{Astar} holds with
$\delta=\delta_{\mathrm{f}}$ and functions
$\varphi_1,\varphi_2$ that only depend on the
structural constants: in the setting
\eqref{eq:scenarioA}, with $f_t=\ind_{D_t}$,
$\varphi_1(z,z')=c_{\mathrm{f}}/(zz')$ and $\varphi_2$
given by Lemma~\ref{lemma:Lipschitz_tv}; in the setting
\eqref{eq:scenarioB}, with 
\[
f_t=V_D^{-1}\ind_{D_t},\qquad
\varphi_1(z,z')=c_{\mathrm{f}}/(V_D^2zz')\quad\mbox{\rm and}\quad
\varphi_2(z)
\]
 given by Lemma~\ref{lemma:Lipschitz}
evaluated at $V_Dz$.

(d) \emph{Theorem~\ref{thm:main_Gen}.} Its proof
produces the constants
$c_{\eta_1,\varpi}=\xi$ in \eqref{eq:CC} and
$C_{\eta_1,\varpi}=\tilde{\varphi}(\varrho_\eta/2)$,
where $\varrho_\eta\coloneqq\inf_{t\geq 1}\eta_t(f_t)$.
This quantity is only used through the inequalities
$\eta_t(f_t)\geq\varrho_\eta$, $t\geq 1$, and $\xi$ is
non-decreasing, and $\tilde{\varphi}(\cdot/2)$
non-increasing, in $\varrho_\eta$. The proof therefore
remains valid with $\varrho_\eta$ replaced by any
positive lower bound on $\inf_{t\geq 1}\eta_t(f_t)$.
By \eqref{eq:lower_quant} with $s=1$ and $\mu=\eta_1$,
and since $\ell$ is non-decreasing, such a lower bound
is $\ell(a)$ in the setting \eqref{eq:scenarioA} and
$\ell(a)/V_D$ in the setting \eqref{eq:scenarioB},
if $\eta_1(D_1)\geq a$. In
Theorems~\ref{thm:main_TV} and~\ref{thm:main}, the
condition $\Delta_{T,\bm{\mu}|\eta_1,\varpi}\leq c_{\eta_1,\varpi}$ is removed by
replacing $C_{\eta_1,\varpi}$ with
$\max\{C_{\eta_1,\varpi},1/\vartheta(c_{\eta_1,\varpi})\}$,
since $d_2(\mu_1,\mu_2)\leq 1$ for all $\mu_1,\mu_2\in\mathcal{P}(\setX)$.

(e) \emph{Lemma~\ref{lemma:filter_Gen}.} Its proof
shows that its constants only depend on $c_\star$ and
on a positive lower bound on
$\inf_{t\geq 1}\eta_t(G_t)$, which by
\eqref{eq:lower_quant} can be taken equal to
$\underline{q}\,\ell(a)$. In the setting
\eqref{eq:scenarioA}, $c_\star=1$. In the setting
\eqref{eq:scenarioB}, since
$\mathcal{V}_\setX(G_t)\leq Z$ by
Lemma~\ref{lemma:Q_smooth} (applied with $t=s+1$ and
$f\equiv 1$) and $\|G_t\|_\infty\leq 1$, we have
$\|G_t\|_{\mathrm{HK}}\leq 1+Z$, and one can take
$c_\star=(1+Z)^{-1}$. The same applies to
Propositions~\ref{prop:filter_TV} and~\ref{prop:filter},
whose constants are those of
Lemma~\ref{lemma:filter_Gen}.

In summary, all the constants of
Theorems~\ref{thm:main_TV} and~\ref{thm:main} and of
Propositions~\ref{prop:filter_TV}
and~\ref{prop:filter}, as well as a lower bound on
$\inf_{t\geq 1}\eta_t(G_t)$, are determined by
$(\underline{q},\xi,\varrho,\epsilon_\star,\varpi,a)$,
and also by $(V_D,Z)$ in the setting \eqref{eq:scenarioB}.
It remains to show that admissible structural
constants exist.

\paragraph{Step 2: admissible structural constants.}
Let the assumptions of the remark hold. We use the
notation introduced at the beginning of Appendix~\ref{p-prop:example}, in particular the points
$x_t=H_t^{-1}(y_t)$, the sets $K_{t,\delta}$ defined in \eqref{eq:Kt_def} and the constants $c_{k,\nu}$;
note that $E_t=K_{t,1}$ for any $t\geq 1$.

\emph{Lower bounds on the kernel.} By the proof of
Lemma~\ref{lemma:Prelim1_A1}, for any
$\delta,\delta'\in(0,\infty)$,
\begin{align*}
\xi_{\delta,\delta'}\geq\underline{\xi}(\delta,\delta')
\coloneqq\frac{\varepsilon^2\,\zeta_G(L^3d\delta^2)\,
\zeta_M\big(L\,R(\delta,\delta')^2\big)}
{L^{1+d/2}\,c_{d,\nu_M}},
\qquad
R(\delta,\delta')\coloneqq d^{1/2}\delta'+\Xi
+\bar{\omega}(\delta),
\end{align*}
where $\bar{\omega}(\delta)$ is the bound on the quantity
$\omega(\delta)$ defined in \eqref{eq:omega_def} obtained at the end of that proof, namely:
\begin{itemize}
\item $\bar{\omega}(\delta)\coloneqq\varphi(d^{1/2}\delta)$ under the assumptions of
part~\ref{i-prop:example} of Proposition~\ref{prop:example};
\item $\bar{\omega}(\delta)\coloneqq\varphi(d^{1/2}\delta,\Xi)$ under the assumptions of
part~\ref{i-prop:example2} of Proposition~\ref{prop:example2};
\item $\bar{\omega}(\delta)\coloneqq d^{3/2}L\,\delta$ under the assumptions of
part~\ref{ii-prop:example} of Proposition~\ref{prop:example}, since the functions $F_{t+1}$ are then
$(dL)$-Lipschitz;
\item $\bar{\omega}(\delta)\coloneqq 2L\,d^{3/2}\delta\,\big(1+\Xi+Ld^{1/2}\delta\big)^{\alpha}$, with
$\alpha\coloneqq\nu_G(1-1/L)/d$, under the assumptions of
part~\ref{ii-prop:example2} of Proposition~\ref{prop:example2}, by Lemma~\ref{lemma:tech_th}.
\end{itemize}
In case~\ref{rem1}, the first two functions depend on the function $\varphi$, which belongs to
$\mathfrak{P}$, while, in case~\ref{rem2}, the last two functions only depend on $(d,L,\nu_G,\Xi)$. Thus,
in both cases, $\underline{\xi}$ is an admissible function.

\emph{Level sets.} By the proof of
Lemma~\ref{lemma:Prelim1_A2}, for any $\ell\in(0,1]$
and $t\geq 1$, $C_{t,\ell}\subseteq K_{t,\delta_2(\ell)}$
with $\delta_2(\ell)\coloneqq
\big(L^3(L/\ell)^{2/(d+\nu_G)}\big)^{1/2}$ (the proof
of this inclusion does not use $\ell<\zeta_G(0)$), and,
for $\ell\in(0,\zeta_G(0))$,
$K_{t,\delta_1(\ell)}\subseteq C_{t,\ell}$ with
$\delta_1(\ell)\coloneqq(z^\star_\ell/(dL^3))^{1/2}$ and
$z^\star_\ell\coloneqq\sup\{z\geq 0:\
\zeta_G\geq\ell\text{ on }[0,z]\}>0$. Both functions
are admissible.

\emph{Constants in \ref{A1}.} By the proof of
Lemma~\ref{lemma:proofA1} with $\delta=1$, \ref{A1}
holds with the sets $D_t=K_{t,1}$ and with the admissible constants
\begin{align*}
\underline{q}\coloneqq\min\big\{1/2,\,
2^d\,\underline{\xi}(1,1)\big\},
\qquad
\varrho\coloneqq\frac{1}{2}\min\big\{\zeta_G(0),
\underline{q}^2\big\},
\qquad
\xi\coloneqq\max\Big\{2,\,
\frac{1}{\underline{\xi}(\delta_2(\varrho),\delta_2(\varrho))},\,
\frac{L^{1+d/2}}{c_M}\Big\},
\end{align*}
where $c_M\coloneqq\int_{\R^d}\zeta_M(\|u\|^2)\,\dd u$,
and with $\nu_{t+1}=\lambda_d$ for any $t\geq 1$.

\emph{The constant $\epsilon_\star$.} We apply
Proposition~\ref{prop:A1p-A2}, with a constant $\varrho'$ defined below in place of the constant
$\varrho$ of its statement. Its proof only uses the
two constants $c_{\bar{\varrho}}$ and $c_{\varrho',\bar{\varrho}}$ appearing in it through the lower
bounds $Q_{t+1}(x,C_{t+1,\bar{\varrho}})\geq
c_{\bar{\varrho}}$ and
$Q_{t+1}(x,C_{t+1,\bar{\varrho}})\geq
c_{\varrho',\bar{\varrho}}$, for $x$ in
$C_{t,\bar{\varrho}}$ and $C_{t,\varrho'}$
respectively, and through the condition $\varrho'<c_{\bar{\varrho}}$. It therefore remains valid when
$c_{\bar{\varrho}}$ and $c_{\varrho',\bar{\varrho}}$ are replaced by smaller positive numbers $c'$ and $c''$,
provided that $\varrho'<c'$.
Let $\bar{\varrho}\coloneqq\zeta_G(0)/2$. By the
inclusions above, for any $t\geq 1$,
\begin{align*}
\inf_{x\in C_{t,\bar{\varrho}}}
Q_{t+1}(x,C_{t+1,\bar{\varrho}})\geq
c'\coloneqq\big(2\delta_1(\bar{\varrho})\big)^d\,
\underline{\xi}\big(\delta_2(\bar{\varrho}),
\delta_1(\bar{\varrho})\big),
\end{align*}
and, with $\varrho'\coloneqq\frac{1}{2}
\min\{\bar{\varrho},c'\}$,
\begin{align*}
\inf_{x\in C_{t,\varrho'}}
Q_{t+1}(x,C_{t+1,\bar{\varrho}})\geq
c''\coloneqq\big(2\delta_1(\bar{\varrho})\big)^d\,
\underline{\xi}\big(\delta_2(\varrho'),
\delta_1(\bar{\varrho})\big).
\end{align*}
Since $\varrho'<c'$, with $\upsilon\coloneqq 1/2$ and the constant $\epsilon\in(0,1)$
of that proof chosen as $\epsilon\coloneqq(1+\varrho'/c')/2$, which satisfies $\varrho'<\epsilon c'$, the proof
of Proposition~\ref{prop:A1p-A2} yields constants
$\varepsilon_\upsilon=\zeta/2$ and
$\kappa_\upsilon=\zeta/(2-\zeta)$, where $\zeta$ is an
explicit function of $(c',c'',\varrho',\upsilon)$ and
hence admissible. Let
$\mu^\star\coloneqq\delta_{x_1}$, with
$x_1=H_1^{-1}(y_1)$; note that $\mu^\star\in\mathcal{P}_{W_1}(\setX)$ since, the functions $G_t$ and
$m_{t+1}$ being positive, \ref{A0} holds with $W_t=\R^d$ for any $t\geq 1$. Then $\mu^\star(D_1)=1$ and, since
$G_1(x_1)\geq\zeta_G(0)>\bar{\varrho}$ by
Lemma~\ref{lemma:Prelim1_A2},
$\mu^\star(C_{1,\bar{\varrho}})=1\geq\upsilon$. By the
second part of that proof and the inclusion
$C_{t,\varepsilon_\upsilon}\subseteq
K_{t,\delta_2(\varepsilon_\upsilon)}$,
\begin{align*}
\inf_{t\geq 1}\Phi_{1,t}(\mu^\star)(D_t)\geq
\epsilon_\star\coloneqq\min\Big\{1,\,\kappa_\upsilon\,
2^d\,\underline{\xi}\big(\delta_2(\varepsilon_\upsilon),1\big)\Big\},
\end{align*}
which is admissible. In particular, \ref{A2} holds with
the sets $D_t$.

\emph{The constants $V_D$ and $Z$ (case~\ref{rem2}).} By
part~\ref{HK_4} of Proposition~\ref{prop:V_HK},
$\mathcal{V}_{\R^d}(\ind_{D_t})\leq 3^d-1$, so that
$V_D\coloneqq 3^d$ is admissible. By the proofs of
Lemmas~\ref{lemma:Prop1_showB1}
and~\ref{lemma:Prop1_showB1_ex2}, \eqref{eq:ineq_qt}
holds with a constant $C$ given by
Lemma~\ref{lemma:derivative} (applied with the non-increasing functions $\psi_G$ and $\psi_M$, that is,
with $\varepsilon=1$), which depends only on
$(d,r,L)$, on $c_M$ and, in case~(b) of
Lemma~\ref{lemma:Prop1_showB1_ex2}, on $(\nu_G,\Xi)$,
and with $\nu=\min\{\nu_G,\nu_M\}$ or
$\nu=\min\{\nu_G/L,\nu_M\}$. Thus
$Z\coloneqq 1+C\,L^d\,c_{d,\nu}^2$ is admissible.

Combining Steps~1 and~2 proves the remark.
\end{proof}

\subsubsection{Proof of Lemma \ref{lemma:bounded_pert}\label{p-lemma:bounded_pert}}
\begin{proof}
Consider the integral operator 
\[
Q^g_{t+1}(x,\dd x'):=g_t(x)M_{t+1}(x,\dd x')\quad
\mbox{\rm and set}\quad C^g_{t,\varrho}:=\{g_t\geq\varrho\}.
\]
Also denote
by $Q_{t+1}$ and $C_{t,\varrho}$ the corresponding
objects for $G_t$. Since
$c\,g_t\leq G_t\leq g_t$, we have
\[c\,Q^g_{t+1}\leq Q_{t+1}\leq Q^g_{t+1}\quad \text{and}\quad
C^g_{t,\varrho/c}\subseteq C_{t,\varrho}
\subseteq C^g_{t,\varrho}\] for any $t\geq 1$ and
$\varrho\in(0,1]$.

To show part (i) of the lemma note first that, by Lemma~\ref{lemma:proofA1}
(Appendix~\ref{p-prop:example})  applied to the model
with potential functions $g_t$, there
exist a constant $\ell_0\in(0,1)$ and bounded hyper-rectangles
$D_t$ such that: \ref{A1a} holds for
some $\underline{q}\in(0,1)$; for every level
$\ell\in(0,\ell_0)$, \ref{A1b} holds at level
$\ell$ for some constant $\xi_\ell\in(1,\infty)$; and, for any
$0<\ell_2\leq\ell_1<\ell_0$,
\[
\inf_{t\geq 1}\inf_{x\in C^g_{t,\ell_2}}
Q^g_{t+1}(x,C^g_{t+1,\ell_1})>0\quad\mbox{\rm and}\quad
\inf_{t\geq 1}\inf_{x\in C^g_{t,\ell_2}}
Q^g_{t+1}(x,D_{t+1})>0.
\]
We now show that, for the same sets $D_t$, condition \ref{A1} also holds when the functions $g_t$ are replaced by the functions $G_t$. First, we have
\[\inf_{t\geq 1}\inf_{x\in D_t}Q_{t+1}(x,D_{t+1})
\geq c\,\underline{q}.\] Next, let
$\varrho\in(0,\min\{\ell_0,c^2\underline{q}^2\})$,
$t\geq 1$, $x\in C_{t,\varrho}$ and
$A\in\mathcal{E}$, and let
\[
A'\coloneqq A\cap C_{t+1,\varrho}\subseteq
C^g_{t+1,\varrho}.
\]
For any
$x\in C^g_{t,\varrho}$, \ref{A1b} at level $\varrho$
for $g_t$ gives
\[\xi_\varrho^{-1}\nu_{t+1}(A')\leq Q^g_{t+1}(x,A')\leq
\xi_\varrho\,\nu_{t+1}(A'),
\] and therefore
\[(c/\xi_\varrho)\,\nu_{t+1}(A')\leq Q_{t+1}(x,A')\leq
\xi_\varrho\,\nu_{t+1}(A').
\] Hence \ref{A1} holds for the model with potential functions
$G_t$, with the constants
$(c\,\underline{q},\xi_\varrho/c,\varrho)$.

Next, for $0<\ell_2\leq\ell_1$ with
$\ell_1/c<\ell_0$, we have
\begin{align*}
\inf_{x\in C_{t,\ell_2}}Q_{t+1}(x,C_{t+1,\ell_1})
\geq c\inf_{x\in C^g_{t,\ell_2}}
Q^g_{t+1}\big(x,C^g_{t+1,\ell_1/c}\big)
\end{align*}
as well as
\[
\inf_{x\in C_{t,\ell_2}}Q_{t+1}(x,D_{t+1})\geq
c\inf_{x\in C^g_{t,\ell_2}}Q^g_{t+1}(x,D_{t+1}),
\]
and both lower bounds are positive uniformly in
$t\geq 1$. Proposition~\ref{prop:A1p-A2} therefore
applies to the model with potential functions and yields \ref{A2}
with the sets $D_t$. Since these sets are
hyper-rectangles, \ref{A3} follows from part~\ref{HK_4}
of Proposition~\ref{prop:V_HK}.

To prove part (ii) of the lemma, consider the densities
\[
q_{t+1}(x,x')\coloneqq G_t(x)m_{t+1}(x,x')\quad\mbox{\rm
and}\quad q^g_{t+1}(x,x')\coloneqq g_t(x)m_{t+1}(x,x'),\qquad x,x'\in\R^d,\quad t\geq 1.
\]
Under the assumptions of part~\ref{ii-prop:example} of
Proposition~\ref{prop:example} (resp.\ of part~\ref{ii-prop:example2} of
Proposition~\ref{prop:example2}), case~(a) (resp.\ case~(b)) of
Lemma~\ref{lemma:Prop1_showB1_ex2} applies to the model with potential
functions $g_t$. An inspection of its proof shows that the bound
\eqref{eq:ineq_qt_ex2}, and hence the bound \eqref{eq:ineq_qt}, holds for the
functions $q^g_{t+1}$ for any set $u\subseteq\{1,\dots,d\}$, including
$u=\emptyset$: in this case the sum given by the Leibniz rule reduces to the
term $v=\emptyset$, and the bounds on $|\partial^vg_t|$ and on
$|\partial_1^{v'}m_{t+1}|$ used in that proof hold for $v=v'=\emptyset$
by \ref{upper_GtMt}. Therefore, there exist constants $C,\nu\in(0,\infty)$
such that, with $\psi(z)\coloneqq(1+z)^{-(d+\nu)/2}$,
\begin{align}\label{eq:qg_all_u}
\big|\partial_1^uq^g_{t+1}(x,x')\big|\leq C\,\psi\big(\|y_t-H_t(x)\|^2\big)\,
\psi\big(\|x'-F_{t+1}(x)\|^2\big)
\end{align}
for any $t\geq 1$, any $x,x'\in\R^d$ and any set $u\subseteq\{1,\dots,d\}$.
Let $t\geq 1$ and $x'\in\R^d$. Since $b_t\in\mathcal{C}_{\mathrm{mixed}}(\R^d)$ and
$q^g_{t+1}(\cdot,x')\in\mathcal{C}_{\mathrm{mixed}}(\R^d)$, the Leibniz rule shows that
$q_{t+1}(\cdot,x')\in\mathcal{C}_{\mathrm{mixed}}(\R^d)$ and that, for any non-empty set
$u\subseteq\{1,\dots,d\}$,
\begin{align*}
\partial_1^uq_{t+1}(x,x')=\sum_{v\subseteq u}\partial^vb_t(x)\;
\partial_1^{u\setminus v}q^g_{t+1}(x,x'),\qquad\forall x\in\R^d,
\end{align*}
where $\partial^\emptyset b_t=b_t$ takes its values in $[c,1]$ and
$|\partial^vb_t|\leq L_b$ for any $v\neq\emptyset$. Hence, by
\eqref{eq:qg_all_u}, the functions $q_{t+1}$ satisfy \eqref{eq:ineq_qt} with the
constant $C$ replaced by $2^dL_bC$. Since the proof of
Lemma~\ref{lemma:Prop1_showB1} only uses \eqref{eq:ineq_qt} and the fact that
the functions $H_t$ are $L$-bi-Lipschitz, which holds under \ref{Model}, this lemma
shows that \ref{B1} holds for the model with potential functions $G_t$, with
$\varsigma_{t+1}=\lambda_d$ for any $t\geq 1$.
\end{proof}

\bibliographystyle{apalike}
\bibliography{Kolmogorov-dp-biblio}

\end{document}